\newcommand{\href}[1]{#1} 
\tikzstyle{morphism}=[fill=white, draw=black, shape=rectangle]
\tikzstyle{medium box}=[fill=white, draw=black, shape=rectangle, minimum width=0.8cm, minimum height=0.9cm]
\tikzstyle{large morphism}=[fill=white, draw=black, shape=rectangle, minimum width=1.7cm, minimum height=1cm]
\tikzstyle{bn}=[fill=black, draw=black, shape=circle, inner sep=1.5pt]
\tikzstyle{state}=[fill=white, draw=black, regular polygon, regular polygon sides=3, minimum width=0.8cm, shape border rotate=180, inner sep=0pt]
\tikzstyle{medium state}=[fill=white, draw=black, regular polygon, regular polygon sides=3, minimum width=1.3cm, inner sep=0pt, shape border rotate=180]
\tikzstyle{large state}=[fill=white, draw=black, regular polygon, regular polygon sides=3, minimum width=2.2cm, shape border rotate=180, inner sep=0pt]
\tikzstyle{wn}=[fill=white, draw=black, shape=circle, inner sep=1.5pt]
\tikzstyle{blue morphism}=[fill=white, draw={rgb,255: red,15; green,0; blue,150}, shape=rectangle, text={rgb,255: red,15; green,0; blue,150}, tikzit category=blue]
\tikzstyle{blue state}=[fill=white, draw={rgb,255: red,15; green,0; blue,150}, shape=circle, regular polygon, regular polygon sides=3, minimum width=0.8cm, shape border rotate=180, inner sep=0pt, text={rgb,255: red,15; green,0; blue,150}, tikzit category=blue]
\tikzstyle{blue node}=[fill={rgb,255: red,15; green,0; blue,150}, draw={rgb,255: red,15; green,0; blue,150}, shape=circle, tikzit category=blue, inner sep=1.5pt]
\tikzstyle{blue}=[text={rgb,255: red,15; green,0; blue,150}, tikzit draw={rgb,255: red,191; green,191; blue,191}, tikzit category=blue, tikzit fill=white, inner sep=0mm]
\tikzstyle{red node}=[fill={rgb,255: red,150; green,0; blue,2}, draw={rgb,255: red,150; green,0; blue,2}, shape=circle, inner sep=1.5pt]
\tikzstyle{Purple node}=[fill={rgb,255: red,150; green,0; blue,150}, draw={rgb,255: red,150; green,0; blue,150}, shape=circle, inner sep=1.5pt]
\tikzstyle{red}=[text={rgb,255: red,150; green,0; blue,2}, inner sep=0mm, tikzit fill=white, tikzit draw={rgb,255: red,191; green,191; blue,191}]
\tikzstyle{purple}=[text={rgb,255: red,150; green,0; blue,150}, inner sep=0mm, tikzit fill=white, tikzit draw={rgb,255: red,191; green,191; blue,191}]
\tikzstyle{wide state}=[fill=white, draw=black, shape=isosceles triangle, minimum width=0.8cm, shape border rotate=270, inner sep=1.4pt, minimum height=0.5cm, isosceles triangle apex angle=80]
\tikzstyle{white morphism}=[fill=white, draw=white, shape=rectangle, tikzit draw={rgb,255: red,139; green,139; blue,139}]
\tikzstyle{node}=[fill=none, draw=black, shape=circle, inner sep=1.5pt, tikzit fill=white]
\tikzstyle{node_fix}=[fill=none, draw=black, shape=circle, tikzit fill=white, minimum size=15pt, inner sep=0.5pt]
\tikzstyle{node_med}=[fill=none, draw=black, shape=circle, tikzit fill=white, minimum size=17pt, inner sep=1pt]
\tikzstyle{triangle_fix}=[regular polygon, regular polygon sides=3, fill=none, draw=black, minimum size=25pt, inner sep=0.5pt, tikzit fill=white]
\tikzstyle{green point}=[fill={rgb,255: red,33; green,153; blue,51}, draw=black, shape=circle, inner sep=1.5pt]
\tikzstyle{small node}=[fill=black, draw=black, shape=circle, inner sep=1pt]
\tikzstyle{grey}=[text={rgb,255: red,64; green,64; blue,64}, inner sep=0mm, shape=circle, tikzit fill=white, tikzit draw={rgb,255: red,64; green,64; blue,64}]
\tikzstyle{name_wire}=[fill=white, draw=black, shape=circle, inner sep=1pt, minimum size=9pt, font={\scriptsize}]
\tikzstyle{arrow}=[->]
\tikzstyle{dashed box}=[-, dashed]
\tikzstyle{blue arrow}=[-, draw={rgb,255: red,15; green,0; blue,150}, tikzit category=blue]
\tikzstyle{mapsto}=[{|->}]
\tikzstyle{double wire}=[-, double]
\tikzstyle{curly brace}=[decorate,decoration={brace,amplitude=10pt}]
\tikzstyle{curly brace 2}=[decorate,decoration={brace,amplitude=5pt}]
\tikzstyle{fill_ar}=[->, >=stealth']
\tikzstyle{fill_ar2}=[<->, >=stealth']
\tikzstyle{white}=[-, draw=white, tikzit draw={rgb,255: red,141; green,141; blue,141}]
\tikzstyle{white_ar}=[tikzit draw={rgb,255: red,141; green,141; blue,141}, draw=white, ->, >=stealth']
\tikzstyle{ellipse}=[-Ellipse]
\tikzstyle{segment}=[{|-|}]
\tikzstyle{thick_ar}=[->, thick]
\tikzstyle{orange edge}=[-, draw={rgb,255: red,215; green,143; blue,0}, thick]
\tikzstyle{purple edge}=[-, draw={rgb,255: red,125; green,0; blue,250}, thick]
\tikzstyle{thick edge}=[-, draw={rgb,255: red,145; green,145; blue,145}, fill={rgb,255: red,220; green,220; blue,220}, tikzit draw=black, tikzit fill={rgb,255: red,198; green,198; blue,198}]
\tikzstyle{thick grey}=[-, draw={rgb,255: red,145; green,145; blue,145}, fill={rgb,255: red,249; green,245; blue,183}, tikzit draw=black, tikzit fill={rgb,255: red,249; green,245; blue,183}]
\tikzstyle{protected}=[-, preaction={{ultra thick,white,draw}}]
\tikzstyle{blue fill}=[-, draw=none, fill={rgb,255: red,233; green,244; blue,255}, tikzit draw={rgb,255: red,0; green,125; blue,178}]
\tikzstyle{yellow fill}=[-, draw=none, fill={rgb,255: red,249; green,249; blue,214}, tikzit draw={rgb,255: red,225; green,218; blue,0}]
\tikzstyle{protected blue}=[-, preaction={{ultra thick,draw={rgb,255: red,233; green,244; blue,255}}}]
\tikzstyle{snake}=[decorate,decoration={snake}]
\tikzstyle{dense}=[densely dashed]
\tikzstyle{blue trans}=[-, draw=none, fill={rgb,255: red,211; green,233; blue,255}, tikzit draw={rgb,255: red,0; green,125; blue,178}, opacity=0.5]
\tikzstyle{white fill}=[-, fill=white]
\tikzstyle{behind}=[-, densely dashed, opacity=0.5, draw=black, very thin]	
\def\rotateclockwise#1{
  \newdimen\xrw
  \pgfextractx{\xrw}{#1}
  \newdimen\yrw
  \pgfextracty{\yrw}{#1}
  \pgfpoint{\yrw}{-\xrw}
}
\def\rotatecounterclockwise#1{
  \newdimen\xrcw
  \pgfextractx{\xrcw}{#1}
  \newdimen\yrcw
  \pgfextracty{\yrcw}{#1}
  \pgfpoint{-\yrcw}{\xrcw}
}
\def\outsidespacerpgfclockwise#1#2#3{
  \pgfpointscale{#3}{
    \rotateclockwise{
      \pgfpointnormalised{
        \pgfpointdiff{#1}{#2}}}}
}
\def\outsidespacerpgfcounterclockwise#1#2#3{
  \pgfpointscale{#3}{
    \rotatecounterclockwise{
      \pgfpointnormalised{
        \pgfpointdiff{#1}{#2}}}}
}
\def\outsidepgfclockwise#1#2#3{
  \pgfpointadd{#2}{\outsidespacerpgfclockwise{#1}{#2}{#3}}
}
\def\outsidepgfcounterclockwise#1#2#3{
  \pgfpointadd{#2}{\outsidespacerpgfcounterclockwise{#1}{#2}{#3}}
}
\def\outside#1#2#3{
  ($ (#2) ! #3 ! -90 : (#1) $)
}
\def\cornerpgf#1#2#3#4{
  \pgfextra{
    \pgfmathanglebetweenpoints{#2}{\outsidepgfcounterclockwise{#1}{#2}{#4}}
    \let\anglea\pgfmathresult
    \let\startangle\pgfmathresult

    \pgfmathanglebetweenpoints{#2}{\outsidepgfclockwise{#3}{#2}{#4}}
    \pgfmathparse{\pgfmathresult - \anglea}
    \pgfmathroundto{\pgfmathresult}
    \let\arcangle\pgfmathresult
    \ifthenelse{180=\arcangle \or 180<\arcangle}{
      \pgfmathparse{-360 + \arcangle}}{
      \pgfmathparse{\arcangle}}
    \let\deltaangle\pgfmathresult

    \newdimen\x
    \pgfextractx{\x}{\outsidepgfcounterclockwise{#1}{#2}{#4}}
    \newdimen\y
    \pgfextracty{\y}{\outsidepgfcounterclockwise{#1}{#2}{#4}}
  }
  -- (\x,\y) arc [start angle=\startangle, delta angle=\deltaangle, radius=#4]
}
\def\corner#1#2#3#4{
  \cornerpgf{\pgfpointanchor{#1}{center}}{\pgfpointanchor{#2}{center}}{\pgfpointanchor{#3}{center}}{#4}
}
\def\hedgem#1#2#3#4{
  
  \outside{#1}{#2}{#4}
  \pgfextra{
    \def\hgnodea{#1}
    \def\hgnodeb{#2}
  }
  foreach \c in {#3} {
    \corner{\hgnodea}{\hgnodeb}{\c}{#4}
    \pgfextra{
      \global\let\hgnodea\hgnodeb
      \global\let\hgnodeb\c
    }
  }
  \corner{\hgnodea}{\hgnodeb}{#1}{#4}
  \corner{\hgnodeb}{#1}{#2}{#4}
  -- cycle
}
\def\hedgeii#1#2#3{
  \hedgem{#1}{#2}{}{#3}
}
	\newcommand{\togo}[1]{}
\let\origdoublepage\cleardoublepage
\newcommand{\clearemptydoublepage}{%
  \clearpage{\pagestyle{empty}\origdoublepage}}
\let\cleardoublepage\clearemptydoublepage
\numberwithin{equation}{chapter}
\newcommand{\hyphenationsetting}{%
	\emergencystretch=0pt	
	\tolerance=2000			
	\pretolerance=1000		
	\righthyphenmin=4		
	\lefthyphenmin=4			
}
\newlength{\myparskip}
\newlength{\scale}
	\newcommand*{\coloniff}{\mathrel{\vcentcolon\Longleftrightarrow}}
	\DeclarePairedDelimiter{\abs}{\lvert}{\rvert}
	\DeclarePairedDelimiter{\norm}{\lVert}{\rVert}
	\DeclarePairedDelimiterXPP{\pnorm}[2]{}{\lVert}{\rVert}{_{#1}}{#2}
		\let\oldabs\abs
		\def\abs{\@ifstar{\oldabs}{\oldabs*}}
		\let\oldnorm\norm
		\def\norm{\@ifstar{\oldnorm}{\oldnorm*}}
		\let\oldpnorm\pnorm
		\def\pnorm{\@ifstar{\oldpnorm}{\oldpnorm*}}
	\providecommand{\given}{}			
	\newcommand{\SetSymbol}[1][]{%
		\nonscript\;\,#1\vert
		\allowbreak
		\nonscript\;\,
		\mathopen{}
	}
	\DeclarePairedDelimiterX{\Set}[1]{\{}{\}}{%
		\renewcommand{\given}{\SetSymbol[\delimsize]}
		\, #1 \,
	}
		\let\oldSet\Set
		\def\Set{\@ifstar{\oldSet}{\oldSet*}}
	\DeclarePairedDelimiterX{\Family}[1]{(}{)}{%
		\renewcommand{\given}{\SetSymbol[\delimsize]}
		\, #1 \,
	}
		\let\oldFamily\Family
		\def\Family{\@ifstar{\oldFamily}{\oldFamily*}}
	\newsavebox{\numbox}%
	\newsavebox{\slashbox}%
	\newsavebox{\denbox}%
	\newlength{\slashlength}%
	\newlength{\faktorscale}%
	\DeclareDocumentCommand{\newfaktor}{m O{0.35} m O{-0.35}}{
		\savebox{\numbox}{\ensuremath{#1}}
		\savebox{\slashbox}{\ensuremath{\diagup}}
		\savebox{\denbox}{\ensuremath{#3}}
		\setlength{\faktorscale}{0.5\ht\numbox+0.5\ht\denbox}%
		\setlength{\slashlength}{2pt+0.8\faktorscale+#2\faktorscale-#4\faktorscale}%
		\raisebox{#2\ht\slashbox}{\usebox{\numbox}}
		\mkern-2mu%
		\rotatebox{-30}{\rule[#4\ht\denbox]{0.4pt}{\slashlength}}
		\mkern9mu%
		\hspace{-0.44\slashlength}%
		\raisebox{#4\ht\denbox}{\usebox{\denbox}}
	}
	\DeclareDocumentCommand{\linefaktor}{m O{0.08} m O{-0.08}}{
		\savebox{\numbox}{\ensuremath{#1}}
		\savebox{\slashbox}{\ensuremath{\diagup}}
		\savebox{\denbox}{\ensuremath{#3}}
		\setlength{\faktorscale}{0.5\ht\numbox+0.5\ht\denbox}%
		\setlength{\slashlength}{0.2\faktorscale+0.8\baselineskip}%
		\raisebox{#2\ht\slashbox}{\usebox{\numbox}}
		\raisebox{-0.8pt}{%
			\rotatebox{-30}{\rule[#4\ht\denbox]{0.4pt}{\slashlength}} 
		}%
		\mkern-2mu%
		\hspace{-0.25\slashlength}%
		\raisebox{#4\ht\denbox}{\usebox{\denbox}}
	}
		\renewcommand*\env@matrix[1][\arraystretch]{%
			\edef\arraystretch{#1}%
			\hskip -\arraycolsep
			\let\@ifnextchar\new@ifnextchar
			\array{*\c@MaxMatrixCols c}
		}
		\newcommand{\raisedtimes}[1]{\mathpalette\do@raisedtimes{#1}\relax}
		\newcommand{\do@raisedtimes}[2]{\raisebox{#2}{$\m@th#1\times$}}
		\newcommand{\scaledtimes}[2]{\scaleobj{#1}{\raisedtimes{#2}}}
		\newcommand{\sqcuptimes}{\mathrel{\mathpalette\do@sqcuptimes\relax}}
		\newcommand{\do@sqcuptimes}[2]{%
			\ooalign{%
				\hidewidth$#1\m@th\scaledtimes{0.7}{2pt}$\hidewidth\cr
				$#1\m@th\sqcup$\cr
			}%
		}
		\newcommand{\bigsqcuptimes}{\mathop{\mathpalette\do@bigsqcuptimes\relax}}
		\newcommand{\do@bigsqcuptimes}[2]{%
			\ooalign{%
				\hidewidth$#1\m@th\scaledtimes{1.1}{0pt}$\hidewidth\cr
				$#1\m@th\bigsqcup$\cr
			}%
		}
	\newcommand{\id}{\mathsf{id}}
	\newcommand*{\ditto}{---\texttt{"}---}
	\newcommand{\ph}{\mathord{\rule[-0.05em]{0.6em}{0.05em}}}		
	\newcommand{\phsm}{\mathord{\rule[-0.035em]{0.4em}{0.035em}}}		
	\newcommand{\down}{\mathord{\downarrow}}
	\newcommand{\up}{\mathord{\uparrow}}
	\newcommand{\downwrt}[1]{\mathord{\downarrow_{#1}}}
	\newcommand{\upwrt}[1]{\mathord{\uparrow_{#1}}}
	\newcommand{\cost}[2]{#1\textup{-}\mathsf{cost}_{#2}}
	\newcommand{\yield}[2]{#1\textup{-}\mathsf{yield}_{#2}}
	\newcommand{\Rfree}[1]{\left( \mathcal{R}_{\rm free} \right)^{\sqcuptimes k}}
	\newcommand{\Interesting}[2]{#1\textup{-}\mathsf{Interesting}\left( #2 \right) }
	\newcommand{\cmps}{\star}				
	\newcommand{\apply}{\triangleright}		
	\newcommand{\enhgeq}{\geq_{\rm enh}}
	\newcommand{\deggeq}{\geq_{\rm deg}}
	\newcommand{\enhconv}{\freeconv_{\rm enh}}
	\newcommand{\degconv}{\freeconv_{\rm deg}}
	\newcommand{\freeconv}{\succeq}
	\newcommand{\reals}{\overline{\mathbb{R}}}
	\newcommand{\ordreals}{(\reals,\geq)}
	\newcommand{\UCRT}[1]{(\mathcal{#1}_{\rm free}, \mathcal{#1}, \boxtimes)}
	\newcommand{\res}{\mathcal{R}}
	\newcommand{\resx}{X}
	\newcommand{\ordres}{(\res,\freeconv)}
	\newcommand{\ordresx}{(\resx,\freeconv)}
	\newcommand{\atoms}{\mathsf{Atoms}}
	\newcommand{\rI}{\mathrm{I}}
	\newcommand{\rO}{\mathrm{O}}
	\DeclareRobustCommand{\bitcoin}{{%
		\normalfont\sffamily
		\raisebox{-.05ex}{\makebox[.1\width][l]{-\kern-.2em-}}B%
	}}
	\newtheorem{theorem}{Theorem}[chapter]
	\newtheorem{corollary}[theorem]{Corollary}
	\newtheorem{lemma}[theorem]{Lemma}
		\newenvironment{lemma'}[1]{\addtocounter{theorem}{-1}\begin{lemma}}{\end{lemma}}		
	\newtheorem{proposition}[theorem]{Proposition}
\theoremstyle{definition}
	\newtheorem{definition}[theorem]{Definition}
	\newtheorem{example}[theorem]{Example}
		\newenvironment{example'}[1]{\addtocounter{theorem}{-1}\begin{example}}{\end{example}}		
	\newtheorem*{namedthm}{\namedthmname}	
		\newcounter{namedthm}
		\newenvironment{named}[1]
			{\def\namedthmname{#1}%
				\refstepcounter{namedthm}%
				\namedthm\def\@currentlabel{#1}}
			{\endnamedthm}
	\newtheorem{remark}[theorem]{Remark}
\newtheoremstyle{question}{\myparskip}{\myparskip}{\color{BrickRed}\normalfont}{0pt}{\bfseries}{.}{5pt plus 1pt minus 1pt}{}
\theoremstyle{question}
\newtheoremstyle{answer}{\myparskip}{\myparskip}{\color{PineGreen}\normalfont}{0pt}{\bfseries}{.}{5pt plus 1pt minus 1pt}{\thmname{#1}\thmnote{ \bfseries #3}}
\theoremstyle{answer}
\theoremstyle{remark}
\begin{document}


\pagestyle{empty}
\pagenumbering{roman}

\begin{titlepage}
        \begin{center}
        \vspace*{1.0cm}

        \Huge
        {\bf Resource Theories \\as Quantale Modules}

        \vspace*{1.0cm}

        \normalsize
        by \\

        \vspace*{1.0cm}

        \Large
        Tomáš Gonda \\

        \vspace*{3.0cm}

        \normalsize
        A thesis \\
        presented to the University of Waterloo \\ 
        in fulfillment of the \\
        thesis requirement for the degree of \\
        Doctor of Philosophy \\
        in \\
        Physics \\

        \vspace*{2.0cm}

        Waterloo, Ontario, Canada, 2021 \\

        \vspace*{1.0cm}

        \copyright\ Tomáš Gonda 2021 \\
        \end{center}
\end{titlepage}

\pagestyle{plain}
\setcounter{page}{2}

\cleardoublepage 

\begin{center}\textbf{Examining Committee Membership}\end{center}
  \noindent
	The following served on the Examining Committee for this thesis. The decision of the Examining Committee is by majority vote.
  \bigskip
  
  \noindent
\begin{tabbing}
Internal-External Member: \=  \kill 
External Examiner: \>  Aleks Kissinger \\ 
\> Associate Professor, Department of Computer Science, \\
\> University of Oxford \\
\end{tabbing} 
  \bigskip
  
  \noindent
\begin{tabbing}
Internal-External Member: \=  \kill 
Supervisors: \> Robert W. Spekkens \\
\> Senior Faculty, Perimeter Institute for Theoretical Physics \\
\> \\
\> Florian Girelli \\
\> Associate Professor, Department of Applied Mathematics, \\
\> University of Waterloo \\
\end{tabbing}
  \bigskip
  
  \noindent
\begin{tabbing}
Internal-External Member: \=  \kill 
Internal Member: \> Raymond LaFlamme \\
\> Professor, Department of Physics and Astronomy, \\
\> University of Waterloo \\
\end{tabbing}
  \bigskip
  
  \noindent
\begin{tabbing}
Internal-External Member: \=  \kill 
Internal-External Member: \> Jon Yard \\
\> Associate Professor, Department of Combinatorics and \\
\> Optimization, University of Waterloo \\
\end{tabbing}
  \bigskip
  
  \noindent
\begin{tabbing}
Internal-External Member: \=  \kill 
Other Member: \> William Slofstra \\
\> Assistant Professor, Department of Pure Mathematics, \\
\> University of Waterloo \\
\end{tabbing}

\cleardoublepage

\begin{center}\textbf{Author's Declaration}\end{center}

  \noindent
This thesis consists of material all of which I authored or co-authored: see Statement of Contributions included in the thesis. 
This is a true copy of the thesis, including any required final revisions, as accepted by my examiners.
  \bigskip
  
  \noindent
I understand that my thesis may be made electronically available to the public.

\cleardoublepage


\begin{center}\textbf{Statement of Contributions}\end{center}

	While none of the work presented in this thesis has been published prior to its writing, \cref{sec:monotones} is based to a large extent on the content of the article titled ``Monotones in General Resource Theories'' \cite{Monotones}.
	\Cref{sec:distinguishability}, on the other hand, is based on yet unpublished work.
	I have worked on both of these projects in collaboration with Robert W.\ Spekkens. 
	\Cref{sec:math_prelim,sec:resource_theories} are based on work I have undertaken as a natural continuation of \cite{Monotones}.
	Besides these, I have also worked on several other projects that did not make their way into this thesis in order to keep reasonably concise and on point.
	The ones that were finished during the course of my PhD programme are \cite{gonda2018almost,fritz2020representable,fritz2021finetti}.
	The first one is unrelated to this thesis, while the other two concern topics of abstract probabilistic reasoning. 
	Insofar as resources are often of probabilistic nature, the projects of building an abstract framework for resouning about resources (such as the one presented here) and of building an abstract framework for reaoning probabilistically (such as those presented in \cite{fritz2020representable,fritz2021finetti}) support each other.
	Indeed, process diagrams used in \cref{sec:distinguishability} are precisely those used to depict generalized probabilistic morphisms in a Markov category \cite{fritz2019synthetic}.

\cleardoublepage


\begin{center}
	\textbf{Abstract} 
	\\
	(short version for arXiv)
\end{center}


	We aim to counter the tendency for specialization in science by advancing a language that can facilitate the translation of ideas and methods between disparate contexts. 
	The focus is on questions of ``resource-theoretic nature''. 
	In a resource theory, one identifies resources and allowed manipulations that can be used to transform them. 
	Some of the main questions are: 
	How to optimize resources? 
	What are the trade-offs between them? 
	Can a given resource be converted to another one via the allowed manipulations? 
	
	Because of their ubiquity, methods for answering such questions in one context can be used to tackle corresponding questions in new contexts. 
	The translation occurs in two stages. 
	Firstly, concrete methods are generalized to the abstract language to find under what conditions they are applicable.
	Then, one can determine whether potentially novel contexts satisfy these conditions. 
		
	We focus on the first stage, by introducing two variants of an abstract framework in which existing and yet unidentified resource theories can be represented. 
	Using these, the task of generalizing concrete methods is tackled in \cref{sec:monotones} by studying the ways in which meaningful measures of resources may be constructed. 
	
	One construction expresses a notion of \emph{cost} (or \emph{yield}) of a resource. 
	Among other applications, it may be used to extend measures from a subset of resources to a larger domain.
	
	Another construction allows the translation of resource measures between resource theories. 
	Special cases include resource robustness and weight measures, as well as relative entropy based measures quantifying minimal distinguishability from freely available resources.
	
	We instantiate some of these ideas in a resource theory of distinguishability in \cref{sec:distinguishability}. 
	It describes the utility of systems with probabilistic behavior for the task of distinguishing between hypotheses, which said behavior may depend on.

\cleardoublepage


\begin{center}\textbf{Acknowledgements}\end{center}

	Above all, I am indebted to my supervisor, Robert Spekkens, for his continued support and guidance throughout the years. 
	My ideas have been shaped by his insight and vision, and I will carry the lessons I have learned for the rest of my life.
	In addition, I highly value the extensive freedom I received, which allowed me to grow as an independent researcher and otherwise.
	It has been a pleasure to work in Rob's research group thanks to the inclusive and respectful environment he fosters.
	
	As a result, I have benefitted from nearly endless stream of discussions with other members of the Quantum Foundations and other groups at Perimeter Institute, at group meeting, lunches, seminars, or just in one of the idiosyncratic corridors of the building.
	Thank you, David Schmid, TC Fraser, Nitica Sakharwalde, Ravi Kunjwal, Belén Sainz, Elie Wolfe, Tom Galley, Markus Müller, and others!
	I have managed to collaborate more intensely with some, but less than I would have liked---the scientific talent concentrated at PI is enormous.
	Other collaborators I have and continue to learn from are Paolo Perrone, Eigil Rischel, Lorenzo Catani, Luke Burns, and Justin Dressel.
	A special thanks goes to Tobias Fritz, both for sharing some of his extensive knowledge and for his mentorship.
	
	Perimeter Institute is a unique place and I still find it hard to believe that I have been granted the opportunity to spend 5 years there.
	I am grateful to everyone who contributed towards making it such an incredible hub of ideas and scientific progress.
	This includes the very important people in adminstrative departments and the Black hole bistro.
	Specifically, I want to thank all the people behind the PSI programme; the PSI fellows, James Forrest, Debbie Guenther, and all the students; without which my research and life direction would be quite different, to say the least.
	Big thanks goes to Aggie Branczyk and Maïté Dupuis in particular, with whom I had the pleasure to work on teaching side, and from whom I learned a lot.
	I would also like to thank people who have and those who continue to work hard on making PI ever more inclusive and respectful environment.
	
	I spent most of my PhD years at PI, but I am also grateful to Lídia del Rio and Nuriya Nurgalieva for their hospitality and the opportunity to visit ETH Zürich for one month in the summer of 2019.
	This was a time during which we engaged in a lot of new ideas.
	While none made it into this thesis explicitly, they have certainly had an indirect influence on it and I hope that the work in the thesis can allow for some of these to come to fruition in future.
	
	Additionally, I gained a lot from the many conferences, workshops, and summer schools I was given the chance to participate in. 
	Great thanks goes to all the organizers.
	Among the people I have not mentioned yet, I have been lucky to benefit from discussions with many influential people, among them Bob Coecke, Alexander Kurz, Drew Moshier, Giulio Chiribella, Paolo Perinotti, and many others.
	Additionally, I have learned a lot of quantum information theory from the great course given by John Watrous at University of Waterloo.
	The thesis and its presentation have been improved thanks to the work of Aleks Kissinger on \href{https://tikzit.github.io/}{TikZiT}, a software with which most of the diagrams were made.
	
	Certainly, some of the most important people that enabled me to push through the hard times and enjoy the beautiful ones are my friends and family, of which there are too many to mention.
	However, there are a couple of people that have been particularly instrumental for my ability to complete a PhD thesis at the time of a global pandemic.
	I cannot thank you enough, Soham Mukherjee, Ashley Hynd, Fiona McCarthy, Michael Vasmer, and Denis Rosset!
	
	A huge thanks is reserved to my parents for their endless support in spite of the `abstract nonsense' I decided to engage in.
	Without their support, I could not even dream of being where I am.
	
\cleardoublepage

%
%

\renewcommand\contentsname{Table of Contents}
\tableofcontents
\cleardoublepage
\phantomsection    

\addcontentsline{toc}{chapter}{List of Tables}
\listoftables
\cleardoublepage
\phantomsection		
%
\addcontentsline{toc}{chapter}{List of Figures}
\listoffigures
\cleardoublepage
\phantomsection		


\pagenumbering{arabic}


\chapter{Introduction}\label{sec:introduction}
		
	\subsubsection{Motivations and brief history.}
	
	Physics, like all natural sciences, interpolates between truth-seeking and tool-seeking.
	In the first mode, we try to improve our sense of what is true about the natural world.
	In the latter, which we might call the pragmatic approach, we are focused on directly improving our ability to achieve desired goals.
	While learning about what the world is like tends to improve one's engineering ability, it is not uncommon for key concepts from pragmatic approaches to permeate the foundations of physics as well.
	That is to say, these two modes support each other.
	Among other features, a pragmatic perspective can sometimes help us look past some of our deeply ingrained assumptions and interpretations.
	The pragmatic approach is therefore especially useful\footnotemark{} 
	\footnotetext{Useful here refers to its applicability to the truth-seeking mode.}%
		in times when bottom-up (or reductionist)\footnotemark{} approaches are limited by the current natural interpretations.
	\footnotetext{Pragmatic perspective is not necessarily in opposition, and certainly not the only alternative, to reductionism, but it is the one that's relevant to our discussion.}%
	
	One of the examples of practical questions leading to foundational insights is the study of heat engines and the advent of thermodynamics.
	The goal there is to optimize the useful work that can be extracted from systems, possibly when one is allowed to access a heat bath or two.
	Nowadays, it is hard to imagine fundamental physics without concepts such as energy, temperature and entropy; all of which rose to prominence in this context.
	
	In thermodynamics, valuable resources for the task at hand are systems in states out of thermal equilibrium.
	Since the extraction of work can be conceptualized as a state of some system out of equilibrium (such as a battery), the relevant questions boil down to those of manipulation of resources.
	This point of view serves as a basis for the resource-theoretic treatment of thermodynamics \cite{lostaglio2019introductory}. 
	
	\begin{example}\label{ex:rt_athermality_intro}
		A common approach for treating thermodynamics in this way is via a resource theory of athermality \cite{Janzing2000,Brandao2013}.
		Its name stems from the fact that the state of a system, in which the system is in thermal equilibrium with a given heat bath, is sometimes called the thermal state.
		Other states, ones that are valuable by virtue of being out of equilibrium, are thus athermal.
		The allowed manipulations of resources (called \emph{thermal operations}) include the use of systems initialized in their thermal states, application of reversible energy-preserving operations, and discarding of systems.
		Since these operations alone do not allow one to generate work, any useful work extracted in a protocol involving thermal operations and athermal states can be attributed to the latter---i.e.\ resources used in the protocol.
		We return to this resource theory in greater detail in \cref{ex:rt_athermality} as well as in other examples throughout the thesis.
	\end{example}
	
	It has eventually been understood that resources of thermal nonequilibrium could be of informational character. 
	One of the most famous examples is the Szilard engine \cite{Szilard1929}, which uses information about the state of a system in order to perform work.	
	Here, the pragmatic point of view has a lot to offer and complements the recognition that `information is physical' \cite{landauer1991information}.
	In particular, we can recognize information as a resource that can be interconverted with the resource of (ordered) energy.
	Rather than focusing on what this means at the `fundamental level', a resource theory studies the features of such convertibility and the properties of systems that allow for such connections to be made.
	Consequently, it can help us solidify our understanding of concepts (such as that of energy and information) in a relational rather than analytic manner.
	
	Resource-theoretic thinking has expanded with the development of full-fledged information theory.
	The pioneering work of Claude Shannon \cite{Shannon1948} is centered around questions regarding the convertibility of resources that are communication channels.
	It is not surprising, therefore, that with the rising prominence of information theory in physics in recent years, the application of resource-theoretic ideas in physics has also been on the rise.  
	
	One of the key moments on the road towards establishing resource theories as a research topic in its own right is the development of quantum information theory. 
	The resource-theoretic approach has become particularly influential in the study of quantum entanglement \cite{Horodecki2009,Devetak2008}.
	In a resource theory of entanglement, one studies the manipulation of multipartite quantum states by means of local operations aided by classical communication (LOCC) among the parties \cite{Chitambar2014}.
	In this context, viewing entangled states as resources for communication tasks (via teleportation), dense coding, or secret key generation, solidifies the modern understanding of entanglement as an attribute of physical systems.
	Moreover, by comparison to analogous resource theories with only classical resources available, it concretizes how the concept of entanglement diverges from classical correlations. 

	The success of entanglement theory inspired the study of other properties of quantum states and channels as resources relative to manipulation by a chosen set of transformations.
	For example, symmetry-breaking states can be characterized as resources relative to covariant operations \cite{Marvian2013} and quantum states of thermal nonequilibrium can be characterized as resources relative to thermal operations \cite{Brandao2013}. 
	These resource theories have led to surprising and important conceptual insights, such as a generalization of Noether's theorem \cite{Marvian2014} and a refinement of our understanding of the second law of \mbox{thermodynamics \cite{Brandao2015}}.
	Many other examples of the use of resource theories within quantum information theory have followed \cite{Chitambar2018}. 
	
	\subsubsection{High level introduction to resource theories.}
	
	A resource theory provides a description of phenomena in terms of answers to questions regarding the manipulation of resources. 
	Its basic elements include resources (such as gas in a chamber) subject to transformations (such as heating the gas).
	We identify a special class of transformations, with which the resources can be freely manipulated.
	They are ``free'' in the sense that their cost need not be accounted for, whether as a consequence of technological, fundamental or circumstantial considerations.
	For instance, if we are interested in questions of work extraction, free transformations \emph{should not} include those that require work to be exerted---all of the ordered energy, incoming or outgoing, has to be accounted for.
	In a resource theory of athermality (\cref{ex:rt_athermality_intro}), free transformations are given by the thermal operations that satisfy this requirement indeed.
	
	One objective of a resource theory is to turn simple qualitative distinctions (free vs.\ non-free transformations) into more refined qualitative statements (possibility and quality of resource conversions) as well as quantitative ones (amount of resources needed to achieve a certain task).
	In this way, it concentrates the assumptions being made, primarily in the form of 
	\begin{enumerate}
		\item assumptions inherent to a framework of resource theories (and its interpretation),
		\item assumptions that inform the explicit description of resources and transformation used, and
		\item assumptions motivating the choice of free transformations.
	\end{enumerate}
	There is not necessarily a clear distinction between the first two types of assumptions.
	The most apparent example of these in the context of quantum resource theories (of states) is the assumption that we can describe resources as quantum states and that their transformations are well-described by quantum channels.
	Of course, these include any assumptions implicit in quantum theory and a chosen interpretation thereof.
	The third type of assumptions is the one which resource theories are good at exposing.
	
	For instance, consider the concept of quantum entanglement.
	When we say that a quantum state is entangled with respect to a chosen bipartition, it is clear what we mean---namely that the state is not separable.
	However, there are multiple resource theories in which the set of free states (i.e.\ ones that can be generated by free transformations from nothing) are the separable states.
	One is the resource theory with free transformations given by LOCC operations mentioned above. 
	The other is a resource theory of LOSR (local operations and shared randomness) in which no communication is free---the two parties are merely allowed to access pre-shared classically correlated states for free on top of arbitrary local operations \cite{buscemi2012all}.
	The two resource theories make different statements about how to compare the entanglement of quantum states---they give rise to two \emph{different} concepts of quantum entanglement.
	
	Whether it is appropriate to say that a given quantum state $\rho$ is more entangled (i.e.\ a better resource) than another state $\sigma$ 
	\begin{itemize}
		\item if there is an LOCC operation mapping $\rho$ to $\sigma$, or  
		\item merely if there is an LOSR operation that achieves the transition
	\end{itemize}
	depends on the situation we are considering \cite{schmid2020understanding}.
	In particular, it depends on whether assumptions motivating one of the choices of free transformations are better suited to the situation at hand.
	For example, when describing the utility of quantum states for tasks in quantum internet infrastructure, it is natural to assume the existence of classical internet that allows for classical communication.
	In such a case, LOCC operations are the sensible choice for the set of free transformations.
	On the other hand, in a Bell experiment\footnotemark{} we assume that the two parties involved do not communicate with each other. 
	\footnotetext{A Bell experiment is a scenario that allows one to experimentally distinguish predictions of quantum theory from those of local hidden variable models via a Bell inequality \cite{bell1964einstein,bell1966problem}.
		It is an experiment with two parties that make measurements on their respective part of a bipartite quantum system (such as a pair of potentially entangled particles).
		The measurements they make depend on the choice of settings that are given to them locally (see diagram \eqref{eq:Bell_DAG} for a graphical depiction of this scenario).
		Crucially, once the settings for each party are given, the two are not allowed to communicate by any means.}%
	Entanglement with respect to LOSR operations is the relevant concept in such a context.

	Other objectives of developing a particular resource theory relate to practical questions of how to optimize the amount of resources needed to achieve a task, which could be described either as a transformation (powering an electric motor) or as another resource (a charged battery).
	This includes the aim to understand the trade-offs between different kinds of resources one could try to use.
	
	A common attitude towards resource theories, and in particular the choice of free transformations, is that they specify a practical restriction arising from the situation one wishes to study.
	In thermodynamics, work is a valuable (and scarce) asset.
	In entanglement theory, it is the noiseless quantum communication that is deemed costly.
	
	However, an alternative interpretation is also possible, in which the choice of free transformations encodes a physical principle, which could be emergent or fundamental.
	For example, invertible transformations (e.g.\ unitary channels in quantum theory) that are free in a resource theory of athermality (\cref{ex:rt_athermality_intro,ex:rt_athermality}) are required to satisfy energy conservation \cite[section I.B]{lostaglio2019introductory}.
	Other restrictions on agents' capabilities can be also often seen in such light whenever they encode what \emph{any} agent in a given situation can or cannot achieve \emph{in principle}. 
	A particularly clear example of such a principle is the inability to communicate faster than via the speed of light.
	This second attitude is explored deeper in constructor theory \cite{deutsch2013constructor}---a framework for building fundamental theories based on (im)possibility.
	
	\subsubsection{Merits of the resource-theoretic perspective.}
	
	A significant value of a resource-theoretic perspective is the unification and organization of knowledge about related phenomena.
	Entangled states may be used for varied tasks, some of which we mentioned explicitly already.
	However, in a resource theory of entanglement, they can all be jointly studied and related as manifestations of the resource of entanglement.
	As such, a resource theory can be thought of not just as a means for modelling and optimization, but also as a methodological tool.
	Because it aims to understand different 
		phenomena in a unified explanatory scheme, it can also impose a certain level of consistency among definitions and conventions used to describe these.
	\begin{example}
		Consider a resource theory of nonlocality,\footnotemark{} which can be used to understand violations of Bell inequalities within a resource-theoretic framework.
		We provide more details in \cref{ex:quantum_correlations}, but here we do not need those.
		\footnotetext{Following \cite{Wolfe2019}, it is perhaps more appropriate to call it the resource theory of nonclassicality of common-cause boxes.}%
		Since quantum entanglement is necessary to witness a Bell inequality violation, it has long been thought that there ought to be a very close relationship between the concepts of quantum entanglement\footnotemark{} and quantum nonlocality.
		\footnotetext{The implicit assumption was that the relevant concept is that of quantum entanglement with respect to LOCC operations, which is usually thought to be the default one.}%
		In contradiction, it has later been shown that quantum states with \emph{less} LOCC-entanglement can be responsible for \emph{more} nonlocality \cite{methot2006anomaly,vidick2011more}.
		The reason for this inconsistency becomes apparent if one tries to build a unified resource theory of which quantum entanglement and nonclassicality of common-cause boxes are special cases.
		This is not possible if one interprets quantum entanglement as LOCC-entanglement because communication is not allowed in a Bell scenario.
		On the other hand, there is no inconsistency between LOSR-entanglement and Bell nonlocality.
		Indeed, there is a unified resource theory of which the two are special cases \cite{schmid2020understanding,schmid2020type}.
		The resolution of (apparent) paradoxes can be thus achieved by embracing the resource-theoretic approach and exposing implicit assumptions behind concepts such as that of quantum entanglement in this situation.
	\end{example}
	Additionally, by concentrating the assumptions as we alluded to before, the resource-theoretic point of view can often help expose hidden assumptions to the benefit of greater transparency and future revisions.
	
	In view of the above discussion, resource theories fit within the program of knowledge integration aiming to counter the natural process of knowledge disintegration that accompanies specialization. 
	One of the tools of knowledge integration is the development of languages that can express and relate problems from distinct domains of knowledge.
	This is especially pertinent if disparate areas of science and engineering aim to answer analogous questions, but use different terminology and therefore cannot simply engage in the exchange of ideas. 
	An objective of such a language is to be able to translate results between, and import methods used to address, these analogous questions from one area to another. 
	Goals of numerous research efforts are in line with the above ideas, one of the most notable examples of which is applied category theory \cite{fong2018seven,Coecke2009}.
	Resource theories in our sense may be thought of as a subfield thereof.
		
	\subsubsection{Merits of an abstract framework.}
	
	Topics currently studied under the hood of resource theories are largely limited to physics and quantum information theory in particular.
	However, questions of resource-theoretic nature are by no means exclusive to physics.
	This sentiment is echoed in other works on the topic of resource theories from a broad perspective, such as the frameworks of \cite{DelRio2015,Coecke2016,Fritz2017,marsden2018quantitative}.
	There are undoubtedly many parallel formulations of similar ideas in other domains of inquiry, some of which we mention towards the end of \cref{sec:future} (see also \cite[section 10]{Fritz2017}).
	Going beyond natural sciences and engineering, the ubiquity of majorization-like conditions \cite{Marshall1979} throughout applied mathematics---in the context of economics, ecology, genetics, statistics and others (see sections 11 and 12 of \cite{arnold2018majorization} for an overview)---underscores this point.
	However, because of potentially independent development and terminology, they may not be easy to identify and connect in stronger than heuristic ways.
	
	This is where a framework that generalizes the current instances of resource theories may help.
	By abstracting some of the key features along with their interpretations, it can accommodate a wider range of situations and facilitate connections that would otherwise be tricky to establish formally.
	These connections can occur within physics, thus recontextualizing concepts in a new light and potentially proposing new ways to move past dated ontological commitments.
	At an abstract level, they can also allow translation of methods even if the concepts and their interpretations do not carry over.
	While this thesis develops an abstract resource-theoretic framework, it contains very little material of this kind that would be directly applicable.
	We leave such work for future studies.
	
	On another note, an abstract framework can advance the understanding of resource theories as construed at present.
	It can help us recognize patterns and formulate methods in a more unified manner.
	Moreover, it allows the study of properties of resource theories that are necessary or sufficient for these general methods to be applicable.
	\Cref{sec:monotones} is concerned with such applications that aim to generalize and understand methods used in resource theories currently.
	More details are provided in the part of the Introduction on resource measures below.
	
	General frameworks can also set the stage for building new and alternative resource theories via general constructions framed independently of the particular instance.
	\begin{example}\label{ex:destinguishability}
		In \cref{sec:distinguishability}, we introduce a resource theory of distinguishability of probabilistic behaviours.
		The resource objects are tuples of probability distributions on a system $A$, and they are valuable if the different elements of the tuple can be distinguished with high confidence by observing $A$.
		Besides the relevance of this theory for questions in statistics, we can also use it to generate resource measures for any resource theory based on classical probability theory by translating measures of distinguishability.
		We give more intuition on the translation of resource measures towards the end of the Introduction.
		
		Given an abstract framework, we can then define a resource theory of distinguishability in general (see \cref{sec:Encodings}), whose resource objects are tuples of resources from some prespecified resource theory of interest.
		This construction includes, as instances, resource theories of distinguishability of quantum states \cite{Wang2019} or channels \cite{wang2019resource} as well as many others.
		Once again, these resource theories of distinguishability can be studied in their own right, but they can also be used to generate resource measures in the original resource theory by translating measures of distinguishability (\cref{sec:Monotones From Contractions in General}).
		The procedure is analogous to the classical case.
	\end{example}
	\begin{remark}\label{rem:iid}
		There are other ways to obtain new resource theories in the abstract that we do not elaborate on in this thesis.
		For instance, instead of a tuple of resources representing distinct alternatives as in the case of \cref{ex:destinguishability}, we can model uncertainty about the identity of the resource by other means.
		A probability distribution over the space of resources is one of the common representations of uncertainty, but there are many others as well.
		
		One can also take a resource theory and study its many-copy (or IID) version in which the resource objects are countable powers $x^{\mathbb{N}}$ of every resource $x$.
		A successful transformation $x^{\mathbb{N}} \mapsto y^{\mathbb{N}}$ is specified by a sequence of transformations indexed by $n \in \mathbb{N}$ that achieve the conversion $x^n \mapsto y^{\lfloor \alpha n \rfloor}$.
		Here, $\alpha \in [0,\infty]$ is a label that indicates the rate of conversion of said many-copy transformation.
		Usually, one would also require the conversions to be \emph{approximate} with an asymptotically vanishing error.
		One of the reasons why this construction cannot be described in our abstract framework is that we do not include any metric structure that would allow us to talk about approximations.
	\end{remark}
		
		\vspace{\intextsep}
		\begin{samepage}
		\bgroup
		\begin{adjustbox}{center}
			\renewcommand{\arraystretch}{1.5}%
			\begin{tabular}
			{>{\raggedright}p{0.25\textwidth}>{\centering}p{0.12\textwidth}>{\centering}p{0.12\textwidth}>{\centering}p{0.15\textwidth}>{\centering}p{0.14\textwidth}>{\centering\arraybackslash}p{0.15\textwidth}}
				\hline
				\textbf{Framework}		&  	\textbf{Transfor-mations}	&		\textbf{Resource types}		&  	\textbf{Resource conjunction} 	&  	\textbf{Ambiguity}		&  	\textbf{Disjunction of access} 	\\ 
			\end{tabular}
		\end{adjustbox}\\%
		\begin{adjustbox}{center}
			\renewcommand{\arraystretch}{1.5}%
			\begin{tabular}
			{>{\raggedright}m{0.25\textwidth}>{\centering}m{0.12\textwidth}>{\centering}m{0.12\textwidth}>{\centering}m{0.15\textwidth}>{\centering}m{0.14\textwidth}>{\centering\arraybackslash}m{0.15\textwidth}}
				\hline
				Partitioned process theories \cite{Coecke2016}		&	\checkmark	&	\checkmark	&	\checkmark	&	 				& 	 				\\	
				Ordered commutative monoids \cite{Fritz2017}		&					&					&	\checkmark	&	 				& 	 				\\	
				Resource theories of knowledge \cite{DelRio2015}	&	\checkmark	&					&	 				&	\checkmark 	& 	 				\\	
				Universally combinable resource theories (\cref{sec:ucrt})									&					&					&	\checkmark	&					& 	\checkmark	\\	
				Quantale modules (\cref{sec:rt})						&	\checkmark	&					&					&		 			& 	\checkmark 	\\	
				\hline
			\end{tabular}
		\end{adjustbox}
		\captionof{table}[Frameworks for Resource Theories]{Rough overview of some of the resource-theoretic frameworks and the features they model explicitly (rather than implicitly or not at all).
			\emph{Transformations} refer to the inclusion of protocols for converting resources, whose identity is distinct from the identity of resources so that they can be studied and optimized in their own right.
			With \emph{resource types}, one can perform high-level inference and verification because they restrict the manipulations to those that respect the types.
			An explicit structure of \emph{resource conjunctions} allows one to consider composite resources whose parts are themselves resources of potentially distinct types.
			A structure representing \emph{ambiguity} deals with the lack of knowledge about the identity of resources (such as when we only care about approximate conversions).
			Finally, \emph{disjunction of access} to resources is relevant when we wish to contrast the power of agents who find themselves in different circumstances (hypothetical or not).
			It should be noted that a partitioned process theory as such does not include a specification of resource transformations and conjunctions.
			Instead, this is achieved by identifying a resource theory associated to it, such as a resource theory of states \cite[section 3.2]{Coecke2016}, a resource theory of parallel combinable processes \cite[section 3.3]{Coecke2016}, or a resource theory of universally combinable processes \cite[section 3.4]{Coecke2016}.}
		\label{tab:frameworks}
		\egroup
		\end{samepage}
		\vspace{\intextsep}
	
	\subsubsection{Resource theories as quantale modules.}
	
	In this thesis, we present an abstract framework for resource theories, complementing previous works of this kind \cite{DelRio2015,Coecke2016,Fritz2017}.\footnotemark{}
	\footnotetext{See also the recent approach of \cite{fraser2020functoriality}, which shares some of the motivations, but makes use of modal logic and model theory in order to arrive at an abstract description.}%
	The idea guiding the specific mathematical structures chosen here is to be able to model varying degrees of access that an agent may have to resources.
	To illustrate this idea, imagine a collection of resources (denoted by $X$) as well as a collection of transformations (denoted by $T$).
	An element $x$ of the set $X$ encodes that the agent in question has access to the resource $x$, but not to any other ones.
	Access to free transformations---something that we commonly assume implicitly in a resource theory---is specified by a subset $T_{\rm free}$ of all free transformations.
	There are other sets we may be interested in too.
	Given a resource $x$, there is a set that describes those resources which can be obtained provided an agent has access to \emph{both} $x$ and $T_{\rm free}$.
	We denote such a set of resources by $T_{\rm free} \apply x$.

	\begin{example}
		Resource theory of magic states \cite{howard2017application} is an instance of a resource theory describing quantum computation in the quantum circuit paradigm.
		Free transformations constitute those that are efficiently simulable on a classical computer---they include the preparation of stabilizer states, implementation of Clifford unitary gates and of Pauli measurements.
		In the magic state paradigm, we assume that these are operations that can be executed in a fault-tolerant manner.
		Among the resources that are not free are so-called magic states.
		Sets of resources such as $T_{\rm free} \apply x$ for a given magic state $x$ are of particular interest as they describe quantum computations accessible if we allow the injection of a magic state on top of the classically-simulable operations.
	\end{example}

	For a free transformation $t$, the subset inclusion $\{t\} \subseteq T_{\rm free}$ is interpreted as follows.
	An agent who can choose to use any of the free transformations can also choose to use the transformation $t$.
	That is, access to $T_{\rm free}$ entails access to $\{t\}$.
	The same interpretation is given to other subset inclusions. 
	As we will see later, from the definition of a resource theory it follows, for example, that we have $x \subseteq T_{\rm free} \apply x$.\footnotemark{}
	\footnotetext{In order to improve readability, we generally use an abusive notation in which singletons such as $\{x\}$ are merely denoted by $x$.}%
	We thus interpret this fact (in a resource theory of magic states) as saying that an agent who has access to the injection of a given magic state $x$ on top of classically-simulable operations may also choose to just inject the magic state without applying any other transformations.

	Notions of degrees of access to resources and their entailment mentioned above are often used implicitly in resource-theoretic reasoning. 
	The most obvious example is the question of whether it is possible to convert a resource $x$ to a resource $y$ via free transformations, which can be formulated as: 
	``Does joint access to $x$ and $T_{\rm free}$ entail access to $y$?''
	In other words, we may write it as:
	``Is $y \subseteq T_{\rm free} \apply x$ true?''
	
	We are passing from resources to \emph{sets of} resources and from set membership $\in$ to subset inclusion $\subseteq$.
	This is viewed as a move to use the power set $\mathcal{P}(X)$ instead of the collection $X$ of all resources to describe their value and interactions.
	However, the fact that we \emph{can} describe resource theories in this way does not mean that we \emph{should}.
	The justification for developing the abstract language presented in \cref{sec:resource_theories} is twofold.
	
	First of all, its development goes hand in hand with the understanding of the methods from \cref{sec:monotones} on resource measures.
	By expressing the constructions of resource measures abstractly, we not only understand \emph{that} they are valid measures, but we can also see more clearly \emph{why}.
	Consequently, this reformulation allows us to generalize them, as we explain in greater detail in \cref{sec:monotones}.
	
	\begin{example}
		As one example for many, consider the case of robustness measures.
		Resource robustness \cite{vidal1999robustness} and global robustness \cite{harrow2003robustness} are measures that play an important role in many resource theories.
		For instance, global robustness can be used to characterize many-copy conversions \cite[theorem 1]{brandao2015reversible}.
		The former quantifies the endurance of resources against noise in the form of mixing with other free resources while the latter allows for noise in the form of mixing with arbitrary resources.
		By proving that robustness is a valid measure in the abstract language, it is easy to recognize that both robustness measures are special cases of a more general construction (\cref{sec:weight and robustness}).
		Indeed, it is not important that the ``noise'' comes from either the set of free resources or the set of all resources.
		The crucial property is that the set is \emph{downward closed} (\cref{def:downset}).
	\end{example}
	
	The second case for abstraction comes from the aim to expand the scope of resource theories.
	Compared to the aforementioned generalization of concrete methods, to which we give a lot of space in this thesis, this objective is largely undeveloped here.
	One of the ways in which we strive to extend resource-theoretic thinking beyond its present context is by developing the resource-theoretic language in more general terms.
	That is, once we pass from a set of resources $X$ to the power set $\mathcal{P}(X)$, it is a small step to consider other \emph{lattices} (\cref{sec:lattice}) than the power set lattice.
	Lattices are mathematical structures that allow us to talk about entailment and disjunction of access to resources, just like the operations $\subseteq$ and $\cup$ from the power set do, but they have much wider applicability.
	Throughout the manuscript, we mention a few simple examples of resource theories that cannot be described by a power set lattice (\cref{ex:ucrt_from_ocm,ex:AIT}), but we expect many more examples to appear in the future.
	
	Some of these examples are undoubtedly already developed in the literature that is not commonly associated with resource-theoretic studies.
	This is the second way in which resource theories can extend their scope.
	Namely, by developing an abstract framework, we may recognize that investigations in other research fields are instances of resource-theoretic methods.
	The most immediate ones to look for, given the nature of our framework, are notions from theoretical computer science and logic (see \cref{sec:future}).
	Thus, the abstract framework we develop here should be particularly suited for establishing connections between physics and computer science. 
	
	Of course, a general-purpose framework for resource theories ought to capture other features of resource theories and their interpretations than merely the varying degrees of access to resources.
	In order to keep it simple and restrict its scope, we do not necessarily aim for all of those.
	Some of the main omissions are mentioned in \cref{sec:future}, and outlined also in \cref{tab:frameworks}.
	What we do include is a description of resources, transformations, and two notions of composition.
	One type of composition, denoted by $\star$, describes how transformations can be combined to form new ones.
	In resource theories of quantum states, where the transformations are channels, $\star$ prescribes how to compose channels.
	Channels may generally be composed sequentially and in parallel, but their composition may also be restricted further, e.g.\ by causal requirements.
	The other type of composition, denoted by $\apply$, models how transformations transform resources.
	For example, it may specify, given a quantum channel and a state, what is the output state after the channel is applied.
	On top of these, we require a specification of free transformations that prescribe how resources can be manipulated for free.
	More details on resource theories viewed as \emph{quantale modules}, in line with the above, can be found in \cref{sec:rt}.
	In particular, the central definition of a \emph{resource theory}, as used in the context of this thesis, is \cref{def:rt}.
	
	Besides the above structure that loosely corresponds to the partitioned process theories from \cite{Coecke2016}, but with less focus on compositionality, we also include an account of universally combinable resource theories as a special case in \cref{sec:ucrt}.
	The idea of universally combinable resource theories dates back to \cite{Coecke2016} and it is closely connected to the ordered commutative monoids of \cite{Fritz2017} (see \cref{sec:ucrt_order} for more details).
	In this thesis, a universally combinable resource theory is one in which there is no conceptual distinction between resources and transformations.
	Manipulation of resources is achieved by a composition of resources, denoted by a \emph{commutative} binary operation $\boxtimes$. 
	The reason to require commutativity is that we interpret $r \boxtimes s$ for two resources $r$ and $s$ as their universal combination.
	That is, $r \boxtimes s$ specifies \emph{all} conceivable ways of combining $r$ and $s$, which coincides with $s \boxtimes r$ that specifies all conceivable ways of combining $s$ and $r$.
	Universally combinable resource theories have been used in the work concerning monotones in resource theories \cite{Monotones}, which forms a basis for a large part of this thesis.
	In a sense, one could say that the two operations $\star$ and $\apply$---telling us how transformations can be combined and how they act on resources---reduce to a single operation $\boxtimes$ that describes both how resources can be combined and how they act on other resources. 
	Additionally, $\boxtimes$ provides a notion of conjunction of resources, which the framework of quantale module does not have explicitly (see \cref{tab:frameworks}).
	Namely, $r \boxtimes s$ is interpreted as allowing access to both $r$ and $s$ in conjunction (as well as any other resource that results from combining the two).
	
	\subsubsection{Resource measures.}
	
	Regardless of the framework, one of the key questions to answer in a resource theory is to ascertain whether a free conversion between two given resources is possible.
	This problem can be viewed as that of characterizing the convertibility relation which provides an ordering, or hierarchy, among resources.
	Two resources, $r$ and $s$, are ordered as long as $r$ can be converted to $s$ by a free transformation.
	A useful tool for learning about the resource ordering is an algorithm that can answer the question: 
	``Does a free transformation mapping a resource $r$ to a resource $s$ exist?''
	
	An alternative approach to understand the resource ordering is to identify ways in which we can assign real values to resources.
	In particular, we are generally interested in value-assignments that preserve all the relations between resources.
	That is, a meaningful resource measure $f$ is one that satisfies $f(r) \geq f(s)$ whenever the resource $r$ is freely convertible to the resource $s$.
	We thus avoid assigning higher value to a resource that is less useful than another one.
	Such resource evaluations are called \emph{resource monotones}. 

	While a resource monotone preserves all the relations between resources, it generally also adds new ones.
	That is, knowing that $f(r) \geq f(s)$ holds provides no guarantee that the resource $r$ is more useful than the resource $s$. 
	As such, a single resource monotone contains only partial information about the ordering of resources.
	However, a sufficient number of resource monotones \emph{can} be used to completely characterize the resource ordering.
	This happens if knowing that $f_i(r) \geq f_i(s)$ holds \emph{for every} resource monotone $f_i$ in the collection is in fact a guarantee that $r$ and $s$ are ordered.
	In this case $r$ can be freely converted to $s$ \emph{if and only if} $f_i(r) \geq f_i(s)$ holds for all $f_i$.
	We then speak of a \emph{complete set} of monotones $\{f_i\}_i$.
	
	Monotones play a key role in concisely capturing a substantial amount of information about a resource theory.
	They can be used to quickly answer questions about the resource order and can be used to gain intuition about the resources in question, particularly when they can be represented graphically.
	\begin{example}
		A classical resource theory of nonuniformity \cite{Gour2015,Horodecki2003} describes the degree to which probability distributions differ from the uniform distribution.
		It can be viewed as a resource theory of athermality with respect to a heat bath at infinite temperature (so that thermal states are given by uniform distributions).
		The resource ordering of nonuniformity can be shown to be identical to the well-known majorization ordering \cite{Marshall1979}.
		One of the equivalent ways to describe majorization is in terms of the so-called Lorenz curve \cite{lorenz1905methods,arnold1987majorization}.
		That is, one can determine the convertibility between probability distributions in a classical resource theory of nonuniformity simply by drawing their Lorenz curves  (see \cref{fig:Lorenz_curves} for a concrete example).
	\end{example}
		
		\begin{figure}[h]
			\begin{center}
					\begin{tikzpicture}
						\begin{pgfonlayer}{nodelayer}
							\node [style=none] (0) at (-2, -2) {};
							\node [style=none] (1) at (-2, 2) {};
							\node [style=none] (2) at (2, 2) {};
							\node [style=none] (3) at (2, -2) {};
							\node [style=none] (4) at (3, -2) {};
							\node [style=none] (5) at (-2, 3) {};
							\node [style=none] (7) at (-2, -2.5) {$0$};
							\node [style=none] (8) at (2, -2) {};
							\node [style=none] (11) at (0, 1.5) {};
							\node [style=none] (12) at (-1, 1) {};
							\node [style=none] (14) at (-2, 1.5) {};
							\node [style=none] (15) at (0, -2) {};
							\node [style=none] (16) at (0, -2.5) {$1/2$};
							\node [style=none] (17) at (2, -2.5) {$1$};
							\node [style=none] (18) at (-2.5, 1.5) {$7/8$};
							\node [style=none] (23) at (-2, 1) {};
							\node [style=none] (24) at (-2.5, 1) {$3/4$};
							\node [style=none] (25) at (-1, -2) {};
							\node [style=none] (26) at (-1, -2.5) {$1/4$};
						\end{pgfonlayer}
						\begin{pgfonlayer}{edgelayer}
							\draw [style=arrow] (8.center) to (4.center);
							\draw [style=arrow] (1.center) to (5.center);
							\draw [style=blue arrow, thick] (0.center) -- (11.center) -- (2.center); 
							\draw [-, draw={rgb,255: red,150; green,0; blue,15}, thick] (0.center) -- (12.center) -- (2.center); 
							\draw [style=segment] (0.center) to (25.center);
							\draw [style=segment] (25.center) to (15.center);
							\draw [style=segment] (15.center) to (8.center);
							\draw [style=segment] (0.center) to (23.center);
							\draw [style=segment] (23.center) to (14.center);
							\draw [style=segment] (14.center) to (1.center);
							\draw [style=dashed box] (1.center) to (2.center);
							\draw [style=dashed box] (3.center) to (2.center);
						\end{pgfonlayer}
					\end{tikzpicture}
			\end{center}
			\caption[Example of Lorenz Curves]{An example of two Lorenz curves.
			Red curve corresponds to distribution $x \coloneqq (3/4,1/12,1/12,1/12)$ while the blue one to distribution $y \coloneqq (7/8,1/8)$.
			The height of the respective Lorenz curve at any point along the horizontal axis is a resource monotone.
			The fact that the red curve is higher than the blue one at $1/4$ implies that $y$ is \emph{cannot} be converted to $x$ by free transformations in the resource theory of nonuniformity.
			On the other hand, comparing the fact that the blue curve is higher than the red one at $1/2$ implies the lack of convertibility in the opposite direction.
			Moreover, Lorenz curves have the appealing property that they procide a complete set of nonuniformity monotones, so that no information about the majorization order is lost in this representation.}
			\label{fig:Lorenz_curves}
		\end{figure}
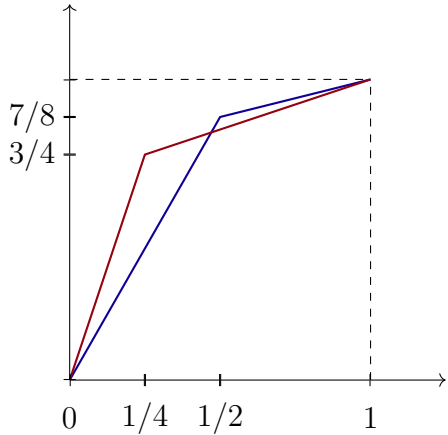
	Resource monotones can also sometimes acquire a concrete interpretation as quantifying the utility of any given resource for a particular task \cite{Takagi2019,Ducuara2019}. 
	Additionally, one can use them in algorithmic characterizations of the resource ordering.
	
	In practice, one can identify patterns among monotones used in various resource theories, which begs the question:
	Are they instances of constructions that are independent of the details of a particular resource theory?
	This is what we investigate in \cref{sec:monotones}, in the context of the aforementioned abstract framework.
	Following the approach of \cite{Monotones}, we introduce a general scheme for modularizing monotone constructions (see \ref{broad scheme}).	
	The idea is that the problem of finding monotones can be broken down into identifying
	\begin{itemize}
		\item a mediating map from the set of resources to a distinct ordered set, ideally one that is well-understood, and
		\item a monotone for the latter which is then pulled back to the resource ordering.
	\end{itemize}
	\begin{example}
	\label{ex:Nielsen's Theorem}
		As shown by Nielsen in \cite{Nielsen1999}, entanglement properties are connected to majorization \cite{Marshall1979}. 
		Specifically, the ordering of pure bipartite quantum states with respect to conversions by LOCC operations is equivalent to the (reverse) ordering of probability distributions by majorization. 
		It follows that all such measures of nonuniformity of distributions can be pulled back to measures of entanglement via the mediating map, as pointed out in \cite{Coecke2010} where the mediating map was termed a qualitative measure of entanglement.
	\end{example}
	
	Consequently, we identify several ways to generate resource monotones.
	This allows us to
	\begin{enumerate}
		\item classify concrete resource monotones according to how they fit with the \ref{broad scheme},
		\item have a structured procedure for generating new, alternative monotones starting from existing constructions,
		\item generalize monotones beyond the resource theories in which they were introduced and identify sufficient conditions under which they exist,
		\item and thus to translate resource measures between resource theories.
	\end{enumerate}
	Specifically, \cref{sec:cost_yield} concerns monotone constructions dubbed \emph{resource cost} and \emph{resource yield}.
	They measure the minimal cost and the maximum yield of the resource in question respectively.
	These include a wide class of examples, among them entanglement rank and entanglement of formation. 
	They have been used explicitly in the study of quantum (and other nonclassical) correlations from the resource-theoretic perspective (see \cref{ex:quantum_correlations}).
	A special case of the cost and yield constructions describes extremal extensions of monotones from a subset to all resources, as introduced in \cite[section 3.2]{Monotones} and subsequently also considered in \cite{gour2020optimal}.
	For instance, the extension of monotones for states, which are often easier to study, to other types of resources such as channels (see \cref{ex:Changing type,sec:dist_channels}) can be particularly useful.
	A general construction of costs and yields is summarized in the form of \cref{thm:cost} and \cref{thm:yield} respectively.
	
	Then, in \cref{sec:translating}, we first study monotone constructions that pull back measures of distinguishability.	
	These can be thought of as examples of the translation of monotones, as the codomain of the mediating map is a resource theory of distinguishability (see \cref{ex:destinguishability} and \cref{sec:Encodings}).
	We also show how standard monotone constructions, such as weight, robustness and measures based on relative entropy, can all be recast in this way, which leads to natural generalizations of them.
	
	A concrete instance of a resource theory of distinguishability, one in which the resources are tuples of probability distributions, is studied in greater detail in \cref{sec:distinguishability}.
	We interpret the tuples as specifying counter-factual possibilities (or hypotheses) of what the actual behaviour is.
	A particular tuple 
	\begin{equation}
		(x_1, x_2, \ldots, x_k)
	\end{equation}
	of distributions of some classical system $A$ is a resource of distinguishability insofar as it allows one to distinguish the alternative hypotheses labelled by $\{1, 2, \ldots, k\}$, e.g.\ by performing measurements of the system $A$.
	The transformations allowed for free are those that are completely uninformative about (i.e.\ independent of) the hypotheses.
	We can view this resource theory as a reframing of the theory of comparison of statistical experiments \cite{blackwell1951comparison,le1996comparison,torgersen1991comparison}, and in particular its formulation in terms of matrix majorization \cite{Marshall1979,Dahl1999}.
	In \cref{sec:distinguishability}, we aim to illuminate the idea of resource theories of distinguishability through this concrete example.
	Additionally, we use it to show how some of the abstract monotone constructions can be used in this context.
	
	Finally, in \cref{sec:monotone_comparison}, we introduce an ordering among monotones that captures how much information they have about the resource ordering. 
	This concept is another tool one may use for the classification of monotone constructions.
	We make partial progress along these lines by proving \cref{thm:yield and cost from more informative functions}. 
	It says: If we think of cost and yield constructions as extending the domain of a partial monotone $f$, then the more informative $f$ is, the more informative the resulting extensions are.
		
	\subsubsection{Reading suggestions and notation.}
	
		\Cref{sec:math_prelim} introduces mathematical structures used throughout the rest of the thesis.
		It is fairly dense with definitions and lemmas whose motivation often becomes apparent only in later chapters.
		As such, the reader may consider continuing with \cref{sec:resource_theories} instead and referring to \cref{sec:math_prelim} when in need of clarification only.
		
		As a final point before we delve into the details, let us make a remark on notation that we use.
		While it may be broken on occassion, the general rule is that lowercase latin letters 
			denote individual resources or transformations, uppercase latin letters 
			denote sets of resources or objects with similar interpretation, and uppercase caligraphic letters 
			represent collections of sets of resources. 
		Keeping this in mind may be helpful when we refer to an object like $S$ as a resource, even though in most straightforward interpretations of the formalism, one would think of such an $S$ as a \emph{set of resources}.
		This convention carries over to generalizations, such as when we replace power set lattices with more general lattices.
		
		Similarly, $\subseteq$ and $\supseteq$ denote set inclusion as a relation on the power set lattices, but also the ordering relation for general lattices.
		In the same vein, $\cup$ and $\cap$ represent set union and intersection respectively, just as well as suprema and infima in a generic lattice.
		
		Order relations that are interpreted as resource orderings are usually denoted by $\succeq$.
		When referring to a generic order relation---one which does not carry an interpretation as an ordering of resources---we tend to use the symbol $\geq$, which is not meant to suggest that they have anything to do with numbers a priori.
		
		\Cref{tab:notation} displays further conventions that we follow concerning the use of different Latin letters in particular.
		
		\bgroup
		\renewcommand{\arraystretch}{1.2}%
		\begin{table}[]
		\centering
			\begin{tabular}
			{m{0.1\textwidth}|>{\raggedright}m{0.25\textwidth}>{\arraybackslash}m{0.65\textwidth}}
				\hline
				\textbf{Symbol} 		&  	\textbf{Meaning(s)}  &  	\textbf{Interpretation(s)}  										\\
				\hline
				$r$, $x$, \dots        	&  	Element of a set  	& 	Individual resource or a transformation						\\
				\ditto				   	&  	Atom of a lattice  	& 	\ditto				 												\\ 
				\hline
				$R$        				& 	A set					& 	Set of all resources in a UCRT 									\\
				$S$, $T$  			& 	\ditto					& 	Set of resources in a UCRT 										\\
				$R_{\rm free}$		& 	\ditto					& 	Set of free resources in a UCRT 								\\
				$X$, $W$        		& 	\ditto					& 	Set of all resources in a concrete resource theory 			\\
				$Y$, $Z$		 		& 	\ditto					& 	Set of resources in a resource theory 							\\
				$Y$, $Z$, $X$  					& 	Suplattice element	& 	\ditto 																\\
				$A$ 					& 	A set					& 	Domain of definition of a partial function 						\\
				$T$, $U$        		& 	\ditto					& 	Set of all transformations in a concrete resource theory 	\\
				$S$, $U$		    		& 	\ditto					& 	Set of transformations in a resource theory 					\\
				$S$, $T$, $U$  		& 	Suplattice element	& 	\ditto 																\\
				$T_{\rm free}$		& 	A set					& 	Set of free transformations in a resource theory 				\\
				$D$ 					& 	\ditto					& 	A downset of resources 											\\
				$D$					& 	\ditto					& 	A right invariant set of transformations 						\\
				$S$ 					& 	\ditto					& 	A left invariant set of transformations							\\ 
				\hline
				$\mathcal{R}$ 		& 	A quantale			& 	Suplattice of resources in a UCRT								\\
				$\mathcal{P}(R)$ 	& 	A power set			& 	Suplattice of resources in a concrete UCRT					\\
				$\boxtimes$ 			& 	commutative quantale operation	& 	Combination of resources in a UCRT			\\
				$\mathcal{X}$ 		& 	A suplattice			& 	Suplattice of resources and their disjunctions					\\
				$\mathcal{P}(X)$ 	& 	A power set			& 	Suplattice of resources in a concrete resource theory		\\
				$\mathcal{T}$ 		& 	A quantale			& 	Quantale of transformations and their disjunctions			\\
				$\mathcal{P}(T)$ 	& 	A power set			& 	Quantale of transformations in a concrete resource theory	\\
				$\star$				&	quantale operation	&	Combination of transformations in a resource theory			\\
				$\apply$				&	quantale action		&	Action of transformations on resources						\\
				\hline
			\end{tabular}
			\caption[Notation Overview]{Notational conventions for commonly used symbols, with interpretations in the context of resource theories as presented in sections \cref{sec:resource_theories,sec:monotones}.
			The abbreviation `UCRT' refers to a `universally combinable resource theory'.}
			\label{tab:notation}
		\end{table}
		\egroup

\chapter{Mathematical Preliminaries}\label{sec:math_prelim}

	In this chapter, we introduce various more or less well-known mathematical structures and concepts.
	Some of them, particularly those in \cref{sec:ordered_lattice}, may be new given the not-quite-standard interpretation we propose here.
	However, it could very well be the case that they have appeared elsewhere unbeknownst to us, possibly disguised under a different language.
	The purpose of this chapter is to make the thesis reasonably self-contained and to act as a resource for the reader to look up relevant definitions whose notational conventions here tend to match those in the rest of the manuscript.
	
	\section{Process Theories}\label{sec:proc_th}
	
		We start with a concept that is not going to be relevant for any of the technical details we present in this thesis.
		However, process theories are tied with resource theories in multiple ways.
		Here, they serve to provide examples that elucidate the intended meaning of several of our definitions. 
		They form an important class of examples that connect our framework to the practice of resource theories because process theories are often used as a set-up therefor. 
		In large part, this is the case thanks to the influential framework for resource theories in terms of \emph{partitioned} process theories \cite{Coecke2016}.
		Another point of contact is a result of many resource theories originating in the community of quantum information theory, which has assimilated the process-theoretic perspective as exhibited in \cite{coecke2018picturing}.
		More fundamentally, one could say that process theories and resource theories also share some of their conceptual underpinnings, namely the adherence to pragmatism, structural realism,\footnotemark{} and process philosophy \cite{sep-process-philosophy}.
		\footnotetext{In the case of resource theories, structural perspective enters because resources attain their attributes only in relation to other resources and the choice of free manipulations.
		Properties of an object, which its value as a resource, are not intrinsic.}
		
		We avoid the category-theoretic technicalities behind the scenes and introduce process theories in a diagrammatic language \cite{selinger2010survey}, following the approach of \cite{coecke2015categorical}.
		A process $t$ is thus denoted by a box:
		\begin{equation}\label{eq:process}
			\tikzfig{process}
		\end{equation}
		Wires on the bottom represent incoming systems, while those on top are outgoing systems.
		Moreover, wires carry labels that identify the type of the system in question.
		We generally do not need to consider wire types in this manuscript and thus leave the labels implicit.

		Such processes can be \emph{contracted} by joining a pair of an incoming and outgoing wire (of maching type) to form diagrams such as
		\begin{equation}\label{eq:diagram}
			\tikzfig{diagram}	
		\end{equation}
		These embody the principle that ``only connectivity matters'' in the sense that moving boxes and wires around on the page does not change the identity of a diagram, as long as the connectivity is preserved.
		Furthermore, we also require that there are no cycles in the admissible diagrams \cite[definition 2.3]{coecke2015categorical}, i.e.\ we restrict our attention to \emph{circuit} diagrams.
		
		There are two special kinds of wirings, $\otimes$ and $\circ$, which carry the interpretation of parallel and sequential composition respectively.
		For example, we have
		\begin{equation}\label{eq:parallel_sequential}
			\tikzfig{parallel_sequential}
		\end{equation}
		and, assuming that the output wire type of $s$ matches the input wire type of $t$, we also have
		\begin{equation}\label{eq:parallel_sequential_2}
			\tikzfig{parallel_sequential_2}
		\end{equation}
		These two operations generate all admissible diagrams and thus can be seen as building blocks of process contractions.
		With these in mind, a \textbf{process theory} is a collection of processes such that every admissible wiring of processes corresponds to another process in the theory.
		
		In this thesis, we often refer to the concept of a partitioned process theory from \cite{Coecke2016}.
		A \textbf{partitioned process theory} is nothing but a process theory with a distinguished subset of so-called free processes.
		The key requirement is that the subset of free processes forms a process theory itself, i.e.\ it should be closed under both parallel and sequential compositions.
		
	\section{Ordered Sets}\label{sec:order}
	
		One of the central concepts used in resource theories is that of an ordered set.
		We use it in two ways with quite distinct interpretations.
		Some order relations refer to the fact of a resource being more valuable than another one (such as the resource ordering $\freeconv$ on $X$).
		Others describe the fact that having access to a given set of resources may subsume the access to another set of resources.
		
		\begin{definition}\label{def:preorder}
			A relation $\geq$ on a set $X$ is said to be a \textbf{preorder} if it satisfies
			\begin{enumerate}[series=order]
				\item Transitivity: $x \geq y$ and $y \geq z$ implies $x \geq z$.
				\item Reflexivity: $x \geq x$.
			\end{enumerate}
			The pair $(X, \geq)$ is then called a \textbf{preordered set}.
		\end{definition}
		Every preorder generates an equivalence relation indentifying elements that are interchangeable as far as their place in the ordering is concerned.
		That is, we say that $x,y \in X$ are equivalent with respect to $\geq$ if we have
		\begin{equation}
			x \geq y  \qquad \text{ and } \qquad  y \geq x.
		\end{equation}
		In situations in which one is not interested in the identity of elements of $X$ beyond their placement according to $\geq$, we can quotient out the variability within the equivalence classes to obtain the corresponding partial order.
		\begin{definition}\label{def:partial_order}
			A relation $\geq$ on a set $X$ is said to be a \textbf{partial order} if it is a preorder and furthermore satisfies
			\begin{enumerate}[resume=order]
				\item Antisymmetry: $x \geq y$ and $y \geq x$ implies $x = y$.
			\end{enumerate}
			The pair $(X, \geq)$ is then called a \textbf{partially ordered set} (often shortened to \textbf{poset}).
		\end{definition}
		
		Since, in the study of resource theories, we are often interested in the interplay between different orderings, we usually prefer to work with the more basic notion of preorders rather than partial orders.
		\begin{definition}\label{def:total_order}
			A preorder $\geq$ on $X$ is termed a \textbf{total order} if any two elements of $X$ are comparable, i.e.\ if for any $x,y \in X$ we have either
			\begin{equation}
				x \geq y  \qquad \text{ or } \qquad  y \geq x
			\end{equation}
			(or both).
		\end{definition}
		\begin{example}\label{ex:reals}
			A canonical example of a totally ordered set for our purposes is that of (extended) real numbers $\ordreals$.
			The underlying set consists of all real numbers as well as $-\infty$ and $\infty$, which are ordered as expected.
		\end{example}
		Generally speaking, however, ordered sets will have incomparable pairs $(x,y)$, for which neither $x$ is above $y$ nor is $y$ above $x$, according to $\geq$.
		
		\begin{definition}\label{def:isotone}
			Given preordered sets $(X,\geq)$ and $(Y,\geq)$, a function $f \colon X \to Y$ is said to be an \textbf{isotone} if it satisfies
			\begin{equation}\label{eq:isotone}
				x \geq x' \quad \implies  \quad  f(x) \geq f(x')
			\end{equation}
			for all $x,x' \in X$.
			Similarly, a partial function $f \colon X \to Y$ is termed a \textbf{partial isotone} if it satisfies \eqref{eq:isotone} for all $x$ and $x'$ within its domain.
		\end{definition}
		That is, isotones are order-preserving functions.
		Traditionally, such functions are also referred to as monotones.
		However, we prefer the term isotone.
		It allows us to distinguish a specific class of order-preserving functions as follows.
		\begin{definition}\label{def:monotone}
			An order-preserving function $f \colon X \to Y$ is called a \textbf{monotone} if $(Y,\geq)$ is a totally ordered set.
		\end{definition}
		Isotones are the most basic structure-preserving maps in the theory of ordered sets.
		While they may coarse-grain the information in their domain, we can still recover some of the original structure from the codomain.
		Specifically, for an isotone $f \colon X \to Y$, the contrapositive statement of \eqref{eq:isotone} is
		\begin{equation}
			 f(x) \not\geq f(x')  \quad \implies \quad  x \not\geq x'
		\end{equation}
		That is, lack of ordering in the image of $f$ lets one learn about the lack of ordering in the domain.
		
		Another useful concept is that of subsets of $X$ which include all the points below their elements---downsets---and their dual versions---upsets.
		\begin{definition}\label{def:downset}
			Let $(X, \geq)$ be a preordered set.
			A subset $D$ thereof is called a \textbf{downward closed set} (or downset for short) if for all $x \in X$ and $d \in D$ the implication
			\begin{equation}
				 d \geq x  \quad \implies \quad x \in D
			\end{equation}
			holds.
			The set of all downward closed subsets of $X$ is denoted by $\mathcal{DC}(X)$.

			On the other hand, a set $U \subseteq X$ is an \textbf{upward closed set} (or upset for short) if for all $x \in X$ and $u \in U$ the implication
			\begin{equation}
				 x \geq u  \quad \implies \quad  x \in U
			\end{equation}
			holds.
			The set of all upward closed subsets of $X$ is denoted by $\mathcal{UC}(X)$.
	 	\end{definition}
	 	
 		\begin{figure}[!tb]
		\begin{center}
			\begin{subfigure}[b]{.4\textwidth}\centering
				\begin{tikzpicture}[align=center,thick,>=stealth,node distance=\scale and 0.5*\scale,
						resource/.style={circle,fill=black,draw=none},
						freeconv/.style={->,thick}
					]
					\node[resource]	(a)		at (0,0)					{};
					\node[resource]	(1b)		[below left=of a]			{};
					\node[resource]	(b1)		[below right=of a]		{};
					\node[resource]	(2c)		[below left=of 1b]		{};
					\node[resource]	(c)		[below left=of b1]		{};
					\node[resource]	(c2)		[below right=of b1]		{};
					\node[resource]	(3d)		[below left=of 2c]		{};
					\node[resource]	(1d)		[below left=of c]			{};
					\node[resource]	(d1)		[below left=of c2]		{};
					\node[resource]	(d3)		[below right=of c2]		{};
					
					\draw[freeconv]	(a) -- (1b);
					\draw[freeconv]	(a) -- (b1);
					\draw[freeconv]	(1b) -- (2c);
					\draw[freeconv]	(1b) -- (c);
					\draw[freeconv]	(b1) -- (c);
					\draw[freeconv]	(b1) -- (c2);
					\draw[freeconv]	(2c) -- (3d);
					\draw[freeconv]	(2c) -- (1d);
					\draw[freeconv]	(c) -- (1d);
					\draw[freeconv]	(c) -- (d1);
					\draw[freeconv]	(c2) -- (d1);
					\draw[freeconv]	(c2) -- (d3);
					
					\node[text=white]			at (b1)								{$r$};
					
					\begin{pgfonlayer}{background}
						\foreach \nodename in {a,1b,b1,2c,c,c2,3d,1d,d1,d3} {\coordinate (\nodename') at (\nodename);} 
					
						\path[fill=SeaGreen,draw=Emerald,line width=1.17*\scale,line cap=round,line join=round]  (b1') to (1d') to (d3') to (b1') -- cycle;
						\path[fill=SeaGreen,draw=SeaGreen,line width=1.06*\scale,line cap=round,line join=round]  (b1') to (1d') to (d3') to (b1') -- cycle;
						
						\path[fill=Bittersweet!80,draw=Brown,line width=0.94*\scale,line cap=round,line join=round]  (c2') to (d1') to (1d') to (2c') to (3d') to (d3') to (c2') -- cycle;
						\path[fill=Bittersweet!80,draw=Bittersweet!80,line width=0.82*\scale,line cap=round,line join=round]  (c2') to (d1') to (1d') to (2c') to (3d') to (d3') to (c2') -- cycle;
					\end{pgfonlayer}
				\end{tikzpicture}
				\caption{Example of a downset (brown region) and the downward closure $\down(r)$ of a particular resource $r$ (turquoise) in a simple preordered set.}\label{fig:DC}
			\end{subfigure}\hspace{0.05\textwidth}
			\begin{subfigure}[b]{.4\textwidth}\centering
				\centering
				\begin{tikzpicture}[align=center,thick,>=stealth,node distance=\scale and 0.5*\scale,
						resource/.style={circle,fill=black,draw=none},
						freeconv/.style={->,thick}
					]
					\node[resource]	(a)		at (0,0)					{};
					\node[resource]	(1b)		[below left=of a]			{};
					\node[resource]	(b1)		[below right=of a]		{};
					\node[resource]	(2c)		[below left=of 1b]		{};
					\node[resource]	(c)		[below left=of b1]		{};
					\node[resource]	(c2)		[below right=of b1]		{};
					\node[resource]	(3d)		[below left=of 2c]		{};
					\node[resource]	(1d)		[below left=of c]			{};
					\node[resource]	(d1)		[below left=of c2]		{};
					\node[resource]	(d3)		[below right=of c2]		{};
					
					\draw[freeconv]	(a) -- (1b);
					\draw[freeconv]	(a) -- (b1);
					\draw[freeconv]	(1b) -- (2c);
					\draw[freeconv]	(1b) -- (c);
					\draw[freeconv]	(b1) -- (c);
					\draw[freeconv]	(b1) -- (c2);
					\draw[freeconv]	(2c) -- (3d);
					\draw[freeconv]	(2c) -- (1d);
					\draw[freeconv]	(c) -- (1d);
					\draw[freeconv]	(c) -- (d1);
					\draw[freeconv]	(c2) -- (d1);
					\draw[freeconv]	(c2) -- (d3);
					
					\node[text=white]			at (b1)								{$r$};
					
					\begin{pgfonlayer}{background}
						\foreach \nodename in {a,1b,b1,2c,c,c2,3d,1d,d1,d3} {\coordinate (\nodename') at (\nodename);} 
						
						\path[fill=Bittersweet!80,draw=Brown,line width=1.17*\scale,line cap=round,line join=round]  (c2') to (a') to (2c') to (1b') to (b1') to (c2') -- cycle;
						\path[fill=Bittersweet!80,draw=Bittersweet!80,line width=1.06*\scale,line cap=round,line join=round]  (c2') to (a') to (2c') to (1b') to (b1') to (c2') -- cycle;
						
						\path[fill=SeaGreen,draw=Emerald,line width=0.94*\scale,line cap=round,line join=round]  (b1') to (a') to (b1') -- cycle;
						\path[fill=SeaGreen,draw=SeaGreen,line width=0.82*\scale,line cap=round,line join=round]  (b1') to (a') to (b1') -- cycle;
					\end{pgfonlayer}
				\end{tikzpicture}
				\caption[Example of Upsets and Downsets]{Example of an upset (brown) and the upward closure $\up(r)$ of an $r \in R$ (turquoise) in a simple preordered set.\\}\label{fig:UC}
			\end{subfigure}
		\end{center}
		\end{figure}
		
	 	\begin{definition}\label{def:order_closure}
	 		Given a preordered set $(X, \geq)$, the \textbf{downward closure} $\down \colon X \to \mathcal{DC}(X)$ is defined by\footnotemark{}
	 		\footnotetext{The set $\down(x)$ is also called the principal ideal of $x$.}%
			\begin{equation}
				\label{eq:downward closure}
				\down (x) = \Set{ y \in X  \given x \geq y}.
			\end{equation}
			Similarly, the \textbf{upward closure} $\up \colon X \to \mathcal{UC}(X)$ is defined by
			\begin{equation}
				\up (x) = \Set{ y \in X  \given  y \geq x}.
			\end{equation}
		\end{definition}
		
		When we intersect the downward closure of some $x$ with the upward closure of some $y$, we get all the elements that lie in between the two.
		These subsets are called \textbf{intervals} and we denote them as
		\begin{equation}
			\langle y, x \rangle \coloneqq \down(x) \cap \up(y).
		\end{equation}

		Notice that both $\mathcal{DC}(X)$ and $\mathcal{UC}(X)$ have a natural ordering in terms of subset inclusion which makes $\up$ and $\down$ into order-preserving maps.  
		In particular, it is easy to see that the two maps of preordered sets
		\begin{align}
			\down &\colon (X,\geq) \to (\mathcal{DC}(X), \supseteq)  &  \up &\colon (X,\geq) \to (\mathcal{UC}(X), \subseteq)
		\end{align}
		are both isotones.
		
		Downward and upward closures can also be extended to act on sets of resources by requiring compatibility with unions.
		That is, for any $Z \in \mathcal{P}(X)$ we define 
		\begin{align}\label{eq:down_on_sets}
			\down (Z) &\coloneqq \bigcup_{z \in Z} \down (z) 	& 	\up (Z) &\coloneqq \bigcup_{z \in Z} \up(z).
		\end{align}
		
		Thus, a set $D$ is downward closed if and only if $\down (D) = D$ holds, while the fact that a set $U$ is upward closed can be stated as $\up (U) = U$.

	\section{Lattices}\label{sec:lattice}
	
		Preorders offer a suitable level of generality to describe the relationship between different resources and their value.
		Even though resource orderings that arise in practice often come with extra structure or properties, a great deal of the general theory can be laid out in the abstract, as we do in this thesis.
		
		However, when it comes to order relations representing degrees of possibility, it is useful to make some further assumptions.
		For example, notice that the power set $\mathcal{P}(X)$ ordered by set inclusion has a bottom element (the empty set) and a top element (the full set $X$).
		Moreover, any two elements have a least upper bound (their union) as well as a greatest lower bound (their intersection).
		Least upper bounds and greatest lower bounds exist for \emph{any} collection of elements of $\mathcal{P}(X)$.
		In other words, $(\mathcal{P}(X),\supseteq)$ is a complete lattice.
		
		\begin{definition}\label{def:sup_inf}
			Let $(\mathcal{X}, \supseteq)$ be a poset.
			Given a subset $\mathcal{Z}$ of $\mathcal{X}$, consider the collections of all upper bounds and all lower bounds thereof respectively:
			\begin{equation}
				\begin{split}
					\mathcal{U}(\mathcal{Z}) &\coloneqq \Set*[\big]{  X \in \mathcal{X}  \given  \forall \, Z \in \mathcal{Z} \; : \; X \supseteq Z  } \\
					\mathcal{L}(\mathcal{Z}) &\coloneqq \Set*[\big]{  X \in \mathcal{X}  \given  \forall \, Z \in \mathcal{Z} \; : \; X \subseteq Z  }
				\end{split}
			\end{equation}
			If $\mathcal{U}(\mathcal{Z})$ has a bottom element (the least upper bound), we term it the \textbf{supremum} of $\mathcal{Z}$, denoted by $\bigcup \mathcal{Z}$.
			Similarly, if $\mathcal{L}(\mathcal{Z})$ has a top element (the greatest lower bound), we term it the \textbf{infimum} of $\mathcal{Z}$, denoted by $\bigcap \mathcal{Z}$.
		\end{definition}
		A complete lattice is a poset in which every subset\footnotemark{} has both a supremum and an infimum.
		\footnotetext{Note that a subset $\mathcal{Z}$ of $\mathcal{X} = \mathcal{P}(X)$ is a \emph{collection of subsets} of $X$.}%
		Clearly, a complete lattice has both a top element and a bottom element, as they arise as suprema and infima given the choice of $\mathcal{Z} = \mathcal{X}$ in \cref{def:sup_inf}:
		\begin{equation}
			\top = \bigcup \mathcal{X}  \qquad \qquad  \bot = \bigcap \mathcal{X}
		\end{equation}
		By convention, the top element is also the infimum of the empty subset of $\mathcal{X}$, i.e.\ $\mathcal{Z} = \emptyset$, and similarly for the dual notion:
		\begin{equation}\label{eq:empty_bound}
			\top = \bigcap \emptyset  \qquad \qquad  \bot = \bigcup \emptyset
		\end{equation}
		
		Since any infimum can be expressed as a supremum of the set of lower bounds, it suffices to assume the existence of all suprema to obtain a complete lattice.
		A poset that has all suprema is called a suplattice (see \cref{def:suplattice}) and this observation means that any given poset is a complete lattice if and only if it is a (bounded) suplattice.
		Nevertheless, the two concepts are not quite interchangeable.
		This is because maps that preserve suprema \emph{do not} coincide with those that preserve both suprema and infima.
		In other words, the homomorphisms of suplattices differ from the homomorphisms of complete lattices.
		In our context, we think of suprema as being the fundamental part of the structure we are interested in, and infima as being a derived notion.
		That is why we use suplattices instead of complete lattices.
		\begin{definition}\label{def:suplattice}
			A poset $\mathcal{X}$ is a (bounded)\footnotemark{} \textbf{suplattice} if it has a minimal element $\bot$ and every subset $\mathcal{Z}$ of $\mathcal{X}$ has a supremum $\bigcup \mathcal{Z} \in \mathcal{X}$.
			We thus denote a suplattice by the triple $(\mathcal{X}, \bigcup, \bot)$ or just $\mathcal{X}$ whenever the other structures are clear from the context.
			\footnotetext{We will drop the explicit mention, but throughout we assume every suplattice to be bounded.
			In particular, this means that we require morphisms of suplattices to preserve the bottom element.}%
		\end{definition}
		
		The traditional perspective in mathematical logic is to interpret the relation $Y \supseteq Z$ in a lattice as entailment ($Z$ entails $Y$) and the operations $Y \cup Z$, $Y \cap Z$ as predicate disjunction and conjunction respectively.
		In this thesis we take a different point of view.
		Namely, we also interpret $Y \supseteq Z$ as a kind of entailment, but in the \emph{opposite} direction as follows.
		\begin{quote}
			$Y \supseteq Z$ means: If an agent has access to anything in $Y$, then they also have access to anything $Z$.
		\end{quote}
		The supremum $Y \cup Z$ then represents the joint access to anything in $Y$ or in $Z$, while the infimum $Y \cap Z$ represents access to whatever $Y$ and $Z$ have in common.
		For example, in the case of a power set lattice, $Y$ and $Z$ are sets of resources whose common elements are given by their intersection.
		That is, 
		\begin{quote}
			having access to $Y \in \mathcal{P}(X)$ means: An agent may use one (and only one) of the elements of $Y$, the choice of which is theirs to make.
		\end{quote}
		Of course, there are other interpretations that the mathematical structures allow for, but the reader should keep in mind that many definitions are introduced and justified with this one in mind.
		
		It is worth noting that in works that have used lattices in a resource-theoretic context in the past, such as \cite{DelRio2015,Sadrzadeh2006}, the lattice structure is introduced to represent \emph{knowledge} about resources. 
		That is, the relation $Y \supseteq Z$ indicates that knowing $Z$ entails knowing $Y$.
		Given such interpretation, $Z$ ought to be considered a ``better resource'' than $Y$ in the sense that it represents an agent with a higher amount of information about the resource in question.
		On the contrary, in our case $Y \supseteq Z$ implies that $Y$ is a ``better resource'' than $Z$ as it gives an agent greater (or equal) access.
		In other words, an agent having access to resources in $Y$ can choose from a wider range of actions to achieve their goal than an agent having access to resources in $Z$.
		
		\begin{definition}\label{def:lat_morphism}
			Given suplattices $(\mathcal{X}, \bigcup, \bot)$ and $(\mathcal{Y}, \bigvee, \bot)$, a function $f \colon \mathcal{X} \to \mathcal{Y}$ is a \textbf{suplattice homomorphism} if it satisfies 
			\begin{equation}\label{eq:lat_morphism}
				f \left( \bigcup \mathcal{Z} \right) = \bigvee_{Z \in \mathcal{Z}} f(Z)
			\end{equation}
			for all subsets $\mathcal{Z}$ of $\mathcal{X}$.
		\end{definition}
		Choosing $\mathcal{Z}$ to be the empty set shows that a suplattice homomorphism preserves the bottom element, i.e. $f(\bot) = \bot$.
		Necessarily, they are also order-preserving, which can be derived as follows:
		\begin{equation}\label{eq:hom_pres_ord}
			\begin{split}
				X_1 \supseteq X_2 \quad 	&\implies \quad  f(X_1) = f \left( X_1 \cup X_2 \right) = f(X_1) \vee f(X_2) \\
															&\implies \quad  f(X_1) \supseteq f(X_2)
			\end{split}
		\end{equation}
		However, suplattice homomorphisms preserve neither infima nor the top element in general.
		
		The power set lattice has many special properties, among which is the special role of singletons.
		They are the minimal elements if we exclude the bottom element $\emptyset$ and every element of the power set can be expressed as the union of a collection of singletons.
		Moreover, this collection is unique.
		In general terms, we can phrase these properties as follows.
		\begin{definition}\label{def:atom} 
			Let $(\mathcal{X}, \bigcup, \bot)$ be a suplattice with the partial order denoted by $\supseteq$.
			An element $X \in \mathcal{X}$ distinct from $\bot$ is an \textbf{atom} if for all $Y \in \mathcal{X}$, $X \supseteq Y$ implies $Y \in \{\bot,X\}$.
			The set of all atoms of $\mathcal{X}$ is denoted by $\atoms(\mathcal{X})$.
			$\mathcal{X}$ is said to be an \textbf{atomistic lattice} if every element thereof is the supremum of some subset of its atoms. 
			That is, for each $X \in \mathcal{X}$ there ought to be a set $A_X \subseteq \atoms(\mathcal{X})$ such that we have
			\begin{equation}
				X = \bigcup A_X.
			\end{equation}
			Finally, $\mathcal{X}$ is said to be \textbf{uniquely atomistic} if the set $A_X \subseteq \atoms(\mathcal{X})$ is unique for \mbox{each $X$}.
		\end{definition}
		As an example of an atomistic lattice, consider the following one that can be viewed as the power set of three elements but for one subset removed:
		\begin{equation}\label{eq:lattice}
			\tikzfig{lattice}
		\end{equation}
		We can see that $\top$ is the supremum of both $\{a_1,a_3\}$ and $\{a_1,a_2,a_3\}$, which is why the lattice is not uniquely atomistic.
		Many lattices are not even atomistic.
		For instance, removing any single atom from the one depicted in diagram \eqref{eq:lattice} leads to one.
		As another example, the ordered set of extended real numbers from \cref{ex:reals} has no atoms at all.
		
		Uniquely atomistic lattices are precisely the ones that can be interpreted as power sets.
		\begin{lemma}\label{lem:UAL=CABA}
			Every uniquely atomistic suplattice $\mathcal{X}$ is isomorphic to the power set lattice $(\mathcal{P}(X), \bigcup, \emptyset)$ where $X \coloneqq \atoms(\mathcal{X})$.
		\end{lemma}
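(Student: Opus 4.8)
The plan is to exhibit an explicit, mutually-inverse pair of order-preserving maps between $\mathcal{X}$ and $\mathcal{P}(X)$ with $X \coloneqq \atoms(\mathcal{X})$. Define $\phi \colon \mathcal{X} \to \mathcal{P}(X)$ by $\phi(Y) \coloneqq A_Y$, the unique set of atoms with $\bigcup A_Y = Y$ guaranteed by unique atomisticity, and define $\psi \colon \mathcal{P}(X) \to \mathcal{X}$ by $\psi(A) \coloneqq \bigcup A$. The crux of the argument is the identity
\begin{equation}
	A_Y = \{\, a \in \atoms(\mathcal{X}) \mid Y \supseteq a \,\},
\end{equation}
which I would prove first. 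The inclusion $\subseteq$ is immediate, since $Y = \bigcup A_Y$ is an upper bound of $A_Y$ and hence lies above each of its atoms. For the reverse inclusion, take an atom $a$ with $Y \supseteq a$; then $Y$ is the least upper bound of $\{Y,a\}$, so $Y \cup a = Y$, and because suprema distribute over unions of indexing sets we get $\bigcup (A_Y \cup \{a\}) = (\bigcup A_Y) \cup a = Y \cup a = Y$. Thus both $A_Y$ and $A_Y \cup \{a\}$ are sets of atoms whose supremum is $Y$, so unique atomisticity forces $A_Y = A_Y \cup \{a\}$, i.e.\ $a \in A_Y$. This is the step where the uniqueness hypothesis is indispensable and is the main obstacle; everything afterwards is routine bookkeeping.

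With this identity established, $\phi$ and $\psi$ are readily seen to be mutually inverse. On the one hand $\psi(\phi(Y)) = \bigcup A_Y = Y$ by definition of $A_Y$. On the other hand, for $A \subseteq \atoms(\mathcal{X})$ the set $A$ is itself a set of atoms with supremum $\bigcup A$, so uniqueness gives $A_{\bigcup A} = A$ and hence $\phi(\psi(A)) = A$. Therefore $\phi$ is a bijection with inverse $\psi$.

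It remains to verify that $\phi$ reflects and preserves the order, i.e.\ that $Y \supseteq Z$ if and only if $\phi(Y) \supseteq \phi(Z)$ (superset inclusion in $\mathcal{P}(X)$). For the forward direction, if $Y \supseteq Z$ and $a \in A_Z$, then $Z \supseteq a$ gives $Y \supseteq a$ by transitivity, so $a \in A_Y$; hence $A_Z \subseteq A_Y$. For the converse, $A_Z \subseteq A_Y$ makes $Y = \bigcup A_Y$ an upper bound of $A_Z$, so $Y \supseteq \bigcup A_Z = Z$. Thus $\phi$ is an order isomorphism. Finally, any order isomorphism between posets automatically preserves all suprema that exist together with the bottom element (the supremum of the empty set), so both $\phi$ and $\psi = \phi^{-1}$ are suplattice homomorphisms; consequently $\phi$ is the desired isomorphism of suplattices.
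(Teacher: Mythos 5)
Your proposal is correct and takes essentially the same approach as the paper: both proofs establish the isomorphism via the mutually inverse pair $A \mapsto \bigcup A$ and $Y \mapsto A_Y$ supplied by unique atomisticity. The paper's proof is only a brief sketch (deferring to a reference), so your extra steps --- the characterization $A_Y = \{\, a \in \atoms(\mathcal{X}) \mid Y \supseteq a \,\}$ and the observation that an order isomorphism automatically preserves all suprema and the bottom element --- are exactly the details it leaves implicit.
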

		\begin{proof}
			This argument can be found in~\cite{Wofsey2016} for example.
			Let's consider a function ${\bigcup \colon \mathcal{P}(X) \to \mathcal{X}}$ given by $A \mapsto \bigcup A$.
			It preserves the order relation, suprema, as well as the bottom element.
			Moreover, by the assumption of unique atomisticity, it has a two-sided inverse $Y \mapsto A_Y$ where $A_Y$ denotes the unique set of atoms that ``make up'' $Y$.
		\end{proof}
		By equation \eqref{eq:lat_morphism}, homomorphisms between uniquely atomistic suplattices correspond to functions of type $X \to \mathcal{P}(Y)$ that assign a set of atoms in the codomain to each atom in the domain.
		
		Sometimes we want to use lattices whose suprema and infima ``behave like'' unions and intersections, but assuming the existence of atoms is unnecessarily strong or undesirable.
		A commonly used structure that formalizes this idea is that of a locale (also known as a frame).
		\begin{definition}\label{def:locale}
			A suplattice $(\mathcal{X}, \bigcup, \bot)$ is called a \textbf{locale} if it satisfies the (infinite) distributive law
			\begin{equation}
				X \cap \left( \bigcup \mathcal{Y} \right)  =  \bigcup_{Y \in \mathcal{Y}} (X \cap Y)
			\end{equation}
			for all $X \in \mathcal{X}$ and all subsets $\mathcal{Y}$ of $\mathcal{X}$.
		\end{definition}
		
	\section{Ordered Lattices}\label{sec:ordered_lattice}
	
		In this thesis, we are interested in the interplay between the resource ordering induced by free operations and the lattice of resource disjunctions.
		The structure that arises is that of an ordered lattice.
		\begin{definition}\label{def:ordered_lattice}
			A suplattice $(\mathcal{X}, \bigcup, \bot)$ equipped with a preorder $\geq$ on $\mathcal{X}$ is a \textbf{preordered suplattice} if for all $X,Y \in \mathcal{X}$ and all subsets $\mathcal{Z}$ of $\mathcal{X}$, we have
			\begin{equation}
				X \geq Y \quad \implies \quad X \cup \mathcal{Z} \geq Y \cup \mathcal{Z}.\footnotemark{}
			\end{equation}
			\footnotetext{Note that $X \cup \mathcal{Z}$ denotes the supremum (in $\mathcal{X}$) of the union of $\{ X \}$ and $\mathcal{Z}$.}%
		\end{definition}
		\begin{definition}\label{def:extended_lattice}
			Given a suplattice $(\mathcal{X}, \bigcup, \bot)$, a preorder $\geq$ on $\mathcal{X}$ is said to be an \textbf{extension} of $\supseteq$ if for all $X,Y \in \mathcal{X}$ and all subsets $\mathcal{Z}$ of $\mathcal{X}$
			\begin{enumerate}
				\item \label{it:elim_ord} $X \supseteq Y$ implies $X \geq Y$, as well as
				\item \label{it:sup_ord} $X \geq Z$ for every $Z \in \mathcal{Z}$ implies $X \geq \bigcup \mathcal{Z}$.
			\end{enumerate}
			The pair is then called a \textbf{preorder-extended suplattice}.
		\end{definition}
		In particular, condition \ref{it:sup_ord} says that suprema with respect to $\geq$ (if they exist) are $\geq$-above suprema with respect to $\supseteq$.
		The properties required by \cref{def:extended_lattice} also readily entail
		\begin{align}\label{eq:bot_is_bot}
			X &\geq \bot 	&	&\text{and} 	&	 X \geq Y \quad &\implies \quad X \cup Z \geq Y \cup Z
		\end{align}
		for all $X, Y, Z \in \mathcal{X}$.
		Since the latter can be shown to hold also for arbitrary suprema, it is indeed the case that every preorder-extended suplattice is a preordered suplattice.
		However, the converse is not true and we will see a relevant example of a preordered lattice that is not preorder-extended in \cref{def:deg_ord}.
		
		In this work, we are especially interested in preordered lattices that arise from an underlying preordered set $(X,\freeconv)$ of resources by passing to the power set lattice $\mathcal{P}(X)$.
		These are indeed preorder-extended lattices in the sense of \cref{def:extended_lattice} as we show in \cref{lem:enh_ord_lat}.
		Let us comment on the meaning of their defining properties in this context.
		\begin{remark}
			One might be tempted to interpret condition \ref{it:elim_ord} and its consequence $x \geq \bot$ as saying that discarding resources is a free operation.
			If this were the case, such a structure would be inappropriate for describing costs associated with the destruction of resources, which need not be negligible in many circumstances of interest.
			However, our perspective is different.
			The lattice ordering is interpreted as a relation that describes what an agent has access to, with $\bot$ representing the contrived situation of not having access to anything at all. 
			As such, even though an agent may have no access to $x$, and thus describe their state by $\bot$, this does not mean that $x$ has been destroyed.
			Thus, passing from $x$ to $\bot$ is interpreted as the fact that losing access to $x$ is free.
			If we were interested in modelling externalities and other costs associated with destruction, we would instead have to include an explicit description of a ``clean slate'', distinct from $\bot$, in which all relevant garbage has been dealt with.
		\end{remark}
		\begin{remark}
			Condition \ref{it:sup_ord} has a transparent interpretation if $\mathcal{Z}$ is a collection of atoms 	and $X$ is itself an atom.
			For example, consider $\mathcal{Z}$ consisting of two individual resources $z_1$ and $z_2$. 
			In this case, the condition says that if $x$ can be converted to both $z_1$ and $z_2$ by possibly distinct free conversions $\phi_1$ and $\phi_2$, then $x$ can be converted to $\{z_1\} \cup \{z_2\}$.
			The non-atomic resource $\{z_1, z_2\}$ represents the possibility of using either $z_1$ or $z_2$, the choice of which corresponds to applying either $\phi_1$ or $\phi_2$ to $x$.
		\end{remark}
		
		Given a preorder of individual resources, $(X,\freeconv)$, we may be interested in comparing not just elements of $X$, but also its subsets.
		There is not a canonical way to do so.
		Whether two subsets $Y,Z$ of $(X, \geq)$ should be ordered or not depends on the interpretation of the ordering and set union in the particular situation one is interested in. 
		
		\subsubsection{Enhancement preorder.}
		
		We could say that $Y$ should be above $Z$, denoted by $Y \freeconv_{\rm enh} Z$, if every element of $Z$ lies below an element of $Y$.
		In the following toy example (in which nodes are elements of $X$ and arrows depict order relations)
		\begin{equation}\label{eq:enh_deg}
			\tikzfig{enh_deg}
		\end{equation}
		we would have that $\{y_1,y_2\}$ is above $\{z_3,z_4\}$, but not above $\{z_1,z_2,z_3\}$.
		With respect to the interpretation of the elements of $\mathcal{P}(X)$: 
		\begin{quote}
			$Y \freeconv_{\rm enh} Z$ means: For every element of $Z$ that an agent could make use of, there is an element of $Y$ which is at least as valuable.
		\end{quote}
		This notion of ordering of subsets of $(X,\geq)$ can be expressed via the existence of an enhancement map, defined as follows.
		\begin{definition}\label{def:enh}
			Let $(X,\geq)$ be a preordered set with two subsets $Y,Z$.
			A function $\mathsf{enh} \colon Z \to Y$ is termed an \textbf{enhancement} if we have
			\begin{equation}
				\mathsf{enh}(z) \geq z \quad \forall \, z \in Z.
			\end{equation}
		\end{definition}	
		
		\begin{definition}\label{def:enh_ord}
			Given a preordered set $(X,\freeconv)$, define the \textbf{enhancement preorder} $\enhconv$ on $\mathcal{P}(X)$ by
			\begin{equation}
				Y \enhconv Z  \quad \coloniff \quad  \text{there exists an enhancement } Z \to Y.
			\end{equation}
			whenever $Z$ is non-empty and
			\begin{equation}
				Y \enhconv \emptyset
			\end{equation}
			for all $Y \in \mathcal{P}(X)$.
		\end{definition}
		
		Given $Y \enhconv Z$, access to resources in $Y$ can be reduced to the access to resources in $Z$ by ignoring elements of $Y$ outside the image of $\mathsf{enh}$.
		In this way, we can obtain a preordered lattice $(\mathcal{P}(X), \enhconv)$ from a preordered set $(X, \freeconv)$, such that the latter is isomorphic to the ordering of the atoms in $\mathcal{P}(X)$.
		The same construction can be used in any atomistic lattice given a preorder among its atoms, as well as any strongly coherent lattice \cite[definition 5.3]{Tsinakis2004} given a preorder among its completely join prime elements.
		As we show next, the enhancement ordering constitutes a consistent extension of the subset inclusion.
		
		\begin{lemma}\label{lem:enh_ord_lat}
			The enhancement preorder makes $(\mathcal{P}(X), \supseteq)$ into a preorder-extended suplattice.
		\end{lemma}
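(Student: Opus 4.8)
The plan is to verify, in turn, that $\enhconv$ is a preorder and that it satisfies the two defining conditions of \cref{def:extended_lattice}, with the suplattice being $(\mathcal{P}(X), \bigcup, \emptyset)$ ordered by $\supseteq$. Throughout, the guiding observation is that an enhancement $\mathsf{enh}\colon Z \to Y$ witnessing $Y \enhconv Z$ runs \emph{from} the lower set \emph{to} the upper one (\cref{def:enh}), so composing, restricting, and amalgamating such maps is what drives every step. The empty set requires separate handling each time, since $Y \enhconv \emptyset$ holds by fiat in \cref{def:enh_ord}.

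First I would establish that $\enhconv$ is reflexive and transitive. Reflexivity $Y \enhconv Y$ holds for $Y = \emptyset$ by the second clause of \cref{def:enh_ord}, and for $Y \neq \emptyset$ the identity map is an enhancement since $y \freeconv y$ by reflexivity of $\freeconv$. For transitivity, suppose $Y \enhconv Z$ and $Z \enhconv W$. If $W = \emptyset$ the conclusion $Y \enhconv W$ is immediate; otherwise a nonempty $W$ together with an enhancement $g\colon W \to Z$ forces $Z \neq \emptyset$, which yields an enhancement $f\colon Z \to Y$, and the composite $f \circ g\colon W \to Y$ satisfies $f(g(w)) \freeconv g(w) \freeconv w$, hence $f(g(w)) \freeconv w$ by transitivity of $\freeconv$, so it witnesses $Y \enhconv W$.

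Condition \ref{it:elim_ord} (that $\enhconv$ extends $\supseteq$) is the easy case: if $Y \supseteq Z$, i.e.\ $Z \subseteq Y$, and $Z \neq \emptyset$, then the inclusion $Z \into Y$ is an enhancement, because each $z \in Z$ already lies in $Y$ and satisfies $z \freeconv z$; the case $Z = \emptyset$ is once more covered by the second clause of \cref{def:enh_ord}.

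The main work is condition \ref{it:sup_ord}. Suppose $Y \enhconv Z$ for every $Z$ in a collection $\mathcal{Z} \subseteq \mathcal{P}(X)$, and set $W \coloneqq \bigcup \mathcal{Z}$. If $W = \emptyset$ we are done, so assume $W \neq \emptyset$. For each nonempty $Z \in \mathcal{Z}$, fix an enhancement $f_Z\colon Z \to Y$ guaranteed by $Y \enhconv Z$, and then define $\mathsf{enh}\colon W \to Y$ by assigning to each $w \in W$ some $Z \in \mathcal{Z}$ with $w \in Z$ and setting $\mathsf{enh}(w) \coloneqq f_Z(w)$. Then $\mathsf{enh}(w) = f_Z(w) \freeconv w$, so $\mathsf{enh}$ is an enhancement and $Y \enhconv W$ follows. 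I expect this to be the only delicate point: the construction selects, for each $w$, both a witnessing set and an enhancement value, so it invokes the axiom of choice, and no single $\mathsf{enh}$ is canonical---only its existence matters. With reflexivity, transitivity, and conditions \ref{it:elim_ord} and \ref{it:sup_ord} in hand, $(\mathcal{P}(X), \enhconv)$ is a preorder-extended suplattice.
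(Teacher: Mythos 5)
Your proof is correct and follows essentially the same route as the paper's: condition \ref{it:elim_ord} via the inclusion map being an enhancement, and condition \ref{it:sup_ord} via amalgamating the individual enhancements $Z \to Y$ into one $\bigcup \mathcal{Z} \to Y$. The paper's proof is just a terser version of yours---it leaves the preorder axioms, the empty-set cases, and the (harmless) use of choice implicit, all of which you spell out correctly.
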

		\begin{proof}
			Indeed, if $Y$ is a superset of $Z$, then the inclusion of $Z$ within $Y$ is an enhancement.
			Moreover, if there is an enhancement $Z_i \to Y$ for every $Z_i$ in a family $\mathcal{Z}$ of subsets of $X$, then there also exists an enhancement 
			\begin{equation}
				\bigcup_i Z_i  \to  Y,
			\end{equation}
			which gives condition \ref{it:sup_ord}.
		\end{proof}
		
		The enhancement ordering is not the only consistent extension of $(X,\freeconv)$.
		However, it is the one that postulates ``fewest'' relations among the elements of $\mathcal{P}(X)$.
		
		\begin{lemma}\label{prop:enh_coarse}
			Let $\mathcal{X}$ be an atomistic, preorder-extended suplattice and denote by $(\mathcal{P}(\atoms(\mathcal{X})),\enhgeq)$ the enhancement preorder induced by the restriction of $\geq$ to the atoms of $\mathcal{X}$.
			For any $Y,Z \in \mathcal{X}$ we have
			\begin{equation}\label{eq:enh_coarse}
				A_Y \enhgeq A_Z  \quad \implies \quad  Y \geq Z.
			\end{equation}
			where $A_Y$ and $A_Z$ denote sets of atoms that make up $Y$ and $Z$ respectively.
		\end{lemma}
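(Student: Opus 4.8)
The plan is to unwind the hypothesis $A_Y \enhgeq A_Z$ into the existence of a concrete enhancement map and then push the atom-wise comparisons it supplies up to the full elements $Y$ and $Z$ using the two clauses of \cref{def:extended_lattice}. First I would handle the degenerate case $A_Z = \emptyset$ separately: then $Z = \bigcup A_Z = \bot$, and $Y \geq \bot$ holds for every $Y$ by \eqref{eq:bot_is_bot}, so the conclusion is immediate. For the remaining case I may assume $A_Z$ is non-empty, so by \cref{def:enh_ord} the relation $A_Y \enhgeq A_Z$ guarantees an enhancement $\mathsf{enh} \colon A_Z \to A_Y$, i.e.\ a function with $\mathsf{enh}(a) \geq a$ for every atom $a \in A_Z$, where $\geq$ here denotes the restriction of the preorder to $\atoms(\mathcal{X})$.

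Next I would argue that $Y \geq a$ for each $a \in A_Z$. Since $\mathsf{enh}(a)$ is one of the atoms comprising $Y$, i.e.\ $\mathsf{enh}(a) \in A_Y$ and $Y = \bigcup A_Y$, we have $Y \supseteq \mathsf{enh}(a)$; clause \ref{it:elim_ord} of \cref{def:extended_lattice} then upgrades this to $Y \geq \mathsf{enh}(a)$. Combining with $\mathsf{enh}(a) \geq a$ and transitivity of the preorder gives $Y \geq a$. Finally, having established $Y \geq a$ for every $a \in A_Z$, clause \ref{it:sup_ord} of \cref{def:extended_lattice} (applied with the bounding element $Y$ and the family $A_Z$) yields $Y \geq \bigcup A_Z = Z$, which is exactly \eqref{eq:enh_coarse}.

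The argument is essentially bookkeeping with the two extension axioms, so I do not anticipate a genuine obstacle; the only point requiring care is the interface between the lattice order $\supseteq$ and the extended preorder $\geq$. Specifically, one must keep track that $\mathsf{enh}(a) \geq a$ lives in the restricted preorder on atoms, whereas $Y \geq \mathsf{enh}(a)$ is obtained from $\supseteq$ via clause \ref{it:elim_ord}, and it is transitivity of $\geq$ that knits these two kinds of comparison together. It is also worth noting that atomicity is used only to supply the representations $Y = \bigcup A_Y$ and $Z = \bigcup A_Z$, so that the enhancement on atoms has something on which to act; \emph{unique} atomisticity is never invoked, since the witnessing sets $A_Y, A_Z$ are given to us.
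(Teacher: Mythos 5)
Your proof is correct and follows essentially the same route as the paper's: extract the enhancement $\mathsf{enh} \colon A_Z \to A_Y$, use $Y \supseteq \mathsf{enh}(a)$ together with condition \ref{it:elim_ord} and transitivity to get $Y \geq a$ for each atom $a \in A_Z$, then apply condition \ref{it:sup_ord} to conclude $Y \geq \bigcup A_Z = Z$. Your separate treatment of the case $A_Z = \emptyset$ is a minor point of extra care the paper leaves implicit, not a different argument.
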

		\begin{proof}
			The antecedent stipulates that there is an enhancement $\mathsf{enh} \colon A_Z \to A_Y$.
			Thus, for every atom $z \in A_Z$, there is $\mathsf{enh}(z) \in A_Y$ which is above it according to the preorder $\geq$.
			Since $Y$ is an upper bound of $A_Y$, we have
			\begin{equation}
				Y \supseteq \mathsf{enh}(z) \geq z.
			\end{equation}
			Thus, by condition \ref{it:elim_ord}, $Y \geq z$ holds and by condition \ref{it:sup_ord} we obtain $Y \geq \bigcup A_Z  = Z$.
		\end{proof}
		
		In order to make the enhancement ordering the unique extension of a preorder among the atoms, one would have to further assume that the implication
		\begin{equation}\label{eq:fragmentable}
			\bigcup \mathcal{Y} \geq z  \quad \implies \quad  \exists \, Y \in \mathcal{Y} \; : \; Y \geq z
		\end{equation}
		holds for all $z \in \atoms(\mathcal{X})$ and all subsets $\mathcal{Y}$ of $\atoms(\mathcal{X})$.
		This is automatically satisfied if we replace atoms by completely join prime elements and atomistic lattices by strongly coherent ones, in which case \eqref{eq:enh_coarse} is therefore an equivalence.
		
		The enhancement preorder can be equivalently expressed in terms of the downward closure operator. 
		
		\begin{lemma}\label{lem:enh_down}
			Let $Y,Z$ be two subsets of a preordered set $(X,\freeconv)$.
			Then we have
			\begin{equation}
				Y \enhconv Z  \quad \iff \quad  \down (Y) \supseteq \down (Z)
			\end{equation}
			where the action of $\down$ is given as in \eqref{eq:down_on_sets}.
		\end{lemma}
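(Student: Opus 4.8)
The plan is to unfold both sides of the equivalence into statements about individual elements of $X$ and then connect them using only reflexivity and transitivity of $\freeconv$, together with the explicit description $\down(Y) = \Set{x \in X \given \exists\, y \in Y : y \freeconv x}$ read off from \eqref{eq:down_on_sets}. I would dispose of the empty case first and separately: if $Z = \emptyset$ then $\down(Z) = \emptyset$, so both $Y \enhconv \emptyset$ and $\down(Y) \supseteq \emptyset$ hold by fiat, and the equivalence is immediate. For the rest of the argument I may therefore assume $Z \neq \emptyset$.

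For the forward implication, suppose $Y \enhconv Z$, witnessed by an enhancement $\mathsf{enh}\colon Z \to Y$ with $\mathsf{enh}(z) \freeconv z$ for every $z \in Z$. To establish $\down(Y) \supseteq \down(Z)$, take an arbitrary $x \in \down(Z)$, so that $z \freeconv x$ for some $z \in Z$. Then $\mathsf{enh}(z) \in Y$, and transitivity applied to the chain $\mathsf{enh}(z) \freeconv z \freeconv x$ gives $\mathsf{enh}(z) \freeconv x$, whence $x \in \down(Y)$. This is the only place where the enhancement property is used.

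For the converse, suppose $\down(Y) \supseteq \down(Z)$. The key observation is that reflexivity puts every element of $Z$ into $\down(Z)$: for each $z \in Z$ we have $z \freeconv z$, hence $z \in \down(z) \subseteq \down(Z) \subseteq \down(Y)$. Unfolding the membership $z \in \down(Y)$ produces some $y \in Y$ with $y \freeconv z$; selecting one such $y$ for each $z$ defines a map $\mathsf{enh}\colon Z \to Y$ with $\mathsf{enh}(z) \freeconv z$, i.e.\ an enhancement, so $Y \enhconv Z$ by \cref{def:enh_ord}.

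I do not expect a genuine obstacle: the proof is essentially bookkeeping with the quantifiers packaged inside $\down$ and $\enhconv$. The only point meriting mild care is the construction of $\mathsf{enh}$ in the converse direction, which chooses for each $z \in Z$ a dominating element of $Y$; this is a routine appeal to the axiom of choice, and the nonemptiness of each witness set is guaranteed precisely by the established membership $z \in \down(Y)$.
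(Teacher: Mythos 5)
Your proof is correct and follows essentially the same route as the paper: the converse direction (reflexivity puts $Z$ inside $\down(Z) \subseteq \down(Y)$, then choose a dominating $y_z \in Y$ for each $z$) is identical, and your forward direction is just the element-wise unfolding of the paper's closure-operator computation $\down(Y) \supseteq \down(\mathsf{enh}(Z)) = \down\down(\mathsf{enh}(Z)) \supseteq \down(Z)$, with transitivity playing the role of idempotence of $\down$. Your explicit treatment of the case $Z = \emptyset$ is a harmless extra (the empty function is vacuously an enhancement, so the paper's argument covers it anyway).
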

		\begin{proof}
			If $Y \enhconv Z$ holds, then there is an enhancement $\mathsf{enh} \colon Z \to Y$.
			Denoting its image within $Y$ by $\mathsf{enh}(Z)$, we have
			\begin{equation}
				\down (Y) \supseteq \down \bigl( \mathsf{enh}(Z) \bigr) = \down \down \bigl( \mathsf{enh}(Z) \bigr) \supseteq \down (Z),
			\end{equation}
			since $\down \bigl( \mathsf{enh}(Z) \bigr) \supseteq Z$ follows from the definition of an enhancement.
			
			Conversely, if $\down (Y) \supseteq \down (Z)$ holds, then we have $\down (Y) \supseteq Z$.
			That is, for every $z \in Z$, there exists some $y_z \in Y$ such that $y_z \freeconv z$.
			Thus, the function given by $z \mapsto y_z$ is an enhancement of type $Z \to Y$.
		\end{proof}
		
		\begin{corollary}\label{lem:down_isotone}
			The map $\down \colon (\mathcal{P}(X), \enhconv) \to (\mathcal{P}(X), \supseteq)$ is an isotone.
		\end{corollary}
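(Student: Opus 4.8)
The plan is to simply unwind the definition of an isotone (\cref{def:isotone}) and then invoke \cref{lem:enh_down}. By \cref{def:isotone}, to say that the map $\down \colon (\mathcal{P}(X), \enhconv) \to (\mathcal{P}(X), \supseteq)$ is an isotone means exactly that the implication
\begin{equation*}
	Y \enhconv Z \quad \implies \quad \down(Y) \supseteq \down(Z)
\end{equation*}
holds for all $Y, Z \in \mathcal{P}(X)$, where $\enhconv$ is the order on the domain and $\supseteq$ the order on the codomain. First I would observe that this is precisely the left-to-right direction of the biconditional
\begin{equation*}
	Y \enhconv Z \quad \iff \quad \down(Y) \supseteq \down(Z)
\end{equation*}
established in \cref{lem:enh_down}. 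Hence the statement follows immediately, which is why it is presented as a corollary rather than a standalone lemma.

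There is no genuine obstacle to overcome: \cref{lem:enh_down} already proved the strictly stronger two-sided equivalence, and order-preservation only requires one of the two implications. The one point worth flagging is a bookkeeping check that the empty-set edge case of \cref{def:enh_ord} is covered, namely that $Y \enhconv \emptyset$ correctly yields $\down(Y) \supseteq \down(\emptyset) = \emptyset$; but this is automatic since every set contains $\emptyset$. Thus the corollary is an immediate reinterpretation of \cref{lem:enh_down} as the assertion that $\down$ is order-preserving from the enhancement preorder to subset inclusion.
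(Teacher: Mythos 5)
Your proof is correct and is essentially the argument the paper intends: \cref{lem:down_isotone} is stated as an immediate corollary of \cref{lem:enh_down}, whose forward implication $Y \enhconv Z \implies \down(Y) \supseteq \down(Z)$ is precisely the definition of $\down$ being an isotone between the two preorders. Your extra check of the $Z = \emptyset$ case is harmless bookkeeping already subsumed by \cref{lem:enh_down}.
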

		
		\subsubsection{Degradation preorder.}
		
		On the other hand, we could also say that $Y$ should be above $Z$, denoted by $Y \freeconv_{\rm deg} Z$, if every element of $Y$ lies above an element of $Z$.
		Then we would have that $\{y_1,y_2\}$ is above $\{z_1,z_2,z_3\}$, but not above $\{z_3,z_4\}$ in diagram \eqref{eq:enh_deg}.
		\begin{definition}\label{def:deg}
			Let $(X,\geq)$ be a preordered set with two subsets $Y,Z$.
			A function $\mathsf{deg} \colon Y \to Z$ is termed a \textbf{degradation} if we have
			\begin{equation}
				y \geq \mathsf{deg}(y) \quad \forall \, y \in Y.
			\end{equation}
		\end{definition}

		\begin{definition}\label{def:deg_ord}
			Given a preordered set $(X, \freeconv)$, define the \textbf{degradation preorder} $\degconv$ on $\mathcal{P}(X)$ by
			\begin{equation}
				Y \degconv Z  \quad \coloniff \quad  \text{there exists a degradation } Y \to Z.
			\end{equation}
			whenever $Y$ is non-empty and
			\begin{equation}
				\emptyset \degconv Z
			\end{equation}
			for all $Z \in \mathcal{P}(X)$.
		\end{definition}
		
		In other words, the enhancement and degradation can be also expressed explicitly as follows.
		\begin{equation}
			\begin{split}
				Y \enhconv Z  \quad &\iff \quad  \forall \, z \in Z \text{, } \exists \, y \in Y \text{ such that } y \succeq z \\
				Y \degconv Z  \quad &\iff \quad  \forall \, y \in Y \text{, } \exists \, z \in Z \text{ such that } y \succeq z
			\end{split}
		\end{equation}
		
		It is easy to see that $(\mathcal{P}(X), \supseteq)$ equipped with $\degconv$ is a preordered suplattice.
		However, degradation ordering does not necessarily have a meaningful resource-theoretic interpretation and does not yield a preorder-extended suplattice.
		Therefore, we do not think of it as expressing that $Y$ is more valuable than $Z$, even though such an interpretation may be viable in specific contexts, one of which is as follows. 
		
		Imagine that Alice and Bob play the following game. 
		Alice has to choose a set of resources $Y \subseteq X$, while Bob receives a resource $x \in X$ from the referee.
		If Bob can recover an element of $Y$ from $x$ (for free), he wins.
		Otherwise, Alice wins.
		The relation $Y \degconv Z$ means that $Y$ is not worse than $Z$ from Alice's point of view, for any distribution of referee's choices.
				
		Whatever the interpretation, the preordered set $(\mathcal{P}(X), \degconv)$ will be useful when we study resource monotones in \cref{sec:monotones}.
		One way to understand it is via a dual version of \cref{lem:enh_down}.
		
		\begin{lemma}\label{lem:deg_up}
			Let $Y,Z$ be two subsets of a preordered set $\ordres$.
			Then we have
			\begin{equation}
				Y \degconv Z  \quad \iff \quad  \up (Y) \subseteq \up (Z)
			\end{equation}
			where the action of $\up$ is given as in \eqref{eq:down_on_sets}.
		\end{lemma}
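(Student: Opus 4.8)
The plan is to mirror the proof of \cref{lem:enh_down}, since the degradation preorder is the order-theoretic dual of the enhancement preorder and here $\up$ plays the role that $\down$ played there. I would prove the two implications separately, first disposing of the degenerate case $Y = \emptyset$: in that case $Y \degconv Z$ holds by definition, and $\up(\emptyset) = \emptyset \subseteq \up(Z)$, so both sides are vacuously satisfied. For the rest I assume $Y \neq \emptyset$.

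For the forward direction, suppose $Y \degconv Z$, witnessed by a degradation $\mathsf{deg} \colon Y \to Z$ with $y \freeconv \mathsf{deg}(y)$ for all $y \in Y$. The key observation is that $y \freeconv \mathsf{deg}(y)$ says precisely $y \in \up(\mathsf{deg}(y)) \subseteq \up(\mathsf{deg}(Y))$, so that $Y \subseteq \up(\mathsf{deg}(Y))$. Combining monotonicity of $\up$ with respect to $\supseteq$, idempotency $\up\up = \up$, and the inclusion $\mathsf{deg}(Y) \subseteq Z$, I would assemble the chain
\[
	\up(Y) \subseteq \up\up(\mathsf{deg}(Y)) = \up(\mathsf{deg}(Y)) \subseteq \up(Z),
\]
which is the desired inclusion. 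This is the exact dual of the computation $\down(Y) \supseteq \down(\mathsf{enh}(Z)) = \down\down(\mathsf{enh}(Z)) \supseteq \down(Z)$ used in \cref{lem:enh_down}.

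For the converse, suppose $\up(Y) \subseteq \up(Z)$, and build a degradation by a pointwise selection. For each $y \in Y$ one has $y \in \up(y) \subseteq \up(Y) \subseteq \up(Z)$, and by the definition of $\up(Z)$ this means some element of $Z$ lies below $y$; choosing one such element for every $y$ defines a function $\mathsf{deg} \colon Y \to Z$ with $y \freeconv \mathsf{deg}(y)$, i.e.\ a degradation, so $Y \degconv Z$. I would note in passing that $Z$ is automatically non-empty here, since $\emptyset \neq Y \subseteq \up(Y) \subseteq \up(Z)$ forces $\up(Z) \neq \emptyset$, so a selection is always available.

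The only step requiring any care is this pointwise selection, which is precisely where the defining ``existence of a function'' form of the degradation preorder is used; everything else is formal manipulation of $\up$ via monotonicity and idempotency, entirely parallel to the dual lemma. I therefore do not anticipate a genuine obstacle.
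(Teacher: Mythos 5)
Your proof is correct and takes essentially the same route as the paper's: the forward direction is the identical chain $\up(Y) \subseteq \up\up\bigl(\mathsf{deg}(Y)\bigr) = \up\bigl(\mathsf{deg}(Y)\bigr) \subseteq \up(Z)$ (the paper merely writes it right-to-left), and the converse is the same pointwise selection of some $z_y \in Z$ below each $y \in Y$. Your explicit handling of the degenerate case $Y = \emptyset$ and the remark that $Z$ is automatically non-empty are minor refinements the paper leaves implicit.
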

		\begin{proof}
			If $Y \degconv Z$ holds, then there is a degradation $\mathsf{deg} \colon Y \to Z$.
			Denoting its image within $Z$ by $\mathsf{deg}(Y)$, we have
			\begin{equation}
				\up (Z) \supseteq \up \bigl( \mathsf{deg}(Y) \bigr) = \up \up \bigl( \mathsf{deg}(Y) \bigr) \supseteq \up (Y),
			\end{equation}
			since $\up \bigl( \mathsf{deg}(Y) \bigr) \supseteq Y$ follows from the definition of a degradation.
			
			Conversely, if $\up (Y) \subseteq \up (Z)$ holds, then we also have $Y \subseteq \up (Z)$.
			That is, for every $y \in Y$, there exists some $z_y \in Z$ such that $y \freeconv z_y$.
			Thus, the function given by $y \mapsto z_y$ is a degradation of type $Y \to Z$.
		\end{proof}
		
		\begin{corollary}\label{lem:up_isotone}
			The map $\up \colon (\mathcal{P}(X), \degconv) \to (\mathcal{P}(X), \subseteq)$\footnotemark{} is an isotone.
			\footnotetext{Note the opposite direction of the ordering in the codomain relative to prior occurences.
			If we were to keep the convention from earlier, we would instead say that $\up$ is an antitone---an order-reversing function.}
		\end{corollary}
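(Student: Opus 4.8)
The plan is to read the corollary off directly from \cref{lem:deg_up}, which has just established the biconditional $Y \degconv Z \iff \up(Y) \subseteq \up(Z)$. Unpacking \cref{def:isotone} for the map $\up$, with the domain $\mathcal{P}(X)$ ordered by $\degconv$ and the codomain $\mathcal{P}(X)$ ordered by $\subseteq$, the claim that $\up$ is an isotone amounts to exactly the implication
\begin{equation*}
	Y \degconv Z \quad \implies \quad \up(Y) \subseteq \up(Z).
\end{equation*}
This is precisely the forward ($\Rightarrow$) direction of \cref{lem:deg_up}, so no additional argument is needed; I would simply cite that lemma.

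The one point I would flag—echoing the footnote—is the orientation of the codomain ordering. For the dual statement \cref{lem:down_isotone} the codomain carried $\supseteq$, whereas here it carries $\subseteq$. The reason is that $\up$ reverses the relevant order: a $\degconv$-larger set has a \emph{smaller} up-set, so only by equipping the codomain with $\subseteq$ does $\up$ become order-preserving rather than order-reversing. I expect no genuine obstacle, since the substantive work was done in \cref{lem:deg_up}; this corollary merely records its order-theoretic reformulation.
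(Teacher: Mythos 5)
Your proposal is correct and matches the paper exactly: the corollary is stated immediately after \cref{lem:deg_up} with no separate proof, precisely because the isotone claim is nothing but the forward direction of that lemma's biconditional. Your remark about the codomain carrying $\subseteq$ rather than $\supseteq$ likewise mirrors the paper's own footnote.
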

		
		\subsubsection{Isotones for enhancement and degradation preorders.}
		
		Another way to get some intuition about the enhancement and degradation preorders is to look at their form when constructed from total orders.
		\begin{example}\label{ex:enh_reals}
			Specifically, the enhancement preorder for sets of extended real numbers is given by the comparison of their suprema.
			That is, for $Y, Z \in \mathcal{P}(\reals)$ we have 
			\begin{equation}
				Y \enhgeq Z \quad \iff \quad  \sup Y \geq \sup Z.
			\end{equation}
			On the other hand, the degradation preorder is given by infima.
			In this case, both $\enhgeq$ and $\deggeq$ are thus total preorders.
		\end{example}
		As an immediate consequence of this example, we conclude that 
		\begin{equation}
			\begin{split}
				\mathrm{sup} &\colon \bigl( \mathcal{P}(\reals), \enhgeq \bigr) \to \bigl(\reals , \geq \bigr) \\
				\mathrm{inf} &\colon \bigl( \mathcal{P}(\reals), \deggeq \bigr) \to \bigl(\reals , \geq \bigr)
			\end{split}
		\end{equation}
		are both monotones.
		
		\begin{lemma}\label{lem:sup_isotone}
			Given a partial isotone $f \colon (X, \freeconv) \to (W, \geq)$ with upward closed domain of definition $A \in \mathcal{UC}(X)$, the function $f_* \colon \mathcal{P}(X) \to \mathcal{P}(W)$ that maps each $Y$ to its image under $f$ is an isotone with respect to the corresponding enhancement preorders.
		\end{lemma}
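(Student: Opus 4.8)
The plan is to unfold the explicit element-wise characterization of the enhancement preorder recorded just above the statement, namely that $Y \enhconv Z$ holds precisely when every $z \in Z$ is dominated by some $y \in Y$ with respect to $\freeconv$ (together with the convention that $Y \enhconv \emptyset$ for all $Y$). So I would assume $Y \enhconv Z$ and aim to verify $f_*(Y) \enhgeq f_*(Z)$, where $f_*(Y)$ denotes the image $\Set{ f(y) \given y \in Y \cap A }$, since $f$ is only defined on $A$.

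First I would dispose of the degenerate case: if $f_*(Z) = \emptyset$, then $f_*(Y) \enhgeq f_*(Z)$ holds automatically by the defining convention for the empty set. So assume $f_*(Z)$ is non-empty and pick an arbitrary $w' \in f_*(Z)$; by definition of the image, $w' = f(z)$ for some $z \in Z \cap A$. Applying the hypothesis $Y \enhconv Z$ to this $z$ (legitimate since $Z$ is non-empty), there is some $y \in Y$ with $y \freeconv z$.

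The crux is then to produce a corresponding element of $f_*(Y)$ lying above $w'$, and this is exactly where the upward closure of $A$ enters. Since $z \in A$ and $y \freeconv z$, the assumption $A \in \mathcal{UC}(X)$ forces $y \in A$, so that $f(y)$ is defined and $f(y) \in f_*(Y)$. Because $f$ is a partial isotone and both $y,z$ lie in its domain with $y \freeconv z$, we obtain $f(y) \geq f(z) = w'$. As $w'$ was arbitrary, every element of $f_*(Z)$ is dominated by an element of $f_*(Y)$, i.e.\ $f_*(Y) \enhgeq f_*(Z)$, which is the desired isotonicity.

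I expect the only genuine subtlety to be the appeal to upward closure in the crux step: without it, the witness $y \in Y$ supplied by the enhancement need not lie in the domain $A$ of $f$, in which case $f_*(Y)$ could fail to contain any element above $w'$ and the conclusion would break. The order-preservation of $f$ on its domain and the explicit form of the enhancement preorder are otherwise entirely routine.
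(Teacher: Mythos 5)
Your proof is correct and follows essentially the same route as the paper's: both arguments take a preimage $z \in Z \cap A$ of an arbitrary element of $f_*(Z)$, use the upward closure of $A$ to conclude that the enhancing element $y \freeconv z$ also lies in the domain of $f$, and then invoke partial isotonicity to get $f(y) \geq f(z)$. The only cosmetic difference is that the paper phrases this by explicitly constructing an enhancement map $f_*(Z) \to f_*(Y)$, whereas you use the equivalent element-wise characterization of $\enhconv$; your handling of the empty case and your identification of the upward-closure step as the crux match the paper exactly.
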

		\begin{proof}
			Consider $Y,Z \in \mathcal{P}(X)$ and an enhancement $\mathsf{enh} \colon Z \to Y$.
			If $f_* (Z)$ is non-empty, we can construct an enhancement $f_* (Z) \to f_ * (Y)$ as follows.
			Given $w \in f_*(Z)$, pick an arbitrary element $z \in Z$ such that $f(z) = w$ holds.
			Since $z$ is in the upset $A$, $f$ is also defined for its enhancement so that we can let the image of $w$ be $f(\mathsf{enh} (z)) \in f_ * (Y)$.
			Since $f$ is an isotone, we have
			\begin{equation}
				f \bigl( \mathsf{enh} (z) \bigr) \geq f(z) = r.
			\end{equation}
			Thus, any such function is an enhancement and we conclude that $Y \succeq_{\rm enh} Z$ implies $f_ * (Y) \geq_{\rm enh} f_ * (Z)$.
		\end{proof}
		\begin{lemma}\label{lem:sup_isotone_deg}
			Given a partial isotone $h \colon (X, \freeconv) \to (W, \geq)$ with downward closed domain of definition $A \in \mathcal{DC}(X)$, the function $h_* \colon \mathcal{P}(X) \to \mathcal{P}(W)$ is an isotone with respect to the corresponding degradation preorders.
		\end{lemma}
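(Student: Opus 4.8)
The plan is to dualize the argument of \cref{lem:sup_isotone} by interchanging the roles of upper and lower bounds and of upward/downward closure throughout. First I would take $Y, Z \in \mathcal{P}(X)$ with $Y \degconv Z$ and aim to produce a degradation $h_*(Y) \to h_*(Z)$, which by \cref{def:deg_ord} witnesses $h_*(Y) \deggeq h_*(Z)$. If $Y$ is empty then $h_*(Y) = \emptyset$ and the relation $\emptyset \degconv h_*(Z)$ holds by definition, so I may assume $Y$ non-empty and invoke a degradation $\mathsf{deg} \colon Y \to Z$ guaranteed by $Y \degconv Z$.

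To build the degradation on images, I would start from an arbitrary $w \in h_*(Y)$ and select some $y \in Y$ with $h(y) = w$. The crucial step---mirroring the use of upward closure in \cref{lem:sup_isotone}---is that the domain $A$ is now \emph{downward} closed: since $h(y)$ is defined we have $y \in A$, and from $y \freeconv \mathsf{deg}(y)$ together with $A \in \mathcal{DC}(X)$ we conclude $\mathsf{deg}(y) \in A$, so $h(\mathsf{deg}(y))$ is defined. Because $h$ is a partial isotone, $y \freeconv \mathsf{deg}(y)$ yields $w = h(y) \geq h(\mathsf{deg}(y))$, with $h(\mathsf{deg}(y)) \in h_*(Z)$.

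Assigning to each such $w$ the value $h(\mathsf{deg}(y))$ then defines a function $h_*(Y) \to h_*(Z)$ satisfying $w \geq h(\mathsf{deg}(y))$ for every $w$, that is, a degradation. This gives $h_*(Y) \deggeq h_*(Z)$ and hence that $h_*$ is an isotone for the degradation preorders. The only point requiring care is the one highlighted above: downward closedness of $A$ is exactly what keeps the degraded element $\mathsf{deg}(y)$ within the domain of $h$, just as upward closedness did for enhancements. I expect no genuine obstacle beyond matching each ingredient of the enhancement proof with its order-reversed counterpart; one might alternatively phrase the whole argument through a dual of \cref{lem:deg_up}, but the direct construction above is the most transparent.
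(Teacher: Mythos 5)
Your proof is correct and takes essentially the same route as the paper, which simply notes that the argument of \cref{lem:sup_isotone} dualizes by replacing the enhancement with a degradation $\mathsf{deg} \colon Y \to Z$. Your spelled-out version correctly identifies the one point of substance---downward closedness of $A$ keeps $\mathsf{deg}(y)$ in the domain of $h$---and handles the empty-set clause of \cref{def:deg_ord} appropriately.
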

		\begin{proof}
			The proof is analogous to that of \cref{lem:sup_isotone}; replacing $\mathsf{enh}$ with a degradation $\mathsf{deg} \colon Y \to Z$ yields a degradation $h_* (Y) \to h_ * (Z)$.
		\end{proof}
		\begin{corollary}\label{lem:monotone_extensions}
			Given partial monotones $f,h \colon (X, \freeconv) \to \ordreals$ defined on upward (downward) closed subsets of $X$ respectively, the two maps
			\begin{equation}
				\begin{split}
				\mathrm{sup} \circ f_* &\colon \bigl( \mathcal{P}(X), \enhgeq \bigr) \to \bigl(\reals , \geq \bigr) \\
				\mathrm{inf} \circ h_* &\colon \bigl( \mathcal{P}(X), \deggeq \bigr) \to \bigl(\reals , \geq \bigr)
				\end{split}
			\end{equation}
			are both monotones.
		\end{corollary}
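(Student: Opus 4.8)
The plan is to obtain each extension as a composition of two isotones whose final codomain $\ordreals$ is totally ordered, so that \cref{def:monotone} immediately upgrades the composite order-preserving map to a monotone. The only conceptual input beyond the already-established lemmas is the elementary fact that a composite of isotones is again an isotone, which is immediate from transitivity of the orders involved. Everything substantive has in effect already been proved, so this will be a short deduction rather than a new construction.

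For the enhancement half I would first observe that a partial monotone $f\colon(X,\freeconv)\to\ordreals$ is in particular a partial isotone (\cref{def:monotone}), and that by hypothesis its domain of definition is upward closed. Hence \cref{lem:sup_isotone} applies with $W=\reals$, giving that $f_*\colon(\mathcal P(X),\enhgeq)\to(\mathcal P(\reals),\enhgeq)$ is an isotone. Next I would invoke the consequence drawn from \cref{ex:enh_reals}, namely that $\mathrm{sup}\colon(\mathcal P(\reals),\enhgeq)\to\ordreals$ is a monotone, hence in particular an isotone. Composing the two shows that $\mathrm{sup}\circ f_*\colon(\mathcal P(X),\enhgeq)\to\ordreals$ is order-preserving; since $\ordreals$ is totally ordered (\cref{ex:reals}), this composite is a monotone in the sense of \cref{def:monotone}.

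The degradation half runs along exactly the dual lines. The domain of $h$ is downward closed, so \cref{lem:sup_isotone_deg} makes $h_*\colon(\mathcal P(X),\deggeq)\to(\mathcal P(\reals),\deggeq)$ an isotone, and $\mathrm{inf}\colon(\mathcal P(\reals),\deggeq)\to\ordreals$ is a monotone by \cref{ex:enh_reals}. Their composite $\mathrm{inf}\circ h_*$ is then an order-preserving map into the totally ordered $\ordreals$, i.e.\ a monotone, closing the second case.

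I anticipate no real obstacle, as the work is already packaged in the preceding lemmas; the one point worth checking is the bookkeeping at the empty image. When $Y$ meets the domain of $f$ trivially, $f_*(Y)=\emptyset$ and $\mathrm{sup}\,\emptyset=-\infty$ is the bottom of $\ordreals$, consistently with $\emptyset$ being the bottom for $\enhgeq$ (\cref{def:enh_ord}); dually $\mathrm{inf}\,\emptyset=+\infty$ matches $\emptyset$ being the top for $\deggeq$. So these edge cases are harmless and the composition argument goes through verbatim.
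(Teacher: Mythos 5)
Your proof is correct and follows essentially the same route the paper intends for this corollary: compose the isotone $f_*$ from \cref{lem:sup_isotone} (respectively $h_*$ from \cref{lem:sup_isotone_deg}) with the root monotone $\sup$ (respectively $\inf$) supplied by \cref{ex:enh_reals}, and note the codomain $\ordreals$ is totally ordered. Your check of the empty-image edge case ($\sup\emptyset = -\infty$ matching $\emptyset$ as the $\enhgeq$-bottom, $\inf\emptyset = +\infty$ matching $\emptyset$ as the $\deggeq$-top) is a welcome bit of care that the paper leaves implicit.
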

		In the case of a trivial preorder on $X$, under which any two distinct elements are incomparable, the corresponding enhancement and degradation preorders are just $\supseteq$ and $\subseteq$ respectively.
		Moreover, every subset of $X$ is both a downset and an upset.
		Thus, we get the following special case.
		\begin{corollary}\label{lem:function_extensions}
			Given a partial function $f \colon X \to \reals$, the two maps 
			\begin{equation}
				\begin{split}
					\mathrm{sup} \circ f_* &\colon \bigl( \mathcal{P}(X), \supseteq \bigr) \to \bigl(\reals , \geq \bigr) \\
					\mathrm{inf} \circ f_* &\colon \bigl( \mathcal{P}(X), \subseteq \bigr) \to \bigl(\reals , \geq \bigr)
				\end{split}
			\end{equation}
			are both monotones.
		\end{corollary}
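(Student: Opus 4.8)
The plan is to obtain this statement as the special case of \cref{lem:monotone_extensions} in which $X$ is equipped with the trivial preorder, where $x \freeconv y$ holds precisely when $x = y$. Under this choice the hypotheses of \cref{lem:monotone_extensions} become vacuous and the enhancement and degradation preorders collapse to $\supseteq$ and $\subseteq$, so that the two maps in the statement are literally the instances $\mathrm{sup} \circ f_*$ and $\mathrm{inf} \circ f_*$ of the earlier corollary with $h = f$. The whole argument therefore reduces to three bookkeeping observations.

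First I would check that an arbitrary partial function $f \colon X \to \reals$ is automatically a partial monotone in the sense of \cref{def:monotone} once $X$ carries the trivial preorder. The codomain $\ordreals$ is totally ordered by \cref{ex:reals}, and the only comparable pairs under the trivial preorder are of the form $x \freeconv x$, for which the implication $f(x) \geq f(x)$ holds trivially; hence $f$ is order-preserving on its domain. Second, since no two distinct elements are comparable, every subset of $X$ is simultaneously upward and downward closed, so the same $f$ may play the role of both the upward-closed-domain monotone and the downward-closed-domain monotone required by \cref{lem:monotone_extensions}.

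The one point that needs a genuine (though very short) verification is that the enhancement and degradation preorders reduce to the stated lattice orders. Using the explicit characterizations displayed just before \cref{def:deg_ord}, $Y \enhgeq Z$ means that every $z \in Z$ has some $y \in Y$ with $y \freeconv z$; under the trivial preorder $y \freeconv z$ forces $y = z$, so this is exactly $Z \subseteq Y$, i.e.\ $Y \supseteq Z$. Dually, $Y \deggeq Z$ means every $y \in Y$ has some $z \in Z$ with $y \freeconv z$, which collapses to $Y \subseteq Z$. Thus $(\mathcal{P}(X), \enhgeq)$ coincides with $(\mathcal{P}(X), \supseteq)$ and $(\mathcal{P}(X), \deggeq)$ coincides with $(\mathcal{P}(X), \subseteq)$.

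With these observations in hand, applying \cref{lem:monotone_extensions} to the trivial preorder and to the choice $f = h$ yields immediately that $\mathrm{sup} \circ f_* \colon (\mathcal{P}(X), \supseteq) \to \ordreals$ and $\mathrm{inf} \circ f_* \colon (\mathcal{P}(X), \subseteq) \to \ordreals$ are monotones, which is the assertion. I do not expect any real obstacle here: the entire content is the identification showing that the trivial preorder specializes the enhancement/degradation machinery to plain subset inclusion, which is exactly the remark already made in the text preceding the statement, and the rest is a direct invocation of the previous corollary.
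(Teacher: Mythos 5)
Your proposal is correct and matches the paper's own argument: the text immediately preceding the corollary derives it from \cref{lem:monotone_extensions} by observing that under the trivial preorder on $X$ the enhancement and degradation preorders reduce to $\supseteq$ and $\subseteq$, and every subset is both upward and downward closed. Your additional remark that any partial function is vacuously a partial monotone in this case is the same (implicit) observation the paper relies on, so the two arguments coincide.
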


	\section{Quantales}\label{sec:quantale}
	
		Preordered lattices or analogous structures may suffice for studying the ordering of resources and their disjunctions.
		However, when we are interested in studying how the resource order arises as a convertibility relation via free transformations, we need an additional piece of structure.
		It is especially relevant if we care about how transformations can be constructed from elementary building blocks.
		At the most rudimentary level, it can come in the form of a binary operation that represents the composition of transformations.
		This leads to the notion of a quantale---a suplattice with a monoid operation.	
		\begin{definition}
		\label{def:quantale}
			A \textbf{quantale} $(\mathcal{Q}, \bigvee, \bot, \star, 1)$ is a poset $\mathcal{Q}$ equipped with additional structure so that
			\begin{enumerate}
				\item $(\mathcal{Q}, \bigvee, \bot)$ is a suplattice,
				
				\item $(\mathcal{Q}, \star, 1)$ is a monoid, and
				
				\item $\bigvee$ distributes over $\star$.
				That is, for all $T \in \mathcal{Q}$ and all subsets $\mathcal{S}$ of $\mathcal{Q}$, we have
					\begin{align}\label{eq:quantale_distributivity}
						\left( \bigvee \mathcal{S} \right ) \star T &= \bigvee_{S \in \mathcal{S}} (S \star T)  	&  	T \star \left( \bigvee \mathcal{S} \right ) &= \bigvee_{S \in \mathcal{S}} (T \star S) 
					\end{align}
			\end{enumerate}
			If $(\mathcal{Q}, \star, 1)$ is a commutative monoid, we say that $\mathcal{Q}$ is a \textbf{commutative quantale}.
		\end{definition}
		Commonly, quantales as defined above are referred to as \emph{unital} quantales. 
		Since we exclusively work with quantales that have a unit, we drop the adjective throughout.
		Choosing $\mathcal{S}$ in \eqref{eq:quantale_distributivity} to be $\{U, V\}$ and $\emptyset$ respectively lets one derive
		\begin{align}
			U \supseteq V  \quad &\implies \quad  U \star T \supseteq V \star T  	& &\text{and} &	  T \star \bot &= \bot.
		\end{align}
		That is, $\star$ preserves the lattice ordering (in both arguments) and $\bot$ is an ideal element (see \cref{def:ideal}).
		Rather than listing all the basic consequences of \cref{def:quantale}, we refer the reader to \cite{rosenthal1990quantales,eklund2018semigroups}, and focus on the concepts and properties that are relevant for our purposes.
		
		There are many examples of quantales that have nothing to do with resource theories, an extensive overview can be found in \cite{rosenthal1990quantales} for example.
		The most basic class of examples relevant to us are quantales associated to a set of transformations that can be composed, so as to form a monoid.
		\begin{example}\label{ex:free_quantale}
			For every monoid $(Q, \cdot, 1)$, there is the corresponding \textbf{free quantale} $(\mathcal{P}(Q), \bigvee, \emptyset, \cdot, 1)$ where the monoid operation $\cdot$ is extended to the power set by requiring that unions distribute over it.
			That is, for $S,T \in \mathcal{P}(Q)$, we define
			\begin{equation}
				S \cdot T \coloneqq \bigvee_{s,t} s \cdot t
			\end{equation}
			where $s$ and $t$ range over $S$ and $T$ respectively.
			We also have a version of \cref{lem:UAL=CABA}, which says that every uniquely atomistic quantale is isomorphic to the free quantale of some monoid.
		\end{example}
		A similar construction can be used if there are multiple ways to compose transformations.
		\begin{example}\label{ex:free_quantale_multi}
			Given a family of semigroups\footnotemark{} $(Q, \cdot_i)$, we can construct a corresponding (non-unital) quantale $(\mathcal{P}(Q), \bigvee, \emptyset, \star)$ with composition defined by
			\footnotetext{A semigroup is a monoid that is not required to have a unit element.}%
			\begin{equation}
				S \star T \coloneqq \bigvee_{i} S \cdot_i T = \bigvee_{i,s,z} s \cdot_i t.
			\end{equation}
			If there is a subset $1 \subseteq Q$ such that 
			\begin{equation}
				1 \star S = S = S \star 1
			\end{equation}
			holds for all $S \subseteq Q$, then we get a (unital) quantale as well with unit given by $1$.
			
		\end{example}
		Indeed the semigroups could even be partial, meaning that the operation $\cdot_i$ need not be always defined.
		Given $s,t \in Q$ for which $s \cdot_i t$ is undefined, the corresponding composition $s \cdot_i t$ in the power set quantale yields the empty set.
		\begin{example}\label{ex:free_quantale_cat}
			It is now easy to recognize that every small category can be likewise associated with a quantale whose underlying set is the power set of all morphisms.
			In particular, every such category is a partial semigroup whose operation is the morphism composition.
			The unit of this quatale is just the collection of all identity morphisms.
			Of course, this is often not the most convenient way to describe the data of a category, since the information about types of morphisms is now implicit rather than explicit.
			See \cref{sec:future} for a discussion of ways to introduce types for quantales.
		\end{example}
		
		Since process theories are themselves small categories, we can view them also as quantales via \cref{ex:free_quantale_cat}.
		However, this perspective completely misses most of the ways in which processes can be composed.
		The question of which notions of composition should be allowed when specifying $\star$ depends on the particular situation being modelled and constitutes part of the data identifying a resource theory.
		\begin{example}\label{ex:quantale_for_UCRT1}
			For instance, we may choose $\star$ to incorporate both parallel and sequential compositions in a process theory.
			Given two processes $s$ and $t$, we can think of their composition $s \star t$ as the collection of all diagrams obtained by admissible\footnotemark{} ways of contracting the outputs of $t$ with inputs of $s$.
			\footnotetext{The valid ways of plugging wires are restricted by wire types (see \cref{sec:proc_th}), we presume that all the types match in our examples.}%
			For example, given a process $t$ with two outputs and a process $s$ with one input, we may depict the composition $s \star t$ (suppressing the fact that each diagram is in fact a singleton \emph{set of diagrams}) as follows:
			\begin{equation}\label{eq:universal_composition}
				\tikzfig{universal_composition}
			\end{equation}
			Here, we presume that the input of $s$ and the two outputs of $t$ have matching types, while the other wires are of arbitrary type.
			Note that the collection on the right-hand side of \eqref{eq:universal_composition} includes, as one of its elements, the diagram consisting of $s$ and $t$ with no contraction between them---the parallel composition.
			The other elements then depend on the specifics of the processes involved.
			They can be decomposed into the ``elementary'' parallel and sequential compositions with the aid of identity process, e.g.,
			\begin{equation}
				\tikzfig{decomp}
			\end{equation}
			but this decomposition is not unique.
			
			The quantale unit, in this case, includes only one element---the empty diagram.
			One can think of $(s,t)$ as a pair of ``unplaced'' processes and the operation given by $\star \colon (s,t) \mapsto s \star t$ as first placing them in a common diagram followed by listing all the contractions:
			\begin{equation}\label{eq:universal_composition_2}
				\tikzfig{universal_composition_2}
			\end{equation}
			A similar idea has been presented in \cite{kristjansson2020resource} in the context of composing communication channels between parties.
			One can readily extend the construction illustrated in \eqref{eq:universal_composition_2} which allows one to construct $s \star t$ to an arbitrary number of processes to be composed.
			In particular, this perspective makes it clear that such an operation is associative.
		\end{example}
		At first sight, allowing to plug any outputs into any inputs as long as the types match appears to sidestep any causal considerations that one may want to impose.
		However, causal constraints can be incorporated too.
		One way to do so is to have richer types that incorporate causal constraints. 
		In particular, given any causal structure represented by a directed acyclic graph, one can use the framework of \cite{kissinger2017categorical} to build a corresponding type of processes that captures those processes that are compatible with the given causal structure (see section 6.2 therein).

		A related, albeit distinct, approach would be to supplement processes with information specifying causal relations in the form of \emph{routed circuits} in the sense of \cite{vanrietvelde2021routed} or \emph{causal channels} in the sense of \cite[section 4.1]{houghton2021mathematical}.
		We expect that the latter especially can lead to a fruitful investigation of resource theories with causal constraints in very general terms.
		
		\begin{definition}\label{def:idmptnt}
			An element $Q$ of a quantale is called 
			\begin{itemize}
				\item \textbf{transitive} if it satisfies $Q \supseteq Q \star Q$,
				\item \textbf{reflexive} if $Q \supseteq 1$ holds.
			\end{itemize}
		\end{definition}
		Note that if $Q$ is both reflexive and transitive, then it is also \textbf{idempotent}, i.e.\ $Q$ equals $Q \star Q$.
		
		\begin{definition}\label{def:quantale_morphism}
			Given quantales $\mathcal{T}$ and $\mathcal{Q}$, a function $f \colon \mathcal{T} \to \mathcal{Q}$ is a \textbf{quantale homomorphism} if it is a homomorphism of their suplattices, maps the unit of $\mathcal{T}$ to the unit of $\mathcal{Q}$, and satisfies
			\begin{equation}\label{eq:quantale_morphism}
				f(S \star T) = f(S) \star f(T)
			\end{equation}
			for all $S,T \in \mathcal{T}$.
		\end{definition}
		
		We now describe the standard notions of subobjects and quotients.
		In the literature on quantales, and more broadly lattice theory, certain types of (co)closure operators called (co)nuclei are commonly employed to describe these \cite[section 3]{rosenthal1990quantales}.
		However, since we will not need nuclei otherwise, we opt for an alternative presentation.
		\begin{definition}\label{def:subquantale}
			A \textbf{subquantale} is a quantale homomorphism $i \colon \mathcal{T} \to \mathcal{Q}$ which is injective as a function of the underlying sets.
			In this case, we also often speak of $\mathcal{T}$ as a subquantale of $\mathcal{Q}$.
		\end{definition}
		\begin{definition}\label{def:congruence}
			An equivalence relation $\sim$ on a quantale $\mathcal{Q}$ is a \textbf{quantale congruence} if it satisfies the following implications:
			\begin{equation}
				\begin{alignedat}{999}
					S_1 &\sim S_2  	&\quad 	&\implies \quad 	& 	S_1 \star T &\sim S_2 \star T \\
					T_1 &\sim T_2  	&\quad 	&\implies \quad 	& 	S \star T_1 &\sim S \star T_2 \\
					S_i &\sim T_i  	&\quad 	&\implies \quad 	& 	\bigvee_i S_i &\sim \bigvee_i T_i
				\end{alignedat}
			\end{equation}
		\end{definition}
		Equivalence classes $[S]$ with respect to a congruence form a quantale themselves, which is the \textbf{quotient quantale} $(\linefaktor{\mathcal{Q}}{{\sim}}, \mathbin{\&}, [\bot], \bullet, [1])$ with suprema and composition given by
		\begin{equation}\label{eq:quantale_quotient}
			\begin{split}
				[S] \mathbin{\&} [T] &\coloneqq [S \vee T] \\
				[S] \bullet [T] &\coloneqq [S \star T],
			\end{split}
		\end{equation}
		from which one can infer the lattice ordering to be
		\begin{equation}\label{eq:quantale_quotient2}
			[S] \geq [T]  \quad \iff \quad  S \vee T \sim S.
		\end{equation}
		Every quotient quantale comes with the quotient projection $\mathcal{Q} \to \linefaktor{\mathcal{Q}}{{\sim}}$ which is a surjective homomorphism.
		Conversely, the image of any homomorphism can be identified with a quotient of its domain with respect to the congruence identifying its fibers.
		
		In the context of commutative quantales, every element induces a congruence and thus a corresponding quotient.
		\begin{definition}\label{def:augment}
			Given an element $F$ of a commutative quantale $\mathcal{Q}$, $\bm{F}$\textbf{-augmentation} is the map $\mathsf{Aug}_F \colon \mathcal{Q} \to \mathcal{Q}$ given by 
			\begin{equation}
				S \mapsto F \star S
			\end{equation}
		\end{definition}
		One can easily check that the relation
		\begin{equation}\label{eq:augment_cong}
			S \sim_F T  \quad \iff \quad  F \star S = F \star T.
		\end{equation}
		is a congruence.
		Thus, even though $\mathsf{Aug}_F \colon \mathcal{Q} \to \mathcal{Q}$ is not guaranteed to be a homomorphism (unless $F$ is idempotent for example), the projection $\pi_F \colon \mathcal{Q} \to \linefaktor{\mathcal{Q}}{{\sim_F}}$ is.
		For non-commutative quantales, left and right augmentations have to be distinguished and their fibres do not correspond to congruences in general.
	
	\section{Quantale Modules}\label{sec:quantale_module}
	
		Note that for any suplattice $\mathcal{X}$, the set $\mathrm{End}(\mathcal{X})$ of all suplattice homomorphisms $\mathcal{X} \to \mathcal{X}$ is a quantale.
		In particular, the suprema are given pointwise: 
		For any $f,g \in \mathrm{End}(\mathcal{X})$ and any $Y \in \mathcal{X}$ we let
		\begin{equation}
			( f \cup g ) (Y) \coloneqq f(Y) \cup g(Y).
		\end{equation}
		The quantale composition is just the sequential composition of functions.
		
		A subquantale $\mathcal{Q} \hookrightarrow \mathrm{End}(\mathcal{X})$ can be then thought of specifying the transitions between elements of $\mathcal{X}$.
		In this case, elements of $\mathcal{Q}$ (which are eventually interpreted as resource transformations) are given in terms of the transitions they induce.
		More generally, we can consider them as independent entities and related to transitions via a quantale homomorphism $\mathcal{Q} \to \mathrm{End}(\mathcal{X})$.
		Such a structure corresponds to a general notion of a transition system \cite{Abramsky1993} and can be equivalently described as a quantale module \cite[Proposition 3.1.3]{eklund2018semigroups}.
		
		\begin{definition}\label{def:quantale_module}
			A \textbf{quantale module} $(\mathcal{Q}, \mathcal{X}, \apply)$ consists of a quantale $(\mathcal{Q}, \bigvee, \bot, \star, 1)$, a suplattice $(\mathcal{X}, \bigcup, \bot)$, and an action
				$\apply \colon \mathcal{Q} \times \mathcal{X} \to \mathcal{X}$ that satisfies
			\begin{subequations}\label{eq:qmodule}
			\begin{align}
				\label{eq:qmodule_comp_action}  S \apply (T \apply Y) &= (S \star T) \apply Y, \\
				\label{eq:qmodule_res_disj}  T \apply \left( \bigcup \mathcal{Z} \right) &= \bigcup_{Z \in \mathcal{Z}} (T \apply Z), \\
				\label{eq:qmodule_trans_disj}  \left( \bigvee \mathcal{S} \right) \apply Y &= \bigcup_{S \in \mathcal{S}} (S \apply Y), \\
				\label{eq:qmodule_identity}  1 \apply Y &= Y 
			\end{align}	
			\end{subequations}
			for all $S,T \in \mathcal{Q}$ and all $Y \in \mathcal{X}$.
		\end{definition}
		Apart from the books on quantales mentioned in \cref{sec:quantale}, one can find a concise introduction to quantale modules in \cite{Abramsky1993,russo2010quantale}.
		
		Whenever it is sufficient to keep the relevant structure implicit, we refer to $\mathcal{X}$ itself as a $\mathcal{Q}$-module.
		Just like for quantale compositions, every action $\apply$ necessarily has to preserve the lattice orderings of both $\mathcal{Q}$ and $\mathcal{X}$.
		Furthermore, we also have 
		\begin{align}
			\bot \apply Y &= \bot  	&	  S \apply \bot = \bot
		\end{align}
		for all $S \in \mathcal{Q}$ and all $Y \in \mathcal{X}$.
		
		\begin{definition}
			A suplattice homomorphism $f \colon \mathcal{Y} \to \mathcal{X}$ is a homomorphism of $\mathcal{Q}$-modules if it satisfies
			\begin{equation}
				f(Q \apply Y) = Q \apply f(Y)
			\end{equation}
			for all $Q \in \mathcal{Q}$ and all $Y \in \mathcal{Y}$.
		\end{definition}
		\begin{example}\label{ex:module_from_morphism}
			Note that any quantale defines a $\mathcal{Q}$-module $(\mathcal{Q},\mathcal{Q},\star)$ by action on itself via composition.
			Moreover, for any quantale homomorphism $\ell \colon \mathcal{T} \to \mathcal{Q}$, there is a quantale module $(\mathcal{T},\mathcal{Q},\star_\ell)$ with the action given for any $T \in \mathcal{T}$ and $Q \in \mathcal{Q}$ by
			\begin{equation}
				T \star_\ell Q \coloneqq \ell(T) \star Q.
			\end{equation}
			In particular $\mathcal{T}$ may be a subquantale of $\mathcal{Q}$.
		\end{example}
		The above procedure is known as restriction of scalars in module theory.
		As usual, such an $\ell$ can be used to associate a $\mathcal{T}$-module to \emph{any} $\mathcal{Q}$-module in a similar fashion.
		\begin{definition}\label{def:qmod_morphism}
			Consider two quantale modules $(\mathcal{T}, \mathcal{Y}, \blacktriangleright)$ and $(\mathcal{Q}, \mathcal{X}, \apply)$.
			A \textbf{quantale module homomorphism} is a pair of a quantale homomorphism $\ell \colon \mathcal{T} \to \mathcal{Q}$ and a suplattice homomorphism $f \colon \mathcal{Y} \to \mathcal{X}$ such that we have
			\begin{equation}\label{eq:qmod_morphism}
				f \bigl( T \blacktriangleright Y \bigr) = \ell (T) \apply f(Y)
			\end{equation}
			for all $T \in \mathcal{T}$ and all $Y \in \mathcal{Y}$.
			In other words, the diagram
			\begin{equation}
				\begin{tikzcd}
					\mathcal{Q} \times \mathcal{Y} \arrow[rr, " f"]  &                               & \mathcal{Q} \times \mathcal{X} \arrow[d, "\apply"] \\
					\mathcal{T} \times \mathcal{Y} \arrow[r, "\blacktriangleright"'] \arrow[u, "\ell"']                                                   & \mathcal{Y} \arrow[r, "f"'] & \mathcal{X}                                           
				\end{tikzcd}
			\end{equation}
			commutes.
		\end{definition}
		More generally, a \textbf{lax} (or \textbf{oplax}) \textbf{quantale module homomorphism} is instead only required to satisfy
		\begin{equation}\label{eq:lax_qmod_morphism}
			f \bigl( T \blacktriangleright Y \bigr) \supseteq \ell (T) \apply f(Y)  \qquad \text{or} \qquad   f \bigl( T \blacktriangleright Y \bigr) \subseteq \ell (T) \apply f(Y)
		\end{equation}
		respectively.
		In both of these weakenings, it makes sense to also consider the quantale homomorphism $\ell$ to be lax (or oplax), such that only
		\begin{equation}
			\ell (S \star T) \supseteq  \ell(S) \star \ell(T)  	\qquad \text{or} \qquad   \ell (S \star T) \subseteq  \ell(S) \star \ell(T)
		\end{equation}
		is required to hold, as opposed to \cref{eq:quantale_morphism}.
		
		\begin{example}
			Just like every monoid has an associated power set quantale, given a monoid $(Q, \cdot, 1)$ acting on a set $X$ via an action $\apply \colon Q \times X \to X$, we get a quantale module $(\mathcal{P}(Q), \mathcal{P}(X), \apply)$.
			In particular, the action is extended to sets by compatibility with unions:
			\begin{equation}
				S \apply Y \coloneqq \bigcup_{s,y} s \apply y
			\end{equation}
			for any $S \in \mathcal{P}(Q)$ and any $Y \in \mathcal{P}(X)$.
		\end{example}
		
		\begin{definition}\label{def:ideal}
			Let $F$ be an element of a quantale $\mathcal{Q}$.
			An element $X$ of a $\mathcal{Q}$-module is said to be $\bm{F}$\textbf{-ideal} if it satisfies $F \apply X \subseteq X$.
			If $S \apply X \subseteq X$ holds for all $S \in \mathcal{Q}$, then we say that $X$ is an \textbf{ideal element}.
		\end{definition}
		If $\top$ denotes the top element of $\mathcal{Q}$, the condition that $X$ be ideal is equivalent to 
		\begin{equation}
			\top \apply X = X,
		\end{equation}
		since $\top \supseteq 1$ implies $\top \apply X \supseteq X$ for an arbitrary $X \in \mathcal{X}$.
		Thus, ideal elements are the same as $\top$-ideal ones.
		The reason for calling these ideal elements is that they are in bijective correspondence with ideals of the $\mathcal{Q}$-module \cite{Russo2017}.
		
		Given a $\mathcal{Q}$-module $\mathcal{X}$, we can define preorders on the suplattice that describe transitions between elements of $\mathcal{X}$ with respect to elements of $\mathcal{Q}$ closed under composition in the following sense.
		\begin{definition}\label{def:reachability}
			Consider a $\mathcal{Q}$-module $\mathcal{X}$ and a reflexive, transitive $F \in \mathcal{Q}$.
			The $\bm{F}$\textbf{-reachability preorder} $\freeconv_F$ on $\mathcal{X}$ is defined by 
			\begin{equation}\label{eq:reachability}
				Y \freeconv_F Z \quad \coloniff \quad  F \apply Y \supseteq Z.
			\end{equation}
		\end{definition}
		Every such $F$ corresponds to a subquantale $\langle \bot,F \rangle$ of $\mathcal{Q}$ which consists of all $S \in \mathcal{Q}$ satisfying
		\begin{equation}
			F \supseteq S \supseteq \bot.
		\end{equation}
		Therefore, we can view the $F$-reachability preorder in the $\langle \bot,F \rangle$-module $\mathcal{X}$ as defined via
		\begin{equation}
			\begin{split}
				Y \freeconv_F Z \quad &\iff \quad  \top \apply Y \supseteq Z \\
					&\iff \quad  \exists \, S \in \langle \bot,F \rangle  \; : \; \,  S \apply Y \supseteq Z .
			\end{split}
		\end{equation}
		where the top element of $\langle \bot, F \rangle$ is just $F$ of course.
		However, since we are interested in studying distinct reachability preorders of a (potentially) fixed $\mathcal{Q}$-module, \cref{def:reachability} is the more relevant way to think of these order relations.
		
		\begin{lemma}\label{lem:access_is_sensible}
			The $F$-reachability preorder of a $\mathcal{Q}$-module $\mathcal{X}$ is an extension of its lattice ordering $\supseteq$.
		\end{lemma}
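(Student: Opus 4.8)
The plan is to unwind \cref{def:reachability} and \cref{def:extended_lattice} and to verify the two defining conditions of an extension directly, relying on the fact (noted just after \cref{def:quantale_module}) that the action $\apply$ preserves the lattice order in each argument, together with the reflexivity and transitivity of $F$ from \cref{def:idmptnt}. Before that it is worth recording why $\freeconv_F$ is a preorder at all, since \cref{def:extended_lattice} presupposes one. Reflexivity $Y \freeconv_F Y$ amounts to $F \apply Y \supseteq Y$, which follows from $F \supseteq 1$ and order-preservation via $F \apply Y \supseteq 1 \apply Y = Y$, using \eqref{eq:qmodule_identity}. For transitivity, from $F \apply Y \supseteq Z$ and $F \apply Z \supseteq W$ I would apply $F \apply (\ph)$ to the first relation and invoke \eqref{eq:qmodule_comp_action} together with $F \supseteq F \star F$ to get $F \apply Y \supseteq (F \star F) \apply Y = F \apply (F \apply Y) \supseteq F \apply Z \supseteq W$, so $Y \freeconv_F W$.

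For condition \ref{it:elim_ord}, suppose $X \supseteq Y$. Since $F \supseteq 1$ yields $F \apply X \supseteq 1 \apply X = X$ exactly as above, chaining with the hypothesis gives $F \apply X \supseteq X \supseteq Y$, that is $X \freeconv_F Y$. This is the only place, besides reflexivity, where the assumption $F \supseteq 1$ is genuinely used.

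For condition \ref{it:sup_ord}, suppose $X \freeconv_F Z$ for every $Z$ in a subset $\mathcal{Z}$ of $\mathcal{X}$, i.e.\ $F \apply X \supseteq Z$ for all such $Z$. Then $F \apply X$ is an upper bound of $\mathcal{Z}$, and because $\bigcup \mathcal{Z}$ is by definition the least upper bound, we obtain $F \apply X \supseteq \bigcup \mathcal{Z}$, which is precisely $X \freeconv_F \bigcup \mathcal{Z}$. No hypothesis on $F$ is needed here beyond what already makes $\freeconv_F$ well defined.

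The argument is essentially bookkeeping, so there is no serious obstacle; the one point to keep straight is that condition \ref{it:sup_ord} is carried entirely by the universal property of the supremum, while the reflexivity input $F \supseteq 1$ is what supplies condition \ref{it:elim_ord} and the transitivity input $F \supseteq F \star F$ is what makes $\freeconv_F$ transitive. If I expected any slip, it would be in remembering to apply the order-preserving map $F \apply(\ph)$ on the correct side when establishing transitivity.
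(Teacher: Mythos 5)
Your proposal is correct and follows essentially the same route as the paper's own proof: establish reflexivity from $F \supseteq 1$ and transitivity from $F \supseteq F \star F$ via the chain $F \apply Y \supseteq (F \star F) \apply Y = F \apply (F \apply Y)$, then obtain condition \ref{it:elim_ord} by chaining reflexivity with $X \supseteq Y$, and condition \ref{it:sup_ord} from the least-upper-bound property of $\bigcup \mathcal{Z}$. The only difference is cosmetic: you are more explicit than the paper about which quantale-module axioms justify each step.
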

		\begin{proof}
			First of all, let us check that $\succeq$ is indeed a preorder as the name suggests.
			Reflexivity of the order follows from reflexivity of $F$ via the fact that $\apply$ preserves the lattice ordering of $\mathcal{Q}$.
			For transitivity, assume both $X \freeconv_F Y$ and $Y \freeconv_F Z$ hold.
			Then we have
			\begin{equation}
				F \apply X \supseteq (F \star F) \apply X = F \apply (F \apply X) \supseteq F \apply Y \supseteq Z
			\end{equation}
			as required.
			
			The fact that $\mathcal{X}$ forms a preorder-extended suplattice follows directly from \eqref{eq:reachability}.
			Indeed, if $Y \supseteq Z$ holds, then
			\begin{equation}
				F \apply Y \supseteq Y \supseteq Z
			\end{equation}
			does as well since since $\freeconv_F$ is reflexive as wer already established.
			Furthermore, if we have $F \apply X \supseteq Z$ for all $Z$ in some subset $\mathcal{Z}$ of $\mathcal{X}$, then $F \apply X$ is an upper bound of $\mathcal{Z}$ and thus satisfies 
			\begin{equation}
				F \apply X \supseteq \bigcup \mathcal{Z}.
			\end{equation}
		\end{proof}
		
		\begin{lemma}\label{lem:access=enh}
			Let $\mathcal{X}$ be an atomistic $\mathcal{Q}$-module.
			Let $F$ be a reflexive, transitive element of $\mathcal{Q}$ and denote by $\enhconv$ the enhancement preorder on $\mathcal{X}$ constructed from the restriction of the $F$-reachability preorder to the atoms of $\mathcal{X}$.
			Then we have 
			\begin{equation}\label{eq:access=enh}
				Y \enhconv Z \quad \iff \quad  Y \freeconv_F Z
			\end{equation}
			for all $Y, Z \in \mathcal{X}$.
		\end{lemma}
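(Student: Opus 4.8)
The plan is to prove the two implications of \eqref{eq:access=enh} separately, leaning on the results already established for the enhancement preorder.

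For the implication $Y \enhconv Z \Rightarrow Y \freeconv_F Z$, I would first note that \cref{lem:access_is_sensible} makes $(\mathcal{X}, \supseteq)$ together with $\freeconv_F$ into a preorder-extended suplattice, which is atomistic by hypothesis. By construction, $\enhconv$ is the enhancement preorder induced on $\atoms(\mathcal{X})$ by the restriction of $\freeconv_F$, so $Y \enhconv Z$ is literally the statement $A_Y \enhgeq A_Z$. Hence \cref{prop:enh_coarse}, applied with its generic extension $\geq$ taken to be $\freeconv_F$, delivers $Y \freeconv_F Z$ at once. In short, this direction is nothing more than an instance of \eqref{eq:enh_coarse}.

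For the converse, I would unfold the definitions and use that the action distributes over suprema. Writing $A_Y$ for the atoms below $Y$, atomisticity gives $Y = \bigcup A_Y$, and \eqref{eq:qmodule_res_disj} then yields $F \apply Y = \bigcup_{y \in A_Y} (F \apply y)$. The hypothesis $Y \freeconv_F Z$ says $F \apply Y \supseteq Z$, so for every atom $z \subseteq Z$ we obtain $z \subseteq \bigcup_{y \in A_Y}(F \apply y)$. If for each such $z$ I can name a single atom $y \in A_Y$ with $F \apply y \supseteq z$ --- equivalently $y \freeconv_F z$ --- then $z \mapsto y$ defines an enhancement $A_Z \to A_Y$, which is exactly $Y \enhconv Z$.

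The crux, and the step I expect to be the main obstacle, is precisely this extraction of one summand: passing from $z \subseteq \bigcup_{y \in A_Y}(F \apply y)$ to $z \subseteq F \apply y$ for a single $y$. This is the requirement that the atom $z$ be completely join prime, i.e.\ the fragmentability condition \eqref{eq:fragmentable}, which the discussion following \cref{prop:enh_coarse} already isolates as what upgrades \eqref{eq:enh_coarse} from an implication to an equivalence. In the power-set modules $\mathcal{P}(X)$ that underlie the concrete resource theories --- where the atoms are singletons and hence automatically completely join prime --- this step is immediate and the argument closes. Without it the converse can genuinely fail, so I would either carry out the proof under the standing assumption that $\mathcal{X}$ is uniquely atomistic (equivalently a power set, by \cref{lem:UAL=CABA}) or, more generally, for strongly coherent lattices with completely join prime elements in place of atoms, exactly as flagged after \cref{prop:enh_coarse}.
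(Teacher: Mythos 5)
Your treatment of the forward implication is correct and takes a different route from the paper's: you obtain it from \cref{lem:access_is_sensible} together with \cref{prop:enh_coarse}, whereas the paper handles both directions at once by invoking \cref{lem:enh_down} and then reducing to the algebraic equivalence $F \apply Y \supseteq F \apply Z \iff F \apply Y \supseteq Z$, which follows from reflexivity and idempotence of $F$. More importantly, your diagnosis of the converse is not a weakness of your proposal but an accurate criticism of the lemma as stated. The paper's proof silently performs exactly the step you isolate: in \eqref{eq:down_def} it identifies the downward closure $\downwrt{F}(Y)$ --- which, for \cref{lem:enh_down} to apply, must be the \emph{set of atoms} $z$ admitting a \emph{single} $y \in A_Y$ with $F \apply y \supseteq z$ --- with the lattice element $F \apply Y = \bigcup_{y \in A_Y} F \apply y$. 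An atom below this supremum need not lie below any individual $F \apply y$ unless atoms are completely join prime, i.e.\ unless condition \eqref{eq:fragmentable} holds for $\freeconv_F$; so the paper's argument contains the same gap, hidden in the notation.

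And the converse does genuinely fail for merely atomistic modules, as you suspected. Let $\mathcal{X}$ be the suplattice of closed subsets of $\mathbb{R}$ (suprema are closures of unions, atoms are singletons, and the lattice is atomistic but not uniquely so), let $\mathcal{Q} = \mathrm{End}(\mathcal{X})$ act tautologically, and let $F$ be the map
\begin{equation}
	F \apply C \coloneqq \mathrm{cl} \Bigl( \, \bigcup_{n \geq 0} (C + n) \Bigr),
\end{equation}
which one checks is sup-preserving, reflexive, and idempotent, hence transitive. Take $Y = \{ -m - \tfrac{1}{m} : m \geq 2 \}$, a closed set, and $Z = \{0\}$. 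Since $-\tfrac{1}{m} \in Y + m$ for every $m \geq 2$, the point $0$ lies in $F \apply Y$, so $Y \freeconv_F Z$. But for each atom $\{y\} \subseteq Y$ we have $F \apply \{y\} = \{ y + n : n \geq 0 \}$, which never contains $0$ because $m + \tfrac{1}{m}$ is not an integer; hence no enhancement $A_Z \to A_Y$ exists for any admissible choice of $A_Y$, and $Y \not\enhconv Z$. So your proposed repair is the right one: assume $\mathcal{X}$ uniquely atomistic (the concrete, power-set case via \cref{lem:UAL=CABA}, which covers the uses the paper later makes of this lemma), or more generally replace atoms by completely join prime elements in a strongly coherent lattice, as flagged after \cref{prop:enh_coarse}. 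Under that hypothesis your two directions constitute a complete and correct proof.
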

		\begin{proof}
			By \cref{lem:enh_down}, it suffices to show that 
			\begin{equation}\label{eq:access=down}
				\downwrt{F} (Y) \supseteq \downwrt{F}(Z) \quad \iff \quad  Y \freeconv_F Z
			\end{equation}
			holds, where $\downwrt{F}$ is the downward closure operator defined as
			\begin{equation}\label{eq:down_def}
				\downwrt{F} (Y) = \bigcup_{y \in A_Y} \downwrt{F} (y) = \bigcup_{y \in A_Y} F \apply y
			\end{equation}
			for any choice of a set $A_Y$ of atoms whose supremum is $Y$.
			Since the rightmost expression in \eqref{eq:down_def} equals $F \apply Y$, we can rewrite the desired equivalence \eqref{eq:access=down} as 
			\begin{equation}\label{eq:reachability2}
				F \apply Y \supseteq F \apply Z \quad \iff \quad  F \apply Y \supseteq Z.
			\end{equation}
			The forward implication is ensured by $F \supseteq 1$, while the converse can be derived by applying another $F$ to $F \apply Y \supseteq Z$ and observing that
			\begin{equation}
				F \apply (F \apply Y) = (F \star F) \apply Y = F \apply Y
			\end{equation}
			holds thanks to the idempotence of $F$.
		\end{proof}
		
		We now describe quotient quantale modules, once again in terms of congruences rather than closure operators.
		\begin{definition}\label{def:module_congruence}
			Consider a $\mathcal{Q}$-module $\mathcal{X}$.
			An equivalence relation $\sim$ on $\mathcal{X}$ is a $\mathcal{Q}$-module \textbf{congruence} if it satisfies the following implications:
			\begin{equation}
				\begin{alignedat}{999}
					X_1 &\sim X_2  	&\quad &\implies \quad & 	Q \apply X_1 &\sim Q \apply X_2 \\
					X_i &\sim Y_i  	&\quad &\implies \quad & 	\bigcup_i X_i &\sim \bigcup_i Y_i
				\end{alignedat}
			\end{equation}
		\end{definition}
		Equivalence classes $[X]$ with respect to a congruence form a $\mathcal{Q}$-module themselves, which is the \textbf{quotient $\bm{\mathcal{Q}}$-module} $\linefaktor{\mathcal{X}}{{\sim}}$ with suprema and composition given by
		\begin{equation}\label{eq:qmodule_quotient}
			\begin{split}
				[X] \cup [Y] &\coloneqq [X \cup Y] \\
				Q \apply [X] &\coloneqq [Q \apply X].
			\end{split}
		\end{equation}
		Every quotient quantale comes with the quotient projection $\mathcal{X} \to \linefaktor{\mathcal{X}}{{\sim}}$ which is a surjective $\mathcal{Q}$-module homomorphism.
		
		More generally, we can have a quantale congruence on $\mathcal{Q}$ in conjunction with a $\mathcal{Q}$-module congruence on $\mathcal{X}$.
		If the former satisfies
		\begin{equation}
			S \sim T 	 	\quad \implies \quad  	S \apply X \sim T \apply X
		\end{equation}
		for all $S,T \in \mathcal{Q}$ and all $X \in \mathcal{X}$, then the pair forms a \textbf{quantale module congruence}.
		As one would expect, we can then form the quotient quantale module $(\linefaktor{\mathcal{Q}}{{\sim}},\linefaktor{\mathcal{X}}{{\sim}},\apply)$ with 
		\begin{equation}
			[Q] \apply [X]  \coloneqq [Q \apply X].
		\end{equation}
		
		Just like for quantales, we can define augmentation for quantale modules too.
		\begin{definition}\label{def:centralizer}
			Given a quantale $\mathcal{Q}$ and a subset $\mathcal{Z}$ thereof, we define the \textbf{centralizer} of $\mathcal{Z}$ in $\mathcal{Q}$, denoted $\mathrm{C}_{\mathcal{Q}}(\mathcal{Z})$, to be
			\begin{equation}\label{eq:centralizer}
				\mathrm{C}_{\mathcal{Q}}(\mathcal{Z}) \coloneqq  \Set*[\big]{ Q  \given  Q \star Z = Z \star Q \,\text{ for all }\, Z \in \mathcal{Z}  }.
			\end{equation}
		\end{definition}
		One can observe that every centralizer is a subquantale of the original quantale, since it contains $\bot$ and $1$ and it is closed under $\star$ as well as suprema.
		\begin{definition}\label{def:module_augment}
			Let $\mathcal{X}$ be a $\mathcal{Q}$-module and $F$ an element of $\mathcal{Q}$.
			$\bm{F}$\textbf{-augmentation} of $\mathcal{X}$ is the $\mathrm{C}_{\mathcal{Q}}(F)$-module $\mathsf{Aug}_F(\mathcal{X})$ given by its underlying set
			\begin{equation}
				\mathsf{Aug}_F(\mathcal{X}) \coloneqq \Set{ F \apply X \given X \in \mathcal{X} }
			\end{equation}
			and all structure inherited from the original module.
		\end{definition}
		Once again, the relation
		\begin{equation}\label{eq:augment_cong_2}
			Y \sim_F Z  \quad \iff \quad  F \apply Y = F \apply Z.
		\end{equation}
		is a congruence, but only when we think of $\mathcal{X}$ as a $\mathrm{C}_{\mathcal{Q}}(F)$-module.

\chapter{Resource Theories, Abstractly}\label{sec:resource_theories}
	
	\section{The Big Picture}
		
		In this chapter, we return to the topic of resource theories.
		Gradually, we solidify connections between resource-theoretic concepts and the mathematical structures introduced in \cref{sec:math_prelim} and provide the relevant interpretation and motivation.
		
		\subsubsection{Overview of the chapter.}
		
		In \cref{sec:ucrt}, we use commutative quantales to describe universally combinable resource theories.
		They correspond to situations in which there is no need for a conceptual distinction between resources and transformations.
		The term ``universally combinable resource theories'' originates in \cite{Coecke2016}, where the universal combination refers to the idea that processes can be composed arbitrarily.
		Such unconstrained composition defines a particular commutative quantale (see \cref{ex:quantale_for_UCRT2} for more details).
		This set-up mimics the approach of \cite{Fritz2017} in terms of ordered commutative monoids and generalizes the framework used in \cite{Monotones}.

		We also discuss the appropriate notion of the ordering of resources in \cref{sec:ucrt_order}.
		The following section then introduces ways in which a new set of free resources can be constructed from others. 
		For instance, LOCC processes in a resource theory of entanglement can be viewed as a conjunction of local operations (LO) and classical communication (CC).
		\Cref{sec:ucrt_catalysis} then describes a way to model conversions between resources with the help of catalysts.
		We can think of catalysts as tools which do not degrade throughout their use.
		A toy example to keep in mind is that of an axe which can be used to convert stored energy and a source of unprocessed wood into processed wood and which is (to a good approximation) returned in its original state after use.
		
		Later, in \cref{sec:rt}, we discuss the general framework of resource theories in terms of quantale modules.
		In this context, there is a conceptual distinction between resources and transformations. 
		The fact that the composition of transformations here is not required to be commutative allows for many more examples as compared to universally combinable resource theories.		
		In \cref{sec:rt_examples}, we provide several ways to turn a partitioned process theory \cite{Coecke2016} into a resource theory in our sense.
		We formalize the notion of a resource theory of asymmetry in the present context in \cref{sec:rt_asymmetry}.
		Quantum resources of asymmetry have been studied extensively in many works previously \cite{Marvian2012,Marvian2013,Marvian2014,Gour2009}.
		However, as we show in \cref{sec:monotones from twirling generalized} for instance, there are methods for studying resources of asymmetry that generalize beyond the context of quantum theory.
		Finally, \cref{sec:rt_morphisms} introduces a notion of morphisms that can be used to relate different resource theories.
		We provide examples of these for resource theories of asymmetry therein, but further examples of other morphisms appear in \cref{sec:monotones from twirling generalized,sec:dist_poss}.
		
		Even though both of the frameworks use quantales to describe ways in which resources (or their transformations) may be combined, the commutativity of the quantale operation $\boxtimes$ underlying a universally combinable resource theory means that its interpretation is typically different from the quantale operation $\star$ of resource transformations in the quantale module framework.
		In particular, if the combination of a resource $r$ with a resource $s$, $r \boxtimes s$ coincides with the swapped version $s \boxtimes r$ (which we assume in a universally combinable resource theory), then the object $r \boxtimes s$ carries no information about the sequential (or temporal) order in which $r$ and $s$ are used.
		Therefore, universally combinable resource theories are suitable if we do not need to model such sequential information explicitly via the quantale operation.
		Indeed, we may wish to allow that the set of possible transformations afforded by using a resource transformation $t$ before transformation $u$ is different from their combined use in reverse order.
		In this case, the framework of general quantale modules is the more suitable one.
		
		Other general constructions in the context of these abstract frameworks, such as ones of resource theories with a convex structure (\cref{sec:rt_convex}) or resource theories of distinguishability (\cref{sec:Encodings}) are only introduced once they are needed for our discussion of resource measures.
		Let us begin with a more detailed description of two resource theories we mentioned already, one of which describes resources of states out of equilibrium \cite{Janzing2000} and the other is about resources of quantum entanglement.
		
		\subsubsection{Examples of resource theories.}
		
		\begin{example}[resource theory of athermality]\label{ex:rt_athermality}
			There are multiple resource theories within which thermodynamic phenomena can be studied.
			This is a common feature of resource theories and it highlights the fact that the same concept can attain distinct meaning depending on the context.
			The choice of a resource theory, and in particular of the free transformations, is a specification of the context relative to which any conclusions about the values of thermodynamic resources are made (see also our discussion in the Introduction).
			Moreover, by identifying and justifying a particular choice of free transformations, one has to elucidate some of the assumptions being made.
			
			Here, we describe a resource theory of athermality with respect to a fixed background temperature, see \cite{lostaglio2019introductory} for an extensive discussion of the motivations, limitations and alternatives. 
			The objects of study are states of quantum systems, and we generally do not assume that they are close to equilibrium or be well-approximated by the `thermodynamic limit'. 
			Thus, we would only expect the standard notions of statistical mechanics to arise in the domain of manipulations of composite systems with a large number of constituents.
			The crucial feature of systems is that they come equipped with a Hamiltonian $H$ which specifies their time evolution in isolation.
			When in equilibrium with a heat bath identified with inverse temperature $\beta$, each system occupies the thermal state $e^{-\beta H} / Z$, where $Z$ is the partition function.
			
			The free manipulations (called thermal operations) include arbitrary unitary maps that commute with the Hamiltonian.
			Additionally, we assume that the agent can discard and bring in any system, as long as the latter is initialized in its thermal state. 
			In this context, the ability of the agent to extract work can be modelled in terms of the conversions allowed by thermal operations.
			For example, the accounting of energetic balance can be described by the energy eigenstates of a simple `battery' system.
			The conversion
			\begin{equation}
				\rho \otimes | E_0 \rangle \langle E_0 |  \freeconv  \sigma \otimes | E_1 \rangle \langle E_1 |
			\end{equation}
			models the fact that we can extract $E_1 - E_0$ of useful work by converting $\rho$ to $\sigma$.
		\end{example}
		
		\begin{example}[resource theory of entanglement]\label{ex:rt_entanglement}
			When thinking of entangled states of quantum systems as resources, we may consider tasks such as quantum teleportation \cite{bennett1993teleporting}, superdense coding \cite{bennett1992communication}, and others.
			For instance, the teleportation protocol asserts that it is possible to convert a quantum state maximally entangled between two parties, Alice and Bob, to a single use of a quantum channel between them.
			The statement is made non-trivial by requiring that the manipulations allowed exclude any quantum communication between the parties---they can only send classical information.
			That is, the free transformations (called LOCC operations) can be generated by arbitrary local quantum processing and arbitrary classical communications.
			
			We can think of other choices of free transformations that give rise to resource theories of distributed aspects of bipartite quantum states, such as
			\begin{enumerate}
				\item local operations (LO) with no causal relationship between Alice and Bob in the past or present,
				\item local operations with shared randomness (LOSR), which allow for a classically correlated pair of systems to be supplied to Alice and Bob \cite{schmid2020understanding},
				\item separable operations \cite{vedral1997quantifying}, which form a strict superset of LOCC, but are in many ways easier to characterize. 
			\end{enumerate}
			Nevertheless, the resource-theoretic framework places a non-trivial restriction on what the free transformations could be, even prior to considerations of what aspect we aim to capture with the specific choice.
			For example, the following sets of operations \emph{do not} constitute valid candidates for a resource theory of quantum entanglement:
			\begin{enumerate}
				\item local operations with a single round of two-way classical communication between Alice and Bob,
				\item operations that preserve the set of separable states of their input system.
			\end{enumerate}
			This is because we expect each free transformation to be available in unlimited amount, in accordance with the interpretation that they have ``zero cost''. 
			In the first case, multiple rounds of communication lead to a wider class of transformations with greater convertibility power---thus these are not closed under composition.
			In the second case, separable-preserving operations are not closed under conjunction with identity channels.
			That is, given a separable-preserving operation, such as the swapping of Alice's and Bob's systems, its tensor product with an identity channel on another bipartite system is not separable-preserving.
			This fact can be observed diagrammatically via
			\begin{equation}
				\tikzfig{swap_x_id_2}
			\end{equation}
			where the `cups' represent maximally entangled states $\Psi$.
			That is, starting with a product state $\Psi \otimes \Psi$, we end up with a maximally entangled state on the right.
		\end{example}
		
		\subsubsection{Two types of abstract resource theories.}
		
		Broadly speaking, we can identify two approaches in the pursuit to formalize the idea of a resource theory.
		Of course, to answer concrete questions, more structure is usually afforded than what we outline below, but this minimal description suffices to draw the distinction.
		
		One of the approaches postulates the existence of objects (the resources) which are to be manipulated.
		The manipulation of resources is described by a class of free transformations, usually as a strict subset of all conceivable transformations.
		In this sense, it is closer to the framework of quantale modules (see \cref{sec:rt_intuition}) among the two we present here.
		
		Choosing such free transformations is the crucial step in identifying a specific resource theory.
		Their interpretation can be manifold.
		Usually, free transformations are thought to be either
		\begin{enumerate}
			\item \emph{abundant in their occurence}, such as in the case of a resource theory modelling ``natural'' or ``spontaneous'' processes with no agents interested in engineering a specific outcome (for instance, in a resource theory modelling the potential emergence of life in Earth's early oceans, one may deem transformations that describe the use of abundant molecules to be free);
			
			\item \emph{cheap or easy to implement}, such as in the case of a resource theory modelling the capabilities of concrete agents (for instance, in a resource theory of computational complexity, one may deem computations that can be done in polynomially increasing time in terms of the problem size to be free);
			
			\item \emph{implementable with limited information}, such as in the case of agents with restriction on their knowledge (for instance, in a resource theory of distinguishability described in \cref{sec:Encodings}, a transformation is free only if it can be implemented with no information about the hypotheses that are to be distinguished);
			
			\item \emph{constructible from a basic set of generators}, such as in the case of agents granted access to elementary operations (for instance, in a resource theory of quantum athermality, one may allow the use of thermal states, unitaries that commute with the Hamiltonian and discarding systems to be the set of generators, from which more general free operations can be built),\footnotemark{}
			\footnotetext{Even though the problem of convertibility by free operations generated from an elementary set is undecidable even in the context of resource theories of quantum states \cite{scandi2021undecidability}, this presentation often offers a useful perspective on free transformations.}%
			
			\item \emph{consistent with a physical principle}, such as in the case of a resource theory describing fundamental limitations (for instance, in a resource theory of nonclassical correlations described in \cref{ex:quantum_correlations}, any transformation that is compatible with the presumed causal structure is deemed to be free);
		\end{enumerate}
		or a combination thereof.
		In some circumstances, we can find multiple justifications for the same set of free transformations.
		However, the different explanatory avenues often do lead to different resource theories with inequivalent content.
		The resource-theoretic point of view thus helps us crystallize our understanding of the phenomena by exposing the explanatory mechanism.
		
		Having identified the free operations, one may be interested in learning about how resources can be manipulated by the free operations.
		The basic questions ask whether it is possible to convert one resource to another and, if so, how.
		These correspond to the problem of characterizing the convertibility preorder among the resources and building protocols for specific types of conversion.
		Modern investigations of resource theories in quantum information science \cite{Chitambar2018} tend to be of this kind.
		At the level of frameworks, one could include resource theories of states \cite[section 3.2]{Coecke2016} or of parallel combinable processes \cite[section 3.3]{Coecke2016} in the context of process theories here.
		
		The other approach abstracts away the free operations and postulates the existence of resources equipped with an order relation that contains the information about their convertibility.
		Some structure describing how resources can be combined is usually given as well.
		In this sense, it is closer to the framework of universally combinable resource theories (see \cref{sec:ucrt}) among the two we present here.
		However, universally combinable resource theories do not quite fit into either of the two approaches described here.
		
		With this often more abstract setup, it is easier to investigate more intricate and generalizable features of the resource preorder.
		For example, one can study the convertibility between many copies of resources or convertibility in the presence of catalysts \cite{Fritz2017,fritz2020local} in fairly general terms.
		Sometimes, it is helpful to view such ordered sets as $\{true, false\}$-enriched categories \cite[definition 4.1]{Coecke2016}.\footnote{See also \cref{ex:ocm_as_rt2} for a related point of view.}
		If the mere ``possible/impossible'' distinction is insufficient, one can change the base of the enrichment to include more intricate information about the conversions.
		In \cite{marsden2018quantitative}, this idea is explored to frame resource theories in the context of categories enriched over a commutative quantale, which specifies possibly more intricate information about the relationship between resources.
		Alternatively, this approach allows one to define general concepts, postulate additional axioms and derive their consequences purely in terms of (im)possibility, as constructor theory does \cite{deutsch2013constructor}.
		
		\subsubsection{Aims of our work.}
		
		Our goal is to give a framework that does not necessarily take one approach or the other.
		Instead, it should (eventually) have enough expressive power and generality to be able to strengthen the connection between them.
		Indeed, even though we mention above that the quantale module framework is closely tied to the first approach, it can also express many instances of the latter one (cf. \cref{ex:ocm_as_rt2}).
		This guiding idea connects to one of the primary aims of the work, which is to build a framework within which methods and results from specific resource theories can be generalized and tied together.
		By doing so, they may find new applications and form bridges between seemingly disparate domains of inquiry with potentially distinct languages and terminology.
		
		One could, and many people certainly do, pursue similar avenues outside of the domain of resource theories.
		However, we think that resource theories are particularly suited due to their proven record as a methodological tool for consolidating knowledge.
		The most notable example is the resource theory of quantum entanglement \cite{Horodecki2009}, but there are many others too.
		Additionally, questions that arise at the heart of resource theories are, when abstracted, abundant and appear in many different contexts.
		One manifestation of this fact is the wide range of interpretations of free operations as listed above.
		
		A secondary goal of the framework is to allow a more systematic study of relationships between different resource theories.
		Apart from aiding in the translation of methods as alluded to, this could also facilitate the study of 
		\begin{itemize}
			\item trade-offs between different kinds of resources and
			\item consequences of and interactions between physical principles
		\end{itemize}
		among other potential applications.
		
	\section{Universally Combinable Resource Theories}\label{sec:ucrt}
		
		\subsection{The Intuition}\label{sec:ucrt_intuition}
			
			Let us turn to a concrete description of the desired features of ``universally combinable resource theories'', described briefly in the Introduction.
			These provide one example of a class of resource theories that the framework introduced in \cref{sec:rt} can accommodate.
			
			First of all, a resource theory should describe a collection $R$ of resources---the objects of study.
			Secondly, it should describe a way of combining these, which we model by an associative, binary\footnotemark{} operation $\boxtimes$. 
			\footnotetext{The use of a \emph{binary} operation in this context is merely of formal convenience. 
				Conceptually, the notion of composition of $n$ resources that we have in mind is having access to all of them and having the ability to combine them in various ways.}%
			The idea of a resource theory which is \emph{universally combinable} is that there is no restriction on how resources may be used, besides ones arising directly from the nature of the resources.
			For example, given two resources $r$ and $s$, combining them into $r \boxtimes s$ should allow one to use $s$ prior to $r$ just as well as in the opposite arrangement.
			Therefore, $\boxtimes$ is assumed to be commutative.
			Finally, a universally combinable resource theory should incorporate a structure that specifies which conversions between the resources are possible and under what conditions.
			This is achieved by specifying a subset $R_{\rm free}$ of resources deemed to be \emph{free}.
			
			Generically, processes can be wired together in multiple ways.
			To capture this feature, the object $r \boxtimes s$ describing the combination of resources $r$ and $s$ is not another resource, but rather a set of resources, each providing a particular way of combining $r$ \mbox{and $s$.}
			We can thus view $r \boxtimes s$ as an element of $\mathcal{P}(R)$, the power set of $R$, which is a suplattice (see \cref{def:suplattice}) with the supremum operation being the union of sets.
			
			The interpretation of $r \boxtimes s$ relative to the individual resources $r$ and $s$ is that an agent has access to both $r$ and $s$ (and they can be combined in various ways).
			On the other hand, the union of $\{r\}$ and $\{s\}$, i.e.\ the set $\{r,s\} \in \mathcal{P}(\mathcal{R})$, represents an agent having access to either resource $r$ or resource $s$, but not both simultaneously.
			Because of this interpretation, we require that the $\boxtimes$ operation distributes over unions, just like conjunction distributes over disjunction.
			That is, for any two sets of resources $S,T \in \mathcal{P}(R)$, we have
			\begin{equation}
				\label{eq:Combination of sets of resources}
				S \boxtimes T = \bigcup_{s\in S, t \in T} s \boxtimes t,
			\end{equation}
			where $s \boxtimes t$ is a shorthand notation for $\{s\} \boxtimes \{t\}$.
			Equation \eqref{eq:Combination of sets of resources} expresses that agent's choices of a resource from $S$ and of a resource from $T$ are \emph{independent} of each other.
			For example, if both $S$ and $T$ had three elements each, then the union on the right-hand side ranges over nine possible choices of pairs of resources to be combined.
			Conceptually, it is important to distinguish this interpretation from collections of resources that correspond to counterfactual possibilities of what the `actual' resource is.
			The latter is explored in resource theories of knowledge \cite{DelRio2015} where the possibilities are thought to represent an agent's lack of knowledge about the resource; and in resource theories of distinguishability (\cref{sec:Encodings}) where they represent distinct hypotheses about an unknown variable.
			
			Furthermore, we assume that there is a neutral set of resources denoted by $1 \in \mathcal{P}(R)$ that satisfies
			\begin{equation}
				1 \boxtimes S = S = S \boxtimes 1
			\end{equation}
			for all $S \in \mathcal{P}(R)$.
			We may interpret the neutral set as consisting of resources that can only be used for trivial conversions.
			
			Notice that the neutral set differs significantly from the empty set $\emptyset \in \mathcal{P}(R)$, which satisfies
			\begin{equation}
				\emptyset \boxtimes S = \emptyset = S \boxtimes \emptyset
			\end{equation}
			and which is the bottom element of the suplattice $\mathcal{P}(R)$ ordered by set inclusion.
			If we have $r \boxtimes s = \emptyset$, the interpretation is that the resources $r$ and $s$ are mutually incompatible---there is no way to combine them.
			For example, in the context of deterministic computation, if $r$ and $s$ denote mutually exclusive states of a single register, then they cannot 
				coexist and therefore they cannot be combined.
			On the other hand, if $r \boxtimes s \subseteq 1$ holds, we would say that any way to combine $r$ and $s$ produces a resource in the neutral set, thus effectively discarding them.
		
			Altogether, we get a commutative monoid $(\mathcal{P}(R), \boxtimes, 1)$ with a monoid operation $\boxtimes$ that distributes over unions.
			In other words, it is a uniquely atomistic and commutative quantale (see \cref{sec:quantale}).
			
			When defining a resource theory, we also identify a distinguished subset of resources, denoted by $R_{\rm free}$.
			These resources are free in the sense that one can access them in unlimited supply, whence we impose that combining free resources cannot yield a non-free resource.
			That is, we require $R_{\rm free} \boxtimes R_{\rm free} \subseteq R_{\rm free}$.
			Since we also impose that the neutral set of resources is free, i.e.\ $1 \subseteq R_{\rm free}$, we can express this condition equivalently as
			\begin{equation}
				\label{eq:Compatibility of free set and star 2}
				R_{\rm free} \boxtimes R_{\rm free} = R_{\rm free}.
			\end{equation}
			In summary, we identify a concrete universally combinable resource theory as a uniquely atomistic, commutative quantale $\mathcal{P}(R)$ equipped with a reflexive, idempotent element $R_{\rm free}$.
			Note that $\mathcal{P}(R_{\rm free})$ is a subquantale of $\mathcal{P}(R)$.
			The quantale $\mathcal{P}(R)$ of all resources is commonly referred to as the ``enveloping theory'', which may be shared by multiple resource theories identified by distinct choices of the subquantale $\mathcal{P}(R_{\rm free})$ of free resources---the ``free subtheory''.
			
			More generally, a \textbf{universally combinable resource theory} is a commutative quantale $\mathcal{R}$ equipped with a reflexive, idempotent element $R_{\rm free}$.
			Including lattices that are not uniquely atomistic in the definitions allows for a broader class of examples, many of which arise as quotients (see \cref{sec:ucrt_order,sec:ucrt_catalysis,sec:rt_order}).
			By the construction as in \cref{ex:module_from_morphism}, every universally combinable resource theory defines a quantale module $(\mathcal{R}_{\rm free}, \mathcal{R}, \boxtimes)$ where $\mathcal{R}_{\rm free}$ is the interval $\langle \bot, R_{\rm free} \rangle$ within $\mathcal{R}$.
			Of course, if $\mathcal{R}$ is uniquely atomistic and thus a power set lattice, then $\mathcal{R}_{\rm free}$ is isomorphic to $\mathcal{P}(R_{\rm free})$.
			
			One might think that it would be worth considering free resources that correspond to more general subquantales than intervals of the form $\langle \bot, R_{\rm free} \rangle$ for some $R_{\rm free}$.
			However, because of their interpretation, for any $S \in \mathcal{R}_{\rm free}$ we require that all elements of $\mathcal{R}$ below it are also free.
			In the concrete case, this is very clear: 
			If a set of individual resources is free, then every subset thereof should likewise be free.
			
			Our go-to examples of universally combinable resource theories arise from process theories.
			There, $\boxtimes$ can be interpreted as mapping a pair of processes to an object that captures all the ways in which the two could be composed.
			Making this idea more precise is not trivial; we merely sketch the intuition here.
			
			In \cref{ex:quantale_for_UCRT1}, we saw a way to take a process theory and define a monoidal operation that includes both sequential and parallel composition.
			However, this construction does not necessarily capture the idea of an operation that encodes \emph{all} the ways in which $s$ and $t$ could be composed.
			In particular, at every stage of an iterated composition like $r \star (s \star t)$ defined there, the resulting processes are assumed to be of \emph{first-order}.
			That is, they take a certain collection of inputs and subsequently produce corresponding outputs, but leave no possibility to adjust their internal behaviour subsequently.
			In order to allow for future compositions to ``slot in'' an additional process in between two existing ones, we are pushed to think of \emph{higher-order} processes in one way or another.			
			\begin{example}\label{ex:quantale_for_UCRT2}
				The commutative version of the construction in \cref{ex:quantale_for_UCRT1} is probably the easiest to understand in terms of the depiction in \eqref{eq:universal_composition_2}.
				Specifically, rather than placing the processes in order, we can consider all the possible permutations of their placement
				\begin{equation}\label{eq:universal_composition_3}
					\tikzfig{universal_composition_3}
				\end{equation}
				followed by all possible contractions once again.
				This gives a valid notion of universal composition $s \boxtimes t$ of first-order processes $s$ and $t$ in the sense of \cite{Coecke2016} and generalizes to any number of processes.\footnotemark{}
				\footnotetext{Strictly speaking, $s \boxtimes t$ in \cite{Coecke2016} does not denote a collection of processes.
					It merely stands for a multiset of unplaced processes.
					However, their placement ultimately leads to an embedding into a background causal structure.
					Just like in our example, the choice of the embedding is not fixed among the different placements.}%
				However, $\boxtimes$ can only reduce to a non-trivial associative binary operation if we include higher-order processes \cite{wilson2021mathematical}.
				For instance, consider three process $s$, $t$ and $u$ (each with a single input and a single output, say).
				Associativity of the operation $\boxtimes$ would mean that we have
				\begin{equation}
					(s \boxtimes t) \boxtimes u = s \boxtimes (t \boxtimes u)
				\end{equation}
				and commutativity of $\boxtimes$ implies\footnotemark{} that the sequential composite $t \circ s \circ u$ is an element of the left-hand side $(s \boxtimes t) \boxtimes u$.
				\footnotetext{Here we assume, for concreteness, that $\boxtimes$ subsumes sequential composition of processes.}%
				The contradiction emerges by recognizing that there is no first-order process in $t \boxtimes u$ (as given by \eqref{eq:universal_composition_2} for example), which could generate $t \circ s \circ u$ by a composition with the process $s$.
				In other words, here the contractions should be allowed to leave ``holes'', unlike those in \eqref{eq:universal_composition_2}.
				Schematically, we can depict the contractions for one of the permutations in \eqref{eq:universal_composition_3} as 
				\begin{equation}\label{eq:universal_composition_4}
					\tikzfig{universal_composition_4}
				\end{equation}
				where, even though the first and last element of the right-hand side display the same connectivity as diagrams, we distinguish them, and thus have to go beyond the notion of a process theory as introduced in \cref{sec:proc_th}.
				Specifically, we think of the first one as a first-order process and the other as a higher-order process---in this case a so-called \mbox{comb \cite{chiribella2009theoretical}}.
				The light blue boxes in the background serve merely as visual cues here, but they may be interpreted as devices whose inner workings are fixed as far as future manipulation is concerned.
				Composition of higher-order processes of this kind can be then described in terms of $\boxtimes$ for first-order processes.
				Heuristically, we can describe it as
				\begin{equation}\label{eq:universal_composition_5}
					\tikzfig{universal_composition_5}
				\end{equation}
				where the box labelled as $s \boxtimes u$, for instance, does not refer to a process.
				It is a collection of processes (both first-order and higher-order) as specified by $s \boxtimes u$.
			\end{example}
			\Cref{ex:quantale_for_UCRT2} is just one of the possible ways of generating a commutative quantale from a process theory.
			Alternatively, one could include the possibility to form of combs with side-channels and compositions that generate more intricate causal orderings than total orders.
			A useful concept for building a descriptive language of such theories is that of causal channels \cite[section 4.1]{houghton2021mathematical}, appropriately generalized to higher-order processes \cite{kissinger2017categorical,wilson2021causality}.
			
		\subsection{Resource Ordering}\label{sec:ucrt_order}
	
			The allowed (or free) conversions between resources are those that arise via a composition with elements of $R_{\rm free}$.
			Given a concrete universally combinable resource theory, we define the order relation among individual resources $r,s \in R$ by
			\begin{equation}
				\label{eq:Order of resources}
				r \succeq s  \quad \coloniff \quad  R_{\rm free} \boxtimes r \ni s.
			\end{equation}
			The order relation captures whether $r$ can be converted to $s$ by composition with free resources.
			It can be used to determine the value of resources with respect to the choice of the partition of $R$ into free and non-free resources.
			If $r$ can be converted to $s$ for free, denoted $r \succeq s$ as above, then $r$ is no worse than $s$ as a resource in the resource theory.
			
			Similarly, we can define the ordering of sets of resources by
			\begin{equation}
				\label{eq:Order of sets of resources}
				S \succeq T  \quad \coloniff \quad   R_{\rm free} \boxtimes S \supseteq T
			\end{equation}
			for any $S,T \in \mathcal{R}$.
			Indeed, this is nothing but the $R_{\rm free}$-reachability preorder (see \cref{def:reachability}).
			\begin{definition}\label{def:ucrt_order}
				The \textbf{resource ordering} of a universally combinable resource theory $(\mathcal{R}, \boxtimes)$ with free resources $R_{\rm free}$ is the $R_{\rm free}$-reachability preorder. 
			\end{definition}
			Therefore, by \cref{lem:access_is_sensible}, $(\mathcal{R}, \succeq)$ is a preordered-extended suplattice. 
			
			In the framework of ordered commutative monoids \cite{Fritz2017}, there is a compatibility between the order relation and the monoid operation as one of the axioms.
			Specifically, one demands that if $S$ is a better resource than $T$ is, then the conjunction $S \boxtimes U$ of $S$ with a fixed resource $U$ is a better resource than $T \boxtimes U$ is.
			Here, we can derive a corresponding property from the definition of a universally combinable resource theory.
			\begin{lemma}
				\label{thm:Compatibility of star and order}
				Let $\UCRT{R}$ be a universally combinable resource theory with the corresponding order relation $\succeq$ defined as in~\eqref{eq:Order of sets of resources}.
				For any three elements $S,T,U$ of $\mathcal{R}$, we have
				\begin{equation}
					S \succeq T \implies S \boxtimes U \succeq T \boxtimes U.
				\end{equation}
			\end{lemma}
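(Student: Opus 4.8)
The plan is to unwind the definition of the resource ordering and reduce the claim to a single algebraic manipulation, using only associativity and order-preservation of $\boxtimes$. By \cref{def:ucrt_order}, the relation $\succeq$ is the $R_{\rm free}$-reachability preorder, so by \eqref{eq:Order of sets of resources} the hypothesis $S \succeq T$ unpacks to $R_{\rm free} \boxtimes S \supseteq T$. Likewise, the conclusion $S \boxtimes U \succeq T \boxtimes U$ that we wish to establish is, after unfolding the same definition, precisely the inclusion $R_{\rm free} \boxtimes (S \boxtimes U) \supseteq T \boxtimes U$.

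First I would apply $\boxtimes U$ to both sides of the hypothesis. Since $\boxtimes$ preserves the lattice ordering in each argument---this is the consequence of quantale distributivity recorded just after \cref{def:quantale}, namely that $A \supseteq B$ implies $A \boxtimes U \supseteq B \boxtimes U$---we obtain
\begin{equation}
	(R_{\rm free} \boxtimes S) \boxtimes U \supseteq T \boxtimes U.
\end{equation}
Then I would rewrite the left-hand side using associativity of the commutative quantale operation $\boxtimes$, giving $(R_{\rm free} \boxtimes S) \boxtimes U = R_{\rm free} \boxtimes (S \boxtimes U)$, and hence $R_{\rm free} \boxtimes (S \boxtimes U) \supseteq T \boxtimes U$. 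Reading this back through \eqref{eq:Order of sets of resources} yields exactly $S \boxtimes U \succeq T \boxtimes U$.

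There is essentially no obstacle here; the argument is pure bookkeeping with the defining properties of a commutative quantale. The only point worth flagging is that commutativity of $\boxtimes$ is not actually needed for this particular statement, since the free element $R_{\rm free}$ already sits on the left in both the hypothesis and the goal, so associativity alone suffices to move the factor $U$ into place. (Commutativity would be required only if one wished to prove the symmetric variant with $U$ multiplied on the other side.)
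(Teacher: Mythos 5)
Your proof is correct and follows essentially the same route as the paper's: unpack $S \succeq T$ to $R_{\rm free} \boxtimes S \supseteq T$, apply $\boxtimes\, U$ using order-preservation of the quantale operation, and conclude via associativity. Your added observation that commutativity of $\boxtimes$ is not needed here is accurate, but it does not change the argument, which matches the paper's proof step for step.
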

			\begin{proof}
				By the definition of $\succeq$, we have $S \succeq T \iff R_{\rm free} \boxtimes S \supseteq T $, which implies 
				\begin{equation}
					(R_{\rm free} \boxtimes S) \boxtimes U \supseteq T \boxtimes U.
				\end{equation}
				Via the associativity of $\boxtimes$, we can then conclude that $S \boxtimes U \succeq T \boxtimes U$ must hold whenever $S$ is above $T$ according to the order relation $\succeq$.
			\end{proof}
		
			Given the preorder $\ordres$, two resources $r$ and $s$ are said to be equivalent if they can be converted one to another for free.
			That is, we write $r \sim s$ if both $r \succeq s$ and $s \succeq r$ hold. 
			From the point of view of questions regarding resource convertibility, there is no need to distinguish equivalent resources unless distinguishing them provides a more convenient representation.
			\begin{lemma}\label{lem:ucrt_cong}
				The equivalence relation $\sim$ associated to the resource ordering of a universally combinable resource theory $\UCRT{R}$ is a quantale congruence.
			\end{lemma}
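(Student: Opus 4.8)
The plan is to check the three defining implications of a quantale congruence listed in \cref{def:congruence} one at a time, drawing on the characterization of $\succeq$ in \eqref{eq:Order of sets of resources} together with two facts already established: the multiplicative compatibility of \cref{thm:Compatibility of star and order} and the distributivity of $\boxtimes$ over suprema from \eqref{eq:quantale_distributivity}. Throughout I would exploit that $\sim$ is symmetric, so each implication need only be verified in one direction and the matching reverse direction comes for free.

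For the two multiplicative conditions I would first note that commutativity of $\boxtimes$ collapses them into one, so it suffices to treat $S_1 \sim S_2 \implies S_1 \boxtimes U \sim S_2 \boxtimes U$. Unfolding $\sim$ gives $S_1 \succeq S_2$ and $S_2 \succeq S_1$; applying \cref{thm:Compatibility of star and order} to each with the fixed $U$ yields $S_1 \boxtimes U \succeq S_2 \boxtimes U$ and $S_2 \boxtimes U \succeq S_1 \boxtimes U$, which is exactly $S_1 \boxtimes U \sim S_2 \boxtimes U$.

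For the supremum condition, assume $S_i \sim T_i$ for all $i$ in the index set and check $\bigvee_i S_i \sim \bigvee_i T_i$ in each direction separately. For $\bigvee_i S_i \succeq \bigvee_i T_i$, I would unfold \eqref{eq:Order of sets of resources} and use distributivity:
\begin{equation}
	R_{\rm free} \boxtimes \Bigl( \bigvee_i S_i \Bigr) = \bigvee_i \bigl( R_{\rm free} \boxtimes S_i \bigr) \supseteq \bigvee_i T_i,
\end{equation}
the inclusion holding because each $R_{\rm free} \boxtimes S_i \supseteq T_i$ (this is $S_i \succeq T_i$) and suprema are monotone for $\supseteq$. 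The reverse direction repeats the argument starting from $T_i \succeq S_i$, giving $\bigvee_i S_i \sim \bigvee_i T_i$.

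The argument is routine once \cref{thm:Compatibility of star and order} and \cref{lem:access_is_sensible} are in place, so I do not anticipate a real obstacle; the only subtlety is keeping straight that $\bigvee$ denotes the supremum for the lattice order $\supseteq$ and is therefore monotone, so that componentwise inclusions pass to the join. As an alternative to invoking distributivity directly, the supremum condition can also be read off from the preorder-extended structure of \cref{lem:access_is_sensible}: since $\bigvee_j S_j \supseteq S_i$ gives $\bigvee_j S_j \succeq S_i \succeq T_i$ by condition \ref{it:elim_ord} and transitivity, condition \ref{it:sup_ord} then forces $\bigvee_j S_j \succeq \bigvee_i T_i$.
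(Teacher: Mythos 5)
Your proof is correct, but it follows a genuinely different route from the paper's. You verify the three defining implications of \cref{def:congruence} head-on: the two multiplicative conditions collapse to one by commutativity and follow from \cref{thm:Compatibility of star and order} applied to both directions of $\sim$, while the supremum condition follows from the distributivity law \eqref{eq:quantale_distributivity} together with monotonicity of suprema (or, in your alternative, from conditions \ref{it:elim_ord} and \ref{it:sup_ord} of the preorder-extended structure guaranteed by \cref{lem:access_is_sensible}). The paper instead identifies $\sim$ with the $R_{\rm free}$-augmentation relation of \cref{def:augment}: idempotence of $R_{\rm free}$ gives that $S \succeq T$ implies $R_{\rm free} \boxtimes S \supseteq R_{\rm free} \boxtimes T$, so $S \sim T$ implies $R_{\rm free} \boxtimes S = R_{\rm free} \boxtimes T$, and reflexivity $R_{\rm free} \supseteq 1$ gives the converse; the congruence property is then inherited from the observation surrounding \eqref{eq:augment_cong} that augmentation relations in commutative quantales are congruences. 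The paper's route has the advantage of producing, along the way, the explicit characterization $S \sim T \iff R_{\rm free} \boxtimes S = R_{\rm free} \boxtimes T$, which is precisely what the subsequent description of the quotient quantale and the identification of its lattice ordering with the resource ordering rely on. Your route has the advantage of being self-contained: the claim that \eqref{eq:augment_cong} is a congruence is only asserted in the paper (``one can easily check''), whereas your argument performs the full verification from facts already proven, and it makes visible that nothing beyond compatibility with $\boxtimes$ and distributivity over suprema is needed.
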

			\begin{proof}
				By idempotence of $R_{\rm free}$, we have that $S \freeconv T$ implies
				\begin{equation}
					R_{\rm free} \boxtimes S = R_{\rm free} \boxtimes R_{\rm free} \boxtimes S \supseteq R_{\rm free} \boxtimes T
				\end{equation}
				so that $S \sim T$ implies 
				\begin{equation}
					R_{\rm free} \boxtimes S = R_{\rm free} \boxtimes T.
				\end{equation}
				Since $R_{\rm free} \supseteq 1$ holds by definition, we obtain the converse.
				In other words, $\sim$ coincides with the relation obtained from $R_{\rm free}$-augmentation (\cref{def:augment}) and it is thus a congruence.
			\end{proof}
			Therefore, we get the quotient quantale $\linefaktor{\mathcal{R}}{{\sim}}$ whose elements are equivalence classes of resources in $\mathcal{R}$.
			As described in equations \eqref{eq:quantale_quotient} and \eqref{eq:quantale_quotient2}, the quantale composition (denoted by $+$) and lattice ordering (denoted by $\geq$) are given by
			\begin{equation}
				\begin{gathered}
					[S] + [T] = [S \boxtimes T], \\
					[S] \geq [T] \quad \iff \quad R_{\rm free} \boxtimes S \supseteq R_{\rm free} \boxtimes T.
				\end{gathered}
			\end{equation}
			In particular, the lattice ordering in $\linefaktor{\mathcal{R}}{{\sim}}$ is equivalent to the resource ordering in $\mathcal{R}$! 
			That is, we have $[S] \geq [T] \iff S \freeconv T$.
			Since $[R_{\rm free}]$ is the unit of $\linefaktor{\mathcal{R}}{{\sim}}$, we can view $\geq$ equally well as a resource ordering in the universally combinable resource theory specified by $\linefaktor{\mathcal{R}}{{\sim}}$.
			
			Altogether, we obtain an ordered commutative monoid $(\linefaktor{\mathcal{R}}{{\sim}}, +, \geq)$ with appropriate resource-theoretic interpretation that can be studied with methods introduced in \cite{Fritz2017} for instance.
			Since it is also a complete lattice, instead of thinking of it as a generic ordered commutative monoid, it is perhaps best construed as a completion of one.
			
			\begin{example}\label{ex:ucrt_from_ocm}
				Conversely, to any ordered commutative monoid $(A, +, \geq)$, one can associate a commutative quantale $(\mathcal{DC}(A), \bigcup, \boxtimes)$ of downsets of $A$, which arises as a quotient of the free quantale associated to $A$.
				It is ordered by set inclusion with composition given by
				\begin{equation}
					S \boxtimes T \coloneqq  \down (S + T)
				\end{equation}
				for any $S,T \in \mathcal{DC}(A)$.
				More details about this construction and its context can be found in \cite{Tsinakis2004}, whose theorem 5.1 is particularly relevant.
				One basic property that may be worth noting is that the associated lattice of downsets is not atomistic in general.
			\end{example}
			
		\subsection{Free Subtheories}\label{sec:ucrt_free}
			
			Given a commutative quantale $\mathcal{R}$, many resource theories can arise via different choices of the free resources.
			For instance, in quantum theory, the distinction between a resource theory of athermality and that of bipartite entanglement is entirely due to the choice of the free subtheory.
			Therefore, understanding the relationships between different choices of free resources (and their impact on the resource ordering) is a crucial step towards understanding the interactions of different kinds of resources.\footnotemark{}
			\footnotetext{Here, ``kind of resources'' refers not to elements of $\mathcal{R}$ but rather to the attributes of these that make them useful in a particular resource theory; e.g.\ asymmetry, athermality, nonuniformity, etc.}%
			
			Let $\mathcal{F}(\mathcal{R})$ be the set of all reflexive and idempotent elements of $\mathcal{R}$.
			That is,
			\begin{equation}
				\label{eq:Free sets definition}
				\mathcal{F}(\mathcal{R}) \coloneqq \Set*[\big]{ B \in \mathcal{R}  \given  1 \subseteq B \text{ and } B \boxtimes B = B }.
			\end{equation}
			These are in bijective correspondence with the universally combinable resource theories whose enveloping theory is $\mathcal{R}$.
			We now show that $\mathcal{F}(\mathcal{R})$ is a monoid and a complete lattice and phrase this result for arbitrary (not necessarily commutative) quantales.
			\begin{proposition}
			\label{thm:Free sets properties}
				Given a quantale $(\mathcal{Q}, \bigvee, \star)$ with top element $\top$, $\mathcal{F}(\mathcal{Q})$ has the following properties:
				\begin{enumerate}
					\item \label{item:Free sets 1}
						Both $1$ and $\top$ are elements of $\mathcal{F}(\mathcal{Q})$.
					
					\item \label{it:infimum} $\mathcal{F}(\mathcal{Q})$ is closed under arbitrary infima (in $\mathcal{Q}$).
					
					\item \label{it:monoid} If $\mathcal{Q}$ is commutative, then $\mathcal{F}(\mathcal{Q})$ is closed under $\star$.
				\end{enumerate}
			\end{proposition}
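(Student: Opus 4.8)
The plan is to verify the two defining conditions of membership in $\mathcal{F}(\mathcal{Q})$---reflexivity $1 \subseteq B$ and idempotence $B \star B = B$---separately in each of the three cases, using throughout only the observation recorded after \cref{def:quantale} that $\star$ preserves the lattice order in each argument. Recall also that every suplattice is a complete lattice, so the infima appearing in part \ref{it:infimum} exist.

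For part \ref{item:Free sets 1}, I would dispatch both elements directly. The unit satisfies $1 \subseteq 1$ trivially and $1 \star 1 = 1$ because it is the monoid unit. For the top element, $1 \subseteq \top$ holds by maximality; for idempotence, $\top \star \top \subseteq \top$ again by maximality, while monotonicity of $\star$ together with $\top \supseteq 1$ gives $\top \star \top \supseteq 1 \star \top = \top$, so $\top \star \top = \top$.

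For part \ref{it:infimum}, set $B := \bigcap_i B_i$ for a family in $\mathcal{F}(\mathcal{Q})$. Reflexivity is immediate since $1$ is a lower bound of the family and $B$ is the greatest such. Reflexivity of $B$ then yields $B \star B \supseteq B \star 1 = B$, and the reverse inclusion is the only step with real content: from $B \subseteq B_i$ and monotonicity we get $B \star B \subseteq B_i \star B_i = B_i$ using idempotence of $B_i$, so $B \star B$ is a lower bound of the $B_i$ and hence $B \star B \subseteq B$. This argument uses no commutativity, matching the statement that it holds for arbitrary quantales.

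For part \ref{it:monoid}, take $B,C \in \mathcal{F}(\mathcal{Q})$; reflexivity of $B \star C$ follows from $B \star C \supseteq 1 \star 1 = 1$. The one genuine obstacle is idempotence, and this is exactly where commutativity must be used: I would compute $(B \star C) \star (B \star C) = B \star (C \star B) \star C = B \star (B \star C) \star C = (B \star B) \star (C \star C) = B \star C$, reordering the inner factors $C \star B$ to $B \star C$ and then invoking idempotence of $B$ and of $C$. Without commutativity this rearrangement fails and $B \star C$ need not be idempotent, which is precisely why the commutativity hypothesis is attached to part \ref{it:monoid} alone.
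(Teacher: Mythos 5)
Your proof is correct. The only part to which the paper devotes any argument is part (ii), and there you take a genuinely different, more elementary route. The paper represents the infimum as the supremum of the set of lower bounds, $\bigwedge \mathcal{B} = \bigvee \mathcal{L}(\mathcal{B})$, expands $\bigl(\bigvee \mathcal{L}(\mathcal{B})\bigr) \star \bigl(\bigvee \mathcal{L}(\mathcal{B})\bigr)$ via the quantale distributive law, and uses the fact that products of lower bounds are again lower bounds. You instead apply the key inequality $L \subseteq B_i \implies L \star L \subseteq B_i \star B_i = B_i$ directly to $L = \bigcap_i B_i$ itself and conclude by the greatest-lower-bound property. This is shorter and uses strictly weaker hypotheses: only monotonicity of $\star$ in each argument, never the distributivity axiom, so your argument for (ii) would go through in any complete lattice carrying a monotone monoid operation, not just a quantale; the price is that you invoke the universal property of the infimum rather than staying inside the suplattice vocabulary of suprema, which is presumably why the paper phrases it the way it does. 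Two further small points in your favour: you make explicit the reverse inclusion $B \star B \supseteq B \star 1 = B$ needed to upgrade transitivity to idempotence, which the paper leaves implicit (it is covered by the remark following \cref{def:idmptnt} that reflexive plus transitive implies idempotent), and you spell out parts (i) and (iii), which the paper declares clear from the definition; your verification there (maximality of $\top$, and the commutative rearrangement $(B \star C) \star (B \star C) = (B \star B) \star (C \star C)$, which indeed is exactly where commutativity is needed) is the intended one.
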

			\begin{proof}
				Properties \ref{item:Free sets 1} and \ref{it:monoid} are clear from the definition of $\mathcal{F}(\mathcal{Q})$.
				
				We can show that $\mathcal{F}(\mathcal{Q})$ is closed under infima as follows.
				First of all note that in a suplattice, such as $(\mathcal{Q}, \bigvee)$, we can express infimum of any set $\mathcal{B} \subseteq \mathcal{Q}$ as the supremum of $\mathcal{L}(\mathcal{B})$---the set of all its lower bounds defined as
				\begin{equation}
					\mathcal{L}(\mathcal{B}) = \Set*[\big]{ L \given \forall \, B \in \mathcal{B} : L \subseteq B }.
				\end{equation}
				For any $B \in \mathcal{F}(\mathcal{Q})$ and any $L$ such that $L \subseteq B$ holds, we also have
				\begin{equation}
					L \star L \subseteq B \star B = B,
				\end{equation}
				Thus, if $\mathcal{B}$ is a subset of $\mathcal{F}(\mathcal{Q})$, then for any lower bound $L$ of $\mathcal{B}$, we have $L \star L \in \mathcal{L}(\mathcal{B})$ and also
				\begin{equation}
					\begin{split}
						\left( \bigwedge \mathcal{B} \right) \star \left( \bigwedge \mathcal{B} \right) &=  \left( \bigvee \mathcal{L}(\mathcal{B}) \right)  \star \left( \bigvee \mathcal{L}(\mathcal{B}) \right) \\
							&=  \bigvee_{L \in \mathcal{L}(\mathcal{B})} L \star L  \;\subseteq\;  \bigvee_{L' \in \mathcal{L}(\mathcal{B}) } L' \\
							&= \bigwedge \mathcal{B} .
					\end{split}
				\end{equation}
				Furthermore, if we have $B \supseteq 1$ for all $B \in \mathcal{B}$, then $1$ is a lower bound of $\mathcal{B}$ so that 
				\begin{equation}
					\bigwedge \mathcal{B} \supseteq 1
				\end{equation}
				holds and $\bigwedge \mathcal{B}$ is an element of $\mathcal{F}(\mathcal{Q})$.
				Thus $\mathcal{F}(\mathcal{Q})$ is closed under $\bigwedge$.
			\end{proof}
			Let us try to interpret the concrete and commutative case. 
			The above result shows that $\mathcal{F}(\mathcal{R})$ embeds into $\mathcal{R}$ both as a monoid under $\boxtimes$ and as an inflattice (the dual concept to a suplattice) under intersections.
			Consequently, it also has arbitrary suprema.
			However, these need not, and usually do not, coincide with unions in $\mathcal{R}$!
			Instead, given $B, B' \in \mathcal{F}(\mathcal{R})$, the supremum of $\{B,B'\}$ in $\mathcal{F}(\mathcal{R})$ represents the smallest set of resources closed under composition that subsumes both $B$ and $B'$.
			In other words, it is the infimum of all upper bounds of $\{B,B'\}$ in $\mathcal{F}(\mathcal{R})$.
			Thus, we would not expect $\mathcal{F}(\mathcal{R})$ to be a subquantale of $\mathcal{R}$ and in fact it need not be a quantale at all.

		\subsection{Catalysis}\label{sec:ucrt_catalysis}
		
			Besides questions about convertibility by free operations, another class of basic questions that can be posed in the abstract regards the use of catalysts.
			In this case, a non-free resource $c$ may be used to aid in the conversion, but we require that it is returned unchanged after the conversion is done.
			Thus, it can also be reused in future conversions.
			In particular, in a concrete universally combinable resource theory $\UCRT{R}$, we can say that it is possible to convert $S$ to $T$ with the help of a catalyst $c \in R$ if
			\begin{equation}\label{eq:catalytic_order}
				c \boxtimes S \freeconv c \boxtimes T
			\end{equation}
			or equivalently 
			\begin{equation}
				c \boxtimes ( R_{\rm free} \boxtimes S )  \supseteq  c \boxtimes T
			\end{equation}
			holds.
			We denote the possibility of a $c$-catalytic conversion by $S \freeconv_c T$.
			This notion is of interest because in many resource theories (such as those of entanglement \cite{turgut2007catalytic} and athermality \cite{Brandao2015}), there are conversions that are possible with a catalyst, but impossible otherwise.
			Moreover, the resource ordering tends to simplify greatly if we allow for an \emph{arbitrary} catalyst $c$ to aid in the conversion, as explored in the abstract setting in \cite{Fritz2017}.
			
			However, note that in our set-up the use of an arbitrary catalyst \emph{cannot} be expressed by replacing $c$ in \eqref{eq:catalytic_order} with the set of all resources $R$.
			More concretely, if we want to express the possibility to convert $S$ to $T$ with the help of \emph{either} $c$ or $c'$, both individual resources, we cannot write this as 
			\begin{equation}\label{eq:catalyst_set}
				(c \cup c') \boxtimes S \freeconv (c \cup c') \boxtimes T.
			\end{equation}
			This is because the condition in \eqref{eq:catalyst_set} is true whenever both
			\begin{align}
				c \boxtimes S &\freeconv c' \boxtimes T  & &\textit{and} &  c' \boxtimes S &\freeconv c \boxtimes T
			\end{align}
			hold, for example.
			However, order relation \eqref{eq:catalyst_set} does not hold in case $S \freeconv_c T$ holds but neither $S \freeconv_{c'} T$ nor $c \boxtimes S \freeconv c' \boxtimes T$ do.
			Therefore, catalytic convertibility as expressed in \eqref{eq:catalytic_order} only attains the correct interpretation if $c$ is an individual resource, or else if $c$ can be written as a universal combination of individual resources.
			In the latter case, the catalyst allows the joint use of $c$ and $c'$ together as opposed to a choice of one of them.
			Instead of \eqref{eq:catalyst_set}, the preorder that represents the free conversion of $S$ to $T$ with the help of \emph{either} $c$ or $c'$ is the union of the two relations $\freeconv_c$ and $\freeconv_{c'}$.
			An extensive study of catalytic convertibility and its consequences can be found in \cite[section 4]{Fritz2017}.
			Specifically, the case considered there is the one in which arbitrary catalyst is allowed.
			
			\begin{remark}
				Recall that we can define the quotient quantale $\linefaktor{\mathcal{R}}{{\sim_C}}$ where $\sim_C$ denotes the congruence induced by $C$-augmentation as introduced via \eqref{eq:augment_cong}.
				Specifically, notice that the lattice ordering $\supseteq_C$ of $\linefaktor{\mathcal{R}}{{\sim_C}}$ is given by
				\begin{equation}
					[S]_C \supseteq_C [T]_C  \quad \iff \quad  C \boxtimes S \supseteq C \boxtimes T
				\end{equation}
				for arbitrary elements $[S]_C, [T]_C$ of the quotient.
				Therefore, the resource ordering $\freeconv_C$ in the universally combinable resource theory specified by the quotient $\linefaktor{\mathcal{R}}{{\sim_C}}$ with free resources $[R_{\rm free}]_C$ satisfies
				\begin{equation}\label{eq:quot_catal}
					[S]_C \freeconv_C [T]_C  \quad \iff \quad  S \freeconv_C T
				\end{equation}
				because it coincides with $\supseteq_C$.
				The relation $\freeconv_C$ on the right hand side of \eqref{eq:quot_catal} denotes the $C$-catalytic ordering in the original resource theory as specified by \eqref{eq:catalytic_order}.
			\end{remark}
	
	\section{General Resource Theories}\label{sec:rt}
	
		\subsection{The Intuition}\label{sec:rt_intuition}
			
			Many concrete quantum resource theories have been studied in recent years~\cite{Chitambar2018}. 
			For each of these, one may identify a choice of free quantum processes and a corresponding universally combinable resource theory that describes it.
			However, studies of specific resource theories rarely take universal combination as the relevant notion of composition.
			Instead, the general approach is to restrict resources to be of a specific type.
			In process theories, these would most commonly be either states or channels.
			Processes that model the conversions between resources are then typically of a different type.
			If the resources of interest are restricted to states, it suffices to consider conversions by channels; while if the resources are channels, we may choose the conversions between them to be combs \cite{chiribella2009theoretical}.
			
			Therefore, to establish a connection with the framework of universally combinable resource theories, one would generally have to study a much larger theory that is universally combinable or restrict attention to a subtheory, which is not a universally combinable resource theory itself.
			Combined with the issues that arise when constructing a well-characterized and well-defined universally combinable resource theory, this suggests that universally combinable resource theories do not present a general-purpose framework for describing resource theories.
			In this section, we introduce a more flexible framework. 
			
			The first departure from the presentation of universally combinable resource theories is that, besides the set of resources $X$, we consider a separate set of transformations $T$.
			This mimics the conceptual distinction that occurs in practice---between the entities that constitute resources (which we are interested in comparing and quantifying) and the entities that mediate their conversions.
			Despite such differentiation, it is often the case that resources can themselves be used for conversions.
			For instance, if the resources are quantum states, then any state $y \in X$ may be used to facilitate the conversion $x \mapsto x \otimes y$ and thus it also constitutes a \emph{transformation} of resources.
			To accommodate this, one could identify the set of resources $X$ with a subset of the set of transformations $T$, but we do not make such an assumption explicitly.
			
			Just as in a universally combinable resource theory, we presume that transformations can be composed to form new ones, possibly in multiple ways specified by a quantale $\mathcal{T}$ (see \cref{def:quantale}).
			The simplest non-trivial example would be that of a power set lattice $\mathcal{P}(T)$ equipped with a monoid operation among its atoms (see \cref{ex:free_quantale}).
			However, unlike for universal compositions described in \cref{sec:ucrt}, we do not wish to restrict our attention to commutative quantales.
			This is to allow an interpretation of the composition $S \star U$ of two sets of transformations $S,U \in \mathcal{P}(T)$ in a sequential fashion, so that application of a transformation from the set $U$ can be thought to precede the application of a transformation from $S$ (e.g., see \cref{ex:quantale_for_UCRT1}).
			
			Regarding the set of resources, one might want to stipulate a composition $\otimes$ among them that would describe conjunction in the sense that $x \otimes y$ represents an agent's access to both $x$ and $y$ for the purposes of future manipulation.
			In this work, we choose to deny resources this structure in order to focus the investigation on conclusions that can be made in its absence.
			Since this is undoubtedly an important piece of the puzzle in many resource theories, it should be incorporated in some of the future studies.
			Here, we merely assume one can form ``disjunctions'' $x \cup y$, whose meaning is that an agent can choose to access one of either the resource $x$ or the resource $y$ for manipulation by transformations.
			As such, we take the resources to be represented by a suplattice (see \cref{def:suplattice}).
			
			The final piece of structure that we need is one that tells us how to convert resources.
			Whenever both lattices are uniquely atomistic and thus power sets, we expect that for each transformation $t \in T$ there is a corresponding assignment $t \apply \ph \colon X \to \mathcal{P}(X)$ that maps $x$ to the set of resources one may obtain by applying $t$ to $x$.
			We can think of $\apply$ as an action $\mathcal{P}(T) \times \mathcal{P}(X) \to \mathcal{P}(X)$ which should satisfy that 
			\begin{enumerate}
				\item \label{it:rt_seq} applying two transformations in succession leads to a composition\footnotemark{} of the corresponding maps of resources,
				\footnotetext{One can think of it as a composition of relations $X \to X$ or as a composition in the Kleisli category of the powerset monad.
				In the context of our set-up, we view it as a composition of suplattice homomorphisms $\mathcal{P}(X) \to \mathcal{P}(X)$.}%
				\item being able to choose whether to apply $t$ to $x$ or $y$ leads to a union of potential outcomes $t \apply x$ and $t \apply y$, 
				\item being able to choose whether to apply $t$ or $s$ to $x$ leads to a union of potential outcomes $t \apply x$ and $s \apply x$, and
				\item applying the identity transformation does not change the resource.
			\end{enumerate}
			These conditions correspond to equations \eqref{eq:qmodule}, i.e.\ they ensure we have a quantale module $(\mathcal{P}(T), \mathcal{P}(X), \apply)$ where $\mathcal{P}(T)$ acts on $\mathcal{P}(X)$ via $\apply$.
			
			Generalizing to the case of lattices that need not be uniquely atomistic, we get a quantale module $(\mathcal{T}, \mathcal{X}, \apply)$.
			It specifies the resources, their transformations and the way in which differential access to them influences their compositionality.
			For each quantale module, however, there can be many resource theories.
			To specify a resource theory, we also need to identify the free transformations within $\mathcal{T}$.
			For example, both a (quantum) resource theory of athermality (\cref{ex:rt_athermality}) and a resource theory of entanglement (\cref{ex:rt_entanglement}) arise from the same quantale module in which resources are quantum states and transformations are quantum channels.
			They differ by a choice of free transformations---i.e.\ thermal operations or LOCC channels.
			We thus get the following definition of a resource theory in terms of quantale modules.
			This is the most abstract notion of a resource theory that we consider in this thesis.
			\begin{definition}\label{def:rt}
				A \textbf{resource theory} is a quantale module $(\mathcal{T}, \mathcal{X}, \apply)$ equipped with a reflexive and idempotent element $T_{\rm free}$ of $\mathcal{T}$. 
				Whenever both $\mathcal{T}$ and $\mathcal{X}$ are uniquely atomistic lattices, we speak of a \textbf{concrete resource theory} and $T_{\rm free}$ is the \textbf{set of free transformations}.
			\end{definition}
			Note that since the free element $T_{\rm free}$ is assumed to be idempotent (see \cref{def:idmptnt}) and reflexive (i.e.\ it subsumes the unit $1 \in \mathcal{T}$), the downward closure thereof with respect to the lattice ordering $\supseteq$ is a subquantale of $\mathcal{T}$.
			The notation we use for this subquantale is $\mathcal{T}_{\rm free}$ and we refer to it as the \textbf{subquantale of free transformations}.
			However, it is not a generic subquantale.
			An alternative, albeit equivalent, way to express \cref{def:rt} is as follows, taking into account that $\mathcal{T}_{\rm free}$ should be a downset with respect to $\supseteq$:
			\begin{quote}
				A resource theory (according to \cref{def:rt}) is a quantale module $(\mathcal{T}, \mathcal{X}, \apply)$ equipped with a \emph{downward closed subquantale} $\mathcal{T}_{\rm free}$ of $\mathcal{T}$.
			\end{quote}
			Since a subquantale has a top element (the supremum of all its elements), a downward closed subquantale necessarily coincides with the downward closure of its top element.
			In this context, we can denote its top element by $T_{\rm free}$, which clarifies the sense in which the two definitions of a resource theory are equivalent.
			In a concrete resource theory (as given by \cref{def:rt}), the downward closed subquantale $\mathcal{T}_{\rm free}$ consists of all subsets of $T_{\rm free}$---the set of all free transformations.
			That is, the subquantale of free transformations $\mathcal{T}_{\rm free}$ is given by the power set $\mathcal{P}(T_{\rm free})$.
			
			For many questions in resource theories, it suffices to specify the quantale module $(\mathcal{T}_{\rm free}, \mathcal{X}, \apply)$. 
			The reason is that the answers to these questions are independent of the quantale $\mathcal{T}$ in which the subquantale of free transformations is embedded.
			Even more abstractly, one may forget about the origin of the convertibility altogether and study the emerging order relation among resources on its own merits \cite{Fritz2017}.
			Our decision to include the enveloping quantale $\mathcal{T}$ in the description of a resource theory, as the authors of \cite{Coecke2016} do, is motivated on the following grounds:
			\begin{enumerate}
				\item It is more closely connected to the practice of the study of resource theories.
				\item Understanding the structure of non-free transformations and their interactions with free transformations can be used to obtain resource measures (\cref{ex:Advantage of generalized yield}).
				\item Specification of $\mathcal{T}$ is necessary to define certain classes of interesting resource theories, such as resource theories of asymmetry (\cref{sec:rt_asymmetry}).
				\item Knowing the enveloping theory is crucial also when we want to compare alternative choices of free transformations and when we want to study interactions among different resource theories.
				\item It allows us to study protocols for generating non-free transformations by combining free transformations with given resources algebraically at the level of the quantale of transformations. 
				Additionally, the resulting non-free transformation may be used for other kinds of resource conversions than the one we were interested in achieving in the first place, thus allowing us to repurpose protocols in a new context. 
			\end{enumerate}

		\subsection{Examples}\label{sec:rt_examples}
			
			Our presentation in \cref{sec:ucrt} makes it clear that every universally combinable resource theory is also a resource theory in the sense of \cref{def:rt}, i.e.\ a quantale module with a distinguished subquantale.
			Since every ordered commutative monoid defines a corresponding universally combinable resource theory as we describe at the end of \cref{sec:ucrt_order}, we can thus think of it as a resource theory too.
			However, there are two other, quite different, ways to do so.
			\begin{example}\label{ex:ocm_as_rt}
				Given an ordered commutative monoid $(A, +, \geq)$, let $\mathcal{X}$ be the power set of $A$ and let $\mathcal{T}$ be the quantale of relations of type $A \to A$.
				Their suprema are given by unions when thought of as subsets of $A \times A$ and their composition is just the standard composition of relations.
				The unit relation is of course given by the identity function.
				Every relation $f \subseteq A \times A$ gives rise to a suplattice homomorphism $f_* \colon \mathcal{P}(A) \to \mathcal{P}(A)$ via
				\begin{equation}
					f_* (Y) = \Set*[\big]{ z \in A \given  (y,z) \in f \text{ for some } y \in Y }.
				\end{equation}
				The assignment $f \mapsto f_*$ preserves composition and thus we get a quantale module $(\mathcal{T}, \mathcal{X}, \apply)$ with
				\begin{equation}
					f \apply Y \coloneqq f_* (Y).
				\end{equation}
				The (partial order) relation $\geq$ subsumes the unit relation by reflexivity and it is idempotent by transitivity.
				In the associated resource theory with free transformations given by $\geq$, the resource ordering that we introduce in the next section coincides with $\geq$ on singletons and can be described as the enhancement preorder (\cref{def:enh_ord}) on arbitrary sets.
				Even though it sheds light on some of the definitions, this is not a useful representation of the ordered commutative monoid per se since it eliminates the compositional structure of resources given by the monoid \mbox{operation $+$}. 
				That is, in a quantale module constructed from an ordered commutative monoid in this way, there is no explicit way to model the `conjunction' $a+b$ of individual resources $a$ and $b$.
				In order to do so, we would need to supplement the framework introduced in \cref{sec:rt_intuition} with an additional operation on the resources capturing their conjunctions.
				While this is certainly an important piece of structure in many circumstances, it goes beyond the considerations of this thesis and its analysis is left for future investigations.
				Besides catalysis, it would also allow one to describe many other concepts of interest: catalysis, asymptotic (many-copy) convertibility, universality, and others.
				As such, it is one of the most important extensions of the framework as presented here.
			\end{example}
			\begin{example}\label{ex:ocm_as_rt2}
				Alternatively, we may include the information about possible transformations in the action $\apply$ rather than in the identity of transformations.
				That is, let the quantale of transformations be $\mathcal{B} = \{0,1\}$, interpreted as Boolean truth values with supremum $\cup$ given by the disjunction $\lor$ (logical OR operation) and quantale multiplication $\star$ given by the conjunction $\land$ (logical AND operation).
				Then an ordered commutative monoid $(A, +, \geq)$ can be associated with a quantale module $(\mathcal{B}, \mathcal{P}(A), \apply)$ with 
				\begin{equation}
					\begin{split}
						0 \apply Y &= \emptyset \\
						1 \apply Y &= \down (Y)
					\end{split}
				\end{equation}
				for any subset $Y$ of $A$.
				Here, $\down$ is the downward closure operator (\cref{def:order_closure}) with respect to $\geq$.
			\end{example}
			
			Next, we provide several ways in which resource theories constructed from process theories \cite{Coecke2016} fit into our framework.
			The scope of these is general enough to cover most, if not all, resource theories studied in quantum information theory at present.
			\begin{example}\label{ex:rt_states}
				In a resource theory of states associated to a partitioned process theory, the lattice of resources $\mathcal{X}$ is the power set of all states, i.e.\ processes with a trivial input.
				The underlying lattice of the quantale of transformations is likewise a power set lattice.
				Its atoms are processes with arbitrary input and output.
				In the most elementary version of this construction, composition of transformations in the quantale is merely given by their sequential composition as in \cref{ex:free_quantale_cat}.
				Similarly, they act on states by sequential composition, as depicted below (assuming the relevant wire types match).
				\begin{equation}\label{eq:rt_states_1}
					\tikzfig{rt_states_1}
				\end{equation}
			\end{example}
			\begin{example}\label{ex:rt_states_2}
				Alternatively, we can take the quantale $\mathcal{T}$ of resource transformations to be the one introduced in \cref{ex:quantale_for_UCRT1}.
				In this case, both parallel and sequential compositions are incorporated in the description of $\star$, e.g., the left equation of \eqref{eq:rt_states_1} becomes
				\begin{equation}\label{eq:rt_states_5}
					\tikzfig{rt_states_5}
				\end{equation}
				Note that due to our choice of composition operation $\star$, i.e.\ one that includes all wirings irrespective of the positioning of wires on the page, we need not distinguish the diagram
				\begin{equation}\label{eq:rt_states_6}
					\tikzfig{rt_states_6}
				\end{equation}
				A more complicated example of composition of transformations is
				\begin{equation}\label{eq:rt_states_3}
					\tikzfig{rt_states_3}
				\end{equation}
				where greek letters denote types of wires which serve to restrict the valid contractions.
				In this construction, transformations act on resources by arbitrary wirings that yield a state (as opposed to a process with non-trivial inputs), e.g.,
				\begin{equation}\label{eq:rt_states_2}
					\tikzfig{rt_states_2}
				\end{equation}
				That is, all inputs of a channel must be contracted with some output of the state it is acting on, but there may be some outputs of the state which are left uncontracted.
				This action differs substantially from the one described in \cref{ex:rt_states}.
				Since the input type $\alpha \otimes \beta$ of $s$ in \cref{eq:rt_states_2} does not match the output type $\alpha \otimes \beta \otimes \alpha$ of $x$, the right hand side of \cref{eq:rt_states_2} would be the empty set in the case of the construction from \cref{ex:rt_states}.
				
				Note that the quantale unit here is a singleton---it only contains the empty diagram $\id \colon I \to I$.
				Nevertheless, its action via \eqref{eq:rt_states_2} has every resource as a fixed point as the quantale module definition demands.
			\end{example}
			
			Given a quantale module, identifying a specific resource theory then involves a choice of a subquantale of free transformations.
			In the case of the quantale module constructed from a process theory as in \cref{ex:rt_states}, we can define a resource theory of states by choosing a subcategory.\footnotemark{}
			\footnotetext{Strictly speaking, we want it to be a \emph{wide} subcategory, one that has the same objects (i.e.\ wire labels) as its enveloping category (given by the process theory).}%
			Letting the set of free transformations $T_{\rm free}$ consist of all processes in this subcategory ensures all the defining properties of a resource theory are satisfied.
			In particular, a subcategory contains all identity processes, so that $T_{\rm free}$ is reflexive.
			Closure under sequential composition then ensures that $T_{\rm free}$ is also idempotent.
			
			In the case of the quantale module constructed from a process theory as in \cref{ex:rt_states_2}, defining a resource theory of states amounts to choosing a sub-process theory thereof.
			That is, $T_{\rm free}$ is a set of processes within the original process theory, which includes the empty diagram and is closed under wiring of diagrams.\footnotemark{}
			\footnotetext{Note that the diagrams in question are really circuit diagrams and thus the wirings also have to be such that no cycles are generated, see \cref{sec:proc_th} for more details.}%
			By virtue of these properties of a sub-process theory, $T_{\rm free}$ is both reflexive and idempotent.
				 
			\begin{example}\label{ex:rt_channels}
				Resource theories of channels follow very much the same suit as those of states.
				We just need to replace states in the lattice of resources $\mathcal{X}$ by arbitrary channels.
				If all second-order processes can be reduced to a composition of first-order processes, then channels in the quantale of transformations $\mathcal{T}$ are replaced by 1-combs of the form
				\begin{equation}\label{eq:comb}
					\tikzfig{comb}
				\end{equation}
				They represent joint pre- and post-processing of a resource (i.e.\ a channel) with a side-channel.
				Composition $t \star s$ of combs is perhaps easiest to understand in the same way as the composition of channels in \eqref{eq:universal_composition_2}.
				That is, we first place them (in sequential order) and then perform all valid contractions:
				\begin{equation*}\label{eq:comb_comp}
					\tikzfig{comb_comp_2}
				\end{equation*}
				Namely, the four contractions are (in order as above): no contraction; plugging the right output of $t_1$ to the input of $s_1$; plugging the output of $s_2$ to the right input of $t_2$; and finally using both available contractions.
				The action of combs on channels is obtained in an analogous way.
				Here, we have to require that the resulting process is once again a channel.
				For instance, we have
				\begin{equation}\label{eq:rt_channels}
					\tikzfig{rt_channels}
				\end{equation}
				if all the wire types in the above diagram match.
				Choosing a free subquantale of transformations (i.e.\ a set of free 1-combs) can be done in multiple ways. 
				For instance, we can restrict both channels appearing in the decomposition of a free 1-comb to be elements of a given sub-process theory.
				However, we can also impose different restrictions separately to the pre- and post-processings.
				Additionally, one can limit the free 1-combs to include no side channel, as is the case in Shannon's theory of communication \cite{Shannon1948,shannon1958note} and its quantum version \cite{Devetak2008}.
				As another example, if the free 1-combs are chosen such that
				\begin{itemize}
					\item they have no side-channels,
					\item pre-processings are restricted to identity channels, and
					\item post-processings are unconstrained,
				\end{itemize}
				then, within the process theory of stochastic matrices (i.e.\ classical probability theory of finite sample spaces), we get a resource theory of distinguishability of probabilistic behaviours (see \cref{sec:distinguishability}). 
			\end{example}
	
		\subsection{Resource Ordering}\label{sec:rt_order}
			
			Just like for a universally combinable resource theory, we can now formalize the concept of a resource being ``better'' than another one within a given resource theory by introducing the following ordering of resources.
			\begin{definition}\label{def:free_conv}
				Consider a resource theory associated to a quantale module $(\mathcal{T},\mathcal{X}, \apply)$ with free transformations $T_{\rm free} \in \mathcal{T}$.
				The \textbf{resource ordering} thereof is the $T_{\rm free}$-reachability preorder (see \cref{def:reachability}).
				That is, $Y$ can be converted to $Z$ in the resource theory, denoted $Y \freeconv Z$, if we have
				\begin{equation}\label{eq:free_conv}
					T_{\rm free} \apply Y \supseteq Z.
				\end{equation}
			\end{definition}
			There are multiple viable interpretations of $\freeconv$. 
			Let us focus on the concrete case.
			The most immediate one is that $y \freeconv z$ holds among atoms $y$ and $z$ if there exists a free process that turns $y$ into $z$.
			In this sense, having access to $y$ is better than having access to $z$, since we can always choose to replace $y$ with $z$ at no cost.
			When dealing with resources of information in particular, $y \freeconv z$ may mean that $y$ can simulate $z$ with the free transformations specifying the meaning of ``allowed'' simulations.
			If the resources represent a strategy to play a game, the ordering may specify the ability to transform one strategy to another without increasing the pay-off or utility functions.
			
			In a resource theory of states as introduced in \cref{ex:rt_states}, ordering of idividual resources can be phrased as:
			\begin{equation}\label{eq:rt_states_order}
				\tikzfig{rt_states_order}
			\end{equation}
			
			We can extend any of the above interpretations  to arbitrary elements of $\mathcal{P}(X)$ via the enhancement preorder which coincides with the resource ordering by \cref{lem:access=enh}.
			Specifically, to say that $Y \freeconv Z$ holds is to postulate that for every choice of an element $z$ of $Z$, one can choose a resource $y \in Y$ and a free transformation that converts $y$ to $z$.
			By \eqref{eq:reachability2}, we can also equivalently express $Y \freeconv Z$ via 
			\begin{equation}\label{eq:free_image}
				T_{\rm free} \apply Y \supseteq T_{\rm free} \apply Z.
			\end{equation}
			We call $T_{\rm free} \apply Y$ the \textbf{free image} of $Y$ as it represents the set of resources one has access to if provided with $Y$ and arbitrary free operations.
			Yet another way to express that $Y$ is better than $Z$ is to say that the free image of $Y$ subsumes that of $Z$.
			Note that a set of resources is a free image of another one if and only if it is downward closed (see \cref{def:downset}) according to the resource ordering, or equivalently that it is a $T_{\rm free}$-ideal element (see \cref{def:ideal}).
			
			In the context of universally combinable resource theories, we saw in \cref{thm:Compatibility of star and order} that combining resources with a fixed resource $U$ preserves the resource ordering $\succeq$.
			At times, we can interpret the map $Y \mapsto U \apply Y$ as combining resources with a fixed transformation $U$.
			However, the application of transformations in a quantale module is not commutative in general.
			Applying $U$ to distinct resources may degrade them in ``incompatible'' ways.
			That is, even if $Y$ can be freely converted to some $Z$, there is no guarantee that $U \apply Y$ is still freely convertible to $U \apply Z$.
			Of course, if applying $U$ after free transformations coincides with the transformations afforded by $U$ followed by free transformations, i.e.\ if
			\begin{equation}
				T_{\rm free} \star U = U \star T_{\rm free}
			\end{equation}
			holds, then $U \apply Y$ is guaranteed to be freely convertible to $U \apply Z$ provided that $Y \succeq Z$ holds.
			More generally, we have the following result, from which the earlier \cref{thm:Compatibility of star and order} can be derived.
			\begin{lemma}
				\label{lem:Compatibility of order and combination}
				Consider a resource theory associated to a quantale module $(\mathcal{T},\mathcal{X}, \apply)$ with free transformations $T_{\rm free} \in \mathcal{T}$.
				For any $Y,Z \in \mathcal{X}$ and any $U \in \mathcal{T}$ satisfying $T_{\rm free} \star U \supseteq U \star T_{\rm free}$, we have
				\begin{equation}
					\label{eq:Compatibility of order and combination}
					Y \succeq Z  \quad \implies \quad  U \apply Y \succeq U \apply Z.
				\end{equation}
			\end{lemma}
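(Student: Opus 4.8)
The plan is to unfold both sides through the definition of the reachability preorder (Definition~\ref{def:reachability}, equation~\eqref{eq:reachability}), reducing the statement to a single chain of lattice inequalities. Recall that $Y \succeq Z$ means precisely $T_{\rm free} \apply Y \supseteq Z$, and the goal $U \apply Y \succeq U \apply Z$ unfolds to $T_{\rm free} \apply (U \apply Y) \supseteq U \apply Z$. So the whole lemma amounts to showing that the hypothesis $T_{\rm free} \apply Y \supseteq Z$, together with the commutation bound $T_{\rm free} \star U \supseteq U \star T_{\rm free}$, implies this last containment.

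The key step is to rewrite the left-hand side using the associativity axiom~\eqref{eq:qmodule_comp_action} of a quantale module, which gives
\begin{equation}
	T_{\rm free} \apply (U \apply Y) = (T_{\rm free} \star U) \apply Y .
\end{equation}
First I would apply the hypothesis $T_{\rm free} \star U \supseteq U \star T_{\rm free}$ together with the fact (recorded in the text just after Definition~\ref{def:quantale_module}) that the action $\apply$ preserves the lattice ordering in its quantale argument, yielding
\begin{equation}
	(T_{\rm free} \star U) \apply Y \supseteq (U \star T_{\rm free}) \apply Y = U \apply (T_{\rm free} \apply Y),
\end{equation}
where the final equality is again~\eqref{eq:qmodule_comp_action}. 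Then I would use monotonicity of $\apply$ in its module argument, combined with the assumption $T_{\rm free} \apply Y \supseteq Z$, to obtain $U \apply (T_{\rm free} \apply Y) \supseteq U \apply Z$. Concatenating these gives
\begin{equation}
	T_{\rm free} \apply (U \apply Y) = (T_{\rm free} \star U) \apply Y \supseteq U \apply (T_{\rm free} \apply Y) \supseteq U \apply Z,
\end{equation}
which is exactly $U \apply Y \succeq U \apply Z$.

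Strictly speaking there is no hard part here: the entire argument is a two-line chain once the definitions are unwound, and it is structurally identical to the proof of \cref{thm:Compatibility of star and order} except that the use of associativity of a commutative $\boxtimes$ is replaced by the ordered commutation hypothesis on $U$. The only point that deserves care—and the reason the hypothesis $T_{\rm free} \star U \supseteq U \star T_{\rm free}$ (rather than equality) suffices—is that we only ever need the inequality in the direction that \emph{inflates} $T_{\rm free} \apply (U \apply Y)$; the opposite inequality is never invoked. I would note in passing that \cref{thm:Compatibility of star and order} is recovered as the special case where the quantale is commutative, so that the commutation bound holds automatically with equality.
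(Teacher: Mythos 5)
Your proposal is correct and follows exactly the same route as the paper's proof: rewrite $T_{\rm free} \apply (U \apply Y)$ as $(T_{\rm free} \star U) \apply Y$ via the module axiom, apply the commutation hypothesis together with order-preservation of $\apply$ in the quantale argument, rewrite back as $U \apply (T_{\rm free} \apply Y)$, and finish with order-preservation in the module argument using $T_{\rm free} \apply Y \supseteq Z$. The extra remarks (that only one direction of the inequality is needed, and that \cref{thm:Compatibility of star and order} is the commutative special case) are accurate and consistent with how the paper itself positions the lemma.
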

			\begin{proof}
				Given that $Y \succeq Z$ holds, we can conclude 
				\begin{equation}
					\begin{split}
						T_{\rm free} \apply (U \apply Y) &= (T_{\rm free} \cmps U) \apply Y \\
							&\supseteq (U \cmps T_{\rm free}) \apply Y \\
							&= U \apply (T_{\rm free} \apply Y) \\
							&\supseteq U \apply Z,
					\end{split}
				\end{equation}
				which means that $U \apply Y \succeq U \apply Z$ holds as well.
			\end{proof}
			\begin{example}\label{ex:image_incompatible}
				For a concrete example of a situation in which the lemma above does not apply, consider a toy theory with three resources $X \coloneqq \{ 0, 1 , 2\}$, in which $T$ is the monoid of all functions $X \to X$. 
				Let the free transformations $T_{\rm free}$ consist of all non-increasing functions.
				Further define $u$ to be the function that fixes $1$ and swaps the other two resources.
				Pictorially, we can view the transitions allowed by these as
				\begin{equation}\label{eq:toy_3_levels}
					\tikzfig{toy_3_levels}
				\end{equation}
				Now it is clear that $u \star T_{\rm free}$ allows the conversion of $1$ to $2$ unlike $T_{\rm free} \star u$, so that $T_{\rm free} \star u \supseteq u \star T_{\rm free}$ does \emph{not} hold in this case.
				More generally, the conversions allowed by the two sets of transformations are
				\begin{equation}\label{eq:toy_3_levels_2}
					\tikzfig{toy_3_levels_2}
				\end{equation}
			\end{example}
			
			\begin{definition}\label{def:invariant_trans}
				A transformation $U \in \mathcal{T}$ is termed
				\begin{itemize}
					\item \textbf{left invariant} if $T_{\rm free} \star U = U$ holds, and 
					\item \textbf{right invariant} if $U \star T_{\rm free} = U$ holds.
				\end{itemize}
			\end{definition}
			Given arbitrary transformation $S$, we can generate a left invariant one $T_{\rm free} \star S$ by left augmentation.
			Similarly, by right augmentation, one obtains a right invariant $S \star T_{\rm free}$.
			\begin{example}
				We have already seen a few examples of invariant sets in the context of the resource theory of entanglement with respect to LOCC operations (\cref{ex:rt_entanglement}).
				The separable operations are both left and right invariant.
				Since they are also reflexive and idempotent, they define a valid resource theory according to \cref{def:rt}.
				Likewise, operations that preserve the set of separable states are left and right invariant.
				However, these are not idempotent, as we mention in \cref{ex:rt_entanglement}.
				Another example of a right invariant set of transformations is the collection of all discarding maps---one for each system type.
				These are neither left invariant nor reflexive.
			\end{example}
			
			In a universally combinable resource theory, left and right invariant sets of resources coincide.
			If the resource theory is concrete, then they correspond precisely to sets of individual resources that are downward closed relative to the resource ordering.
			This is why, in the earlier work utilizing universally combinable resource theories only \cite{Monotones}, these three concepts are not distinguished.
			\begin{remark}
				\label{rem:Compatibility of order and combination for right invariant}
				Since $T_{\rm free} \cmps U \supseteq U$ holds for any $U \in \mathcal{P}(T)$, implication \eqref{eq:Compatibility of order and combination} holds for any right invariant set of individual transformations $U$ by \cref{lem:Compatibility of order and combination}.
			\end{remark}
			
			As in \cref{sec:ucrt_order}, we can define an equivalence relation among resources that identifies those which give us identical access to other resources, when augmented by free transformations.
			In other words, $Y$ and $Z$ are (order-)equivalent resources, denoted $Y \sim Z$, if both
			\begin{align}
				Y &\freeconv Z  &  &\text{and}  &   Z &\freeconv Y
			\end{align}
			hold, or in other words, if their free images coincide:
			\begin{equation}\label{eq:free_augment}
				T_{\rm free} \apply Y = T_{\rm free} \apply Z.
			\end{equation}
			
			Writing the free convertibility relation as in \eqref{eq:free_image} suggests that we should be able to quotient by the free transformations and obtain a resource theory in which the resource ordering coincides with the lattice ordering.
			Specifically, we can replace resources in $\mathcal{X}$ with their free images via the $T_{\rm free}$-augmentation construction (\cref{def:module_augment}).
			We define the suplattice 
			\begin{equation}
				\mathrm{Aug}_{\rm free}(\mathcal{X}) = \Set*[\big]{ T_{\rm free} \apply Y \given Y \in \mathcal{X} } ,
			\end{equation}
			which is clearly a sublattice of $\mathcal{X}$.
			However, $\mathcal{X}$ is generally not a $\mathcal{T}$-module since there may be a transformation $S$ (e.g., $u$ in \cref{ex:image_incompatible}) such that
			\begin{equation}
				S \apply (T_{\rm free} \apply Y)
			\end{equation}
			is not a free image of any element of $\mathcal{X}$.
			Such an $S$ would not constitute a well-defined transformation of free images.
			We can remedy this issue in a fairly transparent way by augmenting the transformations with arbitrary free pre- and post-processing.
			That is, we define a new quantale
			\begin{equation}
				\mathrm{Aug}_{\rm free}(\mathcal{T}) \coloneqq  \Set*[\big]{ T_{\rm free} \star S \star T_{\rm free}  \given S \in \mathcal{T} } 
			\end{equation}
			with suprema and composition inherited from $\mathcal{T}$. 
			It is straightforward to check that $(\mathrm{Aug}_{\rm free}(\mathcal{T}), \mathrm{Aug}_{\rm free}(\mathcal{X}), \apply)$ is another quantale module.
			We may think of it as the quotient of the original resource theory by the free transformations since $T_{\rm free}$ is now the unit of $\mathrm{Aug}_{\rm free}(\mathcal{T})$.
			
			\begin{remark}
				Strictly speaking, $\mathrm{Aug}_{\rm free}(\mathcal{T})$ is a quotient of the centralizer of $T_{\rm free}$ defined as
				\begin{equation}
					\mathrm{C}_{\mathcal{T}}(T_{\rm free}) \coloneqq  \Set*[\big]{ S  \given  S \star T_{\rm free} = T_{\rm free} \star S },
				\end{equation}
				rather than of $\mathcal{T}$.
				The equivalence relation on $\mathrm{C}_{\mathcal{T}}(T_{\rm free})$ is given by 
				\begin{equation}
					\begin{alignedat}{3}
						S_1 \sim_{\rm free} S_2  \quad &\coloniff \quad  & T_{\rm free} \star S_1 \star T_{\rm free} &= T_{\rm free} \star S_2 \star T_{\rm free} \\
							& \iff \quad & T_{\rm free} \star S_1 &=T_{\rm free} \star S_2
					\end{alignedat}
				\end{equation}
				It follows from idempotence of $T_{\rm free}$ that $\sim_{\rm free}$ is a quantale congruence.
				In fact, one could study the quantale module $(\mathrm{C}_{\mathcal{T}}(T_{\rm free}), \mathrm{Aug}_{\rm free}(\mathcal{X}), \apply)$.
				However, transformations equivalent under $\sim_{\rm free}$ induce \emph{identical} maps of resources, so it suffices to model representatives of their equivalence classes in the form of $T_{\rm free} \star S \star T_{\rm free}$, which is what $\mathrm{Aug}_{\rm free}(\mathcal{T})$ does.
			\end{remark}
			
			The procedure of passing from 
			\begin{itemize}
				\item a resource theory $(\mathcal{T}, \mathcal{X}, \apply)$ with free transformations given by $T_{\rm free}$ to
				\item a quantale module $(\mathrm{Aug}_{\rm free}(\mathcal{T}), \mathrm{Aug}_{\rm free}(\mathcal{X}), \apply)$ with an equivalent resource ordering, whose subquantale of free transformations is given by the unit $\mathrm{Aug}_{\rm free}(1)$ of the quantale $\mathrm{Aug}_{\rm free}(\mathcal{T})$
			\end{itemize}
			can be modularized.
			It can be carried out for any reflexive, idempotent element $B$ since that is all we assumed about $T_{\rm free}$.
			If we have such a $B$ which also satisfies $B \subseteq T_{\rm free}$, then we can construct the corresponding resource theory $(\mathrm{Aug}_{B}(\mathcal{T}), \mathrm{Aug}_{B}(\mathcal{X}), \apply)$ and iterate the process, if desired.
			
			The potential benefit of doing so is that finding an explicit characterization of $\mathrm{Aug}_{\rm free}(\mathcal{X})$ may be tricky in practice.
			Indeed, it amounts to solving one of the major questions in a resource theory---the question of characterizing the resource ordering.
			However, quotienting by a subclass $B$ of free transformations may be more tractable and lead to an iterative process of ``solving'' a resource theory.
			Alternatively, the $B$-augmented representation may sometimes offer a conceptually more interesting point of view.
			\begin{example}\label{ex:AIT}
				As an illustrative example, consider a toy version of algorithmic information theory relative to a given universal monotone Turing machine \cite[section 4.5.2]{Li2008}.
				In this context, we can let $\mathcal{X}$ be the power set of $X$, where $X$ is given by the disjoint union of two sets $X_\rI$ and $X_\rO$.
				They each contain finite and countably infinite strings of characters ($0$ and $1$, say) and correspond to the input and output tape respectively.
				In other words, we can interpret $X_\rI$ as the set of programs for the Turing machine and $X_\rO$ as the set of possible results of a computation.
				The behaviour of the Turing machine can be described by a partial function of type $X_\rI \to X_\rO$.
				It is the only non-trivial free transformation in this toy example.
				Whenever undefined, it acts on $X$ by returning $\emptyset \in \mathcal{X}$.
				
				Apart from the obvious identity function $X \to X$, there are a few other transformations one may want to include as free.
				In particular, consider ``cutting maps'' $\mathsf{Cut}_m \colon X_\rI \to X_\rI$ (one for each natural number $m$) that act on finite strings as follows:
				\begin{equation}
					\mathsf{Cut}_m(x_1 x_2 \cdots x_k) \coloneqq 
						\begin{dcases}
							x_1 x_2 \cdots x_m  & \text{if } m \leq k \\
							x_1 x_2 \cdots x_k  & \text{if } m > k,
						\end{dcases}
				\end{equation}
				and similarly for infinite strings.
				Regarding cutting maps as free transformations implements the idea that a prefix of any program contains no more information than the original uncut program.
				
				Furthermore, to implement the idea that a prefix of an output string has lower complexity than the uncut string\footnotemark{} we also can include cutting maps of the output tape as free transformations.
				\footnotetext{Whether this is sensible or not depends on the notion of complexity one has in mind.
				On the other hand, postulating that \emph{any} substring be less complex than the full string is certainly not sensible \cite{poulin2005comment}.}%
				
				The collection $\mathsf{Cut}$ of all cutting maps is idempotent and thus it can play the role of the transformation $B$ from previous considerations.
				That is, we can form the augmented representation $\mathrm{Aug}_{\mathsf{Cut}}(\mathcal{X})$ as in \cref{def:module_augment} or \cref{eq:free_augment}.
				It is order-generated by sets of strings of the form ($e$ denotes the empty string)
				\begin{equation}
					\mathsf{Cut} \apply (x_1 x_2 \cdots x_k) = \{ e, \, x_1, \, x_1 x_2, \, \ldots, \, x_1 x_2 \cdots x_k \}
				\end{equation}
				by taking unions in $\mathcal{P}(X)$.
				However, unlike $\mathcal{X}$, $\mathrm{Aug}_{\mathsf{Cut}}(\mathcal{X})$ is no longer an atomistic lattice.
				It only has one atom---the empty string $e$.
				This example thus also illustrates one of the reasons for us to consider resource theories which are not concrete in the sense of \cref{def:rt}.
				Namely, applying basic constructions (such as augmentation) to concrete resource theories can produce ones that are not concrete anymore.
			\end{example}
		
		\subsection{Asymmetry as a Resource}\label{sec:rt_asymmetry}
		
			Many interesting resource theories can be viewed as resource theories of asymmetry \cite{Marvian2013}, where the useful resources are those that break a particular notion of symmetry as specified by a group representation.
			This class of resource theories is general enough to include concepts such as quantum coherence \cite{marvian2016quantify}, distinguishability (see \cref{sec:Encodings}) as well as the quality of physical systems as reference frames \cite{bartlett2007reference}.
			At the same time, the tools of group representation theory allow a fruitful general study of these with few assumptions \cite{Marvian2014,Marvian2012}, especially if we think of the resources as elements of (topological) vector spaces as is often the case.
			
			We can introduce an analogous notion of a resource theory of asymmetry in the context of quantale modules.
			However, we don't assume resources carry a linear or convex structure here.
			Therefore, resource theories of asymmetry according to \cref{def:rt_asymmetry} constitute a much wider class than the commonly considered ones.
			The perspective we take is that in a specific context (such as that of quantum resources) the additional structure constrains the group of symmetries to the appropriately structure-preserving mappings and consequently also constrains what one would consider as a reasonable notion of a resource theory of asymmetry.
			In the abstract context we take the most general symmetry mapping of resources to be an element of $\mathrm{Aut}(\mathcal{X})$---the group of all suplattice isomorphisms $\mathcal{X} \to \mathcal{X}$.
			For concrete resource theories with the lattice of resources $\mathcal{X}$ given by $\mathcal{P}(X)$, this automorphism group includes \emph{all} permutations of the set $X$.
			As mentioned above, if the set $X$ of individual resources has additional structure, the symmetry mappings should be adjusted accordingly.
			For instance, if $X$ consists of quantum states, we would typically only consider symmetries that are elements of the corresponding (projective) unitary group.\footnotemark{}
			\footnotetext{When dealing with time-reversal symmetry, we may also include symmetry transformations that are antiunitary, see \cref{ex:time-reversal}.}%
			\begin{definition}\label{def:G-covariant}
				Let $G$ be a group and $\varphi \colon G \to \mathrm{Aut}(\mathcal{X})$ a group homomorphism.
				Given a quantale module $(\mathcal{T}, \mathcal{X}, \apply)$, a transformation $T \in \mathcal{T}$ is termed $\bm{G}$\textbf{-compatible} if for any $X \in \mathcal{X}$ and any $g \in G$ we have
				\begin{equation}\label{eq:G-covariant}
					\varphi_g ( T \apply X ) = T \apply \varphi_g (X).
				\end{equation}
				Furthermore, a transformation $S \in \mathcal{T}$ is said to be $\bm{G}$\textbf{-covariant}\footnotemark{} if every transformation $T$ satisfying $T \subseteq S$ is $G$-compatible.
				\footnotetext{Note that, despite the name, whether $S$ is $G$-covariant or not depends on the representation $\varphi$ and not just the group $G$.}%
			\end{definition}
			
			$G$-compatible elements form a subquantale of $\mathcal{T}$ that we denote by $\mathcal{C}_\phi$.
			In specific resource theories, it is sometimes the case that for every automorphism $\psi$ there is a transformation $U_{\psi} \in \mathcal{T}$ whose action on $\mathcal{X}$ coincides with $\psi$.
			If this holds, then a sufficient condition for $T$ to be $G$-compatible is
			\begin{equation}\label{eq:G-covariant_2}
				T \star U_{\varphi_g} = U_{\varphi_g} \star T,
			\end{equation}
			which is perhaps a more familiar way to define covariance.
			In this case, an equivalent way to express $G$-compatibility is that $T$ is an element of the centralizer (see \cref{def:centralizer}) of $\mathrm{im}(\varphi)$---the image of $\varphi$ as emebedded within $\mathcal{T}$.
			
			\begin{example}\label{ex:time-reversal}
				An example of a resource theory of asymmetry in which $G$-compatible transformations \emph{cannot} be defined via \cref{eq:G-covariant_2} is the quantum resource theory of time-reversal asymmetry \cite{gour2009time} or more generally the quantum resource theory of imaginarity with respect to a chosen basis \cite{hickey2018quantifying}.
				In this case, we are dealing with a quantum resource theory of states, so that the set of individual resources $X$ can be identified with quantum states.
				The symmetry group $G$ is just $\mathbb{Z}_2 = \{e,\tau\}$---the unique group of order $2$.
				Its non-identity element $\tau$ is represented by complex conjugation which is an \emph{antiunitary} operation and thus not a quantum channel.
				Therefore, there is no (physical) transformation $u_{\varphi_\tau} \in T$ whose action on quantum states is that of complex conjugation (or time-reversal if the chosen basis is an eigenbasis of the Hamiltonian).
				Consequently, we have to use the more general \cref{eq:G-covariant} in order to define transformations compatible with complex conjugation or time-reversal.
			\end{example}
			
			\begin{lemma}\label{lem:G-covariant_suprema}
				If $\mathcal{T}$ is localic, meaning that the underlying suplattice is a locale (see \cref{def:locale}), then the $G$-covariant elements are closed under arbitrary suprema. 
			\end{lemma}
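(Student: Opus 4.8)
The plan is to unwind the definition of $G$-covariance into a condition on downsets and then exploit the infinite distributive law supplied by the locale structure. By \cref{def:G-covariant}, an element $S \in \mathcal{T}$ is $G$-covariant precisely when every $T$ with $T \subseteq S$ lies in the subquantale $\mathcal{C}_\phi$ of $G$-compatible elements; equivalently, when the downward closure of $S$ with respect to the lattice ordering $\supseteq$ is contained in $\mathcal{C}_\phi$. Hence, given a family $\{S_i\}_i$ of $G$-covariant elements, it suffices to show that every transformation $T$ satisfying $T \subseteq \bigvee_i S_i$ is $G$-compatible.

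First I would decompose such a $T$. Since the underlying suplattice of $\mathcal{T}$ is a locale, the distributive law of \cref{def:locale} gives
\begin{equation}
	T = T \cap \left( \bigvee_i S_i \right) = \bigvee_i \left( T \cap S_i \right).
\end{equation}
Each meet satisfies $T \cap S_i \subseteq S_i$, so by $G$-covariance of $S_i$ the element $T \cap S_i$ is $G$-compatible, i.e.\ it lies in $\mathcal{C}_\phi$.

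Then I would invoke closure of $\mathcal{C}_\phi$ under suprema. Concretely, for $G$-compatible elements $\{T_i\}_i$ and any $X \in \mathcal{X}$, $g \in G$, the module axiom \eqref{eq:qmodule_trans_disj} together with the fact that $\varphi_g$ is a suplattice isomorphism (hence preserves unions) yields
\begin{equation}
	\varphi_g \left( \left( \bigvee_i T_i \right) \apply X \right) = \bigcup_i \varphi_g (T_i \apply X) = \bigcup_i \bigl( T_i \apply \varphi_g(X) \bigr) = \left( \bigvee_i T_i \right) \apply \varphi_g(X),
\end{equation}
so $\bigvee_i T_i$ is again $G$-compatible; this is exactly the closure of the subquantale $\mathcal{C}_\phi$ under suprema. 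Applying this to the decomposition $T = \bigvee_i (T \cap S_i)$ shows $T \in \mathcal{C}_\phi$, which establishes that $\bigvee_i S_i$ is $G$-covariant.

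The main obstacle, and indeed the reason the locale hypothesis appears at all, is the decomposition step. In a general suplattice, the relation $T \subseteq \bigvee_i S_i$ does not allow one to write $T$ as a join of pieces each lying below some $S_i$, and then there is no way to transfer $G$-compatibility from the $S_i$ to $T$. The infinite distributive law is precisely what rescues this; everything else reduces to routine bookkeeping with the module axioms and the fact that each $\varphi_g$ preserves suprema.
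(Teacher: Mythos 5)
Your proof is correct and follows essentially the same route as the paper's: both use the locale's infinite distributive law to write $T \subseteq \bigvee_i S_i$ as $\bigvee_i (T \cap S_i)$, note each piece is $G$-compatible since it lies below some $S_i$, and then push $\varphi_g$ through the supremum using the module axiom \eqref{eq:qmodule_trans_disj} together with the fact that $\varphi_g$ preserves suprema. The only cosmetic difference is that you package the final computation as a standalone closure property of $\mathcal{C}_\phi$ under suprema, whereas the paper inlines the same calculation for the specific family $\{T \cap S_i\}_i$.
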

			\begin{proof}
				Consider a collection $\mathcal{S}$ of $G$-covariant elements of $\mathcal{T}$.
				We want to show that $\bigvee \mathcal{S}$ is also $G$-covariant.
				To this end, note that thanks to the distributivity of $\mathcal{T}$ we can write any $T \subseteq \bigvee \mathcal{S}$ as
				\begin{equation}
					T = T \wedge \left( \bigvee \mathcal{S} \right) = \bigvee_{S \in \mathcal{S}} (T \wedge S).
				\end{equation}
				Now, since $T \wedge S$ is below $S$, it is $G$-compatible and 
				\begin{equation}
					\begin{split}
						\varphi_g ( T \apply X ) &= \varphi_g \left( \bigvee_{S \in \mathcal{S}} (T \wedge S) \apply X \right) \\
							&=  \bigvee_{S \in \mathcal{S}} \varphi_g \bigl(  (T \wedge S) \apply X \bigr) \\
							&= \bigvee_{S \in \mathcal{S}} (T \wedge S) \apply \varphi_g ( X ) \\
							&= T \apply \varphi_g (X) 
					\end{split}
				\end{equation}
				holds for all $g$ and all $X$.
				We conclude that $\bigvee \mathcal{S}$ is indeed $G$-covariant.
			\end{proof}
			Notice that $\bot$ is always $G$-covariant and that $S$ being $G$-covariant implies all elements below it are too.
			Thus, in a localic resource theory the $G$-covariant transformations form a suplattice $\mathcal{S}_{\varphi} \coloneqq \langle \bot, S_{\varphi} \rangle$ where $S_{\varphi}$ is the supremum of all $G$-covariant transformations.
			However, $G$-covariant elements need not be closed under $\star$ in general.
			At present, we do not know of a precise characterization of when $\mathcal{S}_{\varphi}$ forms a subquantale.
			As the following class of examples shows, it is a common phenomenon. 
			\begin{example}
				Let $\mathcal{T}$ be a uniquely atomistic quantale, whose composition $\star$ is a pointwise supremum of a collection $\cdot_i$ of (partial) semigroup multiplications of its atoms as in \cref{ex:free_quantale_multi}.
				Then we can show $S_{\varphi} \star S_{\varphi}$ to be $G$-covariant, so that $\mathcal{S}_{\varphi}$ is indeed a subquantale of $\mathcal{T}$.
				To see this, let $A_{\varphi}$ be the collection of all atoms that are $G$-compatible (and thus also $G$-covariant, since the two concepts coincide for atoms) and compute
				\begin{equation}
					\begin{split}
						S_{\varphi} \star S_{\varphi}  &=  \bigcup_{a, a' \in A_{\varphi}} a \star a' \\
							&= \bigcup_{a, a' \in A_{\varphi}} \bigcup_i  a \cdot_i a' \\
					\end{split}
				\end{equation}
				Since $G$-compatibility is preserved under $\cdot_i$, each $a \cdot_i a'$ is also $G$-compatible and thus $G$-covariant as it is an atom.
				By \cref{lem:G-covariant_suprema}, $S_{\varphi} \star S_{\varphi}$ is therefore $G$-covariant because it is a supremum of $G$-covariant elements and $\mathcal{T}$ is is clearly localic (being a power set lattice).
				It then follows from the definition of $S_{\varphi}$ as a supremum of all $G$-covariant elements that it must be idempotent and satisfy $S_{\varphi} \supseteq 1$.
			\end{example}
			\begin{definition}\label{def:rt_asymmetry}
				A \textbf{resource theory of $\bm{G}$-asymmetry} (with respect to $\varphi$) is a quantale module $(\mathcal{T}, \mathcal{X}, \apply)$ with the subquantale of free operations given by the $G$-covariant ones defined as above (as long as they indeed form a subquantale $\mathcal{S}_{\varphi}$ of $\mathcal{T}$).
			\end{definition}
			
			\begin{example}\label{ex:asymmetry}
				Consider $X = \{ s, a_1, a_2, a_3 \}$, $\mathcal{X} = \mathcal{P}(X)$ and an action of the cyclic group $\{e, \sigma, \sigma^2\}$ by permuting $a_1$, $a_2$ and $a_3$.
				Furthermore, let $\mathcal{T}$ be the free quantale (see \cref{ex:free_quantale}) of the monoid of functions $X \to X$.
				The $G$-covariant elements can be specified by their action on $s$ and $a_1$ alone.\footnotemark{}
				\footnotetext{In general, it suffices to specify how a $G$-covariant transformation acts on a single element of each $G$-orbit.}%
				There are 4 atomic ones given by\footnotemark{}
				\footnotetext{We once again use an abusive notation in which both $s$ and $\{s\}$ are denoted by $s$ and likewise for other singletons.}%
				\begin{equation}
					\begin{aligned}
						u_{e} (s) &= s  			&\qquad  u_{\sigma} (s) &= s 		&\qquad  u_{\sigma^2} (s) &= s 		&\qquad  t_s (s) &= s \\
						u_{e} (a_1) &= a_1		&\qquad  u_{\sigma} (a_1) &= a_2 	&\qquad  u_{\sigma^2} (a_1) &= a_3  	&\qquad  t_s (a_1) &= s
					\end{aligned}
				\end{equation}
				so that their action on $a_1$ can be visualized as 
				\begin{equation}
					\tikzfig{toy_rotations}
				\end{equation}
				In the pictorial representation, the symmetry group is that of rotations around the symmetric resource $s$ and the $G$-covariant transformations are given by ${S_{\varphi} = \{u_{e}, u_{\sigma}, u_{\sigma^2}, t_s\}}$.
				
				To get a bit more intuition about $G$-covariant transformations, let ${T_a \coloneqq \{t_1, t_2, t_3\}}$ where each $t_i$ is given by $t_i (s) = t_i(a_j) = a_i$ so that $t_1$, for example, acts as
				\begin{equation}
					\tikzfig{toy_rotations_2}
				\end{equation}
				and $T_a$ thus acts on $\mathcal{X}$ via
				\begin{equation}
					T_a \apply s = T_a \apply a_i = \{ a_1, a_2, a_3 \}.
				\end{equation}
				Then $T_a$ satisfies equation \eqref{eq:G-covariant}, i.e.\ it is $G$-compatible, but it is \emph{not} $G$-covariant.
				This is because there are transformations below it, such as $t_1$, which are not $G$-compatible.
				Intuitively, we can see that $T_a$ takes a symmetric state $s$ and maps it to a collection of asymmetric ones.
				Therefore, it should not be free in a resource theory of asymmetry.
			\end{example}
			
			The notion of covariant transformations can be generalized from the case of a group of symmetries to a monoid.
			Namely, there is the monoid of suplattice homomorphisms $\mathcal{X} \to \mathcal{X}$, denoted by $\mathrm{End}(\mathcal{X})$.
			For any monoid homomorphism $\varphi \colon M \to \mathrm{End}(\mathcal{X})$ the corresponding $M$-compatible transformations are those that satisfy
			\begin{equation}\label{eq:M-covariant}
				\varphi_m ( T \apply X ) = T \apply \varphi_m (X).
			\end{equation}
			for every $m \in M$ and every $X \in \mathcal{X}$.
			Just like in the case of groups, the $M$-covariant transformations are those whose $\subseteq$-downward closure in $\mathcal{T}$ consists of $M$-compatible elements only.
			We also have an analogous version of \cref{lem:G-covariant_suprema}, since its proof doesn't assume the existence of inverses.
			Many resource theories that cannot be expressed as resource theories of asymmetry do arise in this way.
			
			\begin{example}\label{ex:fixed_points}
				As an illustrative example, consider the two element monoid ${M = \{e, p\}}$ which satisfies $p \cdot p = p$.
				For simplicity, let us consider the representation $\varphi_p$ of $p$ to be a constant function $x \mapsto x_0$ for an atom $x_0$.
				In general, $\varphi_p$ could be any idempotent mapping.
				The (atomic) $M$-covariant transformations are then those for which $x_0$ is a fixed point, i.e.\ they ought to satisfy
				\begin{equation}
					t \apply x_0 = x_0.
				\end{equation}
				Already this elementary version of the construction includes many interesting cases, especially if we allow $x_0$ to depend on the type of the resource in question.
				For instance, in resource theories of nonuniformity \cite{Gour2015,Horodecki2003}, the fixed points are the uniform probability distributions.
				Similarly, in resource theories of athermality \cite{Janzing2000,Brandao2013}, the fixed points are thermal states.
			\end{example}
			A mild generalization of \cref{ex:fixed_points} allows one to describe resource theories where the free transformations are required to have multiple simultaneous fixed points, but there are also examples of higher complexity of course.

		\subsection{Morphisms of Resource Theories}\label{sec:rt_morphisms}
			
			There are multiple feasible ways to define structure-preserving maps between resource theories, depending on the structure that one requires to be preserved.
			Indeed, one could study relations that are only required to preserve the convertibility relation by free transformation on top of whatever compositional structure the resources may carry.
			This is what we do in \cref{sec:monotones}.
			
			Here, we describe perhaps the most restrictive notion of a resource theory morphism that can be of use.
			It has to preserve both the ways in which transformations are combined as well as the protocols for simulating non-free transformations.
			\begin{definition}\label{def:rt_morphism}
				Let $(\mathcal{T}, \mathcal{X}, \apply)$ and $(\mathcal{U}, \mathcal{Y}, \apply)$ be resource theories with subquatales of free transfomrations given by $T_{\rm free}$ and $U_{\rm free}$ respectively.
				A \textbf{resource theory morphism} is a quantale module homomorphism (see \cref{def:qmod_morphism}) $(\ell, f)$ that preserves free transformations.
				That is, $f \colon \mathcal{X} \to \mathcal{Y}$ and $\ell \colon \mathcal{T} \to \mathcal{U}$ are suplattice homomorphisms that satisfy
				\begin{subequations}\label{eq:rt_morphism}
				\begin{align}
					\label{eq:q_morphism}  \ell(T_1 \star T_2) &= \ell(T_1) \star \ell(T_2) , \\
					\label{eq:qm_morphism}  f \bigl( T \apply X \bigr) &= \ell (T) \apply f(X), \\
					\label{eq:free_preserving}  \ell(T_{\rm free}) &\subseteq U_{\rm free}.
				\end{align}
				\end{subequations}
				for arbitrary $X \in \mathcal{X}$ and $T,T_1,T_2 \in \mathcal{T}$.
			\end{definition}
			If \cref{eq:q_morphism,eq:qm_morphism} are weakened to 
			\begin{subequations}\label{eq:oplax_rt_morphism}
			\begin{align}
				\label{eq:oplax_q_morphism}  \ell(T_1 \star T_2) &\subseteq \ell(T_1) \star \ell(T_2) \text{, and} \\
				\label{eq:oplax_qm_morphism}  f \bigl( T \apply X \bigr) &\subseteq \ell (T) \apply f(X)
			\end{align}
			\end{subequations}
			respectively, we speak of an \textbf{oplax resource theory morphism}.
			
			To get a better intuition about \cref{def:rt_morphism}, we now look at various ways to construct morphisms of resource theories in the context of asymmetry.
			First, for any subgroup $H$ of $G$, one can simply restrict the asymmetry with respect to $G$ to asymmetry with respect to $H$.
			That is, given a group representation $\varphi \colon G \to \mathrm{Aut}(\mathcal{X})$ and a (not necessarily injective) group homomorphism $\mu \colon H \to G$, the $G$-covariant transformations are a subset of $H$-covariant ones:
			\begin{equation}
				\mathcal{S}_{\varphi} \subseteq \mathcal{S}_{\mu^* \varphi},
			\end{equation}
			where $\mu^* \varphi \colon H \to \mathrm{Aut}(\mathcal{X})$ is the pre-composition of $\varphi$ with $\mu$.
			The identity map is thus a morphism that maps a resource theory of $G$-asymmetry to a resource theory of $H$-asymmetry in this case.
			
			\begin{example}
				Consider $G$ to be $\mathrm{SO(3)}$, the group of rotations in 3 dimensions and $H$ to be $\mathrm{SO(2)}$, the subgroup of rotations around a fixed axis.
				Then the above procedure says that the transformations that are covariant with respect to $\mathrm{SO(3)}$ rotations in 3D space are necessarily also covariant with respect to 2D rotations around a fixed axis.
				Consequently, the resource theory of $\mathrm{SO(3)}$-asymmetry is canonically embedded in a resource theory of $\mathrm{SO(2)}$-asymmetry.
			\end{example}
			
			Now, consider two representations $\varphi, \psi \colon G \to \mathrm{Aut}(\mathcal{X})$.
			A suplattice homomorphism $f \colon \mathcal{X} \to \mathcal{X}$ is equivariant with respect to them if it satisfies 
			\begin{equation}
				\psi_g \circ f = f \circ \varphi_g
			\end{equation}
			for all $g \in G$.
			Let $f$ be an equivariant isomorphism and let $U_f \in \mathcal{T}$ be a transformation satisfying $U_f \apply X = f(X)$.
			Moreover, let $U_{f^{-1}}  \in \mathcal{T}$ be the an inverse of $U_f$.
			Then we can define a quantale homomorphism $\mathrm{Ad}_f \colon \mathcal{T} \to \mathcal{T}$ via
			\begin{equation}
				\mathrm{Ad}_f (T) \coloneqq U_f \star T \star U_{f^{-1}},
			\end{equation}
			and we have $\mathrm{Ad}_f (S_\varphi) \subseteq S_\psi$ because if $T \apply \varphi_g (X) = \varphi_g (T \apply X)$ holds, it follows that
			\begin{equation}
				\begin{split}
					\mathrm{Ad}_f (T) \apply \psi_g (X) &= (U_f \star T) \apply \bigl( f^{-1} \circ \psi_g (X) \bigr) \\
						&= (U_f \star T) \apply \bigl( \varphi_g \circ f^{-1} (X) \bigr) \\
						&= (f \circ \varphi_g) \bigl( T \apply f^{-1}(X) \bigr) \\
						&= \psi_g \bigl( \mathrm{Ad}_f (T) \apply X \bigr)
				\end{split}
			\end{equation}
			does as well.
			Consequently, the pair $(\mathrm{Ad}_f,f)$ constitutes a morphism of resource theories of $G$-asymmetry with respect to $\varphi$ and $\psi$ respectively.
			If we think of the two representations as specifying reference frames \cite{bartlett2007reference} for $G$, then this morphism corresponds to the change of a reference frame.
			
			\begin{example}\label{ex:intertwiner}
				Consider a resource theory of $\mathrm{U(1)}$-asymmetry and individual resource given by the states of a qubit---a $2$-dimensional quantum system.
				A $\mathrm{U(1)}$-representation specifies how the qubit states transform under shifts of the phase degree of freedom $e^{i \theta} \in \mathrm{U(1)}$.
				Therefore, it specifies which states can keep track of phase changes and thus how well they can act as a phase reference.
				Let the two representations, $\varphi$ and $\psi$ be given by 
				\begin{align}\label{eq:phase_rep}
					\varphi_\theta (\rho) &\coloneqq e^{\frac{i \theta}{2} \sigma_x} \, \rho \, e^{- \frac{i \theta}{2} \sigma_x}
						&
					\psi_\theta (\rho) &\coloneqq e^{\frac{i \theta}{2} \sigma_y} \, \rho \, e^{- \frac{i \theta}{2} \sigma_y}	
				\end{align}
				where $\sigma_x, \sigma_y, \sigma_z$ are the Pauli matrices and $\rho$ is a density matrix that specifies a qubit state.
				Then we have the following equivariant map:
				\begin{equation}
					f(\rho) \coloneqq e^{\frac{i \pi}{4} \sigma_z} \, \rho \, e^{- \frac{i \pi}{4} \sigma_z}	
				\end{equation}
				In particular, if we think of $\varphi$ and $\psi$ as rotations around the $x$ and $y$ axes of the Bloch sphere, then $f$ rotates the $x$-axis to the $y$-axis.
				Let us make the conventional choice that the state given by the Hilbert space vector $|0\rangle$ (one of the eigenvectors of $\sigma_z$) corresponds to $\theta = 0$.
				Relative to $\varphi$ and this choice of ``origin'', the qubit is a phase reference with respect to the orbit
				\begin{equation}
					\Set{ \cos\left( \tfrac{\theta}{2} \right) |0\rangle + i \sin\left( \tfrac{\theta}{2} \right) |1\rangle \given \theta \in [0,2 \pi) }
				\end{equation}
				of $|0\rangle$ under the $\mathrm{U(1)}$-action.
				On the other hand, relative to $\psi$, the qubit is a phase reference with respect to the orbit
				\begin{equation}
					\Set{ \cos\left( \tfrac{\theta}{2} \right) |0\rangle - \sin\left( \tfrac{\theta}{2} \right) |1\rangle \given \theta \in [0,2 \pi) }.
				\end{equation}
				The intertwiner $f$ then maps between the two qubits and its properties as $\mathrm{U(1)}$ reference frames relative to any choice of origin, not just the state $|0\rangle$ chosen here for concreteness.
				The ``1-comb'' $\mathrm{Ad}_f$ maps manipulations of the phase encoded in rotations around the $x$-axis (in the Bloch sphere picture) to manipulations of the phase encoded in rotations around the $y$-axis.
			\end{example}
			
			Given a resource theory of $G$-asymmetry, we can also restrict its quantale of transformations to the subquantale of $G$-compatible transformations, denoted by $\mathcal{C}_\varphi$ (recall \cref{def:G-covariant}).
			This removes all the individual transformations that are asymmetric.
			Nevertheless, recall that $\mathcal{C}_\varphi$ includes non-covariant sets of transformations such as $T_a$ from \cref{ex:asymmetry}, which can be thought of as an orbit of $t_1$ under the adjoint action of $G$ \mbox{on $\mathcal{T}$}.
			We can interpret $\mathcal{C}_\varphi$ as restricting the relevant (sets of) transformations to those whose power is identical whether a symmetry transformation is applied before or after them.
			Then, for any $g \in G$, the map $\varphi_g \colon \mathcal{X} \to \mathcal{X}$ is an endomorphism of the resource theory of $G$-asymmetry with enveloping quantale module $(\mathcal{C}_\varphi, \mathcal{X}, \apply)$. 
			In the reference frame point of view, this endomorphism admits the interpretation of a coordinate transformation.
			
			\begin{example}
				Consider the resource theory of $\mathrm{U(1)}$-asymmetry with respect to the representation $\varphi$ from \cref{ex:intertwiner}.
				For any phase $\theta \in [0,2 \pi)$, the rotation $\varphi_\theta$ as given \mbox{by \eqref{eq:phase_rep}} implements a coordinate transformation by translating the origin of the phase reference \mbox{by $\theta$}.
			\end{example}
			
			Further examples of resource theory morphisms appear in \cref{ex:bridge_lemma} and in \cref{sec:dist_poss}. 
			The former gives a morphism between a resource theory of $G$-asymmetry (see \cref{sec:rt_asymmetry}) and a resource theory of $G$-distinguishability (see \cref{sec:Encodings}), while the latter maps distinguishability of probabilistic behaviours to distinguishability of possibilistic behaviours.
			More details can be found in the respective sections.
			
			Besides quantale module homomorphisms, we can also relate resource theories by (op)lax quantale module homomorphisms as introduced in \eqref{eq:lax_qmod_morphism}.
			In the case of an oplax module homomorphism, we are effectively allowing the codomain theory to have additional ways of composing and applying transformations as compared to its domain.
			Since adding extra possibilities for composition can only aid in resource manipulations, we would expect that oplax morphisms preserve the resource ordering.
			\begin{proposition}\label{prop:isotone_from_morphism}
				Let $(\mathcal{T}, \mathcal{X}, \apply)$ and $(\mathcal{U}, \mathcal{Y}, \apply)$ be resource theories with free transfomrations given by $T_{\rm free}$ and $U_{\rm free}$ respectively.
				Consider an oplax resource theory moprhism consisting of 
				\begin{enumerate}
					\item a suplattice homomorphism $\ell \colon \mathcal{T} \to \mathcal{U}$ which satisfies, for all transformations $S, T \in \mathcal{T}$,\footnotemark{}
						\footnotetext{Note that we do not use \cref{eq:oplax_quantale_morphism} in the proof of \cref{prop:isotone_from_morphism}.
							Therefore, the same conclusion holds if $\ell$ is merely a suplattice homomorphism as opposed to an oplax quantale homomorphism.}%
						\begin{equation}\label{eq:oplax_quantale_morphism}
							\ell ( S \star T ) \subseteq \ell(S) \star \ell(T),
						\end{equation}
						
					\item and a suplattice homomorphism $f \colon \mathcal{X} \to \mathcal{Y}$ which satisfies, for all $X \in \mathcal{X}$ and all $T \in \mathcal{T}$,
						\begin{equation}
							f ( T \apply X ) \subseteq \ell(T) \apply f(X)
						\end{equation}
				\end{enumerate}
				such that $\ell$ preserves free transformations, i.e.\ $\ell(T_{\rm free}) \subseteq U_{\rm free}$ holds.
				Then $f$ preserves the respective resource orderings.
				That is, it is an isotone of type $(\mathcal{X},\freeconv) \to (\mathcal{Y}, \freeconv)$.
			\end{proposition}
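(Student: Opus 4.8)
The plan is to unwind the definition of the resource ordering on both sides and then chain together three monotonicity facts. By \cref{def:free_conv}, the hypothesis $Y \freeconv Z$ in the domain theory means precisely $T_{\rm free} \apply Y \supseteq Z$, while the conclusion $f(Y) \freeconv f(Z)$ unwinds to $U_{\rm free} \apply f(Y) \supseteq f(Z)$. So the entire task reduces to producing the single containment $U_{\rm free} \apply f(Y) \supseteq f(Z)$ out of the containment $T_{\rm free} \apply Y \supseteq Z$.

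First I would apply $f$ to the hypothesis. Since $f$ is a suplattice homomorphism it is order-preserving (as derived in \eqref{eq:hom_pres_ord}), so $T_{\rm free} \apply Y \supseteq Z$ yields $f(T_{\rm free} \apply Y) \supseteq f(Z)$. Next I would invoke the oplax action condition on $f$ with $T = T_{\rm free}$, namely $f(T_{\rm free} \apply Y) \subseteq \ell(T_{\rm free}) \apply f(Y)$, to re-express the left-hand term through $\ell(T_{\rm free})$. Finally I would use the free-preservation hypothesis $\ell(T_{\rm free}) \subseteq U_{\rm free}$ together with the fact that the action $\apply$ preserves the lattice ordering in its first argument (a basic consequence of \cref{def:quantale_module}) to obtain $\ell(T_{\rm free}) \apply f(Y) \subseteq U_{\rm free} \apply f(Y)$.

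Concatenating the three steps gives the chain
\[
	U_{\rm free} \apply f(Y) \;\supseteq\; \ell(T_{\rm free}) \apply f(Y) \;\supseteq\; f(T_{\rm free} \apply Y) \;\supseteq\; f(Z),
\]
whose two ends yield $U_{\rm free} \apply f(Y) \supseteq f(Z)$, that is, $f(Y) \freeconv f(Z)$, as required.

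There is no genuine obstacle here; the only thing to watch is the bookkeeping of the two opposite inclusions, since the order-preservation steps produce $\supseteq$ while the oplax and free-preservation steps produce $\subseteq$, and these must be aligned so that they compose into one chain. It is worth noting, as the footnote to the statement anticipates, that the oplax quantale-homomorphism property \eqref{eq:oplax_quantale_morphism} of $\ell$ is never invoked: the argument uses only that $\ell$ sends free transformations into free transformations and that $\ell, f$ are suplattice homomorphisms satisfying the oplax action condition. Consequently the identical reasoning applies verbatim to a strict resource theory morphism and shows, more generally, that $f$ preserves the resource ordering whenever it is an oplax module map over an $\ell$ with $\ell(T_{\rm free}) \subseteq U_{\rm free}$.
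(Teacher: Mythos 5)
Your proof is correct and is essentially identical to the paper's own argument: both establish the chain $f(Z) \subseteq f(T_{\rm free} \apply Y) \subseteq \ell(T_{\rm free}) \apply f(Y) \subseteq U_{\rm free} \apply f(Y)$ using order-preservation of $f$, the oplax action condition, and free-preservation, merely written in the opposite direction. Your closing remark that \eqref{eq:oplax_quantale_morphism} is never used also matches the paper's own footnote.
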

			\begin{proof}
				If $X \freeconv W$ holds for some $X,W \in \mathcal{X}$, then we have
				\begin{equation}
					\begin{split}
						f(W) &\subseteq f \bigl( T_{\rm free} \apply X \bigr) \\
							&\subseteq  \ell(T_{\rm free}) \apply f(X) \\
							&\subseteq U_{\rm free} \apply f(X)
					\end{split}
				\end{equation}
				and conclude that $f(X) \freeconv f(W)$ follows.
			\end{proof}
			
			\begin{example}
				In \cref{ex:rt_states,ex:rt_states_2} we saw two ways to construct a quantale module from a process theory.
				The resulting modules share the same lattice of resources as well as the same lattice of transformations, given by the power set lattices of the set of all states and the set of all channels respectively.
				However, the latter allowes for strictly more possibilities both in the composition of transformations (see \cref{eq:rt_states_3}) and in the application of transformations to resources (see \cref{eq:rt_states_2}).
				Therefore, choosing $f$ and $\ell$ to be the identity functions leads to an oplax resource theory morphism.
			\end{example}

\chapter{Resource Monotones}\label{sec:monotones}
	
	\section{Evaluating Utility of Resources}
		
		As outlined in the Introduction, one of our motivations for introducing the framework, subsequently presented in \cref{sec:resource_theories}, is to be able to ``recognize patterns and formulate methods in a unified manner''.
		The concept of a resource measure is among the most basic tools used when studying resources.
		Having described aspects of the framework for resource theories in terms of quantale modules, we now turn to the study of measures. 
		In particular, we are interested in the general patterns that underlie their constructions.
		Most of the content in this chapter is a generalization of the work presented in \cite{Monotones}.
		While the content of \cite{Monotones} concerns universally combinable resource theories, here we present the results in the context of quantale modules.
		
		Traditionally, measures of resources are real-valued functions that satisfy certain desiderata.
		The requirements can vary in their specificity, but one property that every meaningful resource measure ought to satisfy is that it is an isotone.
		That is, it never assigns a higher value to a resource that is less useful than another one.
			
		\begin{definition}\label{def:res_monotone}
			Given a concrete resource theory with resource ordering $(\mathcal{P}(X), \freeconv)$, a \textbf{resource monotone} (alse referred to as a resource measure) is a monotone function $(X, \freeconv) \to \ordreals$.
		\end{definition}
		
		\begin{remark}\label{rem:monotones_downsets}
			Monotones are closely connected to the concept of downsets (see \cref{def:downset}).
			Specifically, for any real number $c$ and a monotone $f \colon X \to \reals$, the set of all resources whose $f$-value is bounded above by $c$, i.e.\
			\begin{equation}
				f^{-1} \bigl( \down (c) \bigr) = \Set{ x \in X  \given  f(x) \leq c },
			\end{equation}
			is a downward closed subset of $(X, \freeconv)$.
			Conversely, given a family $\{D_k\}_{k \in \mathbb{N}}$ of nested downsets that satisfy $D_k \in \mathcal{DC}(X)$ and 
			\begin{equation}
				k \leq l  \quad \implies \quad  D_k \subseteq D_l,
			\end{equation}
			the membership function $k \colon X \to \reals$ given by
			\begin{equation}
				k(x) \coloneqq \inf \Set{ k \in \mathbb{N}  \given  x \in D_k }
			\end{equation}
			is a resource monotone.
			
			Since bipartite quantum states with bounded entanglement rank (\cref{examplesMonotonesDCsets}) form a nested family of downsets, their membership function---also known as the Schmidt number---is a monotone that can be understood in this manner.
			A similar example is presented in \cref{sec:dist_rank} in the context of resource theories of distinguishability.
		\end{remark}
		
		Recall that the enhancement preorder applied to (extended) real numbers yields an ordering $\geq_{\rm enh}$, which compares sets of real numbers according to their suprema (\cref{ex:enh_reals}).
		Extending a monotone $f$ to the full preorder-extended lattice of sets of resources leads to a suplattice homomorphism $f_* \colon (\mathcal{P}(X), \freeconv) \to (\mathcal{P}(\reals), \geq_{\rm enh})$ defined as
		\begin{equation}
			f_*(Y) \coloneqq \Set{f(y)  \given y \in Y}.
		\end{equation}
		It is an isotone by \cref{lem:sup_isotone} so that $\sup \circ f_*$ gives a real-valued monotone of sets of resources.
		Conversely, every monotone function $(\mathcal{P}(X), \freeconv) \to \ordreals$ can be restricted to its atoms and yields a resource monotone.
		Therefore, in a resource theory that is not uniquely atomistic, we may choose to define resource monotones as the isotone suplattice homomorphisms $(\mathcal{X}, \freeconv) \to (\mathcal{P}(\reals), \geq_{\rm enh})$ and this reduces to the standard \cref{def:res_monotone} in the concrete case.
		
		A single monotone offers limited information about a resource theory of course.
		Whenever the resource ordering is not total, characterizing it by monotones requires more than one.
		We say that a set $\{f_i\}_{i \in I}$ of resource monotones is \textbf{complete} if we have
		\begin{equation}
			x \freeconv y  \quad \iff \quad  f_i (x) \geq f_i (y).
		\end{equation}
		For any complete set of monotones, one obtains an order-embedding $F \colon (X, \freeconv) \to (\reals^I, \geq)$ defined by
		\begin{equation}
			F(x) = \bigl( f_i(x) \bigr)_{i \in I}
		\end{equation}
		where the partial order $\geq$ on $\reals^I$ is defined entry-wise:
		\begin{equation}
			(m_i)_{i \in I} \geq (n_i)_{i \in I}  \quad \coloniff \quad  \forall \, i \in I  \,:\,  m_i \geq n_i. 
		\end{equation}
		
		In this chapter, we study how resource monotones can be constructed.
		We look at examples of common constructions of monotones appearing in the literature on resource theories and identify more general procedures, which they are instances of.
		This helps us organize various monotones, understand the connections between them, and obtain generally applicable methods for generating new interesting monotones in any resource theory of interest.
		In order to classify the monotone constructions we discuss, we use the following general procedure.
		\begin{named}{Broad Scheme}
			\label{broad scheme}
			Identify a preordered set $(\mathcal{B}, \succeq_{\mathcal{B}})$ and two order-preserving maps:
			\begin{align}
				\label{eq:general}
				\sigma_1 &\colon \ordresx \to (\mathcal{B}, \succeq_{\mathcal{B}})  &  \sigma_2 &\colon (\mathcal{B}, \succeq_{\mathcal{B}}) \to \ordreals
			\end{align}
			Composing the isotones then gives a resource monotone $\ordresx \to \ordreals$.
			$\sigma_2$ is called the \textbf{root monotone}, $(\mathcal{B}, \succeq_{\mathcal{B}})$ is called the \textbf{mediating preordered set} and $\sigma_1$ is called the \textbf{mediating isotone}.
			We can thus view the target resource monotone $\sigma_2 \circ \sigma_1$ as obtained by pulling back the root monotone $\sigma_2$ along the mediating isotone $\sigma_1$.
		\end{named}
		The overarching theme of the present chapter is to illuminate which choices of a mediating isotone and a root monotone lead to measures of resources that are either prevalent in the literature or interesting for other reasons.
		To cast an existing resource measure into the mould of the \ref*{broad scheme}, one may try to identify a choice of each of these ingredients that yield the given measure. 
		Alternative constructions can be obtained immediately by simply varying any of the ingredients.
		In this way, the \ref*{broad scheme} provides the means of classifying monotone constructions as well as generalizing the known ones to apply to novel situations.
		
		Throughout, we assume that $(\mathcal{T}, \mathcal{X}, \apply)$ denotes a uniquely atomistic quantale module with $\mathcal{T} = \mathcal{P}(T)$ and $\mathcal{X} = \mathcal{P}(X)$.
		As before, $T_{\rm free} \in \mathcal{P}(T)$ is the set of free transformations and $\freeconv$ the associated resource ordering on $\mathcal{X}$.
		Some of the techniques generalize to the case of merely atomistic (or even arbitrary) lattices, but precise conditions under which they can be applied are not of primary importance.
		The purpose of this chapter is not to merely present a list of techniques, but also to supply them with a class of applicable interpretations.
		Since the uniquely atomistic case comes with a more straightforward interpretation, it serves us better in that regard.
		Additionally, many of the resource theories one may wish to study at present are indeed concrete.
	
	\section{Cost and Yield}\label{sec:cost_yield}
	
		\subsection{Cost and Yield Relative to Free Transformations}\label{sec:free_cost_yield}
			
			The first type of monotone construction we study in depth is one expressing the cost of a particular resource and its counterpart---the yield.
			The idea of cost measures is to express the value of a resource $x$ in terms of how hard it is to procure it from a specified set of resources.
			Hereafter, we denote by $A$ the set of resources with respect to which cost (or yield) are measured.
			\begin{example}
				For instance, in the context of a market, we can let $A$ consist of all the monetary amounts in a currency (such as bitcoin) that one could use to exchange for other goods.\footnotemark{}
				\footnotetext{From the point of view of a single agent in a market at a fixed point in time, we can describe it as a resource theory in the following way.
					Individual resources model goods that the agent can potentially have in possession.
					Free transformations represent the available exchanges the agent can partake in given the state of the market at that particular time.}%
				The {\bitcoin}-cost of a particular marketable item (such as a carrot) is the smallest amount of bitcoins that one could buy a carrot for.
				On the other hand, its {\bitcoin}-yield is the largest amount of bitcoins that one could sell a carrot for.
			\end{example}
			\begin{example}
				In the context of thermodynamics, we can measure the work-cost of a thermodynamic process in terms of how much work needs to be spent to run the process.
				On the other, its work-yield is the optimal amount of work that can be extracted while running the process.
			\end{example}
			The following toy example illustrates the forthcoming formal treatment of cost and yield measures.
			\begin{example}\label{ex:toy_cost_yield}
				Consider the situation from \cref{ex:image_incompatible} with an extra resource $1'$ which plays the same role as $1$ but is not equivalent to it.
				That is, free transformations allow the following transitions (leaving the identities implicit):
				\begin{equation}\label{eq:toy_3+_free}
					\tikzfig{toy_3+_free}
				\end{equation}
				Let us use $A = \{0,1,2\}$ as the set of resources relative to which to measure cost and yield.
				Then the cost of $1'$ is $2$ because $2$ is the least valuable element of $A$ that can be used to generate $1'$ for free.
				Dually, its yield is $0$ because it is the best resource in $A$ which can be obtained from $1'$ via free transformations.
				In general, we have
				\begin{equation}
					\begin{aligned}
						\cost{A}{} (1') &= 2  			&\qquad  \cost{A}{} (2) &= 2 		&\qquad  \cost{A}{} (1) &= 1 		&\qquad  \cost{A}{} (0) &= 0 \\
						\yield{A}{} (1') &= 0  		&\qquad  \yield{A}{} (2) &= 2 		&\qquad  \yield{A}{} (1) &= 1 		&\qquad  \yield{A}{} (0) &= 0.
					\end{aligned}
				\end{equation}
				In this simple example, it turns out that the pair of a cost and a yield relative to $A$ constitutes a complete set of monotones.
				However, this is generally not the case for a fixed choice of $A$.
			\end{example}
			The choice of reference set $A$ is arbitrary to a large degree. 
			However, we might want it to have certain properties in order for the cost and yield to be useful and meaningful quantities.
			For instance, we may require any or all of the following.
			\begin{enumerate}
				\item Every resource can be obtained from one in $A$ for free.
				\item Every resource can be converted to one in $A$ for free.
				\item $A$ is totally ordered.
				\item The elements of $A$ have operational significance.
			\end{enumerate}
			One may think of the reference resources as providing a ``gold standard''.
			They form a basis for evaluating other resources in relation to them and they may have properties that make this evaluation useful.
			However, their choice is not necessarily indicative of any fundamental status in the resource theory.
			In our treatment of cost and yield measures, we keep the gold standard completely arbitrary.
			
			Let us first analyze the yield construction and use \cref{ex:toy_cost_yield} to make the discussion more concrete.
			We start by identifying the reference set $A \subseteq X$ and a particular assignment $f \colon A \to \reals$ of values to its elements.
			In \cref{ex:toy_cost_yield}, since $A$ is a totally ordered set, there is an essentially unique choice of $f$ that is ``most informative'' (as defined in \cref{sec:monotone_comparison}), which embeds the poset $\{0,1,2\}$ into $\reals$ via $x \mapsto x$.\footnotemark{}
			\footnotetext{Since we are only studying the properties of resources as elements of the preordered set $(X, \freeconv)$, any other order-embedding of $\{0,1,2\}$ into $\reals$ is just as valid.
			Any two such embeddings can be related by an order-isomorphism of $\ordreals$.}%
			If one demands resource monotones to be real-valued (as per \cref{def:res_monotone}) and $A$ is not a totally ordered set, there is a choice of $f$ to be made at this stage, which is largely arbitrary.
			Otherwise, one could also use elements of $A$ as the values of a ``resource isotone''.
			
			Once a choice of $f \colon A \to \reals$ is fixed, we find the $\yield{f}{}$ of $x$ by intersecting the free image (i.e.\ downward closure) of $x$
			\begin{equation}
				\down(x) = T_{\rm free} \apply x \in \mathcal{DC}(X)
			\end{equation}
			with $A$ and finding the maximum value of $f$ within this intersection.
			In \cref{ex:toy_cost_yield}, the free image of $1'$ is $\{0,1'\}$, its intersection with $A$ is $\{0\}$ and the maximal value of $f$ is $f(0) = 0$.
			The mediating isotone in this case is the free image map
			\begin{equation}
				\down \colon (X, \freeconv) \to \bigl( \mathcal{DC}(X), \supseteq \bigr),
			\end{equation}
			which is order-preserving by \cref{lem:down_isotone}.
			The root monotone is the function that maximizes $f$:
			\begin{align}
				\bigl(\mathcal{DC}(X), \supseteq \bigr) &\to \ordreals \\
				Y &\mapsto \sup f_*(Y)
			\end{align}
			which is order-preserving by \cref{lem:function_extensions}.
			In summary, we obtain a resource monotone $\yield{f}{} = \sup \circ f_* \circ \down$ which can be also expressed as
			\begin{equation}
				\yield{f}{} (x) = \sup \Set*[\big]{ f(y)  \given  y \in T_{\rm free} \apply x } .
			\end{equation}
			
			On the other hand we can find the $\cost{f}{}$ of $x$ by intersecting the \textbf{free preimage} of $x$
			\begin{equation}\label{eq:free_preimage}
				\up(x) \in \mathcal{UC}(X) 
			\end{equation}
			with $A$ and finding the minimum value of $f$ within this intersection.
			In \cref{ex:toy_cost_yield}, the free preimage of $1'$ is $\{1',2\}$, its intersection with $A$ is $\{2\}$ and the smallest value of $f$ is $f(2) = 2$.
			As our notation in \eqref{eq:free_preimage} suggests, the free preimage of $x$ is the upward closure thereof in the preordered set $(X, \freeconv)$ as introduced in \cref{def:order_closure}.
			It consists of all the individual resources that can be freely converted to $x$:
			\begin{equation}
				\up (x) = \Set{ y \in X  \given  x \in T_{\rm free} \apply y}
			\end{equation}
			and constitutes the mediating isotone for the cost construction.
			The root monotone is the function that minimizes $f$:
			\begin{align}
				\bigl( \mathcal{UC}(X), \subseteq \bigr) &\to \ordreals \\
				Y &\mapsto \inf f_*(Y)
			\end{align}
			which is order-preserving by \cref{lem:function_extensions} again.
			Putting these together gives a monotone $\cost{f}{} = \inf \circ f_* \circ \up$, which can be also expressed as
			\begin{equation}
				\cost{f}{} (x) = \inf \Set*[\big]{ f(y)  \given  x \in T_{\rm free} \apply y } .
			\end{equation}
			
			Both cost and yield constructions are profuse in the literature on resource theories and related studies.
			Here, we list a couple of examples.
			\begin{example}[monotones from dimension functions]\label{ex:dim_cost}
				The above constructions can be useful even when the reference set $A$ consists of all resources.
				In such a case it allows one to turn a candidate measure into an actual resource monotone.
				For instance, if every resource corresponds to an element of a vector space, we can set $f$ to be the dimension of the associated vector space and obtain the dimension cost and yield respectively.
				Evaluating the dimension cost can often be simplified, such as in the resource theory of \mbox{nonuniformity \cite{Gour2015}}, which possess a top element of the resource ordering for each fixed dimension in the form of deterministic states.
				Finding dimension cost then reduces to the question of which of the deterministic states has the smallest dimension while being able to produce the desired resource for free.
			\end{example}
			In \cref{ex:dim_cost}, the root valuation function $f$ is not itself a monotone.
			Indeed, if $A$ contains all resources and $f$ is a monotone, the cost and yield constructions provide no new information about the resource ordering, as \cref{prop:Extensions of monotones are the same on the original domain} below shows.
			However, for general reference sets $A$ we often do want $f$ to be a monotone.
			This is because the more information $f$ carries about the ordering of resources in $A$, the more informative the resulting cost and yield measures are (cf.\ \cref{thm:yield and cost from more informative functions}).
	
			\begin{proposition}
				\label{prop:Extensions of monotones are the same on the original domain}
				Let $f \colon A \to \reals$ be a monotone. 
				Then for all $a \in A$, we have
				\begin{equation}
					\yield{f}{}(a) = f(a) = \cost{f}{}(a).
				\end{equation}
			\end{proposition}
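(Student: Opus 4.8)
The plan is to prove the two equalities separately, each via a pair of opposite inequalities, using only reflexivity of the resource ordering and the monotonicity of $f$. Recall that the constructions above give, for $a \in A$,
\[
	\yield{f}{}(a) = \sup \Set{ f(y) \given y \in \down(a) \cap A }, \qquad \cost{f}{}(a) = \inf \Set{ f(y) \given y \in \up(a) \cap A },
\]
where $\down(a) = T_{\rm free} \apply a$ is the free image of $a$ and $\up(a)$ is its free preimage. Both sets are governed by $\freeconv$, and everything reduces to the placement of $a$ itself inside these intersections.

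For the yield, I would first establish $\yield{f}{}(a) \geq f(a)$. Since $\freeconv$ is a preorder and hence reflexive (see \cref{lem:access_is_sensible}), we have $a \freeconv a$, i.e.\ $a \in \down(a)$; as $a \in A$ by hypothesis, the element $a$ lies in $\down(a) \cap A$, so $f(a)$ is among the values whose supremum defines $\yield{f}{}(a)$. For the reverse inequality, I would check that $f(a)$ is an upper bound of that set: any $y \in \down(a) \cap A$ satisfies $a \freeconv y$, and because $f$ is a monotone on $A$ it is order-preserving, so $f(a) \geq f(y)$. Combining the two bounds yields $\yield{f}{}(a) = f(a)$.

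The cost equality is entirely dual. Reflexivity again places $a$ in $\up(a) \cap A$, giving $\cost{f}{}(a) \leq f(a)$; and for any $y \in \up(a) \cap A$ we have $y \freeconv a$, so monotonicity of $f$ gives $f(y) \geq f(a)$, making $f(a)$ a lower bound and hence $\cost{f}{}(a) \geq f(a)$. Thus $\cost{f}{}(a) = f(a)$ as well.

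There is no substantial obstacle here; the argument is routine once the optimization sets are written out explicitly. The only point deserving care is that the supremum and infimum are genuinely attained at $a$ rather than approached as some boundary value, and this is guaranteed precisely because reflexivity forces $a$ into the intersection with $A$, so each set is nonempty and already contains $f(a)$. One should also keep in mind that ``monotone'' here means order-preserving with respect to the restriction of $\freeconv$ to $A$, which is exactly the hypothesis the cost and yield constructions rely on.
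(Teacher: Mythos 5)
Your proof is correct and follows essentially the same route as the paper's: reflexivity of $\freeconv$ places $a$ in $\down(a) \cap A$ and $\up(a) \cap A$, giving $\yield{f}{}(a) \geq f(a) \geq \cost{f}{}(a)$, while monotonicity of $f$ on $A$ gives the reverse bounds by taking the supremum over $\down(a) \cap A$ and the infimum over $\up(a) \cap A$. If anything, your write-up is cleaner than the printed proof, which contains typographical slips swapping the roles of $\cost{f}{}$ and $\yield{f}{}$ in the intermediate inequalities.
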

			\begin{proof}
				Since every $a$ is an element of both $\down (a)$ and $\up(a)$, we have
				\begin{equation}
					\cost{f}{}(a) \geq f_W(a) \geq \yield{f}{} (a)
				\end{equation}
				whenever $a$ is in $A$. 
				On the other hand, $f$ being a monotone on its domain $A$ implies that for all $x,y \in A$ satisfying $y \preceq a \preceq x$, we have $f(y) \leq f(a) \leq f(x)$. 
				Performing a supremum of the left inequality over all $y \in A \cap \down(a)$ yields $\cost{f}{}(a) \leq f(a)$, while taking the infimum of the right inequality over all $x \in A \cap \up(a)$ gives $f(a) \leq \yield{f}{} (a)$, so that the result follows.
			\end{proof}
			
			\begin{example}[yield and cost with respect to a chain]\label{ex:currencies}
				The ``currencies'' described in~\cite{Kraemer2016} are yield and cost monotones, wherein $A$ is totally ordered.
				A concrete example of this kind---from entanglement theory---is the cost of an entangled state measured in the number of e-bits (i.e.\ maximally entangled 2-qubit states) needed to produce it.
				It is called the single-shot entanglement cost.  
				In that case, $A$ is the set of $n$-fold tensor products of e-bits for different values of $n$ and $f$ just returns the integer $n$.
				Another example---from the classical resource theory of nonuniformity~\cite{Gour2015}---is the single-shot nonuniformity yield\footnotemark{} of a probability distribution, where $A$ is the set of sharp states, and $f$ is the Shannon nonuniformity.
				\footnotetext{Both of the examples presented here also have their dual counterparts of course.
				They are called the single-shot entanglement yield and single-shot nonuniformity cost respectively.}%
			\end{example}
			\begin{example}[axiomatic definitions of thermodynamic entropy]
				In the axiomatic approach to thermodynamics \cite{Lieb1999}, Lieb and Yngvason define the canonical entropy $S$---an essentially unique monotone among equilibrium states---as a currency.
				Moreover, central to the study of non-equilibrium states in this context are the monotones $S_{-}$ and $S_{+}$, defined in \cite{Lieb2013} as the $\yield{S}{}$ and $\cost{S}{}$, where the reference set $A$ consists of equilibrium states.
				In \cite{Weilenmann2016}, this approach to thermodynamics was directly related to the manifestly resource-theoretic approach \cite{Janzing2000,Brandao2013}, and it was shown that $S$, $S_{-}$, and $S_{+}$ correspond to versions of the Helmholtz free energy introduced in \cite{Horodecki2013}.
			\end{example}
			\begin{example}[entanglement rank as cost]
				In the resource theory of bipartite entanglement, entanglement rank \cite{Terhal2000} is a monotone that can be expressed as a cost for the Schmidt rank of pure states as explained in detail in \cite{gour2020optimal}.
			\end{example}
			\begin{example}[convex roof extension as a cost costruction]
				Similarly, the entanglement of formation \cite{Bennett1996} can also be seen as an instance of the generalized cost construction if we let $A$ be the set of all pure state ensembles and we let $f$ be the convex extension of the entanglement entropy of bipartite pure states to ensembles.
				That is, for an ensemble (or, equivalently, a hybrid quantum-classical state) of pure quantum states $\rho_j$ with probabilities $\lambda_j$, we define its extension to $A$ by
				\begin{equation}
					f_{\rm ens} \bigl( \{ \lambda_j, \rho_j \} \bigr) \coloneqq \sum_j \lambda_j f(\rho_j)
 				\end{equation}
 				where $f(\rho_j)$ is the standard entanglement entropy of $\rho_j$.
				More generally, any monotone obtained by the convex roof extension method \cite{Uhlmann1997,Vidal2000,Uhlmann2010} arises as a cost monotone in such a way.
 				One can recognize that the property of strong monotonicity, which is necessary for the convex roof construction to be applicable to a pure state monotone, corresponds to the monotonicity of $f_{\rm ens}$.
			\end{example}
			\begin{example}[yield and cost for nonclassical correlations]\label{ex:quantum_correlations}
				One of the resource theories where cost and yield monotones as presented here have been used explicitly is the resource theory of nonclassicality of common-cause boxes \cite{Wolfe2019}.
				The resources are bipartite classical channels (also known as ``boxes'' in this context) represented by a conditional probability distribution $P(X,Y | S,T)$ and often depicted as a process
				\begin{equation}
					\tikzfig{common_cause_box}
				\end{equation}
				and thought of as a stochastic map $S \otimes T \to X \otimes Y$.
				$X$ and $Y$ represent the spaces of outcomes for the two parties, while $S$ and $T$ are the respective settings.
				Moreover, the boxes are required to be non-signalling, so that they can be conceivably interpreted as describing a common-cause relationship between the two parties (with no direct causal influence $S \to Y$ or $T \to X$ mediating their interaction).
				This is the set-up relevant for experiments demonstrating violations of Bell inequalities.
				The free boxes are those that can be explained with a common cause mediated by a classical variable.
				The resource theory then allows us to compare nonclassical behaviours by means of their resourcefulness relative to free operations (see \cite{Wolfe2019} for more details) that respect the causal structure of the Bell scenario
				\begin{equation}\label{eq:Bell_DAG}
					\tikzfig{Bell_DAG}
				\end{equation}
				and only use classical variables $\Lambda$ as common causes.
				
				The maximal amount by which a given box violates a Bell inequality is a resource monotone, as it arises via the yield construction. 
				For instance, one may use the violation of the famous CHSH inequality.
				This gives a measure, which can be alternatively expressed also as a weight or robustness monotone as well \cite[corollary 18]{Wolfe2019}.
				While Bell inequalities delineate the set of free resources precisely, they are insufficient for the characterization of the resource ordering via such yield constructions.
				However, as shown in \cite[section 7.1]{Wolfe2019}, even a single additional monotone can help uncover a range of properties of the resource ordering inaccessible by considering violations of Bell inequalities alone.
				The specific one used therein uses the cost construction.
				Specifically, the gold standard resources ($A$) are given by the chain of boxes interpolating between a PR box at the top of the order and a classical box that is a noisy version thereof. 
				Notably, both of the monotones from \cite{Wolfe2019}, yield relative to CHSH inequality violations and cost with respect to noisy PR boxes, have a closed-form expression as shown in section 6.3 there.
				They may thus aid in discovering explicit formulas for yield and cost monotones in a broader context.
			\end{example}
			\begin{example}[changing the type of resources being evaluated]
				\label{ex:Changing type}
				The general constructions of $\yield{f}{}$ and the $\cost{f}{}$ allow one to extend monotones defined for a particular type of resources to other types.
				A particularly useful example of such a translation is the extension of monotones for states to monotones for higher-order processes within the same resource theory; such as channels, measurements, or combs.
				For instance, in entanglement theory, one can define a monotone for channels from a monotone for states, such as the cost of implementing a given channel (measured in the terms of the number of e-bits used). 
				Let us elaborate on this procedure in a general universally combinable resource theory obtained from a partitioned process theory \cite{Coecke2016} as in \cref{ex:quantale_for_UCRT2}.
				Let $f \colon A \to \reals$ be a function whose domain $A$ is the set of all states in the process theory.
				Then we can express $\yield{f}{}$ for a particular channel $\phi$ as
				\begin{equation}\label{eq:state_to_channel1}
					\tikzfig{state_to_channel1}
				\end{equation}
				The argument of $f$ in the optimization on the right-hand side is the most general state which can be obtained from $\phi$ for free.
				It consists of a correlated pre- and post-processing by free transformations $\rho$ and $\psi$ with the constraint that $\rho$ has a trivial input, i.e.\ it is a state.
				If furthermore $f$ is a monotone on its domain so that it satisfies $f(\psi \circ \tau) \leq f(\tau)$ for any state $\tau$ and any free channel $\psi$, we can simplify $\yield{f}{}$ to
				\begin{equation}\label{eq:state_to_channel2}
					\tikzfig{state_to_channel2}
				\end{equation}
				Likewise, we can express the $\cost{f}{}$ as
				\begin{equation}\label{eq:state_to_channel3}
					\tikzfig{state_to_channel3}
				\end{equation}
				These kinds of constructions have appeared in the works on resource theories of quantum channels \cite{Liu2019,liu2020operational,gour2019quantify}, where the proposed monotones for channels are defined via channel divergences (see example \ref{ex:cost for pairs}).
				However, as theorem 2 of \cite{gour2019quantify} shows, they can be also equivalently seen as originating from monotones for states (defined via state divergences).
			\end{example}
			\begin{example}[yield applied to pairs of resources]
				\label{ex:cost for pairs}
				Generalized channel divergences \cite{Leditzky2018} arise from the generalized yield construction when thinking about the resource theory of pairs of resources, i.e.\ a resource theory of distinguishability \cite{Wang2019}.
				More details on pairs (and other tuples) of resources and in what way they constitute a resource theory can be found in \cref{sec:Encodings}.
			\end{example}
	
		\subsection{Generalized Cost and Yield}\label{sec:general_cost_yield}

			Besides varying the root monotone in the \ref{broad scheme}, one can also vary the mediating isotone.
			In particular, $\down$ and $\up$ are not the only isotones mapping $\ordresx$ to $(\mathcal{DC}(X),\supseteq)$ and $(\mathcal{UC}(X),\subseteq)$ respectively.
			
			\begin{definition}\label{def:image_map}
				Let $U \in \mathcal{P}(T)$ be a set of individual transformations.
				The \textbf{$\bm{U}$-image map} is a suplattice homomorphism $\downwrt{U} \colon \mathcal{P}(X) \to \mathcal{P}(X)$ given by any of the following three identical expressions:
				\begin{equation}
					\downwrt{U}(Y) \coloneqq \bigcup_{y \in Y} U \apply \{y\} = \bigcup_{Z \subseteq Y} U \apply Z = U \apply Y.
				\end{equation}
				On the other hand, the \textbf{$\bm{U}$-preimage map} is a map $\upwrt{U} \colon \mathcal{P}(X) \to \mathcal{P}(X)$ given for each atom $y \in X$ by 
				\begin{equation}
					\upwrt{U}(y) \coloneqq \Set*[\big]{z \in X  \given  y \in U \apply z},
				\end{equation}
				and then extended to sets of individual resources by requiring that it gives a suplattice homomorphism, i.e.
				\begin{equation}
					\upwrt{U}(Y) \coloneqq \bigcup_{y \in Y} \upwrt{U}(y) = \Set*[\big]{z \in X  \given  \downwrt{U}(z) \cap Y \neq \emptyset }.
				\end{equation}
			\end{definition}
			With this notation, we can see that $\down$ is just $\downwrt{T_{\rm free}}$ while $\up$ coincides with $\upwrt{T_{\rm free}}$.	
			Unlike for $\down$ and $\up$, there is not neccesarily a preorder on $X$ for which the maps $\downwrt{U}$ and $\upwrt{U}$ are the downward and upward closure operations respectively.
			In order for that to be the case, $Y \mapsto U \apply Y$ would have to be both reflexive and transitive (recall \cref{def:idmptnt}) element of the quantale of all suplattice endomorphisms in which quantale multiplication is given by sequential composition.
			\begin{lemma}\label{lem:composing image maps}
				Let $U,V$ be sets of individual transformations in a concrete resource theory.
				The respective image and preimage maps satisfy
				\begin{align}
					\downwrt{U} \circ \downwrt{V} &= \downwrt{U \star V}  & \upwrt{U} \circ \upwrt{V} = \upwrt{V \star U}.
				\end{align}
			\end{lemma}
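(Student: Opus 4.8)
The plan is to treat the two identities separately, since the image-map identity is essentially immediate from the module axioms, while the preimage-map identity requires unpacking the definitions and carefully tracking the reversal of order. For the first identity, I would simply compute, for any $Y \in \mathcal{P}(X)$,
\begin{equation}
	\downwrt{U}\bigl(\downwrt{V}(Y)\bigr) = U \apply (V \apply Y) = (U \star V) \apply Y = \downwrt{U \star V}(Y),
\end{equation}
where the middle equality is the composition axiom \eqref{eq:qmodule_comp_action} for the quantale module $(\mathcal{P}(T),\mathcal{P}(X),\apply)$, and the outer equalities are just the definition of the image map in \cref{def:image_map}. This disposes of the left equation with no further work.

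For the second identity, I would use that both $\upwrt{U} \circ \upwrt{V}$ and $\upwrt{V \star U}$ are suplattice homomorphisms (the former as a composite of suplattice homomorphisms, the latter by construction), so that it suffices to verify they agree on atoms. Fixing an atom $y \in X$ and unfolding \cref{def:image_map}, an atom $z$ lies in $(\upwrt{U} \circ \upwrt{V})(y)$ precisely when there is some $w \in \upwrt{V}(y)$ with $z \in \upwrt{U}(w)$, i.e.\ precisely when
\begin{equation}
	\exists \, w \in X \; : \; y \in V \apply w \quad \text{and} \quad w \in U \apply z.
\end{equation}
On the other side, $z \in \upwrt{V \star U}(y)$ means $y \in (V \star U) \apply z$, which by \eqref{eq:qmodule_comp_action} equals $V \apply (U \apply z)$.

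The crux is then to show that the membership $y \in V \apply (U \apply z)$ is equivalent to the displayed existential condition. Here I would expand $U \apply z$ as the union of its atoms and invoke the fact that the action distributes over suprema \eqref{eq:qmodule_res_disj}:
\begin{equation}
	V \apply (U \apply z) = V \apply \Bigl( \bigcup_{w \in U \apply z} \{w\} \Bigr) = \bigcup_{w \in U \apply z} V \apply w.
\end{equation}
Thus $y \in V \apply (U \apply z)$ holds if and only if there exists $w \in U \apply z$ with $y \in V \apply w$, which is exactly the existential condition above. Since the two membership conditions coincide for every atom $z$, the maps agree on each atom $y$, and by their being suplattice homomorphisms they agree on all of $\mathcal{P}(X)$.

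The only mildly delicate point, and hence the step I would write most carefully, is the order reversal $U \star V$ versus $V \star U$: it arises because the preimage of a composite must be read in the opposite order, and it is precisely the distributivity step \eqref{eq:qmodule_res_disj} combined with \eqref{eq:qmodule_comp_action} applied to $V \star U$ (rather than $U \star V$) that yields the correct factor ordering on the right-hand side. Everything else is routine bookkeeping with the quantale module axioms.
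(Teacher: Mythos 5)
Your proof is correct and follows essentially the same route as the paper's: the image-map identity is read off directly from the module axiom \eqref{eq:qmodule_comp_action}, and the preimage-map identity is established by unfolding \cref{def:image_map} and chasing elements through $(V \star U) \apply z = V \apply (U \apply z)$. The only cosmetic difference is that you reduce to atoms via the suplattice-homomorphism property and prove a biconditional there, whereas the paper proves the two inclusions directly for an arbitrary $Y \in \mathcal{P}(X)$; the underlying bookkeeping is identical.
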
	
			\begin{proof}
				The first equality follows directly from the definition of image maps.
				We can prove the second one as follows.
				Let $Y$ be an arbitrary set of resources.
				If $z$ is an element of $\upwrt{U} \circ \upwrt{V} (Y)$, then there must exist an $x \in \upwrt{V} (Y)$ such that $x \in U \apply z$ holds. 
				Additionally, there is also a corresponding $y \in Y$ satisfying $y \in V \apply x$.
				Therefore, we have $y \in V \apply (U \apply z)$ and thus $z \in \upwrt{V \star U} (Y)$, so that 
				\begin{equation}
					\upwrt{U} \circ \upwrt{V} (Y) \subseteq \upwrt{V \star U} (Y)
				\end{equation}
				holds.
				
				On the other hand, if $z$ is an element of $\upwrt{V \star U} (Y)$, then there exists a $y \in Y$ which is an element of $(V \star U) \apply z$.
				Thus, there is an $x \in U \apply z$ which is also in $\upwrt{V}(y)$, which in turn implies $z \in \upwrt{U} \circ \upwrt{V} (Y)$ and
				\begin{equation}
					\upwrt{U} \circ \upwrt{V} (Y) \supseteq \upwrt{V \star U} (Y).
				\end{equation}
				Consequently $\upwrt{U} \circ \upwrt{V} (Y)$ is equal to $\upwrt{V \star U} (Y)$ for all $Y \in \mathcal{P}(X)$, which completes the proof.
			\end{proof}
			
			\Cref{lem:composing image maps} shows that if $S \in \mathcal{P}(T)$ is left-invariant (see \cref{def:invariant_trans}), then $\downwrt{S}(Y)$ is downward closed for all $Y$.
			On the other hand, if $D$ is right-invariant, then $\upwrt{D}(Y)$ is upward closed for all $Y$.
			
			\begin{example}\label{ex:invariant_trans}
				Consider an extension of \cref{ex:image_incompatible} in which there are two copies of every individual resource, one primed and one unprimed.
				The free transformations are the same, but one cannot turn primed resources into unprimed ones or vice versa for free.
				Also let $u$ be the transformation with the following transitions (again, suppressing identities on the side of free transformations)
				\begin{equation}\label{eq:toy_3x2_levels}
					\tikzfig{toy_3x2_levels}
				\end{equation}
				where the fact that $0$ and $0'$ have no outgoing transitions on the right indicates that both $u \apply 0$ and $u \apply 0'$ are the empty set in $\mathcal{P}(X)$.
				The left (and right) invariant sets defined by left (and right) augmentation of $u$ act as
				\begin{equation}\label{eq:toy_3x2_levels_2}
					\tikzfig{toy_3x2_levels_2}
				\end{equation}
				If we denote $T_{\rm free} \star u$ by $S$ and $u \star T_{\rm free}$ by $D$, then the diagrams in \eqref{eq:toy_3x2_levels_2} give us an easy way to read off the $S$-preimage and $D$-image maps which will be useful for getting intuition about generalized cost and yield.
			\end{example}
			
			We now present sufficient conditions for $\downwrt{U}$ and $\upwrt{U}$ to be order-preserving.
			\begin{lemma}\label{lem:U-image is order-preserving}
				Given a right-invariant $D \in \mathcal{P}(T)$, the $D$-image map is an isotone of type $\downwrt{D} \colon (\mathcal{P}(X),\succeq_{\rm enh}) \to (\mathcal{P}(X),\supseteq)$.
			\end{lemma}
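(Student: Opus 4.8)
The plan is to reduce the enhancement hypothesis to a containment of free images and then let the right-invariance of $D$ absorb the ambient $T_{\rm free}$. First I would invoke \cref{lem:enh_down}, which identifies $Y \enhconv Z$ with $\down(Y) \supseteq \down(Z)$. Since $\down$ is just $\downwrt{T_{\rm free}}$, and the $T_{\rm free}$-image map is the action $T_{\rm free} \apply \ph$ (as recorded in \eqref{eq:down_def} together with the distributivity \eqref{eq:qmodule_res_disj}), this antecedent becomes simply
\begin{equation}
	T_{\rm free} \apply Y \supseteq T_{\rm free} \apply Z .
\end{equation}

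Next I would rewrite the $D$-image so that this containment becomes visible to it. By \cref{def:invariant_trans}, $D$ being right-invariant means $D \star T_{\rm free} = D$, so the module axiom \eqref{eq:qmodule_comp_action} gives $\downwrt{D}(Y) = D \apply Y = (D \star T_{\rm free}) \apply Y = D \apply (T_{\rm free} \apply Y)$, and likewise for $Z$. Because $\apply$ preserves the lattice ordering in its second argument (as noted right after \cref{def:quantale_module}), applying $D \apply \ph$ to the displayed inclusion yields $D \apply (T_{\rm free} \apply Y) \supseteq D \apply (T_{\rm free} \apply Z)$, that is $\downwrt{D}(Y) \supseteq \downwrt{D}(Z)$, which is exactly the claimed isotonicity.

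This is a short manipulation once the bookkeeping is in place, so I do not anticipate a genuine obstacle; the empty-set case is absorbed automatically, since $\down(\emptyset) = \emptyset$ and thus \cref{lem:enh_down} covers it uniformly. The one conceptual point worth emphasizing is the role of right-invariance: it is precisely what allows $D \apply \ph$ to swallow a leading $T_{\rm free}$, so that the enhancement preorder---which by \cref{lem:enh_down} is governed entirely by $T_{\rm free}$-images---is respected by $\downwrt{D}$. A more pedestrian alternative, which I would keep in reserve, is to argue pointwise on an enhancement $\mathsf{enh} \colon Z \to Y$, using $\mathsf{enh}(z) \freeconv z$ together with right-invariance to obtain $D \apply z \subseteq D \apply \mathsf{enh}(z) \subseteq D \apply Y$ and then taking the union over $z \in Z$.
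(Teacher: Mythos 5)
Your proof is correct and follows essentially the same route as the paper's: both reduce $Y \succeq_{\rm enh} Z$ to $\down(Y) \supseteq \down(Z)$ via \cref{lem:enh_down}, use monotonicity of $D \apply \ph$, and let right-invariance absorb the $T_{\rm free}$ through $\downwrt{D} \circ \down = \downwrt{D \star T_{\rm free}} = \downwrt{D}$. The only cosmetic difference is that the paper cites \cref{lem:composing image maps} where you invoke the module axiom \eqref{eq:qmodule_comp_action} directly, and the paper applies $\downwrt{D}$ to the inclusion and then contracts, whereas you expand $\downwrt{D}$ first and then apply monotonicity---the same manipulation in a different order.
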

			\begin{proof}
				By \cref{lem:enh_down}, we have $Y \succeq_{\rm enh} Z  \iff  \down(Y) \supseteq \down(Z)$.
				Since $\downwrt{D}$ is a suplatice homomorphism by definition, it preserves the underlying lattice order \mbox{relation $\supseteq$} (see \eqref{eq:hom_pres_ord} for an explicit argument).
				As a consquence, relation $\down(Y) \supseteq \down(Z)$ implies $\downwrt{D} \circ \down (Y) \supseteq \downwrt{D} \circ \down (Z)$.
				By \cref{lem:composing image maps} and the assumption of right invariance of $D$, this is then equivalent to $\downwrt{D} (Y) \supseteq \downwrt{D} (Z)$, which is what we wanted to show.
			\end{proof}
			
			Note, however, that the set $\downwrt{D} (Y)$ is in general not downward closed and thus $\downwrt{D}$ need not be an isotone of type $(\mathcal{P}(X),\succeq_{\rm enh}) \to (\mathcal{DC}(X),\supseteq)$.
			Indeed, as we mention above, this is the case only if $D$ is \emph{both} left and right-invariant.
			
			\begin{lemma}\label{lem:V-preimage is order-preserving}
				Given a left-invariant $S \in \mathcal{P}(T)$, the $S$-preimage map is an isotone of type $\upwrt{S} \colon (\mathcal{P}(X),\succeq_{\rm deg}) \to (\mathcal{P}(X),\subseteq)$.
			\end{lemma}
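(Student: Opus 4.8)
The plan is to dualize the proof of \cref{lem:U-image is order-preserving} essentially verbatim, swapping downward for upward closures, the enhancement preorder for the degradation preorder, and right invariance for left invariance. Concretely, I would start from the characterization of the degradation preorder supplied by \cref{lem:deg_up}, namely $Y \degconv Z \iff \up(Y) \subseteq \up(Z)$. This converts the hypothesis $Y \degconv Z$ into a statement about the $\subseteq$-ordering of upward closures, which is the form the argument can act on.

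Next I would observe that $\upwrt{S}$, being a suplattice homomorphism by \cref{def:image_map}, is order-preserving: the derivation in \eqref{eq:hom_pres_ord} shows that any suplattice homomorphism respects the underlying inclusion, and order preservation is insensitive to whether we read it via $\subseteq$ or $\supseteq$. Applying $\upwrt{S}$ to the relation $\up(Y) \subseteq \up(Z)$ therefore yields $\upwrt{S} \circ \up (Y) \subseteq \upwrt{S} \circ \up (Z)$.

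The final step collapses the composite $\upwrt{S} \circ \up$ back to $\upwrt{S}$. Since $\up = \upwrt{T_{\rm free}}$, \cref{lem:composing image maps} gives $\upwrt{S} \circ \upwrt{T_{\rm free}} = \upwrt{T_{\rm free} \star S}$, and left invariance of $S$ (i.e.\ $T_{\rm free} \star S = S$) reduces this to $\upwrt{S}$. Chaining the three steps delivers $\upwrt{S}(Y) \subseteq \upwrt{S}(Z)$, which is exactly the asserted isotonicity of type $(\mathcal{P}(X),\degconv) \to (\mathcal{P}(X),\subseteq)$.

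The one place demanding care—the direct counterpart of the subtlety in the image-map lemma—is the composition law for preimage maps in \cref{lem:composing image maps}, which reverses the order of $\star$, giving $\upwrt{U} \circ \upwrt{V} = \upwrt{V \star U}$. It is precisely this reversal that makes \emph{left} invariance, rather than right invariance, the hypothesis that lets $\upwrt{S} \circ \upwrt{T_{\rm free}}$ simplify to $\upwrt{S}$. Getting this bookkeeping backwards would wrongly suggest right invariance here, so I would double-check the order of the two factors before concluding.
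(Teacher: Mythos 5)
Your proof is correct and is essentially the paper's own argument: the paper's proof simply states that the result is ``analogous to \cref{lem:U-image is order-preserving}'' (adding only a pictorial sketch for singletons), and your dualization---via \cref{lem:deg_up}, the fact that $\upwrt{S}$ is a suplattice homomorphism, and the collapse $\upwrt{S} \circ \upwrt{T_{\rm free}} = \upwrt{T_{\rm free} \star S} = \upwrt{S}$---is precisely that analogy carried out in detail. The subtlety you flag, namely the order reversal $\upwrt{U} \circ \upwrt{V} = \upwrt{V \star U}$ in \cref{lem:composing image maps}, is indeed exactly why \emph{left} invariance is the correct hypothesis here.
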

			\begin{proof}
				The proof is analogous to the one of \cref{lem:U-image is order-preserving}.
				Instead of repeating it, let us present a visual proof for the special case of singleton sets of resources $Y = \{y\}$ and $Z= \{z\}$:
				\begin{equation}
					\tikzfig{preim_proof}
				\end{equation}
				That is, if $y \freeconv z$ holds, then for any $x$ that can be converted to $y$ by a transformation in $S$, there is also a transition $x \mapsto z$ afforded by $S$. 
			\end{proof}
			
			As a consequence, we can use the $\downwrt{D}$ and $\upwrt{S}$ as mediating isotones, restricting their domain to $X$.
			Indeed, both $\freeconv_{\rm enh}$ and $\freeconv_{\rm deg}$ coincide with the resource ordering on singleton sets by definition.
			
			\begin{example}\label{ex:toy_cost_yield2}
				We can illustrate the resulting monotones on \cref{ex:invariant_trans}, choosing $A$ to be $\{0,1,2\}$ and $f$ to be the function that returns the name of the resource as in \cref{ex:toy_cost_yield}.
				The standard $\yield{f}{}$ and $\cost{f}{}$ cannot differentiate between the primed resources---they are all assigned the same value.
				However, replacing $\down$ with $\downwrt{D}$ as the mediating isotone leads to a new monotone $\yield{f}{D}$ that can separate them.
				For a given $x$, $\yield{f}{D}(x)$ is the largest value of $f$ among the resources (in $A$) that can be obtained from $x$ by applying $D$ to it.
				In particular, if $D$ is $u \star T_{\rm free}$, then from diagram \eqref{eq:toy_3x2_levels} we can read-off:
				\begin{align}
					\yield{f}{D}(2') &= 1  &  \yield{f}{D}(1') &= 0  &  \yield{f}{D}(0') &= - \infty.
				\end{align}
				Similarly, replacing $\up$ with $\upwrt{S}$ as the mediating isotone leads to a new monotone $\cost{f}{S}$.
				Specifically, $\cost{f}{S}(x)$ is the smallest value of $f$ among the resources (in $A$) that can be converted to $x$ by applying $S$ to them.
				In our toy example, setting $S = T_{\rm free} \star u$, we have
				\begin{align}
					\cost{f}{S}(2') &= \infty  &  \cost{f}{S}(1') &= 2  &  \cost{f}{S}(0') &= 1.
				\end{align}
			\end{example}
		
			\begin{theorem}[generalized yield]\label{thm:yield}
				Let $(\mathcal{P}(T),\mathcal{P}(X),\apply)$ be a concrete resource theory with free transformations $T_{\rm free}$. 
				Furthermore, let $f \colon X \to \reals$ be a partial function with domain $A$, and let $D$ be a right-invariant subset of $T$.
				The $f$-yield relative to the $D$-image map, $\yield{f}{D} \colon X \to \reals$, defined via
				\begin{equation}
					\yield{f}{D}(x) \coloneqq \sup f_* \bigl( \downwrt{D} (x) \bigr)
				\end{equation}
				is a resource monotone.
			\end{theorem}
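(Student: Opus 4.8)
The plan is to realize $\yield{f}{D}$ as a composite of order-preserving maps in exactly the shape of the \ref{broad scheme}, so that monotonicity follows by composing isotones. By \cref{def:res_monotone} it suffices to exhibit $\yield{f}{D}$ as a monotone of type $(X,\freeconv)\to\ordreals$. Since by \cref{lem:access=enh} the resource ordering on $\mathcal{P}(X)$ coincides with the enhancement preorder $\enhconv$ induced by the restriction of $\freeconv$ to atoms, it is enough to produce a monotone $(\mathcal{P}(X),\enhconv)\to\ordreals$ and then restrict it to singletons, on which $\enhconv$ reduces to $\freeconv$.

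The factorization I would use is
\[
    \yield{f}{D} = \sup \circ f_* \circ \downwrt{D},
\]
read off directly from the defining equation: $\downwrt{D}(x) = D \apply x$ collects the resources reachable from $x$ via $D$, the map $f_*$ takes the image under the partial function $f$, and $\sup$ selects the largest attained value. The mediating isotone is the $D$-image map
\[
    \downwrt{D} \colon (\mathcal{P}(X),\enhconv) \to (\mathcal{P}(X),\supseteq),
\]
which is order-preserving precisely because $D$ is assumed right-invariant; this is exactly \cref{lem:U-image is order-preserving}. The root monotone is $\sup \circ f_*$, and by \cref{lem:function_extensions}, applied to the partial function $f$ of domain $A$, the map $\sup \circ f_* \colon (\mathcal{P}(X),\supseteq) \to \ordreals$ is a monotone.

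Because the codomain ordering of $\downwrt{D}$ is the subset inclusion $\supseteq$, which is exactly the domain ordering of $\sup \circ f_*$, the two compose, and the composite $\sup \circ f_* \circ \downwrt{D} \colon (\mathcal{P}(X),\enhconv) \to \ordreals$ is a monotone. Restricting to singletons $\{x\}$, on which $\enhconv$ is just $\freeconv$, then gives that $\yield{f}{D} \colon (X,\freeconv) \to \ordreals$ is a resource monotone. There is no deep obstacle here: all the substance is absorbed into \cref{lem:U-image is order-preserving}, where the right-invariance hypothesis is consumed. The only care needed is bookkeeping of the three orderings (\,$\enhconv$ on the domain, $\supseteq$ in the middle, and the total order on $\ordreals$ at the end\,), the legitimacy of passing from sets back to individual resources via the singleton restriction, and the convention $\sup\emptyset = -\infty$, which handles the edge case $\downwrt{D}(x)\cap A = \emptyset$ correctly.
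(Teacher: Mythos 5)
Your proof is correct and is essentially the paper's own argument: the paper likewise proves the theorem by composing the two isotones $\downwrt{D} \colon (\mathcal{P}(X),\enhconv) \to (\mathcal{P}(X),\supseteq)$ from \cref{lem:U-image is order-preserving} and $\sup \circ f_* \colon (\mathcal{P}(X),\supseteq) \to \ordreals$ from \cref{lem:function_extensions}, with the restriction to singletons handled exactly as you do. Your additional remarks on the coincidence of $\enhconv$ with $\freeconv$ on singletons and on the convention $\sup\emptyset = \bot$ are just explicit bookkeeping of points the paper treats as immediate.
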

			\begin{proof}
				By \cref{lem:U-image is order-preserving,lem:function_extensions}, both $\downwrt{D}$ and $\sup \circ f_*$ are isotones and therefore $\yield{f}{D}$ is too.
			\end{proof}
			
			\begin{theorem}[generalized cost]\label{thm:cost}
				Given the same set-up as in \cref{thm:yield}, let $S$ be a left-invariant subset of $T$.
				The $f$-cost relative to the $S$-preimage map, $\cost{f}{S} \colon \mathcal{R} \to \reals$, defined via
				\begin{equation}
					\cost{f}{S}(x) \coloneqq \inf f_* \bigl( \upwrt{S} (x) \bigr)
				\end{equation}
				is a resource monotone.
			\end{theorem}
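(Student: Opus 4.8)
The plan is to recognize $\cost{f}{S}$ as an instance of the \ref{broad scheme} whose two ingredients are exactly the order-dual counterparts of those used in the proof of \cref{thm:yield}. Concretely, I would take the mediating preordered set to be $(\mathcal{P}(X), \subseteq)$, the mediating isotone to be the $S$-preimage map $\upwrt{S}$, and the root monotone to be $\inf \circ f_*$. Monotonicity of $\cost{f}{S}$ then follows by composing two isotones, just as in the yield case.

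First I would restrict attention to atoms. A resource monotone has domain $(X, \freeconv)$, so I view $\upwrt{S}$ as acting on singletons $\{x\}$. By \cref{lem:V-preimage is order-preserving}, the left-invariance of $S$ guarantees that $\upwrt{S} \colon (\mathcal{P}(X), \freeconv_{\rm deg}) \to (\mathcal{P}(X), \subseteq)$ is an isotone; and since $\freeconv_{\rm deg}$ coincides with the resource ordering $\freeconv$ on singleton sets, the restriction yields an isotone $(X, \freeconv) \to (\mathcal{P}(X), \subseteq)$. This is the step where left-invariance is essential: it is precisely the hypothesis that makes the $S$-preimage map order-preserving with respect to the degradation preorder.

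Next I would handle the root monotone. By \cref{lem:function_extensions}, the map $\inf \circ f_* \colon (\mathcal{P}(X), \subseteq) \to \ordreals$ is a monotone. Composing the two isotones gives
\[
	\cost{f}{S} = (\inf \circ f_*) \circ \upwrt{S} \colon (X, \freeconv) \to \ordreals,
\]
which is therefore a resource monotone.

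The main thing to get right is not any single calculation but the bookkeeping of ordering directions: the codomain of the mediating isotone is ordered by $\subseteq$ (rather than $\supseteq$, as in the yield construction), and it is matched by using $\inf$ in place of $\sup$ in the root monotone. This pairing of $\subseteq$ with $\inf$ is exactly what renders the composite order-preserving in the correct direction, and it is the dual of the $\supseteq$-with-$\sup$ pairing underlying \cref{thm:yield}. Because both constituent maps have already been shown to be isotones in the cited lemmas, no further estimates are needed and there is no genuine obstacle beyond verifying this compatibility.
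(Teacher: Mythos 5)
Your proposal is correct and follows essentially the same route as the paper: the paper's proof also composes the isotone $\upwrt{S}$ (via \cref{lem:V-preimage is order-preserving}, using left-invariance) with the root monotone $\inf \circ f_*$ (via \cref{lem:function_extensions}). Your additional bookkeeping—restricting to singletons and noting that $\freeconv_{\rm deg}$ agrees with $\freeconv$ on atoms—is exactly the remark the paper makes just before stating \cref{thm:yield,thm:cost}, so nothing is missing.
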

			\begin{proof}
				By \cref{lem:V-preimage is order-preserving,lem:function_extensions}, both $\upwrt{S}$ and $\inf \circ f_*$ are isotones and therefore $\cost{f}{S}$ is too.
			\end{proof}
			Unpacking the definitions, we can express the generalized yield and cost monotones as
			\begin{equation}
				\begin{split}
					\yield{f}{D} (x) &= \sup \Set*[\big]{ f(y) \given y \in D \apply x \text{ and } y \in A } \\
					\cost{f}{S} (x) &=\inf \Set*[\big]{ f(y) \given x \in S \apply y \text{ and } y \in A }.
				\end{split}
			\end{equation}
			Note that $\yield{f}{D}$ and $\cost{f}{S}$ are also order-preserving when we consider them as maps $(\mathcal{P}(X), \succeq_{\rm enh}) \to \ordreals$ and $(\mathcal{P}(X), \succeq_{\rm deg}) \to \ordreals$ respectively.
			This follows if we do not restrict the mediating isotones to the singletons.
			
			There are two basic reasons why one might want to use $\yield{f}{D}$ (or $\cost{f}{S}$) instead of $\yield{f}{}$ (or $\cost{f}{}$).
			On the one hand, an invariant set $D$ (or $S$) different from $T_{\rm free}$ can be easier to work with either algebraically or numerically when evaluating the monotone explicitly.
			This is a common practice in many resource theories in which $T_{\rm free}$ is not straightforward to work with.
			For example, LOCC operations in entanglement theory get replaced by separable operations, noisy operations in nonuniformity theory get replaced by unital operations, and thermal operations in athermality theory get replaced by Gibbs-preserving operations.
			
			On the other hand, $\yield{f}{D}$ and $\cost{f}{S}$ can give us new interesting monotones distinct from $\yield{f}{}$ and $\cost{f}{}$. 
			Here we give a simple example of how one could use these constructions for $D \neq T_{\rm free}$ in practice, similar in flavour to the toy \cref{ex:toy_cost_yield2}.
			\begin{example}
			\label{ex:Advantage of generalized yield}
				Consider a (universally combinable) resource theory defined from a partitioned process theory as in \cref{ex:quantale_for_UCRT2}, in which there are no free states.
				Such resource theories arise naturally when we consider multi-resource theories \cite{sparaciari2018multi,Sparaciari2018} such as the resource theory of work and heat \cite{Sparaciari2017}.
				Since a channel can only be converted to a state by applying it to a state as in the right-hand side of \cref{eq:state_to_channel1}, there is no way to convert a channel to a state for free in this case.
				Therefore, evaluating $\yield{f}{}$ for a function $f$ defined on states only leads to a trivial (i.e.\ constant) monotone for channels.
				We would not be able to use this construction to extend monotones for states to monotones for channels.
				One can instead use a right-invariant set $D$ that does include some states, in which case $\yield{f}{D}$ becomes a non-trivial monotone for channels in the resource theory.
				A choice of $D$ that is guaranteed to be invariant and include some states is the augmentation $D = R_{\rm free} \boxtimes U$ of a set of resources $U$ which includes states.
				The set $R_{\rm free} \boxtimes U$ can contain more states than those in $U$ of course.
				In particular, it contains any other state one could obtain from those in $U$ for free. 
			\end{example}
			Every set $R_{\rm free} \boxtimes U$ is invariant, but it need not be idempotent, in which case it is not a candidate for the set of free resources in a universally combinable resource theory. 
			Nevertheless, we can use it in generalized yield and cost constructions. 
			One way to interpret\footnotemark{} taking the images and preimages with respect to $R_{\rm free} \boxtimes U$ is as follows.
			\footnotetext{This interpretation is valid if the discarding operation is a free resource or else if $U$ contains the neutral set $1$.}%
			They specify what can be achieved by an agent who, in addition to having access to the free resources in unlimited supply, also has access to a single resource from $U$.
			Of course, if $R_{\rm free} \boxtimes U$ \emph{is} closed under $\boxtimes$, then we can think of it as describing access to both $R_{\rm free}$ and $U$ in unlimited supply. 
			The generalized cost and yield end up being standard cost and yield constructions in the universally combinable resource theory with free resources given by $D$. 
		
	\section{Translating Monotones}\label{sec:translating}
			
		\subsection{Cost and Yield as Translations}

			Let us change the mediating preordered sets in constructing monotones via the \ref{broad scheme}.
			In \cref{sec:cost_yield}, we looked at $\mathcal{DC}(X)$ and $\mathcal{UC}(X)$ as possible choices, and we made use of the fact that the ordering on each is defined in terms of subset inclusion in $\mathcal{P}(\mathcal{R})$. 
			In the present section, we investigate what can be said about the case when the mediating set corresponds to sets of resources in another resource theory $(\mathcal{P}(S), \mathcal{P}(W), \apply)$.
			That is, we consider mediating preordered sets $(\mathcal{P}(W), \freeconv_{\rm enh})$ and $(\mathcal{P}(W), \freeconv_{\rm deg})$.
			By \cref{lem:monotone_extensions}, any resource monotone $f \colon W \to \reals$ gives rise to two root monotones
			\begin{equation}
				\begin{split}
				\mathrm{sup} \circ f_* &\colon \bigl( \mathcal{P}(W), \enhgeq \bigr) \to \bigl(\reals , \geq \bigr) \\
				\mathrm{inf} \circ f_* &\colon \bigl( \mathcal{P}(W), \deggeq \bigr) \to \bigl(\reals , \geq \bigr).
				\end{split}
			\end{equation}
			Since the target monotones are going to be resource measures for the resource theory $(\mathcal{P}(T), \mathcal{P}(X), \apply)$, we can view this instance of the \ref{broad scheme} as giving prescriptions for translating monotones from one resource theory to another.
			
			First of all, let us look at a particularly simple case of the root and target resource theories being identical.
			One mediating isotone we can use is the $U$-image map whenever
			\begin{equation}\label{eq:oplax_natural}
				T_{\rm free} \star U \supseteq U \star T_{\rm free}
			\end{equation}
			holds, so that it is order-preserving by \cref{lem:Compatibility of order and combination}.
			This results in a target monotone given by the composition
			\begin{equation}
				\begin{tikzcd}
					(X, \freeconv) \ar[r, "\downwrt{U}"] 	& 	\bigl( \mathcal{P}(X), \freeconv_{\rm enh} \bigr) \ar[r, "\mathrm{sup} \, f_*"] 	& 	\ordreals
				\end{tikzcd}
			\end{equation}
			which is similar to $\yield{f}{U}$, but not quite the same.
			In particular, $U$ need not be right-invariant and $f$ has to be a monotone on its domain in this case.
			At the formal level, every target monotone of the above form can be expressed as a generalized yield monotone.
			However, in practice, using $\downwrt{U}$ for a set $U$ that is not right-invariant may be beneficial because it can result in reduction of the feasible set in the optimization.
			\begin{example}
				In \cref{ex:invariant_trans}, the set $U \coloneqq u \cup 1$ satisfies the desired relation \eqref{eq:oplax_natural}, where $1$ denotes the neutral set of transformations that leave all $x$ in their domain fixed.
				The monotone $\sup \circ f_* \circ \downwrt{U}$ is identical to $\yield{f}{D}$ with $D = u \star T_{\rm free}$, but since $U$ induces fewer non-trivial transitions, it may be preferred in certain contexts.
			\end{example}
			\begin{example}
				\label{ex:supplementation}
				In a universally combinable resource theory, condition \eqref{eq:oplax_natural} is automatically satisfied for any $U$ with equality of course.
				Thus, augmentation by an arbitrary set of resources $U \in \mathcal{P}(R)$ (see \cref{def:augment}) is a valid mediating isotone that allows one to translate monotones for the $U$-catalytic ordering defined in \eqref{eq:catalytic_order} to ones for the original resource ordering.
			\end{example}
			\begin{example}
				\label{ex:copy}
				Once again in the universally combinable case, we can define a map $\mathsf{Copy}_2 \colon (R, \freeconv) \to (\mathcal{P}(R), \succeq_{\rm enh})$ via
				\begin{equation}
					\mathsf{Copy}_2(r) \coloneqq r \boxtimes r.		
				\end{equation}
				Given that $s \in R_{\rm free} \boxtimes r$ implies $s  \boxtimes s \in (\mathcal{R}_{\rm free} \boxtimes r) \boxtimes (\mathcal{R}_{\rm free}  \boxtimes r) = \mathcal{R}_{\rm free} \boxtimes (r \boxtimes r)$, it follows that 
					$s \succeq r$ implies $s \boxtimes s \succeq r \boxtimes r$, and consequently $\mathsf{Copy}_2$ is an isotone.
				The same works for the map $\mathsf{Copy}_n \colon R \to \mathcal{P}(R)$ given by the $n$-fold universal combination. 
			\end{example}
			
			In order to obtain similar kinds of isotones for degradation ordering, we need a dual version of \cref{lem:Compatibility of order and combination}, which in effect generalizes \cref{lem:V-preimage is order-preserving}.
			\begin{lemma}\label{lem:preimage_isotone}
				Consider a concrete resource theory and a set of transformations $U$ that satisfies
				\begin{equation}\label{eq:lax_natural}
					T_{\rm free} \star U \subseteq U \star T_{\rm free}.
				\end{equation}
				Then we have
				\begin{equation}\label{eq:preimage_isotone}
					Y \degconv Z  \quad \implies \quad  \upwrt{U} (Y) \degconv \upwrt{U} (Z)
				\end{equation}
				or, in other words, the map $\upwrt{U} \colon (\mathcal{P}(X), \degconv) \to (\mathcal{P}(X), \degconv)$ is an isotone.
			\end{lemma}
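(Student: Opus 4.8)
The plan is to unfold the definition of $\degconv$ via degradations (\cref{def:deg,def:deg_ord}) and construct, from a given degradation $\mathsf{deg}\colon Y \to Z$ witnessing $Y \degconv Z$, an explicit degradation $\upwrt{U}(Y) \to \upwrt{U}(Z)$. I expect the cleanest route to mirror the dual argument already carried out in \cref{lem:Compatibility of order and combination}, using the hypothesis \eqref{eq:lax_natural} to move a free transformation ``past'' $U$ in the correct direction.

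First I would recall the characterization of $\upwrt{U}$ from \cref{def:image_map}: a resource $w$ lies in $\upwrt{U}(Y)$ precisely when $\downwrt{U}(w) \cap Y \neq \emptyset$, i.e.\ when there is some $y \in Y$ with $y \in U \apply w$. Given $w \in \upwrt{U}(Y)$, pick such a $y \in Y$ and then use the degradation $\mathsf{deg}$ to obtain $z \coloneqq \mathsf{deg}(y) \in Z$ with $y \freeconv z$, which by \cref{def:free_conv} means $z \in T_{\rm free} \apply y$. The goal is to produce some $w' \in \upwrt{U}(Z)$ with $w \freeconv w'$; the natural candidate is to take $w' = w$ itself and show directly that $w \in \upwrt{U}(Z)$, since $\freeconv$ is reflexive. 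Thus the crux reduces to establishing the set-level inclusion $\upwrt{U}(Y) \subseteq \upwrt{U}(Z)$ under the hypotheses, which immediately yields \eqref{eq:preimage_isotone} because a subset inclusion gives a degradation via the identity inclusion (the dual of the observation in \cref{lem:enh_ord_lat}).

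The key computation is then: from $y \in U \apply w$ and $z \in T_{\rm free} \apply y$ I would derive $z \in (T_{\rm free} \star U) \apply w$ using the composition law \eqref{eq:qmodule_comp_action}, namely $T_{\rm free} \apply (U \apply w) = (T_{\rm free} \star U) \apply w$. Applying the hypothesis \eqref{eq:lax_natural}, $T_{\rm free} \star U \subseteq U \star T_{\rm free}$, and monotonicity of $\apply$ in its first argument (which holds since $\apply$ preserves the lattice order, as noted after \cref{def:quantale_module}), I get $z \in (U \star T_{\rm free}) \apply w = U \apply (T_{\rm free} \apply w)$. This says there is some $w'' \in T_{\rm free} \apply w$ with $z \in U \apply w''$, hence $w'' \in \upwrt{U}(Z)$. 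Since $w'' \in T_{\rm free} \apply w$ means $w \freeconv w''$, the assignment $w \mapsto w''$ defines a degradation $\upwrt{U}(Y) \to \upwrt{U}(Z)$, which is exactly what \eqref{eq:preimage_isotone} requires.

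The main obstacle I anticipate is bookkeeping at the level of atoms versus general elements: \cref{def:image_map} defines $\upwrt{U}$ first on atoms and then extends by unions, so I must be careful that the witness $w''$ extracted above is indeed an individual resource (atom) and that the membership statements are read correctly in the concrete (uniquely atomistic) setting assumed by the lemma. The degenerate case $Y = \emptyset$ (where $\emptyset \degconv Z$ holds vacuously) should be checked separately, but it is immediate since $\upwrt{U}(\emptyset) = \emptyset$ and $\emptyset \degconv \upwrt{U}(Z)$ by \cref{def:deg_ord}. I would also note, as the proof of \cref{lem:V-preimage is order-preserving} suggests, that a short diagrammatic rendering for singletons may be the most transparent way to present the core step.
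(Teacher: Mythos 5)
Your key computation (third paragraph) is a correct proof, but it takes a genuinely different route from the paper's. The paper never touches individual resources: it converts $Y \degconv Z$ into the inclusion $\up(Y) \subseteq \up(Z)$ via \cref{lem:deg_up}, invokes the preimage-map calculus of \cref{lem:composing image maps} to rewrite $\upwrt{U} \circ \up$ as $\upwrt{T_{\rm free} \star U}$, uses monotonicity of $V \mapsto \upwrt{V}(S)$ in the transformation argument together with $U \subseteq T_{\rm free}\star U \subseteq U \star T_{\rm free}$, and finally applies $\up$ once more (using idempotence of $T_{\rm free}$) to land back in the form required by \cref{lem:deg_up}. You instead chase atoms: from $y \in U \apply w$ and $z \in T_{\rm free} \apply y$ you derive
\begin{equation}
	z \in (T_{\rm free} \star U) \apply w \;\subseteq\; (U \star T_{\rm free}) \apply w \;=\; U \apply (T_{\rm free} \apply w),
\end{equation}
extract a witness $w'' \in T_{\rm free} \apply w$ with $z \in U \apply w''$, and declare $w \mapsto w''$ a degradation. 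The paper's calculational style avoids element-wise choices and stays uniform with its surrounding lemmas; your element-wise construction is more elementary, makes explicit exactly where hypothesis \eqref{eq:lax_natural} acts, and is the natural generalization of the singleton picture drawn in the proof of \cref{lem:V-preimage is order-preserving}.

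One caution about your second paragraph: the claimed reduction to the set-level inclusion $\upwrt{U}(Y) \subseteq \upwrt{U}(Z)$ (i.e.\ taking $w' = w$) is not valid, and your own key computation silently abandons it. Hypothesis \eqref{eq:lax_natural} only lets you commute a free transformation past $U$ at the price of moving it onto the resource, so the witness you obtain is some $w'' \in T_{\rm free} \apply w$, in general distinct from $w$. Concretely, take $X = \{w_0, w_1, y_0, z_0\}$, let $T_{\rm free} = \{1, t\}$ where $t$ sends $w_0 \mapsto w_1$, $y_0 \mapsto z_0$ and fixes $w_1, z_0$, and let $U = \{u\}$ where $u$ sends $w_0 \mapsto y_0$, $w_1 \mapsto z_0$ and fixes $y_0, z_0$. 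Then $t \circ u = u \circ t$, so \eqref{eq:lax_natural} holds, and $y_0 \freeconv z_0$; yet $\upwrt{U}(y_0) = \{w_0, y_0\}$ is not contained in $\upwrt{U}(z_0) = \{w_1, z_0\}$ — only the degradation $w_0 \mapsto w_1$, $y_0 \mapsto z_0$ exists. Since the lemma asserts exactly the degradation statement, which your third paragraph proves directly, your proof stands; simply delete the reduction claim.
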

			\begin{proof}
				By \cref{lem:deg_up} and the fact that preimage maps are suplattice homomorphisms, we have
				\begin{equation}
					Y \degconv Z  \iff  \up (Y) \subseteq \up (Z)  \implies  \upwrt{U} \circ \up (Y) \subseteq  \upwrt{U} \circ \up (Z).
				\end{equation}
				By \cref{lem:composing image maps}, the latter is equivalent to 
				\begin{equation}\label{eq:preimage_isotone_3}
					\upwrt{T_{\rm free} \star U} (Y) \subseteq  \upwrt{T_{\rm free} \star U} (Z).
				\end{equation}
				Now, for arbitrary set of transformations $V'$ and a subset $V$ thereof, definition of preimage maps directly implies
				\begin{equation}
					\upwrt{V} (S) \subseteq \upwrt{V'} (S)
				\end{equation}
				for all $S \in \mathcal{P}(X)$.
				Since we have $U  \subseteq T_{\rm free} \star U$ and $T_{\rm free} \star U \subseteq U \star T_{\rm free}$, using this fact in conjunction with relation \eqref{eq:preimage_isotone_3} gives
				\begin{equation}
					\upwrt{U} (Y) \subseteq \upwrt{T_{\rm free} \star U} (Y) \subseteq  \upwrt{T_{\rm free} \star U} (Z) \subseteq \upwrt{U \star T_{\rm free}} (Z).
				\end{equation}
				Applying $\up$ to this relation and using $\up \circ \up = \up$ as well as \cref{lem:composing image maps} again we obtain
				\begin{equation}
					\up \bigl( \upwrt{U} (Y)  \bigr)  \subseteq \up \circ \upwrt{U \star T_{\rm free}} (Z) = \up \bigl( \upwrt{U} (Z) \bigr),
				\end{equation}
				which is equivalent to the consequent of implication \eqref{eq:preimage_isotone} by \cref{lem:deg_up}.
			\end{proof}
			Consequently, for any $U$ satsfying \eqref{eq:lax_natural} and any (partial) monotone $f$, we get another monotone given by the composition
			\begin{equation}
				\begin{tikzcd}
					(X, \freeconv) \ar[r, "\upwrt{U}"] 	& 	\bigl( \mathcal{P}(X), \freeconv_{\rm deg} \bigr) \ar[r, "\inf \, f_*"] 	& 	\ordreals
				\end{tikzcd}
			\end{equation}
		
			\subsection{Translating Measures of Distinguishability: Examples}
			\label{sec:Monotones from Information Theory}
				
				Many resource theories of interest either have an information-theoretic flavour or explicitly describe resources of information.
				It is no surprise then, that in these resource theories, measures of information often crop up as resource monotones or as building blocks for them.
				As we describe below, to many concrete resource theories it is possible to associate an ``information theory'' where the resources constitute the alphabet used to encode a classical message.
				This association can then be used to understand such results in greater generality.
		
				Consider a quantum resource theory of states constructed as in \cref{ex:rt_states}.
				That is, $X$ is the set of all quantum states, $T$ is the set of all quantum channels, and $X_{\rm free}$ contains states considered to be free.
				In particular, $X_{\rm free}$ is the free image of the unique state on the trivial system $I$.
				A contraction in such a quantale module is a real-valued function $f$ of pairs of quantum states that satisfies the data processing inequality
				\begin{equation}
					\label{eq:data processing inequality}
					f(\rho, \sigma) \geq f \bigl(\Phi(\rho), \Phi(\sigma) \bigr)
				\end{equation}
				for all states $\rho$ and $\sigma$ and all channels $\Phi$.
				Usually, we consider $f$ to be defined only when $\rho$ and $\sigma$ are states on the same system.
				Thus, $f$ is a partial function of type $X \times X \to \reals$.
		
				We now provide a couple of examples of monotone constructions based on contractions that we aim to understand and generalize here.
		
				\begin{example}[minimal distinguishability from free resources]
					\label{ex:Monotones from contractions}
					Given a contraction, it is well-known that one can obtain a monotone by minimizing its value over the set of all free states in one of its arguments.
					That is, the function $m \colon X \to \reals$ given by
					\begin{align}
						\label{eq:monotone from contraction}
						m (\rho) = \inf \Set*[\big]{f(\rho, \sigma)  \given  \sigma \in X_{\rm free} }
					\end{align}
					is a resource monotone.
					In the resource theory of quantum entanglement, a popular measure of this type is the relative entropy of entanglement \cite{vedral1997quantifying}.
					Various other monotones based on contractive distance measures such as the trace distance, relative R\'{e}nyi entropies \cite{Petz1986}, and many others arise in this way as well.
					An extensive overview of these kinds of monotones can be found in \cite{Chitambar2018}.
				\end{example}
				
				\begin{example}
				\label{ex:Monotones from contractions 2}
					
					Consider a quantum resource theory of $G$-asymmetry with respect to a unitary representation of a group $G$ given by
					\begin{equation}
						\varphi_g (\rho) = U_g \rho U_g^\dagger
					\end{equation}
					for each $g \in G$.
					It is one where the free channels are $G$-covariant (see \cref{def:G-covariant}) with respect to the $G$-action $\varphi$ on quantum states.
					Let $\gamma_{\mu}$ be the twirling map weighted by a probability measure $\mu$ over the group, i.e.\
					\begin{equation}
						\gamma_{\mu} (\rho) \coloneqq \int_G U_g \rho U^{\dag}_g \, \mu(dg).
					\end{equation}
					Given a contraction $f$, the following function is a monotone \cite{Marvian2012}:
					\begin{equation}\label{eq:twirl_monotone}
						m(\rho) \coloneqq f \bigl( \rho, \gamma_{\mu} (\rho) \bigr).
					\end{equation}  
					The proof of monotonicity relies on the fact that any free operation mapping $\rho$ to $\sigma$ also maps $\gamma_{\mu} (\rho)$ to $\gamma_{\mu} (\sigma)$, since a twirling map commutes with every $G$-covariant operation.
					It is worth considering some special cases of this monotone.
					If $\mu$ is the Dirac delta measure supported on $g_0 \in G$, so that $m(\rho)= f(\rho, \varphi_{g_0} (\rho) )$, then the monotone quantifies how distinguishable $\rho$ is from its image under the action of $g_0$.
					If $\mu$ is the Haar measure, then $m$ quantifies how distinguishable $\rho$ is from its ``uniformly twirled'' counterpart.
					One can then understand the monotonicity of these functions intuitively as the statement that more asymmetric states are more distinguishable from their rotated and uniformly twirled counterparts.
					Furthermore, note that one can also obtain a monotone from a contraction and a pair of distributions, $\mu$ and $\nu$, given by $m(\rho) = f ( \gamma_{\mu} (\rho), \gamma_{\nu} (\rho))$ \cite{Marvian2012}.
				\end{example}
				
				\subsection{Resource Theories of Distinguishability}
				\label{sec:Encodings}
				
					In order to understand monotones obtained from contractions (\cref{ex:Monotones from contractions,ex:Monotones from contractions 2}) as special cases of the \ref{broad scheme} and to thereby generalize them, we introduce a class of resource theories in which contractions are resource monotones.
					Thus, we now take a detour from the discussion of resource measures to describe a general construction of resource theories of distinguishability.
					That is, for a concrete resource theory with underlying quantale module $(\mathcal{P}(T), \mathcal{P}(X), \apply)$ and a set of hypothses $H$, the corresponding resources and transformations are given by assignments
					\begin{align}
						H &\to R  & &\text{and} &   H &\to T.
					\end{align}
					Thinking of these as tuples indexed by $H$, we write $R^H$ and $T^H$ for the sets of all encodings of individual resources and transformations respectively.
					In the resource theory of $H$-distinguishability, the suplattices of resources and transformations are thus power sets $\mathcal{P}(R^H)$ and $\mathcal{P}(T^H)$.
					
					The free transformations among $T^H$ are the constant encodings---ones that do not depend on $H$.
					That is, they can be implemented without knowing the value of the hypothesis.
					The resources that are close to the top of the resulting resource ordering are thus thought to be highly informative about the hypothesis $H$.
					Indeed, any information about $H$ carried by an encoding after a free transformation is applied must have already been present in the original encoding before the transformation.
					Alternatively, we can think of the free transformations as those that preserve indistinguishability of resources.
					However, to prescribe the right notion of composition under which free transformations are closed, we need the original quantale $(\mathcal{P}(T),\star)$ to have extra properties.
					
					\begin{example}\label{ex:tuple_composition}
						Let us look at a toy example of what might be the issue in the completely general case.
						Consider $s, t \in T$ whose composite $s \star t$ is not a singleton.
						For concreteness, let $s \star t$ be equal to $\{u_1, u_2\}$ for some $u_1, u_2 \in T$.
					
						Let $H$ be a binary hypothesis.
						Then $(s,s)$ and $(t,t)$ are both constant encodings and therefore free transformations in the resource theory of $H$-distinguishability.
						What should the composite $(s,s) \star (t,t)$ be as an element of $\mathcal{P}(T^H)$?
						It cannot be the set 
						\begin{equation}\label{eq:cand_comp}
							 \bigl\{ (u_1,u_1),  (u_1,u_2), (u_2,u_1), (u_2,u_2) \bigr\} \approx  (s\star t, s\star t)
						\end{equation}
						because it would contain elements that are not free!
						In other words, if we think of $u_1$ and $u_2$ as different ways in which $s$ and $t$ can be combined, then to obtain \eqref{eq:cand_comp}, one would need to have the ability to combine $s$ and $t$ in a way that depends on the value of $H$.
						Instead, we might expect that $(s,s) \star (t,t)$ should be equal to $\{ (u_1,u_1), (u_2,u_2) \}$, so that $H$ cannot inform one about the way in which to combine $s$ and $t$.
						
						This raises the question of what the composition of non-constant encodings ought to be.
						For illustrative purposes, consider another transformation $r \in T$ which satisfies $s \star r = \{ v \}$ for some $v \in T$.
						There are at least four alternatives for what the composition $(s,s) \star (r,t)$ might reasonably be:
						\begin{equation}\label{eq:cand_comp_2}
							\begin{aligned}
								\bigl\{ (v &, u_1) \bigr\}  			& &\qquad \qquad &  	\bigl\{ (v &, u_2) \bigr\} \\
								\bigl\{ (v,u_1) &, (v, u_2) \bigr\}  	& &\qquad \qquad &   	&\emptyset 
							\end{aligned}
						\end{equation}
						and it is not clear which one is the right choice. 
						We claim that this question is underdetermined.
						What we are lacking is the data specifying whether the way in which $s$ and $r$ are to be combined to obtain $v$ is \emph{the same} as the way in which $s$ and $t$ can be combined to obtain $u_1$ (or $u_2$).
					\end{example}
					
					One could argue that the right setting to include this kind of data is in the context of resource theories with ``types''.
					This approach is sketched in \cite[appendix C]{Monotones}.
					However, typed resource theories are beyond the scope of this thesis, we merely mention the steps one could take to extend the present framework to include resource types in \cref{sec:future}.
					Instead, we make a more abstract assumption about the nature of the quantale composition that allows us to carry out this construction and subsumes typed resource theories.\footnotemark{}
					\footnotetext{Strictly speaking, it subsumes only those typed resource theories that are ``small'' in the sense that they form a small category.}%
					
					\begin{definition}\label{def:listable}
						A uniquely atomistic quantale module $(\mathcal{P}(T), \mathcal{P}(X), \apply)$ is \textbf{listable} if there is a set $\mathcal{I}$ such that for all $s,t \in T$ and $x \in X$ we have
						\begin{align}
							\label{eq:listable_star}  s \star t &= \bigcup_{i \in \mathcal{I}} s \star_i t  	& 	 t \apply x &= \bigcup_{i \in \mathcal{I}} t \apply_i x
						\end{align}
						where each $s \star_i t$ and each $t \apply_i x$ is either a singleton or an empty set.
						Furthermore, we require that for all $i,j \in \mathcal{I}$ we have
						\begin{align}\label{eq:listable_associativity}
							(s \star_i t) \star_j u &= s \star_i (t \star_j u)  &   (s \star_i t) \apply_j x &= s \apply_i (t \apply_j x)
						\end{align}
						so that, in particular, each $(\mathcal{P}(T), \star_i)$ is a (not necessarily unital) quantale and each $(\mathcal{P}(T), \mathcal{P}(X), \apply_i)$ is a (not necessarily unital) quantale module.
					\end{definition}
					In other words, $(\mathcal{P}(T), \star)$ can be seen as the quantale associated to a collection of partial semigroups (one for each $i$) as in \cref{ex:free_quantale_multi}.
					Additinally, \cref{def:listable} postulates that each of the semigroups comes with a (partial) action on $X$.

					We think of the index set $\mathcal{I}$ as part of the data specifying a listable quantale module.
					That is, when speaking of a listable quantale module (or a listable resource theory), we presume that it comes with a choice of the index set $\mathcal{I}$.
					In this sense, a fixed quantale module could be associated with \emph{multiple} listable quantale modules.
					Many concrete resource theories from the literature are resource theories of states in the sense of \eqref{ex:rt_states}, in which case there is only one mode of composition and $\mathcal{I}$ is thus a singleton.
					We use these in many examples that follow since they tend to be easier to get an intuitive understanding of.
					For future reference, listable resource theories with $\abs{\mathcal{I}} = 1$ are called \textbf{unimodular}.
					
					\begin{example}\label{ex:listable_rt_states}
						Quantale modules arising from process theories as described in \cref{ex:rt_states_2,ex:rt_channels} provide further examples of listable ones.
						To provide more details, we need to be careful about what we mean by a diagram in a process theory.
						Specifying a list of contractions between processes requires us to be able to identify \emph{which wires} are being contracted.
						That is why we need to give names to them.
						These names differ from the wire \emph{types} that are commonly used in process theories.
						Types of wires tell us which contractions are not implementable.
						Two distinct wires of a process can have the same type.
						This encodes that they share the restriction on which other wires they may be plugged into.
						Since we want to use wire names to refer to a specific wire, the names of wires should be distinct.
						
						Let us illustrate this on the quantale module from \cref{ex:rt_states_2}.
						We denote the name of a specific wire inside a circle attached to said wire.
						For instance the diagram
						\begin{equation}\label{eq:process_names}
							\tikzfig{process_names}
						\end{equation}
						depicts a process whose input wire is called ``1'' and whose output wire is called ``2''.
						A contraction between two such labelled processes can be specified by a relation between their names.
						Not every wire has to be contracted, but a given wire can only be plugged into at most one wire.
						That is why the relation specify a contraction should be a partial injective function.
						If we use the natural number to label wires, then the index set $\mathcal{I}$ consists of all partial injective functions $\mathbb{N} \to \mathbb{N}$.
						
						Consider two labelled processes as follows, where $\alpha$ and $\beta$ indicate wire types as before.
						\begin{equation}\label{eq:process_names_2}
							\tikzfig{process_names_2}
						\end{equation}
						Their composition $s \star t$ according to \cref{eq:rt_states_5} consists of two non-trivial contractions:
						\begin{equation}\label{eq:process_names_3}
							\tikzfig{process_names_3}
						\end{equation}
						and
						\begin{equation}\label{eq:process_names_4}
							\tikzfig{process_names_4}
						\end{equation}
						where $g$ and $h$ are the relations (specified as subsets of $\mathbb{N} \times \mathbb{N}$ here)
						\begin{align}
							g &\coloneqq \emptyset   &  h &\coloneqq \{ (1,4) \}
						\end{align}
						respectively.
						By convention, the first label in the pair gives the name of the wire of the left process (i.e.\ $s$ here) to be contracted with the wire of the right process (i.e.\ $t$) referred to by the second label in the pair.
						In the case of the quantale analyzed here (i.e.\ the one from \cref{ex:rt_states_2}), the first label thus has to refer to an input wire while the second label refers to an output wire, otherwise the contraction is not allowed.
						This would not be true in other quantales such as the one from \cref{ex:quantale_for_UCRT2}.
						Note that other choices of $g$ and $h$ can also give rise to the same contractions, such as 
						\begin{align}
							g &= \{ (5,5) \}   &  h &= \{ (1,4) , (5,5) \}
						\end{align}
						but not
						\begin{align}
							g &= \{ (1,1) \}   &  h &= \{ (1,4) , (2,3) \}.
						\end{align}
						In the latter case, both $s \star_g t$ and $s \star_h t$ would be the empty set of processes, as opposed to a singleton set of process as in \eqref{eq:process_names_3} and \eqref{eq:process_names_4}.
						We also have
						\begin{equation}\label{eq:process_names_5}
							\tikzfig{process_names_5}
						\end{equation}
						for $f = \{(3,2)\}$, because the output wire of $s$ has a different type compared to the input wire of $t$ so that they cannot be contracted.
						
						The very same indexing can be used for the action $\apply$ of channels on states.
						The only difference is that the result is an empty set of states whenever there are some input wires of the channel left uncontracted.
						As for the second instance of composition of process with multiple wires, depicted by \cref{eq:rt_states_3}, we can use the following indexing choice.
						\begin{equation}\label{eq:process_names_6}
							\tikzfig{process_names_6}
						\end{equation}
						The composition of the above $s$ and $t$ consists of 4 distinct processes which can be described by the following contractions that correspond to the choice of $g,h,e,f$ from the index set $\mathcal{I}$ as indicated below.
						\begin{equation*}\label{eq:process_names_7}
							\tikzfig{process_names_7}
						\end{equation*}
						
						The above discussion can be made more precise by introducing interfaces and ports as defined in \cite[section 1.3B]{houghton2021mathematical}.
						However, we do not need these notions here and thus we now return to the general discussion building up to the definition of resource theories of distinguishability.
					\end{example}
					
					\begin{example}\label{ex:tuple_composition_2}
						Returning to \cref{ex:tuple_composition}, we can now show how to answer the question of what the composition $(s,s) \star (r,t)$ should be, provided that the original quantale module is listable.
						Let $\mathcal{I}$ be the set $\{1,2,3\}$ and let
						\begin{align}
							s \star_1 t &= u_1  &  s \star_2 t &= u_2  &  s \star_3 t &= \emptyset
						\end{align}
						for instance.
						\Cref{tab:tuple_composition} depicts four different cases of how $s \star r = v$ could arise via the left equation in \eqref{eq:listable_star} and the resulting $(s,s) \star (r,t)$ that corresponds to each.
						The idea is that when we combine $(s,s)$ with $(r,t)$, any of the choices in $\mathcal{I}$ is allowed as long as it is the same choice for all hypotheses in $H$.
						As we can see, all the candidates from \eqref{eq:cand_comp_2} do arise.
						\bgroup
						\renewcommand{\arraystretch}{1.2}%
						\begin{table}[h]
						\centering
							\begin{tabular}
							{>{\raggedleft}m{0.15\textwidth}>{\centering}m{0.17\textwidth}>{\centering}m{0.17\textwidth}>{\centering}m{0.17\textwidth}>{\centering\arraybackslash}m{0.17\textwidth}}
								\hline
								$s \star_1 r :$        & $v$    			& $\emptyset$ 	& $v$    			& $\emptyset$ 	\\
								$s \star_2 r :$        & $\emptyset$ 	& $v$    			& $v$    			& $\emptyset$ 	\\
								$s \star_3 r :$       & $\emptyset$ 	& $\emptyset$ 	& $\emptyset$ 	& $v$    			\\ \hline
								$(s,s) \star (r,t)$ : & 
									$ (v, u_1) $ 						& 
									$ (v, u_2) $ 						& 
									$ (v,u_1) \cup (v, u_2) $ 	& 
									$\emptyset $ 						\\ \hline
							\end{tabular}
							\caption[Example of Composition in a Resource Theory of Distinguishability]{First three rows show four of the possibilities for what $s \star_i r$ could be. 
								The last row gives the associated $ (s,s) \star (r,t)$ in each case.}
							\label{tab:tuple_composition}
						\end{table}
						\egroup
					\end{example}
		
					More generally, given a finite hypothesis $H = \{1, 2, \dots, k\}$ and two encodings of transformations
					\begin{align}
						s &= \bigl( s_1, s_2, \dots, s_k \bigr) \in T^H  &  t &= \bigl( t_1, t_2, \dots, t_k \bigr) \in T^H
					\end{align}
					in a listable quantale module, we define 
					\begin{equation}\label{eq:tuple_comp}
						s \star t \coloneqq \bigcup_{i \in \mathcal{I}}  \Bigl\{ \bigl( s_1 \star_i t_1, s_2 \star_i t_2, \dots , s_k \star_i t_k \bigr)  \Bigr\}.
					\end{equation}
					Similarly for any encoding of resources $x \in X^H$, we define
					\begin{equation}\label{eq:tuple_action}
						s \apply x \coloneqq \bigcup_{i \in \mathcal{I}}  \Bigl\{ \bigl( s_1 \apply_i x_1, s_2 \apply_i x_2, \dots , s_k \apply_i x_k \bigr)  \Bigr\}.
					\end{equation}
					Furthermore, both of these can be extended to sets of encodings by requiring compatibility with unions.
					Properties \eqref{eq:listable_associativity} then ensure that we get a quantale module $(\mathcal{P}(T^H), \mathcal{P}(X^H), \apply)$ which, when equipped with free transformations given by
					\begin{equation}\label{eq:cons_encodings}
						T^H_{\rm unif} \coloneqq \Set*[\big]{ (t, t, \dots , t)  \given  t \in T }.
					\end{equation}
					defines the \textbf{resource theory of (unconstrained) $\bm{H}$-distinguishability} associated to a listable quantale module $(\mathcal{P}(T), \mathcal{P}(X), \apply)$.
					Its resource ordering is denoted by $(\mathcal{P}(X^H), \succeq_{\rm unif})$.
					
					In the context of quantum theory, there have been several works investigating such resource theories \cite{Wang2019,wang2019resource,katariya2020evaluating,salzmann2021symmetric}.
					A relevant earlier investigation of the conditions for conversions of tuples of quantum states can be found in \cite{chefles2004existence}.
					On the other hand, for classical probability theory, the resource ordering of distinguishability is known as the Blackwell order for comparison of statistical \mbox{experiments \cite{blackwell1951comparison}} as well as matrix majorization \cite{Dahl1999} in the case of finite sample spaces.
					We give more details on the latter in \cref{sec:distinguishability}.
					
					Alternatively, we can think of the resource theory of $H$-distinguishability as a resource theory of asymmetry with respect to permutations of $H$.
					For example, in a unimodular theory, permutation-covariance of reads
					\begin{equation}
						\forall \, x \in X , \; \forall i,j  \, : \quad  t_i \apply x = t_j \apply x
					\end{equation}
					which in many ciscumstances implies that the individual transformations $t_i$ are in fact identical in $\mathcal{T}$.
					
					\begin{remark}
						It should be noted that \cref{def:listable} as given here is not quite strong enough to guarantee that $(\mathcal{P}(T^H), \star)$ has a unit and that $T^H_{\rm unif}$ is reflexive.
						We nevertheless assume this to be the case so that we can use the framework of resource theories in the sense of \cref{def:rt}.
						There are different ways in which this assumption can be born out in practice.
						For example, in quantales arising from process theories as in \cref{ex:quantale_for_UCRT1}, there is a single transformation in the neutral set---the empty diagram.
						Moreover, there is a single element $i$ of $\mathcal{I}$---the parallel composition (i.e.\ diagram placement with no contractions)---such that $1 \star_i t = t = t \star_i 1$.
						In such situations, $(\mathcal{P}(T^H), \star)$ is guaranteed to be unital and $T^H_{\rm unif}$ is necessarily reflexive.
					\end{remark}

					If we now return to the definition of contractions for quantum states via data processing inequality \eqref{eq:data processing inequality}, we can see that monotones in the resource theory of $\{0,1\}$-distinguishability provide a suitable generalization of this notion.
					We thus refer to monotones in this resource theory as \textbf{contractions}. 
					Analogously, a monotone in the resource theory of $H$-distinguishability is termed a $\bm{k}$\textbf{-contraction}, where $k$ denotes the cardinality of $H$.
						
					These concepts have been implicitly present in some of the definitions in quantum theory.
					For instance, the diamond norm\footnotemark{} distance is a contraction in the binary hypothesis case.
					\footnotetext{Diamond norm is also known as the completely bounded trace norm.}%
					In particular, it is a contraction for quantum channels obtained via the yield construction from a contraction for states---the trace norm distance.
					
					\begin{example}[diamond norm as yield]
						\label{ex:diamond norm}
						Let $\phi, \psi$ be two quantum channels with identical input and output systems. 
						The completely bounded trace norm distance between $\phi$ and $\psi$ is given by
						\begin{equation}
							\label{eq:diamond norm}
							\bigl\| \phi - \psi \bigr\|_{\diamond} = \sup \Set*[\Big]{ \bigl\| ( \phi \otimes \id ) \circ \rho  - (\psi \otimes \id ) \circ \rho \bigr\|_1  \given  \rho \text{ is a state} }
						\end{equation}
						where $\rho$ is a quantum state on a bipartite system, one part of which is the input system of the channels.
						This matches the form of \cref{eq:state_to_channel2} if we identify $f$ as the trace norm distance.
						The fact that $f$ is indeed a contraction is necessary in order to simplify the expression from \eqref{eq:state_to_channel1} to the one of \eqref{eq:state_to_channel2}.
					\end{example}
					
					Further restriction on the free encodings can be imposed if we require that these have to come from some set of free transformations $T_{\rm free}$ specified independently.
					That is, we may define the free transformations to be\footnotemark{}
					\footnotetext{Note that $T^H_{\rm free}$ as given here is very different from $(T_{\rm free})^H$.
					Since we don't use the latter, this notation clash should not be too confusing.}
					\begin{equation}
						T^H_{\rm free} \coloneqq \Set*[\big]{ (t, t, \dots , t)  \given  t \in T_{\rm free} }
					\end{equation}
					instead of those given by \cref{eq:cons_encodings}.
					In this way, we get a \textbf{resource theory of (constrained) $\bm{H}$-distinguishability} associated to a listable resource theory.
					Its resource ordering is usually denoted by $(\mathcal{P}(X^H), \succeq)$.
					
					A resource theory of constrained distinguishability describes the capabilities of an agent who, in addition to having no a priori knowledge of the hypothesis, also has a restriction on the allowed processings in the form of $T_{\rm free}$ as a subset $T$.
					Since we can recover the unconstrained theory by setting $T_{\rm free} = T$, in what follows we work with the constrained case (without explicit mention) unless stated otherwise.
					As such, when we mention ($k$-)contractions in the subsequent sections of \cref{sec:monotones}, we only require that they are monotones under $T^H_{\rm free}$.
					Of course, any contraction with respect to all transformations satisfies that requirement, since $T^H_{\rm free}$ is a subset of $T^H_{\rm unif}$ for any set of free transformations $T_{\rm free}$.

				\subsection{Monotones from Equivariant Maps}
				\label{sec:monotones from twirling generalized}
				
					We now describe a generalization of the monotone construction in \cref{ex:Monotones from contractions 2} which applies to listable resource theories.
					Although similar ideas can be used for any hypothesis, we focus on the binary case with $H$ given by $\bm{2} = \{0,1\}$.
					We take the resource ordering $(\mathcal{P}(X^{\bm{2}}), \succeq)$ in the resource theory of \emph{constrained} distinguishability to be the mediating preordered set. 
					A root monotone is therefore $\sup \circ f_*$ for a particular choice of a contraction $f \colon X^{\bm{2}} \to \reals$. 
					
					Consider a function $\phi \colon X \to \mathcal{P}(X)$ or, in other words, an endomorphism of the suplattice of resources.
					We say that $\phi$ is \textbf{oplax equivariant} (with respect to free transformations) if it satisfies
					\begin{equation}\label{eq:oplax_atom_natural}
						t \apply_i \phi(x)  \supseteq  \phi \bigl( t \apply_i x \bigr)
					\end{equation}
					for all $i \in \mathcal{I}$, all $t \in T_{\rm free}$ and all $x \in X$.
					In some sense, this condition bears resemblance to \eqref{eq:oplax_natural}, but here it is applied atom-wise and for each $i$ separately.
					\begin{example}\label{ex:group_action_oplax_covariant}
						Given a unimodular\footnotemark{} resource theory of asymmetry (\cref{sec:rt_asymmetry}), the free transformations are defined so that every representation $\varphi_g$ of an element of the symmetry group satisfies \eqref{eq:oplax_atom_natural} with equality.
						\footnotetext{The same holds for a listable resource theory of asymmetry that is not unimodular, but one has to adapt \cref{def:G-covariant} of covariant transformations to require covariance for each $i$ independently.}%
						If the resource theory comes with a convex structure (\cref{sec:rt_convex}), then any twirling map $\gamma_\mu$ given by an average over $\{\varphi_g\}_{g \in G}$ does as well.
					\end{example}
					In analogy to \cref{ex:Monotones from contractions 2}, we take the mediating isotone to be the map that sends $x$ to the binary encoding of $x$ versus $\phi(x)$.
					\begin{lemma}\label{lem:oplax_covariant_isotone}
						For a listable resource theory and an oplax equivariant $\phi$, the map given by
						\begin{equation}\label{eq:oplax_covariant_isotone}
							x \mapsto  \bigl( x , \phi(x)  \bigr) \coloneqq  \Set{ (x, a) \given a \in \phi(x) }
						\end{equation}
						is an isotone of type $(X, \freeconv) \to (\mathcal{P}(X^{\bm{2}}), \succeq)$.
					\end{lemma}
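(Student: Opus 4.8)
The plan is to unfold the definition of the target ordering and verify the isotone condition directly on atoms. By \cref{def:free_conv} applied to the resource theory of constrained $\bm{2}$-distinguishability, the assertion $(x,\phi(x)) \succeq (y,\phi(y))$ is by definition the inclusion $T^{\bm{2}}_{\rm free} \apply (x,\phi(x)) \supseteq (y,\phi(y))$. So, assuming $x \freeconv y$, I would aim to show that every pair $(y,b)$ with $b \in \phi(y)$ lies in the free image $T^{\bm{2}}_{\rm free} \apply (x,\phi(x))$, where the right-hand side is computed via the componentwise action \eqref{eq:tuple_action}.

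First I would extract a witness for the hypothesis: since $y \in T_{\rm free} \apply x = \bigcup_{t \in T_{\rm free},\, i \in \mathcal{I}} t \apply_i x$, there exist $t_0 \in T_{\rm free}$ and an index $i_0 \in \mathcal{I}$ with $t_0 \apply_{i_0} x = \{y\}$. The key move is to reuse this same $t_0$ and the same index $i_0$ on both coordinates. On the first coordinate it already produces $y$; for the second I would invoke oplax equivariance \eqref{eq:oplax_atom_natural} with $t = t_0$ and $i = i_0$, which gives $t_0 \apply_{i_0} \phi(x) \supseteq \phi(t_0 \apply_{i_0} x) = \phi(y)$. Hence any $b \in \phi(y)$ satisfies $b \in t_0 \apply_{i_0} a$ for some $a \in \phi(x)$.

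Finally I would assemble these pieces: for the constant (free) encoding $(t_0,t_0) \in T^{\bm{2}}_{\rm free}$ and the element $(x,a) \in (x,\phi(x))$, the $i_0$-th summand in \eqref{eq:tuple_action} contributes the singleton $(t_0 \apply_{i_0} x) \times (t_0 \apply_{i_0} a) = \{(y,b)\}$. Thus $(y,b) \in T^{\bm{2}}_{\rm free} \apply (x,\phi(x))$, and letting $b$ range over $\phi(y)$ yields $(y,\phi(y)) \subseteq T^{\bm{2}}_{\rm free} \apply (x,\phi(x))$, i.e.\ $(x,\phi(x)) \succeq (y,\phi(y))$, which is exactly the required isotone condition on atoms.

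The main obstacle I anticipate is the bookkeeping around the index set $\mathcal{I}$: the argument crucially requires that a single index $i_0$ simultaneously realizes the transition on both coordinates, and this is precisely why listability (\cref{def:listable}) together with the per-index form of oplax equivariance is needed, rather than the coarser relation \eqref{eq:oplax_natural}. One must also check that $t_0 \apply_{i_0} a$ is nonempty, so that the product genuinely equals $\{(y,b)\}$ rather than collapsing to the empty tuple; this is automatic since $b$ was chosen as an element of that set.
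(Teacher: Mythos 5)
Your proof is correct and follows essentially the same route as the paper's: extract a single free transformation $t$ and index $i$ witnessing $x \freeconv y$, apply the per-index oplax equivariance to get $\phi(y) \subseteq t \apply_i \phi(x)$, and use the constant tuple $(t,t)$ acting at index $i$ to send each $(x,a)$ with $t \apply_i a = b$ onto $(y,b)$. The only cosmetic difference is in the conclusion: you verify the $T^{\bm{2}}_{\rm free}$-reachability inclusion directly from \cref{def:free_conv}, whereas the paper packages the same data as an enhancement $(y,\phi(y)) \to (x,\phi(x))$ and invokes \cref{lem:access=enh}.
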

					Note that $( x , \phi(x))$ is not an element of $\mathcal{P}(X^{\bm{2}})$, we merely use it as a shorthand for the expression on the right.
					\begin{proof}
						If $x \succeq y$ holds, then there is a free transformation $t$ such that $y$ is an element of $t \apply x$.
						Since the theory is listable, there ought to be an $i \in \mathcal{I}$ for which $y$ equals $t \apply_i x$.
						Thus, we get
						\begin{equation}\label{eq:oplax_covariant_isotone_2}
							\phi(y) = \phi( t \apply_i x) \subseteq t \apply_i \phi(x)
						\end{equation}
						as $\phi$ is presumed to be oplax equivariant with respect to free transformations.
						In particular, for every $b \in \phi(y)$, there exists an $a_b \in \phi(x)$ for which $b = t \apply_i a_b$ holds.
						The assignment of $(x,a_b)$ to $(y,b)$ is therefore an enhancement of type $(y , \phi(y)) \to ( x , \phi(x))$, since we have
						\begin{equation}
							(t,t) \apply (x, a_b) \supseteq (t,t) \apply_i (x, a_b) = (t \apply_i x, t \apply_i a_b ) = (y,b)
						\end{equation}
						for each $b$.
						As a result, $(x , \phi(x)) \freeconv (y , \phi(y))$ holds by \cref{lem:access=enh}, and the statement of \cref{lem:oplax_covariant_isotone} follows.
					\end{proof}
		
					Using the \ref{broad scheme}, we can construct a resource monotone for any contraction $f$ and any oplax equivariant $\phi$, defined as
					\begin{equation}\label{eq:oplax_covariant_monotone}
						m(x) \coloneqq \sup f_* \bigl( r, \phi(r) \bigr) = \sup \Set{ f(x, a) \given a \in \phi(x) }.
					\end{equation}
					Whenever $\phi$ is a single-valued function $X \to X$, there is no optimization to perform.
					Choosing $\phi$ to be the twirling map, we recover the monotone from \cref{ex:Monotones from contractions 2}.
					\begin{remark}\label{rem:joint_oplax_covariant_isotone}
						Basically the same proof as that of \cref{lem:oplax_covariant_isotone} can be also used to show that for any two oplax equivariant maps $\phi_1$ and $\phi_2$, the function
						\begin{equation}
							x \mapsto  \bigl( \phi_1(x), \phi_2(x) \bigr) \coloneqq  \Set*[\big]{ (a,c) \given a \in \phi_1(x), \; c \in \phi_2(x) }
						\end{equation}
						is an isotone of the same type.
						In the context of resource theories of asymmetry, we then recovers the monotone construction with two twirling maps mentioned at the very end of \cref{ex:Monotones from contractions 2}.
					\end{remark}
					\begin{example}[bridge lemma]\label{ex:bridge_lemma}
						One can extend the argument from \cref{lem:oplax_covariant_isotone} to hypothesis variables of higher cardinality, such as the symmetry group $G$ in a resource theory of asymmetry.\footnotemark{}
						\footnotetext{In order to avoid unnecessary detail, let us consider finite groups only here.}%
						Using the oplax equivariant maps provided by representations $\varphi_g$ of the group's elements, we thus obtain an isotone $\mathrm{orb} \colon (X,\freeconv) \to (X^{G}, \freeconv)$ given by
						\begin{equation}
							\mathrm{orb}(x) \coloneqq \bigl( \varphi_g(x) \bigr)_{g \in G}.
						\end{equation}
						To each resource $x$, $\mathrm{orb}(x)$ assigns the corresponging $G$-orbit.
						Its isotonicity can be also viewed as a consequence of \cref{prop:isotone_from_morphism} because $\mathrm{orb}$ constitutes the resource part of a resource theory morphism $(\mathrm{unif},\mathrm{orb})$ where $\mathrm{unif} \colon T \to T^G$ sends each individual transformation to the uniform tuple
						\begin{equation}
							\mathrm{unif}(t)  \coloneqq  (t)_{g \in G}
						\end{equation}
						and its action can be extended to sets of transformations so as to impose compatibility with unions.
						
						Since the free transformations in a resource theory of constrained distinguishability form a subset of the unconstrained ones, the identity map is an isotone of type ${(X^{G}, \freeconv) \to (X^{G}, \freeconv_{\rm unif})}$ where $\freeconv_{\rm unif}$ is the resource ordering of unconstrained distinguishability.
						Thus, we get an isotone $\mathrm{orb} \colon (X,\freeconv) \to (X^{G}, \freeconv_{\rm unif})$ that maps $G$-asymmetry properties of resources in $X$ to the distinguishability properties of their orbits under the $G$-action.
						It can thus act as the mediating isotone in the \ref{broad scheme} that allows us to translate distinguishability monotones (i.e.\ contractions) to asymmetry monotones.
						If the resource theory of $G$-asymmetry is also a convex resource theory (see \cref{sec:rt_convex}), then $\mathrm{orb}$ is an order-embedding.
						That is, when restricted to encodings that are $G$-orbits, the distinguishability preorder $(X^{G}, \freeconv_{\rm unif})$ is equivalent to the asymmetry preorder $(X,\freeconv)$.
						The proof can be found in \cite[lemma 7]{Marvian2012} for the special case of quantum resource theories, but it is no different in the general case.
						In this way, $\mathrm{orb}$ establishes a ``bridge'' between asymmetry and distinguishability.
					\end{example}
					
					Besides twirling maps in resource theories of asymmetry, there are functions in other resource theories which are oplax equivariant with respect to free operations.
					\begin{example}\label{ex:athermality_oplax_covariant}
						In the resource theory of athermality, one can take the oplax equivariant map $\phi$ to be the thermalization map $\varphi_p$.
						Specifically, it is the discarding map followed by a preparation of the thermal state $\rho_{\rm therm}$ for the system just discarded.
						Monotones constructed in this way describe the thermo-majorization order \cite{Horodecki2013}, a special case of relative majorization first introduced in \cite{Veinott1971}.
						They arise by pulling back the matrix majorization order along the mediating isotone $x \mapsto (x,\varphi_p(x)) = (x, \rho_{\rm therm})$.
						In the special case of trivial Hamiltonians, we obtain the resource theory of nonuniformity, in which the general thermal states	are replaced by uniform probability distributions.
					\end{example}
					
					\begin{example}
						A similar example arises in resource theories wherein the free operations are \emph{local}, i.e.\ those which are of tensor product form for some prespecified partition.
						The free transformations are $M$-covariant with respect to the map that sends resources to the tensor product of their projections. 
						These are generally \emph{not} resource theories of $M$-asymmetry, since we exclude certain covariant transformations (such as swapping) from the free set.
						Nevertheless, the same monotone construction can be applied, because monotones for $M$-covariant transformations are automatically order-preserving also for any subset thereof (such as local operations).
						If $\rho$ denotes a bipartite state on the joint system $\alpha \otimes \beta$ and $\rho_{\alpha} \otimes \rho_{\beta}$ is the tensor product of its projections (or ``marginals''), the resulting monotone is
						\begin{equation}
							m(\rho) =  f ( \rho , \rho_{\alpha} \otimes \rho_{\beta} ).
						\end{equation}
						Choosing $f$ to be the relative entropy, for instance, gives rise to the mutual information $I(\alpha \mathbin{;} \beta)$.
					\end{example}
		
				\subsection{Minimal Distinguishability as a Monotone}
				\label{sec:Monotones From Contractions in General}
				
					Let us turn to the question of generalizing the monotone construction presented in \cref{ex:Monotones from contractions}, which quantifies minimal distinguishability between a given resource and the set of all free resources.
					Because of the infimum in \cref{eq:monotone from contraction}, we expect the mediating preordered set to be $(\mathcal{P}(X^{\bm{2}}), \freeconv_{\rm deg})$.
					The root monotone is then $\inf \circ f_*$ for a contraction $f$.
					\begin{remark}
						Note that the contraction need not be defined everywhere. 
						As long as its domain of definition is downward closed, we get a valid root monotone by \cref{lem:monotone_extensions}. 
						This approach was used in \cite{Monotones} to understand monotones of the form as in \eqref{eq:min_dis_downset} via the \ref{broad scheme}.
						Here, we take a more general approach.
					\end{remark}
					
					The right choice of mediating isotone so as to obtain the monotone in \cref{ex:Monotones from contractions} via the \ref{broad scheme} is the function $\mathcal{E}_{X_{\rm free}}$ defined by
					\begin{equation}\label{ERfree}
						\mathcal{E}_{X_{\rm free}} (x) \coloneqq \Set*[\big]{ (x, y)  \given  y \in X_{\rm free} },
					\end{equation}
					where $X_{\rm free}$ denotes the set of free states as before. 
					There are multiple avenues to understand why it is order-preserving.
					We choose to follow an approach that mimics our discussion in \cref{sec:monotones from twirling generalized}.
					
					We say that a function $\psi \colon X \to \mathcal{P}(X)$ is \textbf{lax equivariant} (with respect to free transformations) if it satisfies
					\begin{equation}\label{eq:lax_atom_natural}
						t \apply_i \psi(x)  \subseteq  \psi \bigl( t \apply_i x \bigr)
					\end{equation}
					for all $i \in \mathcal{I}$, all $t \in T_{\rm free}$ and all $x \in X$.
					\begin{example}
						In a unimodular resource theory, any constant function that sends each $x$ to a fixed downset $D \in \mathcal{DC}(X)$ is lax equivariant.
						Indeed, we have
						\begin{equation}
							t \apply_i \psi(x) \subseteq T_{\rm free} \apply D = D = \psi \bigl( t \apply_i x \bigr)
						\end{equation}
						in such case.
						Specifically, choosing $D$ to be the set of free states, the map $x \mapsto X_{\rm free}$ is lax equivariant with respect to free transformations.
					\end{example}
					
					\begin{lemma}\label{lem:lax_covariant_isotone}
						For a listable resource theory and a lax equivariant $\psi$, the map given by
						\begin{equation}\label{eq:lax_covariant_isotone}
							x \mapsto  \bigl( x , \psi(x)  \bigr) =  \Set*[\big]{ (x, a) \given a \in \psi(x) }
						\end{equation}
						is an isotone of type $(X, \freeconv) \to (\mathcal{P}(X^{\bm{2}}), \freeconv_{\rm deg})$.
					\end{lemma}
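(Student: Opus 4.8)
The plan is to dualize the proof of \cref{lem:oplax_covariant_isotone}, replacing the enhancement that witnesses the resource ordering on $X^{\bm{2}}$ by a degradation and using the lax inclusion \eqref{eq:lax_atom_natural} in place of the oplax one. Concretely, suppose $x \freeconv y$. Since the theory is listable, there is a free transformation $t \in T_{\rm free}$ and an index $i \in \mathcal{I}$ with $y = t \apply_i x$. Proving \eqref{eq:lax_covariant_isotone} amounts to exhibiting a degradation from $(x,\psi(x))$ to $(y,\psi(y))$, i.e.\ assigning to each atom $(x,a)$ with $a \in \psi(x)$ an atom $(y,b)$ with $b \in \psi(y)$ satisfying $(x,a) \freeconv (y,b)$ in the distinguishability resource ordering, and then invoking the explicit characterization of $\degconv$.

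First I would define this assignment by $(x,a) \mapsto (y,b)$, where $\{b\} = t \apply_i a$. Lax equivariance gives $t \apply_i a \subseteq t \apply_i \psi(x) \subseteq \psi(t \apply_i x) = \psi(y)$, so that indeed $b \in \psi(y)$. To see that $(x,a) \freeconv (y,b)$, note that $(t,t) \in T^{\bm{2}}_{\rm free}$ and, using the per-mode form of \eqref{eq:tuple_action},
\begin{equation}
    (t,t) \apply (x,a) \supseteq (t,t) \apply_i (x,a) = (t \apply_i x, \, t \apply_i a) = (y,b),
\end{equation}
which by \cref{def:free_conv} (equivalently \cref{lem:access=enh}) yields $(x,a) \freeconv (y,b)$. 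Granting this, the assignment is a degradation, whence $(x,\psi(x)) \degconv (y,\psi(y))$ and the map is an isotone of the stated type.

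The main obstacle is the well-definedness of this assignment: unlike the oplax case, where every target atom is guaranteed to lie in the image $t \apply_i \psi(x)$ and so the construction cannot fail, here I must rule out $t \apply_i a = \emptyset$, since a degradation must send \emph{every} source atom to a genuine atom. The point to verify is that the pair $(t,i)$ realizing $y = t \apply_i x$ also acts validly on $a$. This holds whenever the two components of an encoding share the same type---the situation in all the intended applications, where $\psi(x)$ consists of resources of the same type as $x$ (e.g.\ a constant map onto a downset of like-typed free resources, or a thermalization map)---because then the contraction pattern $i$ that is admissible for $x$ is admissible for $a$ as well, forcing $t \apply_i a$ to be the singleton $\{b\}$. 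In the unimodular case ($\abs{\mathcal{I}} = 1$) with a total action this is automatic. I would therefore either record this homogeneity hypothesis explicitly or restrict to well-typed encodings, after which the dualized argument goes through verbatim.
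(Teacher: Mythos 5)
You take exactly the paper's route: the paper's entire proof of this lemma is the one-line remark that it is ``analogous'' to that of \cref{lem:oplax_covariant_isotone}, with the reversed inclusion used to build a degradation instead of an enhancement, and your chain $t \apply_i a \subseteq t \apply_i \psi(x) \subseteq \psi(t \apply_i x) = \psi(y)$ together with the witness $(t,t) \apply_i (x,a) = (y,b)$ is precisely that dualization.

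The well-definedness obstacle you flag, however, is genuine, and the paper's proof silently steps over it; the two cases are not symmetric under dualization. In the oplax case the inclusion $\phi(y) \subseteq t \apply_i \phi(x)$ supplies a preimage $a_b \in \phi(x)$ for \emph{every} $b$ at which the enhancement must be defined, whereas the lax inclusion \eqref{eq:lax_atom_natural} says nothing about those $a \in \psi(x)$ with $t \apply_i a = \emptyset$, and a degradation must be defined on all of $\bigl( x, \psi(x) \bigr)$. With partial actions this can fail outright: take a unimodular theory with atoms $\{x,y,a,b\}$ and free transformations $\{1,t\}$, where $t \apply x = y$, $t \apply y = y$, $t \apply a = t \apply b = \emptyset$, and set $\psi(x) = \{a\}$, $\psi(y) = \psi(a) = \psi(b) = \{b\}$. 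One checks that $\psi$ is lax equivariant and $x \freeconv y$, yet neither $(1,1)$ nor $(t,t)$ maps $(x,a)$ to any element of $\bigl( y, \psi(y) \bigr)$ (the second component dies under $t$), so $\bigl( x, \psi(x) \bigr) \not\degconv \bigl( y, \psi(y) \bigr)$ and the map in \eqref{eq:lax_covariant_isotone} is not an isotone. The same issue afflicts the paper's motivating instance $\mathcal{E}_{X_{\rm free}}$ in a typed theory of states, where a single channel witnessing $x \freeconv y$ cannot also act on free states of a different type; there the final monotone survives only because the contraction being pulled back is undefined on mismatched pairs. So the homogeneity hypothesis you record---ensuring that the pair $(t,i)$ realizing the conversion acts nontrivially on every $a \in \psi(x)$, as it does for same-type encodings or total actions---is not optional bookkeeping but exactly what makes the lemma true as stated, and it is implicitly in force in all of the paper's applications (compare the restriction to same-type tuples in \cref{sec:dist_set-up}). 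With that hypothesis made explicit, your argument is complete, and it is more careful than the paper's own.
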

					\begin{proof}
						The proof is analogous to that of \cref{lem:oplax_covariant_isotone}.
						The reverse direction of the relation \eqref{eq:oplax_covariant_isotone_2} allows us to construct a degradation as opposed to an enhancement.
					\end{proof}
					
					The upshot of this discussion is that for every contraction $f \colon X^{\bm{2}} \to \reals$ and every downset $D$ of $(X, \freeconv)$, the function 
					\begin{equation}\label{eq:min_dis_downset}
						\inf \circ f_* \circ \mathcal{E}_{D} (x) = \inf \Set*[\big]{ f(x,a)  \given  a \in D}
					\end{equation}
					is a resource monotone in the original resource theory.
					For every downset distinct from the set of free resources, therefore, one obtains a corresponding variation on a monotone expressing the minimal distinguishability from the free set---one that quantifies the distinguishability from the chosen downset $D$ according to $f$.
						
					\begin{example}[monotones quantifying distinguishability from a downset]
					\label{examplesMonotonesDCsets} \hspace{0pt}
						\begin{enumerate}
							\item Consider the resource theory of bipartite quantum entanglement \cite{Horodecki2009} with the free operations given by LOCC processes \cite{Chitambar2014}.
							For any contraction $f$ of quantum states, the distinguishability (according to $f$) between a state and the set of separable states is a popular entanglement monotone. 
							Given that the set of states with entanglement rank at most $r$ is a downset in this resource theory,\footnotemark{} for every $r$, the distinguishability (according to $f$) between a state and the set of states with entanglement rank at most $r$ is also an entanglement monotone.
							\footnotetext{This result can be found in \cite[theorem 6.23]{watrous2018theory} and it has been first shown in \cite{Terhal2000} where entanglement rank has been introduced under the name ``Schmidt number''.}%

							\item Resource theories of quantum coherence \cite{streltsov2017colloquium} describe resources of ``quantum superposition'' in a given basis so that the free states are those given by diagonal density matrices.
								Besides these, there are downsets of bounded ``coherence number''. 
								In particular, a pure state is $r$-incoherent if it is block-diagonal with blocks of size no larger than $r \times r$.
								For general states, we can take the convex hull of these to obtain all the $r$-incoherent states \cite{johnston2018evaluating}.
								The coherence number $r$ introduced in \cite{sperling2015convex} plays the analogous role of Schmidt number from entanglement theory \cite{killoran2016converting,regula2018converting}.
								Indeed, $r$-incoherent states form a downset for any $r$ and can be thus used to generate monotones via \cref{eq:min_dis_downset}.
							
							\item In the resource theory of multipartite quantum entanglement, a partition of the $p$ parties is said to have radius at most $q$ if each of its elements consists of at most $q$ parties.
							Then, for any given $q$, the set of states that are convex combinations of pure states separable for a partition of radius at most $q$ forms a downward closed set.
							In particular, the specific case of $p = 3$ and $q=2$ defines the set of states that are deemed to be \emph{not} intrinsically 3-way entangled, and so this set is downward closed.
							Consequently, the distinguishability (according to $f$) between a state and this set is a multipartite entanglement monotone that quantifies intrinsic 3-way entanglement.
							
							\item Resource theories of $G$-asymmetry provide another illustrative example. 
							In particular, the set of states that are symmetric under a subgroup $H$ of $G$ form a downset. 
							Therefore, the distinguishability (according to $f$) between a state and the states symmetric under $H$ is also an asymmetry monotone. 
							Roughly speaking, of all the ways that a state may break $G$-symmetry, the extent to which it does so by breaking $H$-symmetry is quantified by this monotone.\footnotemark{}
							\footnotetext{If the contraction $f$ is the relative entropy, then this monotone becomes $S(\rho \mathrel{\|} \gamma_H(\rho))$ and is equivalent to the Holevo asymmetry monotone $S(\gamma_H(\rho)) - S(\rho)$ associated to the uniform twirling $\gamma_H$ over $H$.
								This equivalence follows from \cite[proposition 2]{Gour2009}. 
								Note that the simplest case of such a monotone, $S(\gamma_G(\rho)) - S(\rho)$, was introduced in \cite{Vaccaro2008}.}%
						\end{enumerate}
					\end{example}
				
					We can repeat the construction for $f$ which is a $k$-contraction instead of a $2$-contraction.
					Just like \cref{rem:joint_oplax_covariant_isotone} points out in the oplax equivariant case, given a family of lax equivariant functions $\psi_j$, the map
					\begin{equation}
						x  \mapsto \bigl(  \psi_1(x), \psi_2(x), \dots , \psi_k(x) \bigr)
					\end{equation}
					is an isotone of type $(X, \freeconv) \to (\mathcal{P}(X^{\bm{k}}), \freeconv_{\rm deg})$ by the same argument as the one from \cref{lem:lax_covariant_isotone}.
					Thus, for every $k$-contraction $f \colon X^{\bm{k}} \to \reals$ and a collection of downsets $D_j$, the function 
					\begin{equation}\label{eq:min_k-dist_monotone}
						x \mapsto \inf \Set*[\big]{ f(x,a_2, a_3, \dots a_k)  \given  a_j \in D_j}
					\end{equation}
					is a resource monotone in the original resource theory.
					Of course, one can also permute the arguments of $f$ arbitrarily in the above expression.
			
				\subsection{Convex Resource Theories}\label{sec:rt_convex}
		
					As an example of how the generalized construction of monotones from $k$-contractions appears in a more concrete setting, we examine arguably two of the most ubiquitous monotones---resource weight \cite{lewenstein1998separability,Barrett2006,Abramsky2017} and resource robustness \cite{vidal1999robustness,harrow2003robustness}---within the context of convex resource theories.
					By connecting them to a monotone in the resource theory of distinguishability, we complement the results of \cite{takagi2019general,Skrzypczyk2019,Ducuara2019}.
					In the first two articles, robustness measures are connected to discrimination tasks, while the last one describes a similar connection between weight measures and the state exclusion tasks.
					Before we discuss weight and robustness, however, we take a brief detour to introduce convexity within our framework.
					
					Many resource theories, particularly those studied at present, come with a convex-linear structure and are generally even embedded in inner product spaces.
					This is a consequence of the probabilistic behaviour of quantum systems and stochastic classical systems, which provide the most common context for resource theories.
					As a result, many of the methods are based on ideas of convex optimization \cite{Chitambar2018} and convex geometry \cite{Regula2017,zhou2020general} more broadly.
					To express these in our abstract set-up, we need to supply the framework with a convex structure.
					
					One could do so in an abstract way as in \cite[appendix F]{DelRio2015} that incorporates notions of uncertainty beyond the standard probabilistic calculus.
					However, our aim here is not to develop a general framework of \emph{convex resource theories}.
					The goal is to use convex structure to illustrate how some of the concrete results that require convexity fit into the quantale module picture.
					Therefore, we restrict our attention to the ``probabilistic convexity''.
					Additionally, we consider listable resource theories only. 
					They are concrete resource theories in which $\star$ operation can be decomposed into a union of partial semigroup operations as in \cref{ex:free_quantale_multi} and similarly for $\apply$.
					In particular, if the semigroups are labelled by an index $i$, then each $s \star_i t$ and each $t \apply_i x$ is an individual resource or the empty set.
					
					In this context, we demand that in a convex resource theory, both\footnotemark{} $X$ and $T$ are convex sets in (topological) affine spaces which we think of as vector spaces without origin.
					\footnotetext{There is an intermediate notion in which only the resources carry a convex structure (see \cite{DelRio2015}).
					Quantum resource theories with transformations by unitary operations would generally fall into this class that we do not consider in this thesis.}%
					Note that we do not the need convex combination of arbitrary resources (or transformations) to be a valid resource (or transformation).
					Just like it makes sense to restrict hypothesis encodings to tuples of elements of the same type \cite[appendix C]{Monotones}, we would only expect to be able to take convex combinations of elements whose types match.
					In other words, we do not presume that a convex combination of states of distinct systems exists, for instance.
					Without using typed resource theories (which we avoid in this thesis), we can implement such an idea by requiring that convex combinations are merely partial operations.
					Moreover, the operations in a convex resource theory should respect convex combinations.
					That is, for all $i$, all $\lambda \in [0,1]$, and all individual transformations and resources, we should have
					\begin{equation}\label{eq:comp_conv_comp}
						\begin{alignedat}{9}
							\bigl( \lambda \, s + ( 1 - \lambda ) \, t \bigr) &\star_i u &\;&= \;\;& \lambda \, (s \star_i u) &+ ( 1 - \lambda ) \, (t \star_i u) \\
							s &\star_i \bigl( \lambda \, t + ( 1 - \lambda ) \, u \bigr) &\;&= \;\;& \lambda \, (s \star_i t) &+ ( 1 - \lambda ) \, ( s \star_i u ) \\
							\bigl( \lambda \, s + ( 1 - \lambda ) \, t \bigr) &\apply_i x &\;&= \;\;& \lambda \, (s \apply_i x) &+ ( 1 - \lambda ) \, (t \apply_i x) \\
							s &\apply_i \bigl( \lambda \, x + ( 1 - \lambda ) \, y \bigr) &\;&= \;\;& \lambda \, (s \apply_i x) &+ ( 1 - \lambda ) \, ( s \apply_i y ) 
						\end{alignedat}
					\end{equation}
					as so-called Kleene equalities in each case.
					That is, the relations impose that the left-hand side is defined if and only if the corresponding right-hand side is defined.
					
					\begin{remark}
						In certain convex resource theories, we may want to impose that an agent's access to resources is closed under convex combinations.
						That is, instead of using the powerset suplattices $\mathcal{P}(X)$ and $\mathcal{P}(T)$, we may use suplattices which only contain the convex subsets of $X$ and $T$ respectively.
						Suprema therein are given by the convex hull of the union of respective sets.
					\end{remark}
			
				\subsection{Resource Weight and Robustness}
				\label{sec:weight and robustness}
								
					\begin{figure}[h]
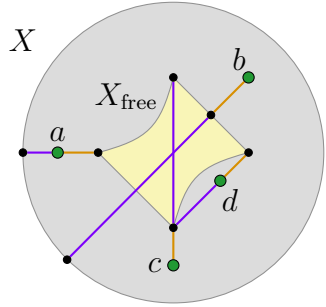

						\ctikzfig{weight_rob}
						\caption[Resource Weight and Robustness]{A pictorial depiction of the optimal convex decompositions for each of the four monotones mentioned in this section: (a) resource weight $m_{\rm w}$, (b) resource robustness $m_{\rm rob}$, (c) free robustness $m_{\rm f.\,rob}$, and (d) resource non-convexity $m_{\rm nc}$.
						Grey disc represents the set $X$ of all individual resources, while the yellow ``hourglass'' witin represents the free resources among them.
						In order to illustrate each of the four optimal decompositions, we select a distinct resource (element of $X$), depicted by a green node.
						These demopositions are given by the three points along one of the line segments with an orange and purple portion.
						The value of each of the monotones for these; $m_{\rm w}(a), m_{\rm rob}(b), m_{\rm f.\,rob}(c)$, and $m_{\rm nc}(d)$; can be read off as the length of the respective orange segment divided by the total lenth of the orange and purple segments combined.}
						\label{fig:weight_rob}
					\end{figure}
				
					Returning to weight and robustness measures, let us consider a listable resource theory with a convex structure. 
					We can then define a particularly simple $3$-contraction that we call \textbf{convex alignment}:\footnotemark{}
					\footnotetext{Recall from \cref{eq:empty_bound} that our convention for the infimum of the empty set is the top element.
					The value of $\mathsf{cva}(x,y,z)$ whenever $x$ is not a convex combination of $y$ and $z$ thus depends on the codomain of $\mathsf{cva}$.
					If we think of it as a function $X \to \reals$, then $\mathsf{cva}(x,y,z)$ would be defined as $\infty$ in such cases.}%
					\begin{equation}\label{eq:alignment}
						\mathsf{cva} (x,y,z) \coloneqq \inf \Set*[\big]{\lambda \in [0,1] \given x = \lambda \, y + (1-\lambda) \, z }
					\end{equation}
					Indeed, this function is a monotone in any resource theory of $\bm{3}$-distinguishability.
					To show this, it suffices to consider the unconstrained case.
					That is, we want to establish that, for any transformation $t \in T$, any $i \in \mathcal{I}$ and any triple of resources $x,y,z$, we have
					\begin{equation}\label{eq:alignment_monotone}
						\mathsf{cva}(x,y,z) \geq  \mathsf{cva} (t \apply_i x, \, t \apply_i y, \, t \apply_i z).
					\end{equation}
					If $\mathsf{cva}(x,y,z)$ is infinite, then its value cannot increase and \cref{eq:alignment_monotone} holds trivially.
					Otherwise, for any $\lambda \in [0,1]$ such that $x$ equals $\lambda \, y + (1-\lambda) \, z$, we have
					\begin{equation}
						t \apply_i x = t \apply_i \bigl( \lambda \, y + (1-\lambda) \, z \bigr) =  \lambda \, t \apply_i y + (1-\lambda) \, t \apply_i z
					\end{equation}
					thanks to the convex-linearity of $\apply$ as given by \eqref{eq:comp_conv_comp}.
					Therefore, inequality \eqref{eq:alignment_monotone} also holds whenever $\mathsf{cva}(x,y,z)$ is finite.
					
					Now we can use the construction of monotones from $k$-contractions described in \cref{sec:Monotones From Contractions in General} to obtain resource monotones. 
					In particular, using $\mathsf{cva}$ as the $3$-contraction in \eqref{eq:min_k-dist_monotone} leads to a number of interesting monotones for different choices of downsets. 
					There are many downsets one could use for each $D_j$, but here we will restrict our attention to the two most obvious choices---$X_{\rm free}$ and $X$.
					Even with this restriction, one can obtain 12 potentially distinct monotones. 
					Out of these, 8 give rise to a constant monotone and are therefore uninteresting.
					The other 4 are as follows.
					\begin{enumerate}[(a)]
						\item The \textbf{resource weight} (also known as the resource fraction) $m_{\rm w} \colon X \to \reals$ is defined as 
							\begin{align}
								m_{\rm w}(x) &\coloneqq \inf \Set*[\big]{ \mathsf{cva}(x,y,z)  \given y \in X , \, z \in X_{\rm free} }. \\
							\intertext{Explicitly, its value is}
								\label{eq:weight_2} m_{\rm w} (x) &= \inf \Set*[\big]{ \lambda  \given  x \in \lambda \, X + (1-\lambda) \, X_{\rm free} }.
							\end{align}
							It corresponds to the smallest weight of a resource $y$ that can be used to form $x$ by convex mixture with some free resource $z$.
						
						\item The \textbf{resource robustness} (also known as global robustness) $m_{\rm rob} \colon X \to \reals$ is defined as 
							\begin{align}
								m_{\rm rob}(z) &\coloneqq \inf \Set*[\big]{ \mathsf{cva}(x,y,z)  \given x \in X_{\rm free} , \, y \in X }. \\
							\intertext{Explicitly, its value is}
									m_{\rm rob} (z) &= \inf \Set*[\big]{ \lambda  \given  \lambda \, y + (1-\lambda) \, z \in X_{\rm free} , \, y \in X }.
							\end{align}
							It is the smallest weight of a resource $y$ that one needs to convexly mix with $z$ in order to obtain a free resource.
							
						\item The \textbf{free robustness} (also known as standard robustness) $m_{\rm f.\,rob} \colon X \to \reals$ is defined as 
							\begin{align}
								m_{\rm f.\,rob}(z) &\coloneqq \inf \Set*[\big]{ \mathsf{cva}(x,y,z)  \given x \in X_{\rm free} , \, y \in X_{\rm free} }. \\
							\intertext{Explicitly, its value is}
									m_{\rm f.\,rob} (z) &= \inf \Set*[\big]{ \lambda  \given  \lambda \, y + (1-\lambda) \, z \in X_{\rm free} , \, y \in X_{\rm free} }.
							\end{align}
							It is the smallest weight of a free resource $y$ that one needs to convexly mix with $z$ in order to obtain another free resource.
						
						\item The \textbf{resource non-convexity} $m_{\rm nc} \colon X \to \reals$ is defined as 
							\begin{align}
								m_{\rm nc}(x) &\coloneqq \inf \Set*[\big]{ \mathsf{cva}(x,y,z)  \given y \in X_{\rm free} , \, z \in X_{\rm free} }. \\
							\intertext{Explicitly, its value is}
									m_{\rm nc} (x) &= \inf \Set*[\big]{ \lambda  \given  x \in \lambda \, X_{\rm free} + (1-\lambda) \, X_{\rm free} }.
							\end{align}
							It is trivial if the set of free resources (for each resource type) is a convex set. 
							Otherwise, it tells us about the ordering of resources that are within the convex hull of the free resources, but are not free themselves.
							While many resource theories have convex free subtheories, some do not.
							Examples include resource theory of non-Gaussianity \cite{zhuang2018resource,lami2018gaussian}, as well as resource theories in which free operations are those compatible with a causal structure without a common source of randomness.
							Resource non-convexity thus quantifies the relative distance of a resource from the set of free resources in terms of its convex decompositions into free resources. 
							Its value is set to $0$ if the resource in question is free itself, and $\infty$ if it is outside of the convex hull of the free resources.
					\end{enumerate}
					
					As a consequence of our discussion in \cref{sec:Monotones From Contractions in General}, all four functions above are resource monotones. 
					However, being able to prove the monotonicity of these four functions is not where the value of the general construction of monotones from contractions lies.
					What it provides is an understanding of the assumptions required for these functions to be monotones.
					Furthermore, it gives us a unified picture, within which we can adjust various elements of the monotone constructions according to the question we are interested in.
					In this case, there are many more monotones one can obtain from $\mathsf{cva}$ in a similar way, since $X$ or $X_{\rm free}$ can be replaced by any other downset in the optimization.
		
			\subsection{Translating Monotones in General}
			\label{sec:General Translating}
			
				In preceding sections, we saw that many monotones used in practice arise as translations of measures of distinguishability. 
				We now generalize these techniques to be able to translate monotones between arbitrary concrete resource theories with underlying quantale modules $(\mathcal{P}(T), \mathcal{P}(X), \apply)$ and $(\mathcal{P}(S), \mathcal{P}(W), \apply)$, provided that they can be related suitably.
				The choices of the mediating preorders are again going to be the enhancement and degradation preorders constructed from the resource ordering of individual resources.
				For any monotone $f$ on $W$, we thus have corresponding root monotones $\sup \circ f_*$ and $\inf \circ f_*$ as defined in \cref{lem:monotone_extensions}.
				
				In order to construct maps that can be used as the mediating isotones of type $(X, \freeconv) \to (\mathcal{P}(W), \succeq_{\rm enh})$, we can use the following sufficient and necessary condition. 
			\begin{lemma}[mediating isotones for $\succeq_{\rm enh}$]
				\label{lem:mediating maps for enhancement order}
				Let $(\mathcal{P}(T), \mathcal{P}(X), \apply)$, $(\mathcal{P}(S), \mathcal{P}(W), \apply)$ be two concrete resource theories with free transformations given by $T_{\rm free}$ and $S_{\rm free}$ respectively.
				Let $\mathcal{F} \colon \mathcal{P}(X) \to \mathcal{P}(W)$ be a suplattice homomorphism.
				We have
				\begin{equation}
					\label{eq:mediating maps for enhancement order}
					S_{\rm free} \apply \mathcal{F}(x) \supseteq \mathcal{F}(T_{\rm free} \apply x) 
				\end{equation}
				for all $x \in X$ if and only if $\mathcal{F}$ is an isotone of type $(\mathcal{P}(X), \succeq) \to (\mathcal{P}(W), \succeq)$.
			\end{lemma}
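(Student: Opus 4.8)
The plan is to prove the two implications separately, in both cases reducing the set-level statement to the behaviour of $\mathcal{F}$ on atoms by exploiting that $\mathcal{F}$ is a suplattice homomorphism and that the action $\apply$ distributes over suprema in its resource argument (equation \eqref{eq:qmodule_res_disj}). Throughout I would use that the two resource orderings are the respective reachability preorders (\cref{def:free_conv}), so that $Y \freeconv Z$ unfolds to $T_{\rm free} \apply Y \supseteq Z$ while $\mathcal{F}(Y) \freeconv \mathcal{F}(Z)$ unfolds to $S_{\rm free} \apply \mathcal{F}(Y) \supseteq \mathcal{F}(Z)$. Since we work in a concrete resource theory, every $Y$ is the supremum of its atoms, which is what makes the reduction possible.

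For the direction from \eqref{eq:mediating maps for enhancement order} to isotonicity, I would take arbitrary $Y,Z \in \mathcal{P}(X)$ with $Y \freeconv Z$, i.e.\ $T_{\rm free} \apply Y \supseteq Z$. Writing $Y$ as the union of its atoms and using that both $\mathcal{F}$ and $S_{\rm free} \apply \ph$ preserve suprema, I would compute
\begin{equation}
	S_{\rm free} \apply \mathcal{F}(Y) = \bigcup_{y \in Y} S_{\rm free} \apply \mathcal{F}(y) \supseteq \bigcup_{y \in Y} \mathcal{F}(T_{\rm free} \apply y) = \mathcal{F}(T_{\rm free} \apply Y),
\end{equation}
where the inclusion is exactly the hypothesis \eqref{eq:mediating maps for enhancement order} applied atom-wise, and the final equality again uses that $\mathcal{F}$ preserves suprema together with \eqref{eq:qmodule_res_disj}. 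Since suplattice homomorphisms are order-preserving (see \eqref{eq:hom_pres_ord}), the assumption $T_{\rm free} \apply Y \supseteq Z$ yields $\mathcal{F}(T_{\rm free} \apply Y) \supseteq \mathcal{F}(Z)$, and chaining the two inclusions gives $S_{\rm free} \apply \mathcal{F}(Y) \supseteq \mathcal{F}(Z)$, i.e.\ $\mathcal{F}(Y) \freeconv \mathcal{F}(Z)$.

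For the converse, I would observe that every atom satisfies $x \freeconv T_{\rm free} \apply x$ trivially, since this unfolds to $T_{\rm free} \apply x \supseteq T_{\rm free} \apply x$. Applying the assumed isotonicity of $\mathcal{F}$ to this relation gives $\mathcal{F}(x) \freeconv \mathcal{F}(T_{\rm free} \apply x)$, which is precisely $S_{\rm free} \apply \mathcal{F}(x) \supseteq \mathcal{F}(T_{\rm free} \apply x)$, recovering \eqref{eq:mediating maps for enhancement order}.

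I expect the only point requiring care---the main obstacle, such as it is---to be the forward direction's reduction from arbitrary sets to atoms: it is essential that both $\mathcal{F}$ and the action distribute over the unions expressing $Y$ and $T_{\rm free} \apply Y$ as suprema of atomic contributions, which is exactly where the hypotheses that $\mathcal{F}$ is a suplattice homomorphism and that we work in a concrete (uniquely atomistic) resource theory enter. Everything else is a routine unfolding of the reachability preorders.
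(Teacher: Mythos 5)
Your proof is correct and follows essentially the same route as the paper's: lift the atom-wise hypothesis \eqref{eq:mediating maps for enhancement order} to arbitrary sets using that $\mathcal{F}$ and $\apply$ preserve unions, chain inclusions to get isotonicity, and for the converse apply isotonicity to the trivial relation $x \freeconv T_{\rm free} \apply x$. The only difference is that you spell out the union-decomposition step explicitly, which the paper compresses into a single sentence.
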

			Note that we can express relation \eqref{eq:mediating maps for enhancement order} succinctly as $\downwrt{S_{\rm free}} \circ \mathcal{F} \supseteq \mathcal{F} \circ \downwrt{T_{\rm free}}$, where the relation $\supseteq$ among maps is defined point-wise:
			\begin{equation}
				\mathcal{G} \supseteq \mathcal{G}'  \quad \coloniff \quad   \forall \, x \in X \; : \; \mathcal{G} (x) \supseteq \mathcal{G}' (x)
			\end{equation}
			\begin{proof}
				First of all, note that relation \eqref{eq:mediating maps for enhancement order}, in conjunction with the fact that both $\mathcal{F}$ and $\apply$ preserve unions, implies that the same relation holds if we replace an individual resource $x$ by any subset $Y$ of $X$.
				
				We need to show that for $\mathcal{F}$ as above, the implication $Y \succeq Z \implies \mathcal{F}(Y) \succeq \mathcal{F}(Z)$ holds for any $Y,Z \in \mathcal{P}(X)$.
				This can be broken down as follows:
				\begin{equation}
					\begin{alignedat}{999}
						Y \succeq Z 
							\iff 		&\;& 	T_{\rm free} \apply Y &\supseteq Z \\
							\implies 	&\;&	\mathcal{F}(T_{\rm free} \apply Y) &\supseteq \mathcal{F}(Z)   \\
							\implies 	&\;&	S_{\rm free} \apply \mathcal{F}(Y) &\supseteq \mathcal{F}(Z) \\
							\iff 		&\;& 	\mathcal{F}(Y) &\succeq \mathcal{F}(Z),
					\end{alignedat}
				\end{equation}
				where the second implication follows from property \eqref{eq:mediating maps for enhancement order}.
				
				Conversely, if $\mathcal{F}$ is an isotone, then we have
				\begin{equation}
					\begin{alignedat}{999}
						x \freeconv T_{\rm free} \apply x 
							&\implies & 	 \mathcal{F}(x) &\freeconv \mathcal{F}(T_{\rm free} \apply x) \\
							&\implies & 	 S_{\rm free} \apply \mathcal{F}(x) &\supseteq \mathcal{F}(T_{\rm free} \apply x),
					\end{alignedat}
				\end{equation}
				and thus $\mathcal{F}$ must satisfy \eqref{eq:mediating maps for enhancement order} since $x \freeconv \down (x)$ is true for any $x \in X$.
			\end{proof}
			There is of course an analogous lemma for resource theories that are not uniquely atomistic, but we cannot restrict the applicability of condition \eqref{eq:mediating maps for enhancement order} to atoms then.
			
			As we have seen in \cref{prop:isotone_from_morphism}, for any oplax morphism of resource theories $(\ell,f)$, the resource mapping $f$ is an isotone.
			Indeed, relation \eqref{eq:mediating maps for enhancement order} follows for any pair $(\ell,f)$ satisfying the assumptions of \cref{prop:isotone_from_morphism} by
			\begin{equation}
				f(T_{\rm free} \apply x) \subseteq \ell(T_{\rm free}) \apply f(x) \subseteq S_{\rm free} \apply f(x).
			\end{equation}
			
			\Cref{lem:oplax_covariant_isotone}, which establishes that mapping a resource $x$ to $(x, \phi(x))$ for some oplax equivariant endomorphism of suplattices $\phi$ is order-preserving, can also be recovered from \cref{lem:mediating maps for enhancement order} if we choose $\mathcal{F}$ to be given by $\mathcal{F}(x) = (x, \phi(x))$.
			
			As one would expect, we have a corresponding dual version of the above condition. 
			However, unlike in the case of listable resource theories where lax equivariant maps satisfy an opposite set inclusion to oplax equivariant ones, we cannot express the dual condition as the opposite of \eqref{eq:mediating maps for enhancement order} here.
			\begin{lemma}[mediating isotones for $\succeq_{\rm deg}$]
				\label{lem:mediating maps for degradation order}
				Given the same set-up as in \cref{lem:mediating maps for enhancement order} (and writing $\mathcal{G}$ instead of $\mathcal{F}$); we have
				\begin{equation}
					\label{eq:mediating maps for degradation order}
					 \mathcal{G} \bigl( \upwrt{T_{\rm free}} (x)  \bigr)  \subseteq \upwrt{S_{\rm free}} \bigl( \mathcal{G}(x) \bigr)
				\end{equation}
				for all $x \in X$ if and only if $\mathcal{G}$ is an isotone of type $(\mathcal{P}(X), \succeq_{\rm deg}) \to (\mathcal{P}(W), \succeq_{\rm deg})$.
			\end{lemma}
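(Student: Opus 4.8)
The plan is to mirror the proof of \cref{lem:mediating maps for enhancement order}, using the characterization of the degradation preorder in terms of upward closures from \cref{lem:deg_up} and the composition rule for preimage maps from \cref{lem:composing image maps}, while being careful about the subtleties that distinguish the degradation case from the enhancement case. Recall that by \cref{lem:deg_up}, $Y \degconv Z$ is equivalent to $\up(Y) \subseteq \up(Z)$, and that $\up = \upwrt{T_{\rm free}}$ (respectively $\upwrt{S_{\rm free}}$) by the identification made just after \cref{def:image_map}. The root fact we need is that \cref{lem:composing image maps} gives $\upwrt{U} \circ \upwrt{V} = \upwrt{V \star U}$, and that $T_{\rm free}$ and $S_{\rm free}$ are idempotent and reflexive so that $\up \circ \up = \up$ on both sides.

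For the forward direction, assume \eqref{eq:mediating maps for degradation order} holds for all $x \in X$. First I would note that since both $\mathcal{G}$ and the preimage maps are suplattice homomorphisms (hence preserve unions), the inclusion $\mathcal{G}(\upwrt{T_{\rm free}}(Y)) \subseteq \upwrt{S_{\rm free}}(\mathcal{G}(Y))$ extends from atoms $x$ to arbitrary subsets $Y$ of $X$. Then I would argue the following chain: given $Y \degconv Z$, i.e.\ $\upwrt{T_{\rm free}}(Y) \subseteq \upwrt{T_{\rm free}}(Z)$, applying $\mathcal{G}$ and using monotonicity under $\subseteq$ yields $\mathcal{G}(\upwrt{T_{\rm free}}(Y)) \subseteq \mathcal{G}(\upwrt{T_{\rm free}}(Z))$; combining with \eqref{eq:mediating maps for degradation order} on the right and with $\mathcal{G}(Y) \subseteq \mathcal{G}(\upwrt{T_{\rm free}}(Y))$ (which holds by reflexivity of $T_{\rm free}$) on the left, I obtain $\mathcal{G}(Y) \subseteq \upwrt{S_{\rm free}}(\mathcal{G}(Z))$. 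Finally I would apply $\upwrt{S_{\rm free}}$ to both sides and use idempotence of $\up$ on the $S$-side to conclude $\upwrt{S_{\rm free}}(\mathcal{G}(Y)) \subseteq \upwrt{S_{\rm free}}(\mathcal{G}(Z))$, which is precisely $\mathcal{G}(Y) \degconv \mathcal{G}(Z)$ by \cref{lem:deg_up}.

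For the converse, I would assume $\mathcal{G}$ is an isotone of type $(\mathcal{P}(X), \succeq_{\rm deg}) \to (\mathcal{P}(W), \succeq_{\rm deg})$ and derive \eqref{eq:mediating maps for degradation order}. The key observation is that for any $x$, the set $\upwrt{T_{\rm free}}(x)$ satisfies $\upwrt{T_{\rm free}}(x) \degconv x$ (every element of the preimage can be degraded to $x$), so applying $\mathcal{G}$ and using that it preserves $\degconv$ gives $\mathcal{G}(\upwrt{T_{\rm free}}(x)) \degconv \mathcal{G}(x)$, which by \cref{lem:deg_up} means $\up(\mathcal{G}(\upwrt{T_{\rm free}}(x))) \subseteq \up(\mathcal{G}(x))$; since the left side contains $\mathcal{G}(\upwrt{T_{\rm free}}(x))$ and $\up(\mathcal{G}(x)) = \upwrt{S_{\rm free}}(\mathcal{G}(x))$, the desired inclusion follows.

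The main obstacle I anticipate is the asymmetry flagged in the remark preceding the statement: unlike the enhancement case, the dual condition is \emph{not} simply the reversed inclusion of \eqref{eq:mediating maps for enhancement order}, because the preimage maps compose in the opposite order ($\upwrt{U} \circ \upwrt{V} = \upwrt{V \star U}$, with the factors swapped relative to $\downwrt{U} \circ \downwrt{V} = \downwrt{U \star V}$). I would need to track carefully that the $\up$-operators on the two sides of \eqref{eq:mediating maps for degradation order} act on different theories ($T_{\rm free}$ inside $\mathcal{G}$'s argument, $S_{\rm free}$ outside), and ensure each application of idempotence and reflexivity is invoked for the correct one of $T_{\rm free}$ or $S_{\rm free}$. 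Verifying that $\mathcal{G}(Y) \subseteq \mathcal{G}(\upwrt{T_{\rm free}}(Y))$ genuinely follows from reflexivity $T_{\rm free} \supseteq 1$, rather than assuming some convenient closure property, is the step most likely to hide an error.
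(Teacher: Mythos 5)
Your proposal is correct and takes essentially the same route as the paper's own proof: both directions translate $\degconv$ into inclusions of upward closures via \cref{lem:deg_up}, extend the atom-wise hypothesis to sets by union-preservation, chain the inclusions and collapse $\upwrt{S_{\rm free}} \circ \upwrt{S_{\rm free}}$ by idempotence, and obtain the converse by applying isotonicity to the canonical relation $\upwrt{T_{\rm free}}(x) \degconv x$. The composition-order subtlety you flag is harmless here, since only the case $U = V = S_{\rm free}$ of \cref{lem:composing image maps} is ever needed.
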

			\begin{proof}
				The proof is very much in line with that of \cref{lem:mediating maps for enhancement order}.
				Given subsets $Y,Z$ of $X$, we have
				\begin{equation}
					\begin{alignedat}{999}
						Y \freeconv_{\rm deg} Z  
							\implies 	&\;& 		Y &\subseteq  \upwrt{T_{\rm free}} (Z) \\
							\implies 	&\;& 	\mathcal{G}(Y)  &\subseteq \mathcal{G}\bigl( \upwrt{T_{\rm free}} (Z) \bigr) \\
							\implies 	&\;& 	\mathcal{G}(Y)  &\subseteq \upwrt{S_{\rm free}} \bigl( \mathcal{G} (Z) \bigr) \\
							\implies 	&\;& 	\upwrt{S_{\rm free}} \bigl( \mathcal{G}(Y) \bigr) &\subseteq \upwrt{S_{\rm free}} \bigl( \mathcal{G} (Z) \bigr)  \\
							\iff 		&\;& 		\mathcal{G}(Y) &\freeconv_{\rm deg} \mathcal{G}(Z),
					\end{alignedat}
				\end{equation}
				which proves the claim.
				
				For the converse, we argue as before.
				If $\mathcal{G}$ is an isotone, then we can infer
				\begin{equation}
					\begin{split}
						\upwrt{T_{\rm free}} (x) \freeconv_{\rm deg}  x  &\implies   \mathcal{G}\bigl( \upwrt{T_{\rm free}} (x) \bigr)  \freeconv_{\rm deg}  \mathcal{G}(x) \\
							&\implies \mathcal{G} \bigl( \upwrt{T_{\rm free}} (x)  \bigr)  \subseteq \upwrt{S_{\rm free}} \bigl( \mathcal{G}(x) \bigr),
					\end{split}
				\end{equation}
				and thus relation \eqref{eq:mediating maps for degradation order} holds for any $x$.
			\end{proof}
			
			Recall \cref{lem:lax_covariant_isotone}, which establishes that mapping a resource $x$ to $(x, \psi(x))$ for some lax equivariant endomorphism of suplattices $\phi$ is order-preserving.
			Defining $\mathcal{G}$ to be the (unique) suplattice homomorphism that satisfies
			\begin{equation}
				\mathcal{G} (x) = \bigl( x, \psi(x) \bigr)
			\end{equation} 
			we can show that the assumptions of \cref{lem:lax_covariant_isotone} ensure that condition \eqref{eq:mediating maps for degradation order} holds.
			Explicitly, we have
			\begin{align}
				\mathcal{G} \bigl( \up (x)  \bigr) &= \Set*[\big]{ (y,b)  \given  \exists \, t \in T_{\rm free}, \; \exists \, i \in \mathcal{I} \; : \;  t \apply_i y = x , \; b \in \psi(y) } \\
			\intertext{and}
				\up \bigl( \mathcal{G} (x) \bigr) &= \Set*[\big]{ (y,b)  \given   \exists \, t \in T_{\rm free}, \; \exists \, i \in \mathcal{I} \; : \;  t \apply_i y = x, \; t \apply_i b \in \psi(x) }.
			\end{align}
			The assumption that $\psi$ is lax equivariant says that $b \in \psi(y)$ implies $t \apply_i b \in \psi(x)$ and thus we have the desired relation 
			\begin{equation}
				\mathcal{G} \bigl( \up (x)  \bigr) \subseteq \up \bigl( \mathcal{G} (x) \bigr).
			\end{equation}
			Consequently, \cref{lem:lax_covariant_isotone} thus follows from \cref{lem:mediating maps for degradation order}.
			
	\section{Comparing Monotones}\label{sec:monotone_comparison}

		The general monotone constructions from \cref{sec:cost_yield,sec:translating} have several inputs that need to be specified to obtain the target resource monotone.
		For instance, cost and yield constructions require a choice of the gold standard $A$ and an evaluation thereof, as well as a right (or left) invariant set of transformations.
		On the other hand, measures of minimal distinguishability correspond to a choice of a contraction and a (family of) downset(s).
		
		In this section, we aim to address the following question:
		Which choices of these input parameters are good in the sense that they result in a useful resource monotone?
		In order to assess the usefulness of monotones as far as characterizing a preordered set $(X,\succeq)$ is concerned, we define a preorder $\sqsupseteq_{X}$ on the set of monotones---denoted by $\mathsf{Mon}(X)$.
		It is just the set of all order-preserving maps from $(X,\succeq)$ to $\ordreals$.
		In this context, we consider a monotone $m$ to be more ``useful'' than a monotone $\ell$ if it contains all of the information about $(X,\succeq)$ that $\ell$ does and possibly more.	
		We now formalize what we mean by the amount of information a monotone has about a preordered set.
		
		Note that a function $m \colon X \to \reals$ is a monotone if and only if for all pairs $(a,b) \in X \times X$, the following implication holds:
		\begin{equation}
			\label{eq:monotone inference}
			m(a) < m(b) \implies a \not \succeq b.
		\end{equation}
		That is, monotones contain information about the order relation ${\succeq} \subseteq X \times X$ insofar as they witness when pairs of elements of $X$ are \emph{not} related by $\succeq$.
		Of course, if $m(a) \geq m(b)$ holds, then the implication above does not allow us to learn anything about whether $a \succeq b$ is true or not.
		Given a monotone $m$, a pair $(a,b)$ is henceforth called \mbox{$\bm{m}$-\textbf{interesting}} if $m(a) < m(b)$ holds. 
		The $m$-interesting pairs are those for which we can learn whether $a \not \succeq b$ holds from the fact that $m$ is a monotone.
		
		The set of all $m$-interesting pairs for a monotone $m$ is denoted by 
		\begin{equation}
			\label{eq:m-interesting relation for monotones}
			\Interesting{m}{X, \succeq} \coloneqq \Set*[\big]{(a,b) \in X \times X  \given  m(a) < m(b) }.  
		\end{equation}
		We also refer to $\Interesting{m}{X, \succeq}$ as the $m$-interesting relation on $X$.
		
		\begin{definition}
			\label{def:Monotone preorder}
			Let $(X,\succeq)$ be a preordered set and let $\mathsf{Mon}(X)$ be the set of monotones $(X,\succeq) \to \ordreals$.
			We define the preorder $\sqsupseteq_{X}$ on $\mathsf{Mon}(X)$ via
			\begin{equation}
				\label{eq:Monotone preorder}
				m \sqsupseteq_{X} \ell  \quad\iff\quad  \Interesting{m}{X, \succeq} \supseteq \Interesting{\ell}{X, \succeq}
			\end{equation}
			and we say that $m$ is \textbf{more informative about} $\bm{(X,\succeq)}$ than $\ell$ is if $m \sqsupseteq_{X} \ell$ holds.
			Whenever the preordered set $(X, \succeq)$ is clear from context, we denote it by $\sqsupseteq$.
		\end{definition}
		For functions $m$ and $\ell$ which are monotones, we can express $m \sqsupseteq \ell$ also as
		\begin{subequations}
			\label{eq:Informative relation for monotones}
			\begin{alignat}{9}
				\label{eq:Informative relation for monotones 1}
				m \sqsupseteq \ell  
					\quad&\iff\quad  \forall \, a,b \in X \, &&: &\; \ell(a) &< \ell(b) \,&&\implies \,& m(a) &< m(b) \\
				\label{eq:Informative relation for monotones 2}
					\quad&\iff\quad  \forall \, a,b \in X \, &&: &\; m(a) &\geq m(b) \,&&\implies \,& \ell(a) &\geq \ell(b).
			\end{alignat}
		\end{subequations}
	
		We would like to compare the constructions of monotones appearing in \cref{sec:cost_yield,sec:translating} in terms of how informative the resulting monotones are, depending on the choices one can make.
		One of the input elements for cost and yield constructions is a partial function $f \colon X \to \reals$.
		Although it need not be a monotone on its domain $A$, it can still be understood as witnessing inconvertibility between some resources within $A$. 
		In \cref{thm:yield and cost from more informative functions} below, we prove that whenever $f$ witnesses all the pairs of inconvertible resources that $g$ does, then $f$ is at least as useful as $g$ is, when thought of as an input to the generalized yield and cost constructions.
		That is, in such a case $\yield{f}{D}$ is more informative about $(X,\succeq)$ than $\yield{g}{D}$ is and similarly for the cost construction.
		To make these kinds of statements more precise, we now introduce the notion of the resource inconvertibility that a partial function (as opposed to a monotone) witnesses.
		
		Let $f \colon X \to \reals$ be a partial function with domain $A$.
		We say that $f$ witnesses the inconvertibility of a pair of resources $(x,y)$ if both $f(x) < f(y)$ and $x \not \succeq y$ hold.
		As far as this property is concerned, we call such a pair of resources $(x,y)$ \mbox{$\bm{f}$-\textbf{interesting}}.
		
		The set of all $f$-interesting pairs for a partial function $f$ is denoted by 
		\begin{equation}
			\Interesting{f}{X, \succeq} \coloneqq \Set*[\big]{ (x,y) \in A \times A  \given  f(x) < f(y) \;\land\;  x \not \succeq y }.
		\end{equation}
		We also refer to $\Interesting{f}{X, \succeq}$ as the $f$-interesting relation on $X$.
		Note that this definition coincides with the $f$-interesting relation for a monotone $f$ given by \cref{eq:m-interesting relation for monotones}, whenever $f$ is indeed a monotone.
		This justifies the use of identical notation for these two concepts.	
	
		\begin{definition}
			\label{def:Function preorder}
			Let $(X,\succeq)$ be a preordered set and let $f, g \colon X \to \reals$ be partial functions with domains $A$ and $B$ respectively.
			We say that $f$ \textbf{witnesses more resource inconvertibility in} $\bm{(X, \succeq)}$ than $g$ does if $f \sqsupseteq g$ holds, where
			\begin{equation}
				\label{eq:Function preorder}
				f \sqsupseteq g  \quad\iff\quad  \Interesting{f}{X,\succeq} \supseteq \Interesting{g}{X,\succeq}.
			\end{equation}
		\end{definition}
		
		\begin{proposition}
			\label{thm:yield and cost from more informative functions}
			Consider a concrete resource theory $(\mathcal{P}(T), \mathcal{P}(X), \apply)$ with resource preorder $(X,\succeq)$.
			Let $D$ be a right-invariant and $S$ a left-invariant subset of $T$.
			Furthermore, let $f \colon X \to \reals$ and $g \colon X \to \reals$ be two partial functions with domains $A$ and $B$ respectively.
			
			If $f$ witnesses more resource inconvertibility in $(X, \succeq)$ than $g$ does, then $\yield{f}{D}$ is more informative about $(X, \succeq)$ than $\yield{g}{D}$ is and also $\cost{f}{S}$ is more informative about $(X, \succeq)$ than $\cost{g}{S}$ is.
			That is, we have
			\begin{subequations}
				\label{eq:yield and cost from more informative functions}
				\begin{alignat}{3}
					\label{eq:yield from more informative functions}
					f \sqsupseteq g &\,\implies \, & \yield{f}{D} &\sqsupseteq \yield{g}{D}, \\
					\label{eq:cost from more informative functions}
					f \sqsupseteq g &\,\implies \, & \cost{f}{S} &\sqsupseteq \cost{g}{S}.
				\end{alignat}
			\end{subequations}
			
			Moreover, if $f$ and $g$ are monotones on their respective domains, their domains $A, B$ coincide, and $D = S = T_{\rm free}$, then the converse of both implications holds as well. 
			That is, in such a case we have
			\begin{equation}
				\label{eq:yield and cost from more informative monotones}
				\begin{alignedat}{3}
					f \sqsupseteq g &\,\iff \, & \yield{f}{} &\sqsupseteq \yield{g}{}, \\
					f \sqsupseteq g &\,\iff \, & \cost{f}{} &\sqsupseteq \cost{g}{}.
				\end{alignedat}
			\end{equation}
		\end{proposition}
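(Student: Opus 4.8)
The plan is to rephrase both implications purely in terms of the interesting relations and then to transfer witnesses across the yield/cost constructions. Since $\yield{f}{D}$, $\yield{g}{D}$, $\cost{f}{S}$, $\cost{g}{S}$ are genuine monotones (by \cref{thm:yield,thm:cost}), for each of them the interesting relation reduces to $\{(x,y) : m(x) < m(y)\}$, so the goal $\yield{f}{D} \sqsupseteq \yield{g}{D}$ is exactly the implication $\yield{g}{D}(x) < \yield{g}{D}(y) \Rightarrow \yield{f}{D}(x) < \yield{f}{D}(y)$, and similarly for cost. The forward implications and the converse then become statements about moving between $\Interesting{f}{(X,\succeq)}$ on the gold standard $A$ and the interesting relations of the extended monotones on all of $X$.

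For the forward yield implication, I would start from $\yield{g}{D}(x) < \yield{g}{D}(y)$ and extract a witness: since $\yield{g}{D}(y) = \sup\{g(b) : b \in D \apply y \cap A\}$ strictly exceeds $\yield{g}{D}(x)$, there is $b \in D \apply y \cap A$ with $g(b) > g(a)$ for every $a \in D \apply x \cap A$ (taking an $\varepsilon$-witness if the suprema are not attained; the extended-real codomain makes this harmless). I then want to show this same $b$ forces $f(b) > \yield{f}{D}(x)$, so that $\yield{f}{D}(y) \geq f(b) > \yield{f}{D}(x)$. The mechanism is that for each $a \in D \apply x \cap A$ the pair $(a,b)$ satisfies $g(a) < g(b)$, and if it is also $g$-interesting then $f \sqsupseteq g$ upgrades it to $f$-interesting, yielding $f(a) < f(b)$; taking the supremum over $a$ then gives the claim.

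The step I expect to be the main obstacle is verifying that each such $(a,b)$ is genuinely $g$-interesting, i.e.\ that $a \not\succeq b$ --- this is the only place the convertibility clause of the interesting relation can fail, and it is precisely where the invariance of $D$ must be used. If $a \succeq b$, then $b \in \down(a) = T_{\rm free} \apply a$, and since $a \in D \apply x$ one must argue $b \in D \apply x$, contradicting $g(b) > \sup_{D \apply x \cap A} g$. Pushing this through uses \cref{lem:composing image maps} to rewrite composites of image maps; it goes through cleanly when $D \apply x$ is downward closed (equivalently when $D$ absorbs free transformations on the output side), and more generally one leans on the idempotence and reflexivity of the invariant set, as is automatic for the sets $D = R_{\rm free} \boxtimes U$ arising in practice (\cref{ex:Advantage of generalized yield}). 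The cost implication is entirely dual: extract an $a \in S \apply x \cap A$ undercutting $\cost{g}{S}(y)$, transfer $(a,b)$-interestingness through $f \sqsupseteq g$, and control the case $a \succeq b$ using the left-invariance of $S$ via \cref{lem:composing image maps}.

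For the converse in \eqref{eq:yield and cost from more informative monotones}, the hypotheses collapse this obstacle: with $D = S = T_{\rm free}$ the invariant sets are two-sided, and with $f,g$ monotones on the common domain $A$ I can invoke \cref{prop:Extensions of monotones are the same on the original domain} to get $\yield{f}{}(a) = f(a)$ and $\yield{g}{}(a) = g(a)$ for all $a \in A$ (and likewise for cost). Then for any $(a,b) \in \Interesting{g}{(X,\succeq)} \subseteq A \times A$, these equalities turn $g(a) < g(b)$ into $\yield{g}{}(a) < \yield{g}{}(b)$, so $(a,b) \in \Interesting{\yield{g}{}}{(X,\succeq)}$; the assumed inclusion $\Interesting{\yield{g}{}}{(X,\succeq)} \subseteq \Interesting{\yield{f}{}}{(X,\succeq)}$ gives $\yield{f}{}(a) < \yield{f}{}(b)$, and applying the equalities again returns $f(a) < f(b)$ while $a \not\succeq b$ is unchanged, so $(a,b) \in \Interesting{f}{(X,\succeq)}$. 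This shows $f \sqsupseteq g$, and together with the forward implication specialized to $T_{\rm free}$ it yields the biconditional; the cost case is identical.
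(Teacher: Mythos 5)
Your proposal is, in substance, the paper's own proof. The forward direction there runs exactly along your lines: extract a witness beating $\yield{g}{D}(x)$, upgrade the resulting pairs to genuinely $g$-interesting ones by ruling out convertibility, push them through $f \sqsupseteq g$, and take suprema; the converse is your argument stated contrapositively, resting on \cref{prop:Extensions of monotones are the same on the original domain} just as you do. Your slips are cosmetic but worth fixing: in the cost half the witness must be drawn from the preimage $\upwrt{S}(x)$, not the image $S \apply x$; your witnesses should initially live in $B$ (the domain of $g$), with membership in $A$ supplied only by the upgrade to $f$-interestingness (when $A \neq B$ this is a point both you and the paper pass over quickly); and both you and the paper silently pass from pointwise strict inequalities $f(a) < f(b)$ to a strict inequality of suprema, which is only automatic when the relevant suprema are attained.

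The step you call the main obstacle deserves comment, because your unease is justified rather than a defect of your write-up. The paper closes it precisely as you sketch: from $x_2 \succeq y_2$ and $x_2 \in \downwrt{D}(x)$ it concludes $y_2 \in \down(x_2) \subseteq \down \circ \downwrt{D}(x) = \downwrt{D}(x)$, citing \cref{lem:composing image maps}, whence $g(y_2) \leq \yield{g}{D}(x)$ and any witness exceeding that value cannot be reachable from $x_2$. But \cref{lem:composing image maps} combined with right-invariance $D \star T_{\rm free} = D$ gives $\downwrt{D} \circ \down = \downwrt{D}$, the composite in the \emph{opposite} order; the equality actually used, $\down \circ \downwrt{D} = \downwrt{D}$, is downward-closedness of $\downwrt{D}(x)$, which (per the discussion following \cref{lem:composing image maps}) is what \emph{left}-invariance $T_{\rm free} \star D = D$ buys --- exactly the ``absorbs free transformations on the output side'' condition you isolated. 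The same mismatch appears dually in the cost half, where upward-closedness of $\upwrt{S}(\cdot)$ needs right-invariance while the proposition grants left-invariance. So the gap sits in the paper's proof as much as in yours; it dissolves when $D$ absorbs free transformations on both sides, e.g.\ for $D = S = T_{\rm free}$ or in a universally combinable theory where $\boxtimes$ is commutative. Your fallback appeal to ``idempotence and reflexivity of the invariant set'' is not available under the stated hypotheses --- the paper itself notes that sets of the form $R_{\rm free} \boxtimes U$ need not be idempotent, and right-invariant sets (such as the discarding maps) need not be reflexive --- so the honest repair is to assume two-sided invariance, under which your argument, like the paper's, goes through.
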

		
		\begin{proof}
			In order to prove claim \eqref{eq:yield from more informative functions}, we need to show that $\yield{g}{D}(x) < \yield{g}{D}(y)$ implies $\yield{f}{D}(x) < \yield{f}{D}(y)$ for all $x,y \in X$ for which we have $x \not\succeq y$. 
			This follows via
			\begin{subequations}
				\begin{alignat}{9}
					\yield{g}{D}(x) &< \yield{g}{D}(y) \\ 
						\label{eq:yield from more informative functions 2}
						\iff  &\forall \, x_2 \in B \cap \downwrt{D} (x) ,\; \exists \, y_2 \in B \cap \downwrt{D} (y) \;&&:\; g(x_2) < g(y_2) \\
						\label{eq:yield from more informative functions 2'}
						\implies  &\forall \, x_2 \in B \cap \downwrt{D} (x) ,\; \exists \, y_2 \in B \cap \downwrt{D} (y) \;&&:\; g(x_2) < g(y_2) \text{ and } x_2 \not \succeq y_2 \\
						\label{eq:yield from more informative functions 3}
						\implies &\forall \, x_2 \in A \cap \downwrt{D} (x) ,\; \exists \, y_2 \in A \cap \downwrt{D} (y) \;&&:\; f(x_2) < f(y_2) \text{ and } x_2 \not \succeq y_2 \\
						\iff &\yield{f}{D}(x) < \yield{f}{D}(y).&&
				\end{alignat}
			\end{subequations}
			Implication \mbox{(\ref{eq:yield from more informative functions 2}) $\Rightarrow$ (\ref{eq:yield from more informative functions 2'})} follows for the following reason.
			Note that if $x_2 \succeq y_2$ holds, then we also have
			\begin{equation}
				y_2 \in \down (x_2) \subseteq \down \circ \downwrt{D} (x) = \downwrt{D} (x)
			\end{equation}
			where the last equality is by \cref{lem:composing image maps}.
			Thus, if there was an $x_2$ for which all feasible $y_2$ satisfying $g(x_2) < g(y_2)$ also satisfied $x_2 \succeq y_2$, then $g(y_2)$ would be bounded above by $\yield{g}{D}(x)$. 
			Since $\yield{g}{D}(y)$ is strictly larger than $\yield{g}{D}(x)$, for each $x_2$ there must be a feasible $y_2$ outside $\down (x_2)$.
			
			Implication \mbox{(\ref{eq:yield from more informative functions 2'}) $\Rightarrow$ (\ref{eq:yield from more informative functions 3})} follows from the assumption that $f$ witnesses more resource inconvertibility in $(X, \succeq)$ than $g$ does.
			Thus, we obtain the desired implication \eqref{eq:yield from more informative functions}.
			
			In order to prove claim \eqref{eq:cost from more informative functions}, we need to show the analogous statement for cost monotones:
			\begin{subequations}
				\begin{alignat}{9}
					\cost{g}{S}(x) &< \cost{g}{S}(y) \\ 
						\label{eq:cost from more informative functions 2}
						\iff & \forall \, y_2 \in B \cap \upwrt{S} (y) ,\; \exists \, x_2 \in B \cap \upwrt{S} (x) \;&&:\; g(x_2) < g(y_2) \text{ and } x_2 \not \succeq y_2 \\
						\label{eq:cost from more informative functions 3}
						\implies & \forall \, y_2 \in A \cap \upwrt{S} (y) ,\; \exists \, x_2 \in A \cap \upwrt{S} (x) \;&&:\; f(x_2) < f(y_2) \text{ and } x_2 \not \succeq y_2 \\
						\iff & \cost{f}{S}(x) < \cost{f}{S}(y)
				\end{alignat}
			\end{subequations}
			Once again, given a fixed $y_2$, the relation $x_2 \succeq y_2$ would imply 
			\begin{equation}
				x_2 \in \up (y_2) \subseteq \up \circ \upwrt{S} (y) = \upwrt{S} (y),
			\end{equation}
			and consequently $\cost{g}{S}(x) \geq \cost{g}{S}(y)$ if it were true for all feasible $x_2$.
			That is why the first equivalence holds.
			As before, implication \mbox{(\ref{eq:cost from more informative functions 2}) $\Rightarrow$ (\ref{eq:cost from more informative functions 3})} follows from $f \sqsupseteq g$.
			This concludes the proof of the first half of \cref{thm:yield and cost from more informative functions}.
			
			In order to obtain the converse, we can show that $f \not\sqsupseteq g$ implies both $\yield{f}{} \not\sqsupseteq \yield{g}{}$ and 
				$\cost{f}{} \not\sqsupseteq \cost{g}{}$, provided that $f$ and $g$ are monotones on their domain $A$. 
			Under these assumptions, the statement $f \not\sqsupseteq g$ can be expressed as
			\begin{equation}
				\label{eq:f not above g}
				\exists \, x,y \in A \; : \; g(x) < g(y) \text{ and } f(x) \geq f(y).
			\end{equation}
			By \cref{prop:Extensions of monotones are the same on the original domain}, the values of $\yield{f}{}$ and $\cost{f}{}$ coincide with the value of $f$ on $A$, and similarly for $g$. 
			Therefore, $f \not\sqsupseteq g$ implies the following two statements:
			\begin{equation}
				\begin{alignedat}{9}
					\exists \, x,y \in A \; &: &\;   \yield{g}{}(x) &< \yield{g}{}(y) &&\text{ and }& \yield{f}{}(x) &\geq \yield{f}{}(y), \\
					\exists \, x,y \in A \; &: &\;   \cost{g}{}(x) &< \cost{g}{}(y) &&\text{ and }& \cost{f}{}(x) &\geq \cost{f}{}(y).
				\end{alignedat}
			\end{equation}
			Since the yields and costs are monotones themselves, these imply $\yield{f}{} \not\sqsupseteq \yield{g}{}$ and \mbox{$\cost{f}{} \not\sqsupseteq \cost{g}{}$}.
			Consequently, the proof of the second half of \cref{thm:yield and cost from more informative functions} is also complete.
		\end{proof}
		
		\begin{corollary}
			As a consequence of \cref{thm:yield and cost from more informative functions}, sufficient and necessary conditions for the ordering (via $\sqsupseteq$) of yields and costs relative to $T_{\rm free}$ are given by the ordering \mbox{(via $\sqsupseteq$)} of their restrictions to $A \cup B$.
			These facts can also be expressed in terms of the order relation with respect to informativeness about $(A \cup B, \succeq)$ as:
			\begin{equation}
				\label{eq:Informativeness on domain is equivalent to informativeness on R}
				\begin{alignedat}{9}
					\yield{f}{} &\sqsupseteq_{A \cup B} \yield{g}{} \;&&\iff &\; \yield{f}{} &\sqsupseteq_X \yield{g}{} \\
					\cost{f}{} &\sqsupseteq_{A \cup B} \cost{g}{} \;&&\iff &\;  \cost{f}{} &\sqsupseteq_X \cost{g}{}
				\end{alignedat}
			\end{equation}
		\end{corollary}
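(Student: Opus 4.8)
The plan is to prove each equivalence in \eqref{eq:Informativeness on domain is equivalent to informativeness on R} by splitting it into its two directions and leaning on two facts already available: by \cref{prop:Extensions of monotones are the same on the original domain} the yield $\yield{f}{}$ agrees with $f$ on $A$ and $\yield{g}{}$ with $g$ on $B$ (dually for the costs), and both constructions produce genuine monotones on all of $(X,\succeq)$, hence also on the subposet $(A\cup B,\succeq)$. The forward implication $\yield{f}{}\sqsupseteq_X\yield{g}{}\implies \yield{f}{}\sqsupseteq_{A\cup B}\yield{g}{}$ I expect to be immediate: for any monotone $m$ and any subset $C\subseteq X$ one has $\Interesting{m}{C,\succeq}=\Interesting{m}{X,\succeq}\cap(C\times C)$, so intersecting the inclusion of interesting relations with $(A\cup B)\times(A\cup B)$ preserves it. The cost case is identical.

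The substance is the converse $\yield{f}{}\sqsupseteq_{A\cup B}\yield{g}{}\implies \yield{f}{}\sqsupseteq_X\yield{g}{}$, where one must show that an interesting pair lying anywhere in $X\times X$ is already forced by the behaviour of the yields on $A\cup B$. First I would unfold the yield as a supremum, $\yield{g}{}(x)=\sup\{g(b): b\in B\cap\down(x)\}$ (using $\down=\downwrt{T_{\rm free}}$). Given $x,x'$ with $\yield{g}{}(x)<\yield{g}{}(x')$, the strict gap lets me extract a domain witness $b'\in B\cap\down(x')$ with $g(b')>\yield{g}{}(x)$; since $x'\succeq b'$ and $\yield{g}{}$ is a monotone, $\yield{g}{}(b')=g(b')$ sits strictly above $\yield{g}{}(x)$ but no higher than $\yield{g}{}(x')$. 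For every domain element $a\in A\cap\down(x)$ monotonicity gives $\yield{g}{}(a)\le\yield{g}{}(x)<\yield{g}{}(b')$, and now both $a$ and $b'$ lie in $A\cup B$, so the hypothesis $\yield{f}{}\sqsupseteq_{A\cup B}\yield{g}{}$ delivers $\yield{f}{}(a)<\yield{f}{}(b')\le\yield{f}{}(x')$. Taking the supremum over $a\in A\cap\down(x)$ then bounds $\yield{f}{}(x)$ by $\yield{f}{}(x')$, which is the desired interesting pair for $\yield{f}{}$. The cost statement is the exact dual, replacing $\down$ by $\up$, suprema by infima, and the extraction of a lower witness inside $A\cap\upwrt{T_{\rm free}}(x)$, with \cref{lem:composing image maps} guaranteeing that $\down$ and $\up$ are idempotent so that the witnesses stay inside the relevant closures.

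The main obstacle I anticipate is promoting the bound $\yield{f}{}(x)\le\yield{f}{}(x')$ obtained from the supremum to the \emph{strict} inequality required by the interesting relation. The delicate case is when the supremum defining $\yield{f}{}(x)$ is only approached and happens to coincide with $\yield{f}{}(x')$; to rule this out I would argue that the strict $g$-gap between $x$ and $b'$ is transferred through the $A\cup B$ hypothesis into a strict $f$-separation witnessed by a single domain element rather than merely in the limit, and otherwise fall back on the contrapositive $\yield{f}{}\not\sqsupseteq_X\yield{g}{}\implies\yield{f}{}\not\sqsupseteq_{A\cup B}\yield{g}{}$, which reproduces the extraction argument used in the second half of \cref{thm:yield and cost from more informative functions}. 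Once both directions are in hand, the opening sentence of the corollary—that informativeness of yields and costs is decided by their restrictions to $A\cup B$—is just the reading of \eqref{eq:Informativeness on domain is equivalent to informativeness on R}; and in the special case $A=B$ it reduces, via \cref{prop:Extensions of monotones are the same on the original domain}, to the equivalence $f\sqsupseteq g\iff\yield{f}{}\sqsupseteq_X\yield{g}{}$ already supplied by \cref{thm:yield and cost from more informative functions}.
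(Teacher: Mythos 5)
Your opening direction (restricting interesting relations to $(A\cup B)\times(A\cup B)$) is fine, but the converse — the only substantive part — has a genuine gap, and it sits exactly where you flagged it. Your extraction argument delivers, for every $a\in A\cap\down(x)$, the strict inequality $\yield{f}{}(a)<\yield{f}{}(b')$, but after taking the supremum you only get $\yield{f}{}(x)\le\yield{f}{}(b')\le\yield{f}{}(x')$, and neither of your proposed repairs promotes this to a strict inequality. The first repair fails because the strict separations \emph{are} already witnessed by single domain elements, and that is not enough: take $A=B=\{c_n\}_{n\ge 1}\cup\{c^*\}$ pairwise incomparable, $f(c_n)=1-1/n$, $f(c^*)=1$, $g(c_n)=0$, $g(c^*)=1$, and adjoin resources $x\succeq c_n$ (all $n$) and $x'\succeq c^*$ (such a preorder is realized by a concrete resource theory, e.g.\ with $T_{\rm free}$ the monoid of all $\succeq$-decreasing functions). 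Then $\yield{f}{}\sqsupseteq_{A\cup B}\yield{g}{}$ holds and every pair $(c_n,c^*)$ is strictly $f$-separated, yet $\yield{g}{}(x)=0<1=\yield{g}{}(x')$ while $\yield{f}{}(x)=\sup_n(1-1/n)=1=\yield{f}{}(x')$. The second repair is vacuous: the contrapositive is logically equivalent to what you are proving, so it cannot sidestep the problem, and the extraction in the second half of \cref{thm:yield and cost from more informative functions} goes the wrong way for it — that argument turns a failing pair \emph{already inside the common domain} into itself (using \cref{prop:Extensions of monotones are the same on the original domain}), whereas here you must manufacture a failing pair inside $(A\cup B)\times(A\cup B)$ from a failing pair $(x,x')$ lying outside $A\cup B$; in the example above no such pair exists.

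The idea your proof is missing, and which is the paper's actual route, is a reduction rather than a fresh supremum analysis: set $F\coloneqq \yield{f}{}|_{A\cup B}$ and $G\coloneqq \yield{g}{}|_{A\cup B}$, note these are monotones on a \emph{common} domain so the second half of \cref{thm:yield and cost from more informative functions} applies to them, and then check the identity $\yield{F}{}=\yield{f}{}$ (and dually $\cost{}{}$ of the restricted cost equals the cost), which follows from reflexivity and transitivity of $\succeq$ together with monotonicity of the yield; combined with the observation that $F\sqsupseteq G$ is literally the statement $\yield{f}{}\sqsupseteq_{A\cup B}\yield{g}{}$, this yields the corollary as a black-box consequence of the proposition. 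You never establish this identity, and instead try to redo the analytic step directly. To be fair to your instincts: the step you could not close is not closable at this level of generality — it is precisely the unjustified passage from pointwise strict domination to strict inequality of suprema in the paper's own proof of the proposition's first half, and the counterexample above violates the displayed equivalence as literally stated, so some extra hypothesis (finiteness of $A\cup B\cap\down(x)$, or attainment of the relevant suprema and infima) is needed for any direct argument. But as a proof of the corollary relative to the paper, the argument does not go through.
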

		
		Therefore, if one wishes to characterize the resource preorder by monotones generated by the generalized yield and cost constructions, then using functions $A \to \reals$ that are informative about $(X, \succeq)$ according to $\sqsupseteq$ should be preferred.
			
		A function $f$ cannot witness more resource inconvertibility than a complete set of monotones because such a set captures all the information in the preordered set $(A, \succeq)$.
		Nonetheless, a function $f$ can witness more resource inconvertibility than \emph{any single} monotone. 
		The simplest example is provided by $A$ with 4 elements, two pairs of which are ordered as in the following Hasse diagram:
		\begin{equation}
			\label{eq:preorder example}
			\begin{tikzcd}
				x_1 \arrow[d]	&	y_1 \arrow[d] \\
				x_2				&	y_2
			\end{tikzcd}
		\end{equation}
		If we let $f$ be defined as follows 
		\begin{equation}
			\begin{split}
				f(x_1) &= 0  \\
				f(x_2) &= 1	
			\end{split}
			\qquad
			\begin{split}
				f(y_1) &= 0 \\
				f(y_2) &= 1
			\end{split}
		\end{equation}
		then it clearly fails to be a monotone.
		Note that $f$ witnesses inconvertibility for the two pairs of resources, $(x_1,y_2)$ and $(y_1,x_2)$, while no single monotone can do so simultaneously.
		Indeed, any monotone must satisfy $m(x_1) \geq m(x_2)$ and $m(y_1) \geq m(y_2)$. 
		If it witnesses the inconvertibility of the pair $(x_1,y_2)$, then $m(x_1) < m(y_2)$ holds and these three inequalities together imply that $m(y_1)$ is greater than $m(x_2)$, so that $m$ then \emph{cannot} witness the inconvertibility of the other pair $(y_1,x_2)$. 
		The function $f$ is capable of witnessing the inconvertibility of both of these pairs of resources thanks to the fact that it fails to be order-preserving.
	
		\begin{remark}
			Note that this function has another interesting property in that both $\yield{f}{}$ and $\cost{f}{}$ are constant, i.e.\ least informative about $(A, \succeq)$ among all monotones $A \to \reals$.
			It can thus serve as a potentially fruitful counterexample to conjectures about the yield and cost constructions.
		\end{remark}
		
		\begin{example}[chains admit a most informative function]\label{ex:chains_unique_value}
			If $A$ is a totally ordered subset of $X$, then there \emph{is} a single monotone $m$ that forms a complete set of monotones by itself. 
			Therefore, it is at least as informative about $(A, \succeq)$ than any other function $A \to \reals$.
			As a consequence, for each right invartiant set $D \in \mathcal{P}(T)$, there is a unique most informative yield monotones with respect to gold standard specified by $A$, namely $\yield{m}{D}$. 
			Similarly, $\cost{m}{S}$ is more informative than $\cost{f}{S}$ for any other choice of $f \colon A \to \reals$.s
			Given the choice of $D = S = X_{\rm free}$, these correspond to currencies defined in \cite{Kraemer2016}, as we mentioned in \cref{ex:currencies}.
		\end{example}

\chapter{Distinguishability of Probabilistic Behaviours}\label{sec:distinguishability}
	
	In the final chapter, we use a specific resource theory of distinguishability to illustrate some of the concepts introduced in previous chapters.
	Specifically, we look at encodings of hypotheses in classical probabilistic systems, whose behaviour is characterized by a probability distribution over a sample space.
	Transformations of such probabilistic systems are provided by stochastic maps which assign a probabilistic behaviour of the system in their codomain to each element of the sample space in their domain.
	
	After introducing the basic notation, in \cref{sec:dist_zonotope} we give a brief account of image maps (see \cref{def:image_map}) in this context and of their utility when we want to learn about the resource ordering.
	The image maps correspond to objects known as Markotopes and zonotopes, the latter of which offer a particularly simple and visual way to read off whether an encoding is freely convertible to another one or not.
	In \cref{sec:dist_Shor_enc}, we give an example of two encodings for which the zonotope condition is not sufficient to decide the fact that they are unordered in the resource theory of distinguishability.
	The following two sections present alternative methods that witness their inconvertibility.
	The method in \cref{sec:dist_poss} uses the notion of a morphism of resource theories from \cref{sec:rt_morphisms} and the one in \cref{sec:dist_rank} shows how one can use the resource weight construction from \cref{sec:weight and robustness} in practice.
	Finally, in \cref{sec:dist_channels} we illustrate how to extend monotones defined for encodings of distributions to monotones for encodings of channels, the general case of which has been introduced in \cref{ex:Changing type}.
	
	\section{The Set-Up}\label{sec:dist_set-up}
		
		In \cref{sec:Encodings}, we saw a construction of a resource theory of $H$-distinguishability, which can be associated to any (listable) quantale module.
		While there are other consistent interpretations, we may think of $H$ as a set of hypotheses about the state of some variable of interest.
		The resource objects, tuples indexed by $H$, model how the hypotheses manifest themselves in the world. 
		Constant tuples are independent of $H$, while those for which the identity of a resource is highly dependent on $H$ allow one to learn about $H$, provided that the alternatives can be distinguished.
		In this context, the manipulations are performed by an agent who has no a priori knowledge of $H$.
		This is why the free transformations are required to be independent of $H$---they are constant tuples.
		
		Consider a collection of resources $\{d_i\}_{i\in I}$ that can be interpreted as unambiguously distinguishable, such as distinct read-outs of an (ideal) measurement device. 
		In constructor theory \cite{deutsch2015constructor}, such collections are called \emph{information variables}.
		An encoding of $H$ that can be freely transformed to
		\begin{equation}
			(d_1, d_2, \dots, d_{\mathfrak{h}})
		\end{equation}
		is a maximally informative one---it encodes all the information about $H$ there is to know.
		Many encodings, however, fall somewhere in between, they are neither independent of $H$ nor maximally informative. 
		We can compare their informativeness via the resource ordering of $H$-distinguishability.
		
		In the context of classical probability theory, these ideas correspond to the comparison of statistical experiments in terms of their informativeness \cite{blackwell1951comparison}.
		Hereafter, we thus refer to statistical experiments and resource encodings interchangeably.
		Likewise, we identify the concept of \emph{informativeness about hypotheses} with that of \emph{distinguishability of behaviours} (which encode said hypotheses).
		When talking about statistical experiments, the set $H$ may be identified with a statistical parameter that we wish to determine by conducting experiments. 
		\begin{example}
			As one of an endless list of examples, consider $H$ to represent the time of day.
			That is, different elements of $H$ corresponds to different hypotheses about what is the time.
			There are many ways in which one could try to learn about $H$.
			For instance, we may look at a well-calibrated clock and use the fact that different times are encoded in easily distinguishable positionings of the clock arms.
			A somewhat less informative encoding of $H$ is that of a shadow cast by a solar clock.
			A statistical experiment for learning about $H$ with very low precision would be to measure the temperature---the elements of $H$ are typically encoded in highly overlapping air temperature probability distributions.
			However, since it may be used at night, as opposed to a solar clock, we would expect that a thermometer and a solar clock are \emph{incomparable} as encodings of $H$.
			Neither contains strictly more information about $H$ than the other does.
		\end{example}
		While realistic statistical experiments (also known as statistical models) carry more intricate structure \cite{mccullagh2002statistical}, a rudimentary way to describe them is as encodings of $H$ that specify a probabilistic behaviour of some system $A$ that depends on the value of $H$.
		We may think of $A$ as the \emph{data variable} of the statistical experiment.
		Whether one experiment carries more information about $H$ than another one is given by the resource ordering of (unconstrained) $H$-distinguishability, denoted by $\freeconv_{\rm unif}$.
		There are other definitions of the ordering of informativeness of statistical experiments, but in many contexts they coincide with $\freeconv_{\rm unif}$ \cite{le1996comparison}.
		
		More concretely, if we restrict our attention to finite sample spaces, the individual resources in the original quantale module (elements of $X$) are probability distributions over finite sets, which we may think of as column vectors with non-negative entries that add up to 1.
		We write $X[A]$ for the space of all probability distribution over a fixed sample space $A$.
		Correspondingly, the transformations (elements of $T$) are stochastic maps.
		In other words, they are matrices with non-negative entries, every column of which adds up to 1.
		We use the notation $T[A,B]$ to refer to all the stochastic maps with domain $A$ and codomain $B$.
		
		The quantale composition $\star$ and quantale action $\apply$ are both given by matrix multiplication as one would expect.
		It is thus a unimodular (and listable) quantale module.
		We can also use a diagrammatic language of a process theory\footnotemark{} in which probability distributions are states and stochastic maps are general processes.
		\footnotetext{In particular, the relevant process theory is the \emph{Markov category} of finite sets and stochastic \mbox{maps \cite{fritz2019synthetic}}.}%
		The quantale module operations can be identified with the first construction of \cref{ex:rt_states}:
		\begin{equation}\label{eq:rt_states_4}
			\tikzfig{rt_states_4}	
		\end{equation}
		
		In the associated resource theory of $H$-distinguishability, elements of $X^H$ are tuples of distributions.
		Let us restrict our attention to those tuples for which the sample space is independent of $H$, as we would do in a resource theory of distinguishability whose types\footnotemark{} are given by the underlying sample spaces.
		\footnotetext{Construction of typed resource theories of distinguishability is described in \cite[appendix C]{Monotones}.}%
		Thus, an element of $X^H$ can be described by a stochastic map $H \to A$ for some sample space $A$, depicted as
		\begin{equation}\label{eq:stoch_map}
			\tikzfig{stoch_map}
		\end{equation}
		in the diagrammatic language with the wires representing the hypothesis distinguished for better readability. 
		In other words, we have $X^H[A] = T[H,A]$.
		
		Tuples of transformations in $T^H$ of uniform type can be described as tensors or equivalently as stochastic maps $H \otimes A \to B$ for some sample spaces $A$ and $B$.
		A generic transformation encoding can be thus represented diagrammatically as
		\begin{equation}\label{eq:gen_trans_enc}
			\tikzfig{gen_trans_enc}
		\end{equation}
		Free transformations in $T^H_{\rm unif}$ are those conversions of $A$ to $B$, whose behaviour is independent of the hypothesis. 
		That is, as stochastic maps $H \otimes A \to B$, we require that they can be decomposed as
		\begin{equation}\label{eq:trans_enc}
			\tikzfig{trans_enc}
		\end{equation}
		for some stochastic map $t$, where the process
		\begin{equation}
			\tikzfig{discard}
		\end{equation}
		denotes discarding of $H$.
		It is given by the stochastic map that sends each $h \in H$ to the unique probability distribution over the trivial system $I$ which, in this process theory, corresponds to the sample space with a single element.
		That is, diagram \eqref{eq:trans_enc} expresses that free encodings are independent of $H$, so that elements of $T^H_{\rm unif}[A,B]$ can be identified with elements of $T[A,B]$.
		
		Transformation encodings compose element-wise with respect to the hypotheses, as given by \cref{eq:tuple_comp,eq:tuple_action}.
		Diagrammatically, we can depict the composition of transformation encodings as\footnotemark{}
		\begin{equation}\label{eq:param_composition}
			\tikzfig{param_composition}
		\end{equation}
		and the action of transformation encodings on resource encodings as
		\begin{equation}\label{eq:param_composition_2}
			\tikzfig{param_composition_2}
		\end{equation}
		where the process
		\begin{equation}
			\tikzfig{copy}	
		\end{equation}
		denotes the copying of $H$ given by the stochastic map that sends each $h \in H$ to the point distribution supported on $(h,h) \in H \times H$.
		Copying the hypothesis in expressions \eqref{eq:param_composition} and \eqref{eq:param_composition_2} imposes the requirement that the value of $H$ is kept fixed when composing encodings.
		\footnotetext{In the language of Markov categories, this is the composition in the Markov category parametrized by $H$ \cite[section 2.1]{fritz2020representable}.}%
		The ordering of individual resouce encodings is thus given by 
		\begin{equation}\label{eq:ord_enc}
			\tikzfig{ord_enc}
		\end{equation}
		which says that $y$ can be obtained from $x$ by a \emph{post-processing}, i.e.\ as a composition $t \circ x$ with some stochastic map $t$.
		Within the context of finite sample spaces that we focus on here, this ordering is also known as \emph{matrix majorization} \cite{Dahl1999}.
		
	\section{Markotopes and Zonotopes}\label{sec:dist_zonotope}
		
		Matrix majorization can be studied geometrically in terms of so-called Markotopes \cite{Dahl1999}.
		For an integer $k$, consider the set of free transformation encodings whose output space has cardinality $k$.
		Since they can be identified with stochastic maps of type $B \to K$ for fixed $K = \{1, 2, \dots, k \}$ and arbitrary $B$, we denote them by $T[\ph,K]$.
		By construction, these form a right-invariant set of transformation encodings.
		Indeed, for any free encoding $t_{\rm unif}$ as in \eqref{eq:trans_enc} and any $s \in T[\ph,K]$, we have
		\begin{equation}
			\tikzfig{k-decision}
		\end{equation}
		so that $T[\ph,K] \star T^H_{\rm unif} \subseteq T[\ph,K]$ holds, which is the required form of right-invariance.
		
		Thus, by \cref{lem:U-image is order-preserving}, the map
		\begin{equation}
			\downwrt{T[\phsm,K]}  \colon  \bigl( X^H, \freeconv_{\rm unif} \bigr)  \to  \Bigl( \mathcal{P}\bigl( X^H[K] \bigr), \supseteq \Bigr)
		\end{equation}
		is an isotone for any choice of $k$ and it can be used, for instance, to construct generalized yield monotones.
		This fact can be found in \cite[theorem 3.3]{Dahl1999} where $\downwrt{T[\phsm,K]}(x)$, i.e.\ the $T[\ph,K]$-image of an encoding $x$, is a convex subset of $T[H,K]$ called the Markotope associated to $x$.
		Markotopes with $k=2$ are also referred to as \emph{zonotopes}.
		
		In the case of two hypotheses, so that we can write $H = \bm{2} = \{0,1\}$, it suffices to consider zonotopes as the map $\downwrt{T[\phsm,K]}$ for $K = \bm{2}$ is then an order embedding \cite[theorem 4.1]{Dahl1999}.
		That is, two encodings of a binary hypothesis; $x,y \in X^{\bm{2}}[\bm{2}]$; are ordered if and only if their zonotopes are included one within the other:
		\begin{equation}\label{eq:zonotope_inclusion}
			x \freeconv_{\rm unif} y  \quad \iff \quad   \downwrt{T[\phsm,\bm{2}]} (x) \supseteq \downwrt{T[\phsm,\bm{2}]} (y)
		\end{equation}
		The elements of $X^{\bm{2}}[\bm{2}]$ (i.e.\ $2 \times 2$ stochastic matrices) can be parametrized by two elements of the unit interval ($\lambda, \nu \in [0,1]$) via
		\begin{equation}\label{eq:matrix_rep}
			\begin{pmatrix}
				\lambda  		& 		\nu 			\\
				1 - \lambda 	&		1 - \nu 	\\
			\end{pmatrix},
		\end{equation}
		so that the zonotope condition offers a simple visual representation of $\bm{2}$-distinguishability.
		Two examples are given in \cref{fig:zonotope}.
		
		\begin{figure}[h]
			\begin{center}
				\begin{subfigure}[b]{.4\textwidth}\centering
					\begin{tikzpicture}
						\begin{pgfonlayer}{nodelayer}
							\node [style=none] (0) at (-2, -2) {};
							\node [style=none] (1) at (-2, 2) {};
							\node [style=none] (2) at (2, 2) {};
							\node [style=none] (3) at (2, -2) {};
							\node [style=none] (4) at (3, -2) {};
							\node [style=none] (5) at (-2, 3) {};
							\node [style=none] (6) at (-2.5, -1.5) {$\tfrac{1}{8}$};
							\node [style=none] (7) at (-2, -2.5) {$0$};
							\node [style=none] (8) at (2, -2) {};
							\node [style=none] (9) at (3.5, -2) {$\lambda$};
							\node [style=none] (10) at (-2, 3.5) {$\nu$};
							\node [style=none] (11) at (0, 1.5) {};
							\node [style=none] (12) at (0, -1.5) {};
							\node [style=none] (13) at (-2, -1.5) {};
							\node [style=none] (14) at (-2, 1.5) {};
							\node [style=none] (15) at (0, -2) {};
							\node [style=none] (16) at (0, -2.5) {$\tfrac{1}{2}$};
							\node [style=none] (17) at (2, -2.5) {$1$};
							\node [style=none] (18) at (-2.5, 1.5) {$\tfrac{7}{8}$};
						\end{pgfonlayer}
						\begin{pgfonlayer}{edgelayer}
							\draw [style=arrow] (8.center) to (4.center);
							\draw [style=arrow] (1.center) to (5.center);
							\draw [style=blue arrow, fill={rgb,255: red,100; green,140; blue,255}] (0.center) -- (11.center) -- (2.center) -- (12.center) -- cycle;
							\draw [style=segment] (0.center) to (15.center);
							\draw [style=segment] (15.center) to (8.center);
							\draw [style=segment] (0.center) to (13.center);
							\draw [style=segment] (13.center) to (14.center);
							\draw [style=segment] (14.center) to (1.center);
							\draw [style=dashed box] (1.center) to (2.center);
							\draw [style=dashed box] (3.center) to (2.center);
						\end{pgfonlayer}
					\end{tikzpicture}
				\end{subfigure}\hspace{0.05\textwidth}
				\begin{subfigure}[b]{.4\textwidth}\centering
					\begin{tikzpicture}
						\begin{pgfonlayer}{nodelayer}
							\node [style=none] (0) at (-2, -2) {};
							\node [style=none] (1) at (-2, 2) {};
							\node [style=none] (2) at (2, 2) {};
							\node [style=none] (3) at (2, -2) {};
							\node [style=none] (4) at (3, -2) {};
							\node [style=none] (5) at (-2, 3) {};
							\node [style=none] (6) at (-2.5, -1) {$\tfrac{1}{4}$};
							\node [style=none] (7) at (-2, -2.5) {$0$};
							\node [style=none] (8) at (2, -2) {};
							\node [style=none] (9) at (3.5, -2) {$\lambda$};
							\node [style=none] (10) at (-2, 3.5) {$\nu$};
							\node [style=none] (11) at (-1, 1) {};
							\node [style=none] (12) at (1, -1) {};
							\node [style=none] (13) at (-2, -1) {};
							\node [style=none] (14) at (-2, 1) {};
							\node [style=none] (15) at (1, -2) {};
							\node [style=none] (16) at (1, -2.5) {$\tfrac{3}{4}$};
							\node [style=none] (17) at (2, -2.5) {$1$};
							\node [style=none] (18) at (-2.5, 1) {$\tfrac{3}{4}$};
							\node [style=none] (19) at (-1, -2.5) {$\tfrac{1}{4}$};
							\node [style=none] (20) at (-1, -2) {};
							\node [style=none] (21) at (-2.5, 2) {$1$};
						\end{pgfonlayer}
						\begin{pgfonlayer}{edgelayer}
							\draw [style=arrow] (8.center) to (4.center);
							\draw [style=arrow] (1.center) to (5.center);
							\draw [-, draw={rgb,255: red,150; green,0; blue,15}, fill={rgb,255: red,255; green,120; blue,120}] (0.center) -- (11.center) -- (2.center) -- (12.center) -- cycle;
							\draw [style=segment] (0.center) to (20.center);
							\draw [style=segment] (20.center) to (15.center);
							\draw [style=segment] (15.center) to (8.center);
							\draw [style=segment] (0.center) to (13.center);
							\draw [style=segment] (13.center) to (14.center);
							\draw [style=segment] (14.center) to (1.center);
							\draw [style=dashed box] (1.center) to (2.center);
							\draw [style=dashed box] (3.center) to (2.center);
						\end{pgfonlayer}
					\end{tikzpicture}
				\end{subfigure}
			\end{center}
			\caption[Example of Zonotopes]{Two examples of zonotopes of binary encodings of a binary hypothesis are provided here.
				They are depicted as subsets of $X^{\bm{2}}[\bm{2}]$, which is the unit square in the representation provided by \eqref{eq:matrix_rep}.
				The blue one (on the left) corresponds to the encoding with $(\lambda,\nu) = (1/8,1/2)$.
				The red one (on the right) corresponds to the encoding with $(\lambda,\nu) = (1/4,3/4)$.
				Points in the shaded regions are elements of the $T[\bm{2},\bm{2}]$-image of the respective encoding.
				Vertices of the polytopes correspond to ``deterministic'' post-processings of the given encoding by transformations that are $\{0,1\}$-valued.
				That is, for an encoding with a given pair $(\lambda,\nu)$, they are the points $(0,0)$, $(1,1)$, $(\nu,\lambda)$, and $(\lambda,\nu)$ itself.
				Since neither of the two zonotopes includes the other as a subset, we conclude that these two encodings are incomparable in the resource theory of distinguishability.
				In other words, neither of the two encodings is more informative about the hypothesis than the other.}
			\label{fig:zonotope}
		\end{figure}
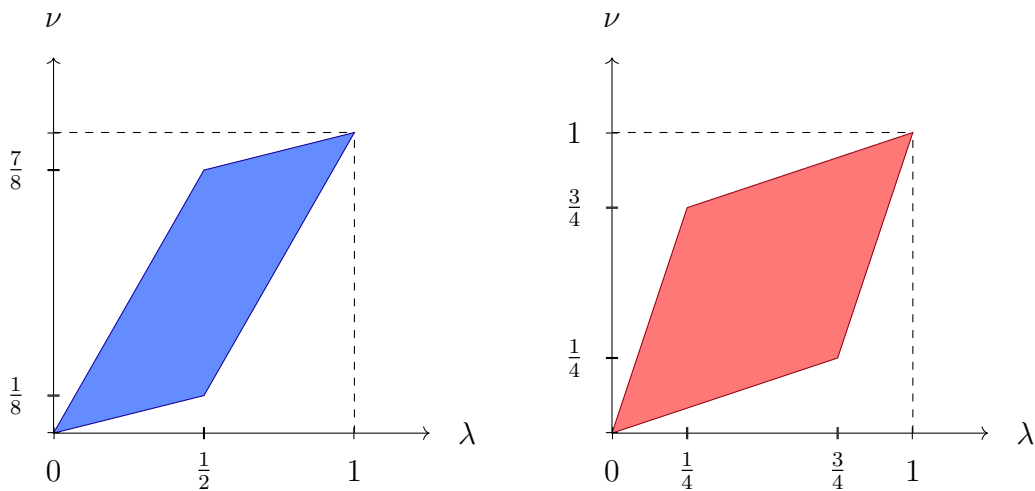
		
		As we saw in \cref{ex:athermality_oplax_covariant}, resource ordering of nonuniformity can be faithfully embedded in the informativeness order of encodings of a binary hypothesis via the map
		\begin{equation}
			X \to X^{\bm{2}} , \quad  x \mapsto (x,\mu)
		\end{equation}
		where $\mu$ denotes the uniform distribution over the same sample space that $x$ is defined on.
		In this context, the zonotope of the tuple $(x,\mu)$ is commonly viewed through its equivalent representation in terms of its ``upper boundary''---the Lorenz curve (cf.\ \cref{fig:Lorenz_curves}).
		The left diagram in \cref{fig:zonotope} provides a concrete example.
		In terms of Lorenz curves, the zonotope inclusion condition \eqref{eq:zonotope_inclusion} boils down to the question of whether one Lorenz curve is ``nowhere below'' another one.
		
		Similarly, resource ordering of athermality\footnotemark{} can be viewed as distinguishability of a given distribution $x$ from the thermal state via the embedding
		\footnotetext{While nonuniformity ordering of probability distributions corresponds to majorization, the athermality ordering can be viewed as a generalization thereof known as relative majorization \cite[section 14.B]{Marshall1979} or as thermo-majorization \cite{Horodecki2013}---the latter name used specifically in the context of thermodynamics.
			Namely, instead of requiring that stochastic post-processings preserve uniform distributions (as in the case of standard majorization), one requires that they fix some other choice of distributions---one for each system.}%
		\begin{equation}
			x \mapsto (x, \rho_{\rm therm})
		\end{equation}
		where $\rho_{\rm therm}$ denotes the thermal state of the system which $x$ is a state of.
		Just as for nonuniformity, this point of view gives an equivalent representation of the athermality ordering.
		There is also a corresponding notion of a Lorenz curve that can be used to decide the ordering of states graphically.
		For instance, if the thermal state and $x$ are given by
		\begin{align}
			\rho_{\rm therm} &= 
				\begin{pmatrix}
					3/4 \\
					1/4
				\end{pmatrix} &
			x &= 	
				\begin{pmatrix}
					1/4 \\
					3/4
				\end{pmatrix}	
		\end{align}
		then the zonotope of the encoding $(x, \rho_{\rm therm})$ is the one depicted in the right diagram of \cref{fig:zonotope}.
		Its Lorenz curve is the upper boundary thereof.
		
	\section{Shor Encodings}\label{sec:dist_Shor_enc}
		
		For the purposes of resource theories with a single fixed point such as nonuniformity and athermality (recall \cref{ex:fixed_points}), it is sufficient to understand the zonotope inclusion condition \eqref{eq:zonotope_inclusion}.
		This is because they can be embedded in a resource theory of $\bm{2}$-distinguishability whose resource ordering, as we mentioned, boils down to the zonotope inclusion condition for \emph{any} data variable $A$ in which the hypothesis $H$ are encoded.
		However, other resource theories of $M$-asymmetry require the full understanding of $H$-distinguishability.
		Indeed, for hypotheses with cardinality greater than two, there is no known fixed integer $k$ for which the Markotope inclusion condition would be sufficient for determining the ordering of encodings.\footnotemark{}
		\footnotetext{This holds when we consider encodings with arbitrary data variables.
		For a fixed choice of a data variable $A$ with cardinality $a$, the inclusion of Markotopes with $k = a$ is both necessary and sufficient condition for the informativeness ordering of encodings in $X^{H}[A]$.
		This is easy to see as the Markotope inclusion is then just a different way to write the definition of the resource ordering of encodings.
		Moreover, when comparing two encodings with distinct data variables $A$ and $B$, it suffices to look at Markotope inclusion at the level of the smaller of the minimal sufficient statistics for the two encodings.
		Markotopes with higher $k$ provide no additional information.
		This is because every encoding is equivalent to its minimal sufficient statistic (as well as any other sufficient statistic) in the resource theory of distinguishability.}%
		
		In the rest of \cref{sec:distinguishability}, we use a particular example of two encodings to illustrate methods that can offer additional insights about informativeness of encodings, as compared to the zonotope inclusion condition.
		They are encodings of a three-valued hypothesis $H = \bm{3}$ in three and four-valued data variables respectively, given by
		\begin{align}\label{eq:Shor_codes}
			x &= \begin{pmatrix}[1.2]
				\frac{1}{2} & \frac{1}{2} & \frac{1}{2} \\
				\frac{1}{2} & 0 & 0 \\
				0 & \frac{1}{2} & 0 \\
				0 & 0 & \frac{1}{2}
			\end{pmatrix} & 
			y &= \begin{pmatrix}[1.2]
				\frac{1}{2} & 0 & \frac{1}{2} \\
				\frac{1}{2} & \frac{1}{2} & 0 \\
				0 & \frac{1}{2} & \frac{1}{2}
			\end{pmatrix}.
		\end{align}
		We refer to $x$ and $y$ from \eqref{eq:Shor_codes} as ``Shor encodings'' as they have been considered by Peter Shor as a classical analogue of the corresponding encoding of quantum states \cite{jozsa2000distinguishability}.
		See also \cite[appendix C]{Marvian2013} for an explicit consideration of these.
		
		One can check that the zonotope $\downwrt{T[\phsm,\bm{2}]}(x)$ of $x$ includes that of $y$.
		This can be established, for example, by listing all the deterministic post-processings of $x$ and $y$ which correspond to (potential)\footnotemark{} zonotope vertices and observing that every vertex of $\downwrt{T[\phsm,\bm{2}]}(y)$ is also among the vertices of $\downwrt{T[\phsm,\bm{2}]}(x)$.
		\footnotetext{Deterministic post-processings generate the zonotope as its convex hull, but a single encoding given by such post-processing may not be a vertex if it is a convex combination of other ones.}%
		However, this inclusion is strict. 
		Indeed, the encoding
		\begin{equation}\label{eq:Shor_codes_2}
			\begin{pmatrix}[1.2]
				\frac{1}{2} & 0 & 0 \\
				\frac{1}{2} & 1 & 1
			\end{pmatrix} 
				= 
			\begin{pmatrix}[1]
				1 & 0 & 1 & 1 \\
				0 & 1 & 0 & 0 
			\end{pmatrix}
			\begin{pmatrix}[1.2]
				\frac{1}{2} & \frac{1}{2} & \frac{1}{2} \\
				\frac{1}{2} & 0 & 0 \\
				0 & \frac{1}{2} & 0 \\
				0 & 0 & \frac{1}{2}
			\end{pmatrix}
				\in \downwrt{T[\phsm,\bm{2}]}(x)
		\end{equation}
		obtained by a deterministic post-processing of $x$ is in the zonotope of $x$, but not in the zonotope of $y$.
		By the zonotope inclusion condition, i.e.\ because $\downwrt{T[\phsm,\bm{2}]}$ is an isotone, we conclude $y \not \succeq_{\rm unif} x$.
		That is, $y$ is not more informative about the hypothesis than $x$ is.
		
	\section{Possibilistic Encodings}\label{sec:dist_poss}
		
		The two zonotopes cannot be used to determine whether the Shor encodings are incomparable, since we have
		\begin{equation}
			\downwrt{T[\phsm,\bm{2}]}(x) \supseteq \downwrt{T[\phsm,\bm{2}]}(y).
		\end{equation}
		We would need to consider the Markotope with $k=3$ to establish this fact.
		However, there are other methods one can use as well.
		
		Let the ceiling function $\mathrm{ceil} \colon [0,1] \to \{0,1\}$ be defined by
		\begin{equation}
			\mathrm{ceil}(\lambda) \coloneqq 
				\begin{cases}
					0 & \text{if } \lambda = 0 \\
					1 & \text{if } \lambda \in (0,1] 
				\end{cases}
		\end{equation}
		Let us consider $\{0,1\}$ to be the semiring with multiplication as usual and addition given by
		\begin{align}
			0 + 0 &= 0  &  0 + 1 = 1 + 0 &= 1  &  1 + 1 &= 1.
		\end{align}
		This semiring is commonly termed the \textbf{Boolean semiring} and we denote it by $\mathbb{B}$.
		The ceiling function is then a homomorphism of semirings of type $\mathbb{I} \to \mathbb{B}$ where $\mathbb{I}$ denotes the unit real interval thought of as a semiring.
		Let us denote the space of $\mathbb{B}$-valued distributions by $Y$ in order to differentiate them from the $\mathbb{I}$-valued distributions (i.e.\ probability distributions) which are denoted by $X$.
		The space of all $\mathbb{B}$-valued stochastic matrices is denoted by $U$ in contrast to the standard $\mathbb{I}$-valued stochastic matrices denoted by $T$.
		
		Applying the ceiling function to the entries of probability distributions and stochastic matrices leads to two functions $X \to Y$ and $T \to U$ respectively, which we both denote by $\mathrm{ceil}$ as well.
		Nothing changes when we work with tuples.
		In this way, we get a \emph{resource theory of possibilistic distinguishability} whose quantale module is $(\mathcal{P}(U^H), \mathcal{P}(Y^H), \apply)$ with $\star$ and $\apply$ given by matrix multiplication and the set of free transformation is given by the hypothesis-independent tuples $U^H_{\rm unif}$.
		Moreover, the two functions, ${\mathrm{ceil} \colon T^H \to U^H}$ and ${\mathrm{ceil} \colon X^H \to Y^H}$, define a morphism of resource theories $(\mathrm{ceil},\mathrm{ceil})$ that maps distinguishability of probabilistic encodings to distinuishability of possibilistic encodings.
		Specifically, they are both defined to be suplattice homomorphisms; they satisfy \cref{eq:q_morphism,eq:qm_morphism} in \cref{def:rt_morphism} because $\mathrm{ceil}$ applied to matrices entry-wise preserves matrix multiplications; and the free transformations are preserved as well.
		By \cref{prop:isotone_from_morphism}, $\mathrm{ceil}$ is thus an isotone of type $(X^H,\freeconv_{\rm unif}) \to (Y^H,\freeconv_{\rm unif})$.
		
		When we apply the ceiling function to Shor encodings, we obtain
		\begin{align}\label{eq:Shor_codes_pos}
			\mathrm{ceil}(x) &= \begin{pmatrix}
				1 & 1 & 1 \\
				1 & 0 & 0 \\
				0 & 1 & 0 \\
				0 & 0 & 1
			\end{pmatrix} & 
			\mathrm{ceil}(y) &= \begin{pmatrix}
				1 & 0 & 1 \\
				1 & 1 & 0 \\
				0 & 1 & 1
			\end{pmatrix}.
		\end{align}
		The question of whether $\mathrm{ceil}(x)$ can be converted to $\mathrm{ceil}(y)$ by a $\mathbb{B}$-valued (i.e.\ possibilistic) stochastic post-processing is now a combinatorial one with a finite number of feasible transformations.
		One can indeed check that there is no $\mathbb{B}$-valued stochastic $3 \times 4$ matrix that gives $\mathrm{ceil}(y)$ when composed with $\mathrm{ceil}(x)$.
		Consequently, since $\mathrm{ceil}$ is an isotone, we can also conclude that $x \not \succeq_{\rm unif} y$ holds in the original resource theory of (probabilistic) distinguishability.
		
		Note that the possibilistic encodings can be also viewed as hypergraphs\footnotemark{} with nodes given by elements of the data variable $A$.
		\footnotetext{Strictly speaking, they are hypergraphs with hyperedges labelled by a fixed index set $H$.}%
		For each hypothesis value $h \in H$, we get one hyperedge that corresponds to the subset of elements of $A$ that are consistent with said hypothesis $h$.
		In other words, each column of the matrix specifies a hyperedge that consists of those rows for which the corresponding matrix element is non-zero.
		Transformations can be thought of as (multivalued) functions between the nodes of hypergraphs, such that the image of a hyperedge for each $h$ is required to be the hyperedge corresponding to the same $h$ in the codomain.
		Explicitly, the two hypergraphs that correspond to Shor encodings are:
		\begin{center}
			\begin{tikzpicture}[align=center,thick,
					resource/.style={circle,fill=black,draw=none},
					he/.style={draw, semithick},
				]
				\begin{pgfonlayer}{nodelayer}
					\node[resource]	(a)		at (-2,0.75)			{};
					\node[resource]	(b)		at (-0.5,0.75)				{};
					\node[resource]	(c)		at (-2,-0.75)			{};
					\node[resource]	(d)		at (-0.5,-0.75)			{};
					
					\node[resource]	(e)		at (5,0.75)				{};
					\node[resource]	(f)		at (6.5,0.75)				{};
					\node[resource]	(g)		at (5.75,-0.75)			{};
					
					\node						at (-3.5,0)					{$x :$};
					\node						at (3.5,0)					{$y :$};
				\end{pgfonlayer}
				\begin{pgfonlayer}{edgelayer}
					\draw[he] 	\hedgeii{a}{b}{16pt};
					\draw[he] 	\hedgeii{a}{c}{14pt};
					\draw[he] 	\hedgeii{a}{d}{12pt};
					
					\draw[he] 	\hedgeii{e}{f}{14pt};
					\draw[he] 	\hedgeii{f}{g}{12pt};
					\draw[he] 	\hedgeii{g}{e}{10pt};
				\end{pgfonlayer}
			\end{tikzpicture}
		\end{center}
		In this way, it is easy to see that there is no such transformation from the hypergraph for $x$ to the hypergraph for $y$, or vice versa.
				
	\section{Weight of a Deterministic Encoding as a Monotone}\label{sec:dist_rank}
		
		The analysis in terms of possibilistic hypergraphs is qualitative by nature.
		In order for it to be useful, there has to be sufficient structure merely in the supports of the encodings.
		More quantitative measures are provided by the resource weight construction (introduced in \cref{sec:weight and robustness}), which we explore hereafter.
		
		First of all, note that the rank of the matrix representing an encoding is a distinguishability monotone.
		This is because $x \succeq_{\rm unif} y$ implies $\mathrm{rank}(y) = \mathrm{rank}(t \circ x) \leq \mathrm{rank}(x)$, where $t$ is the free transformation that achieves the conversion of $x$ to $y$.
		In particular, the rank is a monotone for the deterministic encodings, which are those elements of $X^H$ that are $\{0,1\}$-valued.
		We denote them by $D[H,A]$ where $A$ is the data variable.
		For a fixed $A$, the set $D[H,A]$ of deterministic stochastic matrices constitutes the vertices of the polytope $T[H,A]$ of all stochastic matrices.
		Let us denote the set of elements of $D[H,A]$ whose rank is bounded above by $k$ by $D_k[H,A]$.
		
		For example, if both the hypothesis $H$ and the data variable $A$ are binary, then the deteministic encodings are
		\begin{equation}
			\begin{pmatrix}
				1 & 1 \\ 0 & 0
			\end{pmatrix},
			\begin{pmatrix}
				1 & 0 \\ 0 & 1
			\end{pmatrix},
			\begin{pmatrix}
				0 & 1 \\ 1 & 0
			\end{pmatrix},
			\begin{pmatrix}
				0 & 0 \\ 1 & 1
			\end{pmatrix}.
		\end{equation}
		The first and last have rank $1$, while the other two are of rank $2$.
		
		The above discussion implies that, for any positive integer $k$, the convex hull of deterministic encodings of rank at most $k$ is a downset, denoted by\footnotemark{}
		\begin{equation}
			T_k[H,A] \coloneqq \mathrm{conv} \bigl( D_k[H,A] \bigr).
		\end{equation}
		\footnotetext{The notation $T_k[H,A]$ for the convex hull of the set $D_k[H,A]$ can be motivated by recognizing that $T_k[H,A]$ coincides with the set of all stochastic matrices of type $H \to A$ that have rank at most $k$.
			Proving this fact is another way to establish that $T_k[H,A]$ is a downset, via the argument from \cref{rem:monotones_downsets}, since rank is a monotone.}%
		In the case of binary encodings of a binary hypothesis, $T_1$ consists of encodings of the form:
		\begin{equation}
			\begin{pmatrix}
				\mu & \mu \\ 1-\mu & 1-\mu
			\end{pmatrix}.
		\end{equation}
		The argument showing that $T_k$ is downward closed is as follows.
		If a given encoding $x$ (viewed as a stochastic matrix of type $H \to A$) can be expressed as a convex mixture of deterministic encodings of rank at most $k$ via
		\begin{equation}
			x = \sum_i \lambda_i \, d_i \in T_k[H,A]
		\end{equation}
		where each $d_i$ is an element of $D_k[H,A]$, then for any post-processing $t \in T[A,B]$ the transformed encoding satisfies
		\begin{equation}
			t \circ x = \sum_i \lambda_i  \, t \circ d_i \in T_k[H,A],
		\end{equation}
		as the rank of each $t \circ d_i$ is upper bounded by $k$.
		
		Therefore, we can use the downsets $T_k$ in the resource weight construction.
		Let us write $\mathfrak{h}$ for the cardinality of the hypothesis variable $H$, so that we can use $H = \{ 1, 2, \ldots, \mathfrak{h} \} $ without loss of generality.
		Then, for each pair of positive integers $m,k \in H$ such that $m < k$ holds,\footnotemark{} we have a weight monotone (see \cref{eq:weight_2}) given by
		\footnotetext{If $m$ is larger than or equal to $k$, then the resulting weight monotone is trivial (i.e.\ constant).}%
		\begin{equation}
			\label{eq:Weight monotones from rank}
			f_{m,k}(x) \coloneqq \inf \Set*[\big]{ \lambda \in [0,1]  \given  x \in \lambda\, T_k + (1 - \lambda) \, T_m  }.
		\end{equation}
		We can express this function as an optimization over the convex decompositions of $x$ in terms of deterministic encodings.
		Specifically, we have
		\begin{equation}
			f_{m,k}(x) = \inf \Set{ \sum_i \lambda_i \given  x = \sum_i \lambda_i \, d_i + \sum_j \nu_j \, e_j },
		\end{equation}
		where each $e_j$ is a deterministic encoding with rank at most $m$, each $d_i$ is a deterministic encoding with rank at most $k$ but greater than $m$, and the coefficients add up to $1$, i.e.\ they satisfy:
		\begin{equation}
			\sum_i \lambda_i + \sum_j \nu_j = 1.
		\end{equation}
		
			If we now return to the two Shor encodings introduced in \eqref{eq:Shor_codes}, we can provide another explanation of their incomparability via the weight monotones $f_{m,k}$.
			In particular, the values of the two relevant non-trivial monotones are
			\begin{align}
				\begin{aligned}
					f_{1,3} (x) &= 0.5 \\
					f_{2,3} (x) &= 0.25
				\end{aligned}
				&&
				\begin{aligned}
					f_{1,3} (y) &= 1 \\
					f_{2,3} (y) &= 0
				\end{aligned}
			\end{align}
			with optimal convex decompositios of $x$ are given by
			\begin{equation}
				x = \frac{1}{2} 
					\begin{pmatrix}
						0 & 0 & 0 \\
						1 & 0 & 0 \\
						0 & 1 & 0 \\
						0 & 0 & 1
					\end{pmatrix} + \frac{1}{2}
					\begin{pmatrix}
						1 & 1 & 1 \\
						0 & 0 & 0 \\
						0 & 0 & 0 \\
						0 & 0 & 0
					\end{pmatrix}
			\end{equation}
			for $m = 1, k = 3$ and 
			\begin{equation}
				x = \frac{1}{4} 
					\begin{pmatrix}
						0 & 0 & 0 \\
						1 & 0 & 0 \\
						0 & 1 & 0 \\
						0 & 0 & 1
					\end{pmatrix} + \frac{1}{4} 
					\begin{pmatrix}
						0 & 1 & 1 \\
						1 & 0 & 0 \\
						0 & 0 & 0 \\
						0 & 0 & 0
					\end{pmatrix} + \frac{1}{4} 
					\begin{pmatrix}
						1 & 0 & 1 \\
						0 & 0 & 0 \\
						0 & 1 & 0 \\
						0 & 0 & 0
					\end{pmatrix} + \frac{1}{4} 
					\begin{pmatrix}
						1 & 1 & 0 \\
						0 & 0 & 0 \\
						0 & 0 & 0 \\
						0 & 0 & 1
					\end{pmatrix}
			\end{equation}
			for $m = 2, k = 3$.
			We can also observe that there is no convex decomposition of $y$ with non-zero weight of rank $1$ encodings because there is no value of the data variable with which all hypothesis values are compatible (i.e.\ the matrix representation of $y$ has no rows with strictly positive values). 
			Finally, since we can write $y$ purely as a convex combination of rank two deterministic encodings via
			\begin{equation}
				y = \frac{1}{2} \begin{pmatrix}
						1 & 0 & 1 \\
						0 & 1 & 0 \\
						0 & 0 & 0
					\end{pmatrix} + \frac{1}{2} \begin{pmatrix}
						0 & 0 & 0 \\
						1 & 0 & 0 \\
						0 & 1 & 1
					\end{pmatrix},
			\end{equation}
			we conclude that $f_{2,3} (y)$ is zero.
			
			As we can see, $f_{1,3}$ has higher value for $y$, while $f_{2,3}$ has higher value for $x$. 
			That implies that neither $x$ can be converted to $y$ by stochastic post-processing nor $y$ can be converted to $x$.

	\section{Monotones for Channels From Monotones for States}\label{sec:dist_channels}
		
		Besides encodings of states (such as probability distributions), one can also study encodings of channels (such as stochastic maps).
		The resulting \emph{resource theory of channel distinguishability} can be used to understand asymmetry of channels via \cref{ex:bridge_lemma} (bridge lemma) as well as other aspects and applications of channel discrimination.
		Here, we briefly look at the resource theory of encodings of a hypothesis in stochastic processes to illustrate how monotones can be extended from a resource theory of states to resource theories of channels (cf.\ \cref{ex:Changing type}).
		While our discussion is in the context of distinguishability, one can take a similar approach to use the well-understood properties of quantum states (such as entanglement or athermality) to learn about the corresponding properties of quantum channels.
		
		The set-up is very much like what we describe in \cref{sec:dist_set-up}.
		Resources are tuples of stochastic maps, indexed by a set $H$ of hypothesis values.
		We can think of them as stochastic maps with a composite input.
		One part is the hypothesis variable and the complement is the actual input of the channel when the hypothesis is fixed.
		That is, they can be depicted as processes of the following type:
		\begin{equation}\label{eq:map_enc}
			\tikzfig{map_enc}
		\end{equation}
		Recall that the transformations are typically given by 1-combs (\cref{ex:rt_channels}).
		In our case, their diagrammatic representation with an explicit depiction of the hypothesis variable is
		\begin{equation}\label{eq:comb_enc}
			\tikzfig{comb_enc}	
		\end{equation}
		where $\rho, \sigma$ are stochastic maps of type $H \otimes C \to Z \otimes A$ and $H \otimes Z \otimes B \to D$ respectively.
		Therefore, $Z$ denotes an arbitrary variable mediating the side-channel communication between the pre- and post-processing here.
		Just as in \cref{eq:param_composition,eq:param_composition_2}, the composition of combs and their action on channels is given by supplying a copy of the hypothesis variable $H$ to all processes involved and contracting the relevant wires.
		
		Recall from \cref{ex:rt_channels} that there is not a unique choice of free transformations in a resource theory of channels.
		We can, for instance, impose that both pre- and post-processings are hypothesis-independent, in which case a generic free comb can be depicted as
		\begin{equation}\label{eq:comb_enc_free}
			\tikzfig{comb_enc_free}	
		\end{equation}
		where $\rho \colon C \to Z \otimes A$ and $\sigma \colon Z \otimes B \to D$ are arbitrary stochastic maps.
		
		Just like in \cref{ex:Changing type}, we can think of the monotones (for states) introduced in \cref{sec:dist_rank} as partial monotones for channels.
		The cost and yield constructions can be then used to extend their domain to all channels, resulting in monotones $\yield{f_{m,k}}{}$ and $\cost{f_{m,k}}{}$ in the resource theory of channel distinguishability.
		In general, these may be tricky to evaluate, as they involve two optimizations---one in the definition of the original monotone $f_{m,k}$ and another as part of the yield (or cost) construction.
		However, for specific classes of channel encodings, we can find their values easily.
		
		Consider any channel encoding $\psi$ represented by a stochastic map of type $H \otimes A \to B$, which is equivalent to a state encoding $z$ under the action of free (i.e.\ hypothesis-independent) 1-combs.
		Concretely, $\psi$ and $z$ should satisfy
		\begin{equation}
			\tikzfig{channel_state_eq}
		\end{equation}
		and
		\begin{equation}
			\tikzfig{channel_state_eq_2}
		\end{equation}
		for some stochastic maps $\rho$, $\sigma$, and $\sigma'$ of appropriate types.
		Given any monotone $f$ in the resource theory of state distinguishability (such as a weight monotone $f_{m,k}$), we have 
		\begin{equation}
			\yield{f}{}(z) = f(z) = \cost{f}{}(z)
		\end{equation}
		by \cref{prop:Extensions of monotones are the same on the original domain}.
		Moreover, a monotone in the resource theory of channel distinguishability (such as $\yield{f}{}$ or $\cost{f}{}$) has to assign the same value to both $\psi$ and $z$ since they are equivalent resources.
		As a result, we have 
		\begin{equation}\label{eq:cost_for_channels}
			\yield{f}{}(\psi) = f(z) = \cost{f}{}(\psi).
		\end{equation}
		
		The simplest kinds of channel encodings that are equivalent to state encodings are ones in which the input and output of the channel are independent for each hypothesis value.
		For example, consider
		\begin{equation}\label{eq:Shor_channels}
			\tikzfig{Shor_channels}
		\end{equation}
		where $x$ and $y$ are the Shor encodings from \cref{sec:dist_Shor_enc}, $B$ is the sample space $\bm{4} = \{0,1,2,3\}$, and $B'$ is the sample space $\bm{3} = \{0,1,2\}$.
		The channel encoding $\tilde{x}$ is an equivalent resource of distinguishability to the state encoding $x$, so that
		\begin{equation}
			\yield{f_{m,k}}{} \left( \tilde{x} \right) = f_{m,k} (x) = \cost{f_{m,k}}{} \left( \tilde{x} \right)
		\end{equation}
		holds for any of the weight monotones.
		Since $x$ and $y$ are incomparable in the resource theory of state distinguishability, we can also immediately conclude that $\tilde{x}$ and $\tilde{y}$ are incomparable in the resource theory of channel distinguishability.
		
		For a slightly less trivial example, consider channel encodings $\psi_x \colon H \otimes A \to B$ and $\psi_y \colon H \otimes A \to B'$ given by 
		\begin{equation}
			\begin{split}
				\psi_x(h,a) &= \frac{1}{2} [0] + \frac{1}{2} [a \oplus h] \\
				\psi_y(h,a) &= \frac{1}{2} [a \oplus h] + \frac{1}{2} [a \oplus h \oplus 2]
			\end{split}
		\end{equation}
		with $H = \{1,2,3\}$, $A = B' = \{0,1,2\}$, and $B = \{0,1,2,3\}$.
		Here $[b]$ denotes the delta distribution supported on $b \in B$ (or $b \in B'$) and the symbol $\oplus$ denotes modular addition within $B$ and $B'$ respectively. 
		Once again, we have $\psi_x \sim_{\rm unif} x$ and $\psi_y \sim_{\rm unif} y$ for the Shor encodings $x,y$ so that $\psi_x$ and $\psi_y$ are also incomparable. 
		We can prove these equivalences as follows.
		For $\psi_x \succeq_{\rm unif} x$ and $\psi_y \succeq_{\rm unif} y$, note that applying either $\psi_x$ or $\psi_y$ to the delta distribution $[0]$ on $A$ yields the corresponding Shor encoding:
		\begin{equation}
			\begin{split}
				\psi_x(h,0) &= \frac{1}{2} [0] + \frac{1}{2} [h] = x(h) \\
				\psi_y(h,0) &= \frac{1}{2} [h] + \frac{1}{2} [h \oplus 2] = y(h)
			\end{split}
		\end{equation}
		For the opposite conversions, $\psi_x \succeq_{\rm unif} x$ and $\psi_y \succeq_{\rm unif} y$, we give explicit post-processings $\sigma_x \colon B \otimes A \to B$ and  $\sigma_y \colon B' \otimes A \to B'$ so that we have
		\begin{equation}\label{eq:Shor_enc_to_channel}
			\tikzfig{Shor_enc_to_channel}
		\end{equation}
		and similarly for $y$.
		Specifically, both $\sigma_x$ and $\sigma_y$ can be chosen to be deterministic and given by
		\begin{equation}
			\begin{split}
				\sigma_x(b,a) &= \begin{cases} [0] & \text{if b = 0} \\ [b \oplus a] & \text{otherwise,} \end{cases} \\
				\sigma_y(b,a) &= [b \oplus a].
			\end{split}
		\end{equation}
		One can check that with this choice, \cref{eq:Shor_enc_to_channel} holds for both $x$ and $y$.
		Thus, by the advertized equivalences $\psi_x \sim_{\rm unif} x$ and $\psi_y \sim_{\rm unif} y$, values of any yield and cost monotones obtained from a monotone $f$ for state encodings satisfy
		\begin{equation}
			\begin{split}
				\yield{f}{}(\psi_x) &= f(x) = \cost{f}{}(\psi_x) \\
				\yield{f}{}(\psi_y) &= f(y) = \cost{f}{}(\psi_y).
			\end{split}
		\end{equation}
		
		In general, as we mentioned, evaluating these yield and cost monotones requires solving an optimization problem. 
		However, in many contexts, it is not an unbounded optimization as \cref{eq:state_to_channel2} might suggest at first sight.
		In the case of a resource theory of distinguishability, \cref{eq:state_to_channel2} reads
		\begin{equation}
			\tikzfig{state_to_channel4}
		\end{equation}
		Note that we can write the bipartite state $\rho \colon I \to A \otimes Z$ as a product of its marginal distribution $\mu$ on $A$ and the conditional distribution of $Z$ given $A$ which specifies a stochastic map $\upsilon \colon A \to Z$ via
		\begin{equation}
			\tikzfig{conditional}
		\end{equation}
		Since we assume $f$ to be a monotone on its domain, we can remove $\upsilon$ from the optimization, so that it reduces to
		\begin{equation}\label{eq:state_to_channel6}
			\tikzfig{state_to_channel6}
		\end{equation}
		and the optimization is thus restricted to distributions over a fixed sample space $A$---the input system of the channel encoding $\psi$.

\chapter{Future Directions}\label{sec:future}
	
	\subsubsection{Landscape of resource theories}

	In \cref{sec:resource_theories}, we introduced a somewhat combinatorial framework for resource theories.
	It incorporates structure to model the degree of access to resources that an agent may have, which is often used implicitly when reasoning about resource conversions.
	The abstract nature of the framework allows for many quite disparate examples as instances.
	All of the commonly studied resource theories postulate individual resources as mutually exclusive entities and are thus uniquely atomistic lattices in our framework.
	We believe that shifting the perspective to more general lattices can lead to new applications of resource-theoretic ideas.
	There are many possibilities, even though the details may be unclear at this point.
	Let us mention algorithmic information and complexity theory, and evolutionary dynamics, to name a few.
	
	Additionally, the abstract point of view is suitable to analyze connections between resource theories that may look quite different when described in more concrete terms.
	This is the main purpose of the abstraction in the thesis, as we try to systematically understand how resource orderings can be studied via resource measures in \cref{sec:monotones}.
	We take some of the common ways that resources are evaluated in practice and frame them as instances of general constructions that can be applied with fairly minimal assumptions.
	In this way, connections between resource theories can be established and some methods for studying them can be translated.
	Moreover, fitting these into the \ref{broad scheme} for modularizing monotone constructions, we can both characterize existing resource measures as well as develop a systematic procedure to come up with new ones.
	
	Besides the construction of resource measures, there are other aspects of the ``landscape of resource theories'' worth investigating. 
	As we briefly sketch in \cref{sec:ucrt_free}, general frameworks can be used to understand the interactions of restrictions placed on the free transformations.
	For instance, one may study emergent consequences of a conjunction of assumptions about ``what is allowed''.
	Yet another topic in this domain is that of trade-offs between different restrictions on the free operations, such as trade-offs between asymmetry properties for different group actions, recently also investigated in \cite{ying2021asymmetry}.
	
	\subsubsection{Additional structure}
	
	It should be mentioned that the framework used here is stripped of a lot of structure one may presume in applications.
	While this is intentional and makes the ideas presented more transparent and easier to describe, it limits the inferences and methods one can accommodate.
	It would certainly be interesting to add more structure to the framework presented here and study the new phenomena that arise.
	
	One of the main pieces missing from our analysis here is the concept of resource types. 
	In practice, one can often make several assumptions about the nature of the resource, even in situations in which the precise behaviour is uncertain.
	In physics and engineering, these assumptions can be collated into what we call a ``system'' (such as a particle, a collection of interacting particles, a mechanical device, etc.).
	Other examples include registers in computer science, molecules in chemistry, chromosomes and genes in genetics, and these merely scratch the surface of course.
	Concrete transformations of resources will likewise come with a notion of type.
	If $\alpha, \beta$, etc.\ denote types of resources, a type of transformation is of the form $\alpha \to \beta$, meaning that it transforms resources of type $\alpha$ to resources of type $\beta$.
	Crucially, its behaviour is unspecified when acting on resources that are not of type $\alpha$.
	For example, a transformation that is afforded by a prism specifies what happens to resources that are beams of light, but it says nothing about what happens to beams of electrons.
	On the basis that transformation of type $\alpha \to \beta$ can \emph{only} transform resources of type $\alpha$ to resources of type $\beta$, they allow for high-level reasoning about impossibility.
	
	One way in which types can be incorporated in the mathematical description is to model types as objects in a category. 
	This is precisely what process theories in \cref{sec:proc_th} do.
	In our context, incorporating types into the description of resource theories as presented may be conceived as horizontal categorification.
	While all of the details are not clear to us at present, the most immediate approach would correspond to replacing quantale modules with quantaloid modules \cite{stubbe2008q,rosenthal2014theory} as described in \cite{Abramsky1993} as well.
	Another, closely related, approach is modelling resource theories in the context of quantale-enriched categories \cite{marsden2018quantitative}.
	
	Yet another piece of structure one would generally like to have in their toolkit is convex or linear structure more broadly.
	We presented a rudimentary discussion of these in \cref{sec:rt_convex}.
	Convex structure and barycentric calculus become important when the very description of individual resources refers to (probabilistic) beliefs about their identity.
	As such, resources of information (described probabilistically) as well as quantum resources carry convex-linear structure.
	Indeed, we have seen in \cref{sec:weight and robustness} that some of the most popular resource measures rely on convexity.
	More generally, one can then employ the tools of convex geometry to study resource theories \cite{Regula2017,uola2019quantifying}.
	
	While universally combinable resource theories (\cref{sec:ucrt}) model a conjunction of resources via the universal composition $\boxtimes$, in the general framework of \cref{sec:rt} no such operation is assumed.
	Nevertheless, many resource-theoretic questions involve conjunctions of resources in one way or another.
	As a toy example, consider the case of stoichiometry in chemistry, where one is interested to model which conjunctions (and ratios) of reactants can yield desired products in a chemical reaction such as
	\begin{equation}
		2 \mathrm{Na} + \mathrm{Cl_2} \longrightarrow 2 \mathrm{NaCl}.
	\end{equation}
	Therefore, it would be desirable to equip our framework with the structure of a ``tensor product'', such as that of symmetric monoidal categories used in process theories.
	Such notion is also necessary for questions of catalytic and asymptotic many-copy conversions, as explored in \cite{Fritz2017}.
	
	A large fraction of the research of resource theories focuses on whether resources can be converted approximately.
	The main reason is that while we often use idealized descriptions of resources as elements of a set, in practice one can never achieve perfect precision and thus we loosen the idealized description to allow for `errors'.
	Additionally, approximations are relevant for developing a theory of conversions between many copies of the same resource (\cref{rem:iid}).
	This loosening of the requirement of exact convertibility is commonly achieved by equipping resources with metric structure specifying a notion of how close they are.
	Alternatively, one may choose to use a non-idealized description to begin with.
	In this case, we may think of resources as elements of a ``point-free topological space'', such as a locale.
	However, to derive quantitative statements about errors and approximate conversions, additional metric structure would be needed.
	
	\subsubsection{Related ideas}
	
	With or without additional structure, we would like to establish tighter links to some other more structured approaches to resource theories and related ideas in the future.
	Throughout this manuscript, we highlighted the connections to ordered commutative monoids \cite{Fritz2017}, partitioned process theories \cite{Coecke2016}, resource theories of knowledge \cite{DelRio2015} and others.
	The discussion of connections of resource theories to constructor theory \cite{deutsch2013constructor,marletto2016constructor_therm,marletto2016constructor_prob,deutsch2015constructor} and linear logic \cite{Girard1995} from \cite[section 10]{Fritz2017} applies to our framework as well.
	Let us mention, however, that quantales specifically have been used to describe the semantics of linear logic in \cite{yetter1990quantales}.
	Yet another model of linear logic is provided by Petri nets \cite{marti1989petri,engberg1990petri}, which have been connected more explicitly to quantales in \cite{brown1989petri,Tsinakis2004}.
	Given the common resource-like interpretation of markings and transitions on a Petri net, this is a connection that should certainly be pursued, deepening the ties between resource theories and theoretical computer science. 
	Finite-state automata and labelled transition systems offer yet another related flavour of computation, the latter of which has been modelled by quantale modules explicitly in \cite{Abramsky1993}.
	Often, resources are also conceptualized in terms of their utility in a game-theoretic sense.
	As such, recent developments in compositional game theory \cite{Gutoski2007,hedges2016towards,ghani2018compositional} may offer yet another connection to explore.
	All that is to say that one of the outcomes that we hope to achieve by studying resource theories in the abstract setting is to facilitate and make concrete the connections with investigations in related fields.



\bibliographystyle{plain}

\clearpage  			

\phantomsection  
\renewcommand*{\bibname}{References}

\addcontentsline{toc}{chapter}{\textbf{References}}

\bibliography{references}



\end{document}